\renewcommand{\emph}[1]{\textit{#1}}
\renewcommand{\epsilon}{\varepsilon}
\newtheorem*{dreamtheorem}{Dream Theorem}
\DeclareMathOperator{\lca}{lca}
\newcommand{\cov}{\mathrm{cov}}
\newcommand{\covr}{\overrightarrow{\mathrm{cov}}}
\newcommand{\covw}{\overleftarrow{\mathrm{cov}}}
\newcommand{\viw}{\mathrm{viwidth}}
\newcommand{\viwu}{\mathrm{viwidth}_{up}}
\newcommand{\viwd}{\mathrm{viwidth}_{down}}
\newcommand{\viwr}{\overrightarrow{\mathrm{viwidth}}}
\newcommand{\viww}{\overleftarrow{\mathrm{viwidth}}}
\begin{document}
\title{A Better-Than-2 Approximation for the Directed Tree Augmentation Problem}
\author{
  Meike Neuwohner\thanks{Department of Mathematics, London School of Economics and Political Science} \and
  Olha Silina\thanks{Department of Mathematics, Carnegie Mellon University} \and
  Michael Zlatin\thanks{Department of Computer Science, Pomona College}
}

\date{}

\fancyfoot[C]{\thepage}

\maketitle
\thispagestyle{fancy}

\begin{abstract}

We introduce and study a directed analogue of the weighted Tree Augmentation Problem (WTAP). In the weighted Directed Tree Augmentation Problem (WDTAP), we are given an oriented tree $T = (V,A)$ and a set of directed links $L \subseteq V \times V$ with positive costs. The goal is to select a minimum cost set of links which enters each fundamental dicut of $T$ (cuts with one leaving and no entering tree arc). 
WDTAP captures the problem of covering a cross-free set family with directed links. It can also be used to solve weighted multi $2$-TAP, in which we must cover the edges of an undirected tree at least twice.
WDTAP can be approximated to within a factor of $2$ using standard techniques. We provide an improved $(1.75+ \varepsilon)$-approximation algorithm for WDTAP in the case where the links have bounded costs, a setting that has received significant attention for WTAP. To obtain this result, we discover a class of instances, called ``willows'', for which the natural set covering LP is an integral formulation.  
We further introduce the notion of ``visibly $k$-wide'' instances which can be solved exactly using dynamic programming. Finally, we show how to leverage these tractable cases to obtain an improved approximation ratio via an elaborate structural analysis of the tree. 
\end{abstract}

\section{Introduction}
The design of networks that are resilient to connection failures constitutes a fundamental task in combinatorial optimization. An important class of network design problems are network \textit{augmentation} problems, in which we are given a graph and a set of additional edges, called \emph{links}, which we seek to add so that the resulting graph achieves the desired connectivity properties. One of the most well-studied network augmentation problems is the Tree Augmentation Problem (TAP). Given an undirected tree $T=(V,E)$ and a set of links $L \subseteq {V \choose 2}$, TAP asks for a minimum cardinality subset of the links whose addition renders $T$ $2$-edge-connected. A solution to TAP must include a link crossing each fundamental cut induced by the edges of $T$. The fundamental cuts form a cross-free family\footnote{A set family $\mathcal{C} \subseteq 2^V$ is \textit{cross-free} if for every $A,B\in\mathcal{C}$ with $A\cap B\ne \emptyset$ and $A\cup B\ne V$, we have $A\subseteq B$ or $B\subseteq A$.} and can be represented by a laminar set family. 

TAP can also be interpreted as a set covering problem on the edges of $T$, where a link $\ell=\{u,v\}$ covers every edge on the unique $u$-$v$-path in $T$. As TAP is known to NP-hard and APX-hard~\cite{DBLP:journals/siamcomp/KortsarzKL04}, there has been a long line of research on approximation algorithms for TAP~\cite{frederickson1981approximation,DBLP:journals/dam/Nagamochi03,DBLP:journals/algorithmica/CheriyanG18,DBLP:journals/algorithmica/CheriyanG18a,DBLP:journals/talg/EvenFKN09,DBLP:conf/stoc/CecchettoTZ21, DBLP:journals/talg/KortsarzN16}, starting with a $2$-approximation by Frederickson and J{\'a}J{\'a}~\cite{frederickson1981approximation} and culminating in the best known approximation ratio of $1.393$~\cite{DBLP:conf/stoc/CecchettoTZ21}.
There has further been a lot of research on the \emph{weighted} Tree Augmentation Problem (WTAP), where every link is equipped with a positive cost, and the task is to minimize the total cost of the selected link set. Until a few years ago, no better approximation ratio than $2$ was known. Recently, this approximation barrier has been breached, resulting in the best known approximation ratio of $1.5+\epsilon$~\cite{traub2022better,traub2022local}. Prior to this, several works have considered the bounded cost ratio case, where the ratio between the maximum and the minimum link cost can be bounded by a constant~\cite{adjiashvili2018beating,DBLP:conf/soda/Fiorini0KS18,grandoni2018improved}. In this setting, the best known approximation factor is $1.458$~\cite{grandoni2018improved}.

\paragraph{The Directed Tree Augmentation Problem} In this paper, we introduce a directed variant of the tree augmentation problem in which both the links and the underlying tree consist of directed arcs. Given an oriented tree $T=(V,A)$ and an arc $a=(u,v)$, we define the \textit{fundamental dicut associated with $a$} to be the vertex set $U$ of the weakly connected component of $T-a$ containing $u$. We say that a link $\ell=(x,y)$ \textit{covers} the dicut $U$ if $y\in U$, but $x\notin U$. In the Directed Tree Augmentation Problem (DTAP), we are given an oriented tree $T = (V,A)$ and a set of directed links $L \subseteq V \times V$, and the goal is to cover all fundamental dicuts of $T$ using a minimum cardinality subset of $L$. Note that the fundamental dicuts of $T$ form a cross-free set family. In fact, using a result of Edmonds and Giles on tree-representations of cross-free families~\cite{EDMONDS1977185}, DTAP captures the problem of covering an arbitrary cross-free family with directed links. DTAP can also be seen as a covering problem on the arcs of an oriented tree: a tree arc is \textit{covered} by a directed link $(u,v)$ if the unique path from $v$ to $u$ in $T$ contains the tree arc in the \textit{forward} direction. See \Cref{fig:DTAPexample} for an example.

%
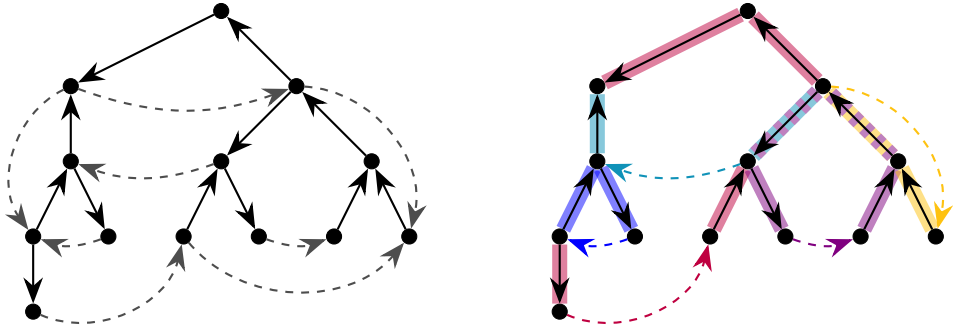
\begin{figure}[h]
\centering
\begin{tikzpicture}[mynode/.style={draw, fill, circle, minimum size = 2mm, inner sep = 0pt}, myarc/.style={thick, arrows = {-Stealth[scale=1.5]}},mylink/.style={thick, arrows = {-Stealth[scale=1.5]}, black!70!white, dashed}, covered/.style={line width = 2mm, draw opacity = 0.5}]
\node[mynode] (r) at (0,0){};
\node[mynode] (v1) at (-2,-1){};
\node[mynode] (v2) at (1,-1){};
\node[mynode] (v3) at (-2,-2){};
\node[mynode] (v4) at (0,-2){};
\node[mynode] (v5) at (2,-2){};
\node[mynode] (v6) at (-1.5,-3){};
\node[mynode] (v7) at (-2.5,-3){};
\node[mynode] (v8) at (-0.5,-3){};
\node[mynode] (v9) at (0.5,-3){};
\node[mynode] (v10) at (1.5,-3){};
\node[mynode] (v11) at (2.5,-3){};
\node[mynode] (v12) at (-2.5,-4){};

\draw[myarc] (r) to (v1);
\draw[myarc] (v2) to (r);
\draw[myarc] (v3) to (v1);
\draw[myarc] (v2) to (v4);
\draw[myarc] (v5) to (v2);
\draw[myarc] (v3) to (v6);
\draw[myarc] (v7) to (v3);
\draw[myarc] (v8) to (v4);
\draw[myarc] (v4) to (v9);
\draw[myarc] (v10) to (v5);
\draw[myarc] (v11) to (v5);
\draw[myarc] (v7) to (v12);

\draw[mylink] (v1) to[bend right = 20] (v2);
\draw[mylink] (v2) to [bend left= 50] (v11);
\draw[mylink] (v4) to[bend left = 20] (v3);
\draw[mylink] (v1) to [bend right= 50] (v7);
\draw[mylink] (v6) to [bend left= 20] (v7);
\draw[mylink] (v12) to [bend right= 50] (v8);
\draw[mylink] (v8) to [bend right= 50] (v11);
\draw[mylink] (v9) to [bend right= 20] (v10);

\begin{scope}[shift={(7,0)}]
\node[mynode] (r) at (0,0){};
\node[mynode] (v1) at (-2,-1){};
\node[mynode] (v2) at (1,-1){};
\node[mynode] (v3) at (-2,-2){};
\node[mynode] (v4) at (0,-2){};
\node[mynode] (v5) at (2,-2){};
\node[mynode] (v6) at (-1.5,-3){};
\node[mynode] (v7) at (-2.5,-3){};
\node[mynode] (v8) at (-0.5,-3){};
\node[mynode] (v9) at (0.5,-3){};
\node[mynode] (v10) at (1.5,-3){};
\node[mynode] (v11) at (2.5,-3){};
\node[mynode] (v12) at (-2.5,-4){};

\draw[covered,purple] (r) to (v1);
\draw[covered, purple] (v2) to (r);
\draw[covered, cyan!70!black] (v3) to (v1);
\draw[covered, cyan!70!black] (v2) to (v4);
\draw[line width = 2mm, dashed, violet!50!white] (v2) to (v4);
\draw[covered, yellow!50!orange] (v5) to (v2);
\draw[line width = 2mm, dashed, violet!50!white] (v5) to (v2);
\draw[covered, blue] (v3) to (v6);
\draw[covered, blue] (v7) to (v3);
\draw[covered, purple] (v8) to (v4);
\draw[covered, violet] (v4) to (v9);
\draw[covered,violet] (v10) to (v5);
\draw[covered, yellow!50!orange] (v11) to (v5);
\draw[covered, purple] (v7) to (v12);
\draw[myarc] (r) to (v1);
\draw[myarc] (v2) to (r);
\draw[myarc] (v3) to (v1);
\draw[myarc] (v2) to (v4);
\draw[myarc] (v5) to (v2);
\draw[myarc] (v3) to (v6);
\draw[myarc] (v7) to (v3);
\draw[myarc] (v8) to (v4);
\draw[myarc] (v4) to (v9);
\draw[myarc] (v10) to (v5);
\draw[myarc] (v11) to (v5);
\draw[myarc] (v7) to (v12);

\draw[mylink, yellow!50!orange] (v2) to [bend left= 50] (v11);
\draw[mylink, cyan!70!black] (v4) to[bend left = 20] (v3);
\draw[mylink, blue] (v6) to [bend left= 20] (v7);
\draw[mylink, purple] (v12) to [bend right= 50] (v8);
\draw[mylink, violet] (v9) to [bend right= 20] (v10);
\end{scope}
\end{tikzpicture}
\caption{A DTAP instance is shown on the left, with links drawn as dashed lines. A feasible solution is shown on the right. Colors indicate the tree arcs that are covered by each link.}
\label{fig:DTAPexample}
\end{figure}

 Interestingly, DTAP also captures some other natural set covering problems on the edges of a tree. Suppose $T = (V,E)$ is an undirected tree and a set of links is given. The (multi) 2-TAP problem is the problem of selecting a smallest multi-set of links (meaning that we are allowed to select the same link twice) so that each tree edge $e \in E$ is covered at least twice by the links we select. The multi 2-TAP problem reduces to DTAP, see \Cref{subsec:multi-2-tap}.

DTAP can be shown to be NP-hard as well as APX-hard, using similar reductions as in~\cite{frederickson1981approximation,DBLP:journals/siamcomp/KortsarzKL04} for Strong Connectivity Augmentation on oriented trees. For completeness, we give a hardness proof in \Cref{subsec:hardness}. 

We further define the \emph{weighted} Directed Tree Augmentation Problem (WDTAP), in which we are given a positive cost for each link, and the task is to minimize the total cost of the selected link set.
Like many problems in network design, WDTAP admits a straightforward $2$-approximation. To see this, note that WDTAP can be solved in polynomial time when the underlying tree $T$ is an arborescence. Indeed, in this case the constraint matrix of the natural integer programming formulation is a network matrix and hence totally unimodular. This tractable case can be leveraged to obtain a $2$-approximation in general. First, choose an arbitrary root vertex $r$. This partitions the arcs of the tree into \emph{up-arcs} pointing towards the root, and \emph{down-arcs} pointing away from the root. Contracting the up-arcs and down-arcs, respectively, yields two instances of WDTAP in which the oriented tree is a rooted arboresence. Thus, we can cover the up-arcs and down-arcs separately, paying at most the cost of an optimum solution each. The union of these two solutions yields a $2$-approximation.  
 
The main result of this paper is a better-than-2 approximation for WDTAP in the case where the \textit{cost ratio} of the instance, the ratio between the maximum and the minimum cost of a link, is bounded. 
\begin{restatable}{theorem}{mainresult}\label{theorem:main_result}
Let $\Delta\ge 1$ and let $\varepsilon > 0$. There exists a polynomial-time $(1.75+\varepsilon)$-approximation algorithm for WDTAP, restricted to instances with cost ratio at most $\Delta$.
\end{restatable}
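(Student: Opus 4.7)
The plan is to combine the two tractable cases identified in the paper --- willows, for which the natural set-covering LP is integral, and visibly $k$-wide instances, which admit an exact dynamic program --- through a structural decomposition of the oriented tree $T$, leveraging the bounded cost ratio $\Delta$ to afford a bounded amount of enumeration.

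Concretely, I would proceed as follows. First, fix a width parameter $k = k(\Delta, \varepsilon)$ that is large enough for the resulting approximation factor to reach $1.75+\varepsilon$, yet small enough for the DP on visibly $k$-wide instances to run in polynomial time. Second, analyze the structure of an (unknown) optimal solution $\mathrm{OPT}$ and use the bounded cost ratio to enumerate polynomially many candidates for a small set of ``important'' links or tree features of $\mathrm{OPT}$ --- the cost ratio bound is what makes such a guessing step affordable, since every enumerated link costs within a $\Delta$-factor of every other. Third, using this guess, decompose $T$ into subtrees that are each either visibly $k$-wide (solved optimally via DP) or of willow form (solved optimally via the integral LP), in such a way that the expensive portion of $\mathrm{OPT}$ falls into these tractable pieces. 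Fourth, on any residual use the trivial $2$-approximation obtained by contracting up-arcs and down-arcs separately. Finally, combine the three solutions and bound the total cost by $(1.75+\varepsilon)\cdot \mathrm{OPT}$ via a charging argument.

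The $1.75$ bound most naturally arises as a weighted combination of per-piece factors: if fraction $\alpha$ of $\mathrm{OPT}$ is handled exactly (factor $1$) and the remaining $1-\alpha$ is handled by the $2$-approximation, then the overall factor is $2-\alpha$, so the structural analysis must guarantee $\alpha \ge 1/4$. The main obstacle will be this structural decomposition step: showing that one can always cut $T$ into willow-like and visibly $k$-wide pieces that capture enough of $\mathrm{OPT}$ to hit the $3/4$ ``savings'' target. The directed setting complicates this relative to WTAP, since fundamental dicuts are asymmetric and link coverage depends on orientation, so the decomposition must be simultaneously compatible with the covering requirements of both up-arcs and down-arcs --- precisely where the willow and visibly-$k$-wide classes must interlock. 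A secondary challenge is ensuring that the enumeration over structural guesses remains polynomial in $n$ despite the dependence on $\Delta$ and $\varepsilon$, which is where the bounded cost ratio assumption enters crucially.
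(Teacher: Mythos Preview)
Your proposal has a genuine gap: the decomposition-into-independent-subtrees approach you outline is precisely what the paper explains does \emph{not} work in the directed setting (see \Cref{section:decomposition_does_not_work}). In WTAP, one can cut off a subtree $T_v$ cheaply because every link crossing the edge above $v$ covers that edge, and heavily covered edges can be contracted at small cost. In WDTAP, a link can cross the arc $a_v$ while covering it in the \emph{wrong} direction; such links must still be split to disconnect $T_v$, but their total $x$-value is not bounded by the coverage of $a_v$ and cannot be controlled by contracting heavily covered arcs. This is the core obstruction, and your plan does not address it.

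Two further points. First, the bounded cost ratio is not used for enumeration of ``important links of $\mathrm{OPT}$''; the paper never guesses parts of $\mathrm{OPT}$. Instead, $\Delta$ is used so that splitting operations on an LP solution incur a cost increase proportional to the split LP value (cf.\ \cref{lemma:cost_increase_concatenate_splittings}), and the algorithm is embedded in the partial-separation/ellipsoid framework with visibly $k$-wide modification inequalities. Second, the $1.75$ does not arise as $2-\alpha$ with $\alpha=\tfrac14$. The paper constructs \emph{three} solutions $x_1,x_2,x_3$ from a carefully split LP solution $x^{**}$: one of bounded visible width (DP), one that is a willow (integral LP), and a third whose arc--link matrix is block-TU over the cores. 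The support of $x^{**}$ is partitioned into $L_{cross},\overrightarrow{L},\overleftarrow{L},L_{rest}$, and each of the three solutions pays an extra splitting cost on a different combination of these classes; taking the convex combination $\tfrac14 x_1+\tfrac12 x_2+\tfrac14 x_3$ makes every class contribute at most $\tfrac34$ of its cost on top of $c(x^{**})$, yielding the $\tfrac74$ factor. The heart of the argument is the ``components and cores'' analysis that tames heavy wrong-direction coverage --- this is the missing idea in your plan.
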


\paragraph{Further related work}

Many network design problems exhibit a natural approximation barrier of $2$. This is in part due to a fundamental result of Jain~\cite{DBLP:conf/focs/Jain98} who gave a unified iterative rounding 2-approximation algorithm for the Survivable Network Design Problem, which captures, e.g., the weighted Tree and Connectivity Augmentation Problem and their Steiner variants~\cite{DBLP:conf/soda/0001ZZ23,DBLP:conf/esa/HathcockZ24}, the Steiner Forest Problem, and the (weighted) $k$-Edge-Connected Spanning Subgraph Problem. Jain's algorithm represented the best known approximation ratio for these problems for many decades, and only in recent years we have seen several breakthroughs. As mentioned, the best ratio for WTAP, and in fact, also the weighted Connectivity Augmentation Problem (WCAP) is now $1.5+\varepsilon$~\cite{traub2022local}, and recent exciting progress on the Steiner forest has shown that a better-than-2 approximation is possible there as well~\cite{ahmadi2025breakinglongstandingbarrier2varepsilon}. If the weights are uniform, both TAP and CAP can be approximated to within a factor of 1.393~\cite{DBLP:conf/stoc/CecchettoTZ21} while the 2-Edge-Connected Spanning Subgraph Problem admits a slightly better than 1.25-approximation~\cite{bosch20255,hommelsheim2025better}. The weighted $2$-Edge-Connected Spanning Subgraph Problem remains at a factor of $2$, even in the bounded cost setting. 

Directed network design problems are often more challenging to approximate. The Directed Steiner Tree Problem is already set cover hard, see e.g.~\cite{DBLP:conf/soda/CharikarCCDGGL98}. In terms of augmentation problems, the Strong Connectivity Augmentation Problem (SCAP) is natural: we are given a weakly connected digraph which we seek to make strongly connected by adding directed links of cheapest cost. 
SCAP has seen little progress since Frederickson and J\'aj\'a proved that it admits a 2-approximation in the 1980s~\cite{frederickson1981approximation}. It has been shown to be fixed parameter tractable with respect to the solution size~\cite{DBLP:conf/soda/KlinkbyMS21}. Polyhedral results due to Schrijver~\cite{schrijver1982min} show that the linear program of finding a minimum cost strong augmentation of a given digraph $D$ is integral when $D$ is source-sink connected, or when the available arcs are exactly the reverse arcs of those in $D$. There are no known better-than-2 approximations for strongly connecting an oriented tree, even in the unweighted case.

    

\paragraph{Organization of the paper}
The remainder of this paper is organized as follows. In \Cref{sec:prelims}, we formally define WDTAP and introduce basic notation which will be used throughout the paper. In \Cref{sec:pastwork}, we briefly describe past works for WTAP and how these approaches fail in our setting. In \Cref{sec:ourcontribution}, we give an overview of our techniques to prove \Cref{theorem:main_result}. In \Cref{sec:willows}, we define the class of ``willows'' for which the standard LP relaxation turns out to be integral. In \Cref{sec:DPvisiblewidth}, we characterize instances that can be solved in polynomial time via a standard dynamic programming approach. \Cref{sec:partialseparation,sec:proof_weak_dream,sec:components_and_cores,sec:best_of_three} present our $(1.75+\epsilon)$-approximation for WDTAP with bounded cost ratio and its analysis.  

\section{Preliminaries}\label{sec:prelims}
An instance of WDTAP consists of a directed tree $T = (V,A)$, and directed links $L \subseteq V \times V$ with positive costs $c: L \to \mathbb{R}_{> 0}$. For $\ell=(u,v)\in L$, we denote the unique $u$-$v$-path in $T$ by $P_\ell$. The \emph{(directed) coverage} $\covr(\ell)$ consists of all backward arcs on $P_\ell$. To be consistent with the literature on (undirected) tree augmentation, we use $\cov(\ell)$ to denote the set of all arcs on $P_\ell$. However, we point out that in the context of WDTAP, a link $\ell$ only covers the arcs in $\covr(\ell)$, as opposed to all arcs in $\cov(\ell)$. Finally, we write $\covw(\ell)\coloneqq \cov(\ell)\setminus\covr(\ell)$ to denote the set of arcs on $P_\ell$ that $\ell$ ``covers in the wrong direction''. A set of links $F \subseteq L$ is \emph{feasible} if every tree arc in $A$ is covered by some link in $F$, i.e., $A\subseteq \bigcup_{\ell\in F}\covr(\ell)$. WDTAP asks for a feasible link set of minimum cost $c(F):=\sum_{\ell \in F} c(\ell)$.
For convenience, given an instance of $(T=(V,A), L, c)$ of WDTAP, we will fix a vertex $r\in V$ and call it the \emph{root}. We will call the tuple $(T,L,c,r)$ a \emph{rooted instance} of WDTAP. For $v\in V$, we write $T_v=(U_v,A_v)$ to denote the \emph{subtree rooted at $v$}. We call arcs that are pointing towards/away from the root \emph{up-arcs} and \emph{down-arcs}, respectively, and we write $A = A_{up} \dot \cup A_{down}$ to denote the partition into up- and down-arcs. For a link $\ell = (u,v)$, we define the \emph{apex} of $\ell$ to be $\mathrm{apex}(\ell)\coloneqq \lca(u,v)$ (where $\lca(u,v)$ denotes the least common ancestor of $u$ and $v$, the vertex on $P_\ell$ closest to the root). Note that a link $\ell = (u,v)$ \emph{covers} the up-arcs along the $v$-$\mathrm{apex}(\ell)$-path in $T$, and the down-arcs along the $u$-$\mathrm{apex}(\ell)$-path in $T$.
We call a link of the form $\ell=(u,v)$ with $v=\mathrm{apex}(\ell)$ ($u=\mathrm{apex}(\ell)$) an up-link (down-link). 
A \emph{shadow} of a link $\ell=(u,v)$ is a link of the form $\ell' = (u',v')$, where $u'$ and $v'$ appear in this order on $P_\ell$.
We may assume without loss of generality that the WDTAP instances we are working with are \emph{shadow-complete}: this means that for every $\ell\in L$, $L$ contains every possible shadow $\ell'$ of $\ell$ and moreover, $c(\ell')\le c(\ell)$. 
Note that unlike the undirected setting where a link $\ell$ covers all edges on the tree path connecting its endpoints, and, in particular, every shadow has a strictly smaller coverage, this is no longer true in the directed case. For this reason, we define the \emph{generic shadow} $s(\ell)$ of a link $\ell$ as the minimal shadow of $\ell$ with $\covr(s(\ell))=\covr(\ell)$ and let $\overline{P}_\ell\coloneqq P_{s(\ell)}$. Given a feasible solution to a WDTAP instance, we may always replace each link by its generic shadow without increasing costs or destroying feasibility.

The \emph{cost ratio} of a WDTAP instance is defined as $\Delta=\frac{\max_{\ell\in L} c(\ell)}{ \min_{\ell\in L} c(\ell)}$.

The \emph{arc-link-coverage matrix} of an instance of WDTAP is the matrix $M\in\{0,1\}^{A\times L}$ with $M_{a,\ell}=1$ if and only if $a\in\covr(\ell)$. For subsets $B\subseteq A$ and $L'\subseteq L$, we denote by $M[B,L']$ the submatrix of $M$ with rows indexed by $B$ and columns indexed by $L'$. Obtaining an optimum solution to WDTAP is equivalent to finding an optimum \emph{integral} solution to the linear program \eqref{eq:WDTAP_LP}.
\begin{equation}\min\left\{\sum_{\ell\in L} c(\ell)\cdot x_\ell \colon M\cdot x\ge \mathbf{1},x\ge 0\right\}.\label{eq:WDTAP_LP}\end{equation}

\section{Comparison with previous work}\label{sec:pastwork}
\subsection{A decomposition-based approach for WTAP with bounded cost ratio $\dots$\label{section:decomposition_WTAP}}
 In order to provide intuition for our algorithm, it is instructive to first describe some of the ideas that are used in \cite{adjiashvili2018beating,DBLP:conf/soda/Fiorini0KS18,grandoni2018improved} to obtain better-than-$2$-approximations for WTAP with bounded cost ratio\footnote{In doing so, we will mostly follow the description in \cite{grandoni2018improved}, but provide a slightly modified perspective on certain arguments to make them align better with the remainder of this paper.}. The overall approach pursued in these works consists of two main steps:


\begin{enumerate}
 	\item Design an $\alpha$-approximation algorithm for a certain class of well-structured instances.

    \item Decompose a general instance into subinstances from this class such that $\alpha$-approximate solutions to the subinstances can be combined to an $(\alpha+\epsilon)$-approximate solution to the original instance.
 \end{enumerate}


\cite{grandoni2018improved} consider the class of so-called \emph{$k$-wide instances}, where $k\in\mathbb{N}$ is a constant. A (rooted) instance of WTAP is called $k$-wide if every subtree of a child of the root contains at most $k$ leaves. A $1.5$-approximation for $k$-wide instances can be obtained by trading off two different algorithms: after splitting all cross-links\footnote{In the undirected setting, a link is called an \emph{up-link} if its apex coincides with one of its endpoints. A \emph{cross-link} is a link whose apex is the root that is not an up-link. All links that are neither up- nor cross-links are called \emph{in-links}.} at their apex, a $k$-wide instance decomposes into a union of independent instances with at most $k$ leaves each. These can be solved exactly using dynamic programming~\cite{grandoni2018improved}. On the other hand, after splitting every in-link into two up-links, the natural LP relaxation becomes integral after adding so-called \emph{odd cut constraints}~\cite{DBLP:conf/soda/Fiorini0KS18}.

The decomposition into $k$-wide instances is guided by a solution $x$ to a linear programming relaxation, which is used to estimate the cost of \emph{splitting links} to cut off subtrees as independent subinstances. More precisely, when saying that we obtain the LP solution $x'$ from the LP solution $x$ by splitting a set of links $L'$ at a vertex $v$, we mean that $x'$ arises from $x$ by, for every $\ell=\{u,w\}\in L'$ such that $v$ is an inner vertex of $P_\ell$, setting $x'(\ell)=0$ and increasing the $x'$-value on each of the two shadows $\{u,v\}$ and $\{v,w\}$ by $x(\ell)$.
The constant cost ratio ensures that up to a constant factor, costs are proportional to $x$-values; if the total $x$-value only increases by an $\epsilon$-fraction, then the cost will also only increase by an $\mathcal{O}(\epsilon)$-fraction.
The first step of the decomposition procedure is to contract edges $e\in E(T)$ that are \emph{heavily covered}~\cite{adjiashvili2018beating}, i.e., for which $x(\{\ell\colon e\in\cov(\ell)\})\ge \zeta_\epsilon\coloneqq \frac{2}{\epsilon}$. These edges can be covered at cost $\epsilon\cdot c(x)$~\cite{adjiashvili2018beating}.
In the second step of the decomposition procedure, the tree is traversed from bottom to top and subtrees hanging off certain inner vertices are split off. 
More precisely, an edge $e=\{v,w\}$, where $w$ is closer to the root, is called \emph{$\epsilon$-light}\footnote{This term was introduced in \cite{grandoni2018improved}. Their definition (slightly) differs from the one presented here because they do not work with a rooted tree.} if $x(\{\ell\colon e\in\cov(\ell)\})\le \epsilon\cdot x(\{\ell\colon \cov(\ell)\cap E(T_v)\ne \emptyset\})$, i.e., if the total coverage of $e$ amounts to at most an $\epsilon$-fraction of the total coverage of $T_v$.
When disattaching the subtree $T_v$, every link $\ell=\{u,w\}$ that leaves $T_v$ is split at $v$ into two shadows, one whose coverage is contained in $T_v$ and one whose coverage is contained in $T-T_v$. As every split link covers $e$, the splitting increases the total $x$-value by at most $x(\{\ell\colon e\in\cov(\ell)\})$. These costs can be charged to the total $x$-value within the subtree $T_v$ that is subsequently cut off, ensuring that the iterated splitting only increases the total cost by an $\mathcal{O}(\epsilon)$-fraction. 
At the end of the splitting, every subinstance is $k_\epsilon\coloneqq 2\cdot \epsilon^{-1}\cdot \zeta_\epsilon$-wide. To see this, let $w$ be the root of a subinstance and let $T_v$ be a subtree hanging off the root. If there are more than $k_\epsilon$ leaves in the subtree, then $x(\{\ell\colon \cov(\ell)\cap E(T_v)\ne \emptyset\})> \frac{k_\epsilon}{2}$ because every link can cover the edges incident to at most two leaves. On the other hand, $x(\{\ell\colon \{v,w\}\in\cov(\ell)\})< \zeta_\epsilon$ because all heavily covered edges were contracted, implying that $\{v,w\}$ is $\epsilon$-light. But this means that $T_v$ would have been cut off, a contradiction.

\subsection{$\dots$ and why it doesn't work for DTAP}\label{section:decomposition_does_not_work}
Given the similarities between WTAP and WDTAP, it appears tempting to transfer the ideas described in~\Cref{section:decomposition_WTAP} from the undirected to the directed setting. In this section, we point out the issues with this approach, before discussing how to resolve them in the following sections. Recall that in order to decompose a WTAP instance (with bounded cost ratio) into $k$-wide subinstances, while only increasing the total cost by an $\mathcal{O}(\epsilon)$-fraction, we had to ensure two properties: 
first of all, we argued that subtrees with many leaves attract a large LP value. More precisely, we observed that whenever a subtree $T_v$ hanging off the edge $e=\{v,w\}$ contains more than $k_\epsilon$ leaves, then $x(\{\ell\colon \cov(\ell)\cap E(T_v)\ne \emptyset\})$, the total LP value on links covering edges within the subtree, is large. Second of all, we had to make sure that $x(\{\ell\colon e\in\cov(\ell)\})$, the total LP value of links covering the edge $e$, is comparably small. Note that the links covering $e$ are precisely the links one has to duplicate to split off the subtree $T_v$ as an independent instance. Hence, combining both properties ensures that the total cost increase incurred by splitting links only amounts to an $\mathcal{O}(\epsilon)$-fraction of the optimum cost.

The first property can also be easily ensured in the directed setting. Let $a=(v,w)$ be an up-arc (down-arcs can be handled analogously) and assume that $T_v$ contains more than $k$ leaves. Again, each link $\ell$ can cover the arcs incident to at most two leaves; more precisely, $\covr(\ell)$ contains at most one up- and at most one down-arc incident to a leaf.
Hence, any solution $x$ to the natural LP relaxation \eqref{eq:WDTAP_LP} will satisfy $x(\{\ell\colon \covr(\ell)\cap A(T_v)\ne \emptyset\})\ge \frac{k}{2}$. A problem arises, however, with the second condition. In the undirected setting, every link $\ell$ with $e\in E(P_\ell)$ covers the edge $e$. This property can be used to ensure that $x(\{\ell\colon e\in \cov(\ell)\})$ is not too large by covering all heavily covered edges at an $\mathcal{O}(\epsilon)$-fraction of the LP cost and then contracting them. In contrast, in the directed setting, this reasoning can only be used to guarantee that no arc is \emph{heavily covered in the right direction}, i.e., that $x(\{\ell\colon a\in\covr(\ell)\})$ is not too large. However, we cannot control whether an arc is \emph{heavily covered in the wrong direction}, i.e., whether $x(\{\ell\colon a\in\covw(\ell)\})$ is large. But in order to fully cut off $T_v$, we need to be able to bound
$x(\{\ell\colon a\in \cov(\ell)\})=x(\{\ell\colon a\in\covr(\ell)\})+x(\{\ell\colon a\in\covw(\ell)\})$. \cref{figure:no_splitting_possible} illustrates an example where we cannot cut off a subtree, even though it contains more than $k$ leaves.

The reader might wonder why we are focusing on the techniques used in \cite{adjiashvili2018beating,DBLP:conf/soda/Fiorini0KS18,grandoni2018improved} that yield better-than-$2$-approximations for WTAP with bounded cost ratio, and do not consider the more recent works \cite{traub2022better,traub2022local} that give better-than-$2$-approximations for (general) WTAP. The reason for this is that the techniques in \cite{traub2022better,traub2022local} crucially rely on certain structural properties of undirected solutions, such as the fact that the coverage of a link is connected. As such, it is unclear how to apply them in the directed setting.
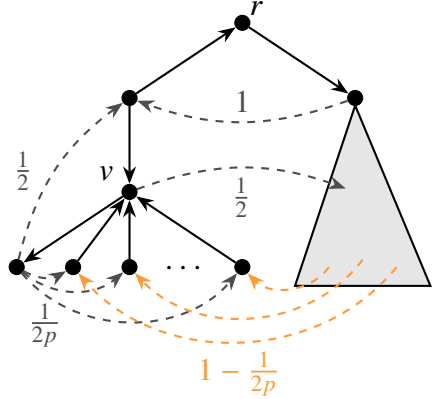
\begin{figure} 
	\centering
	\begin{tikzpicture}[scale = 1, mynode/.style={draw, fill, circle, minimum size = 2mm, inner sep = 0pt}, myarc/.style={thick, arrows = {-Stealth[scale=1]}},mylink/.style={thick, arrows = {-Stealth[scale=1]}, black!70!white, dashed}, every node/.style={font = \large}]
		
		\node[mynode] (r)   at ( 0   ,  0)  {};
		\node           at ( 0.2 ,  0.2) {$r$};
		
		\node[mynode] (v1) at (-1.5,-1){};
		\node           at ( -1.8 ,  -2) {$v$};
		\node[mynode] (v2) at (1.5,-1){};
		\node (w3) at (1.5,-2.25){};
		
		\node[mynode] (v3) at (-1.5,-2.25){};
		
		\node[mynode] (v4) at (-3,-3.25){};
		\node[mynode] (v5) at (-2.25,-3.25){};
		\node[mynode] (v6) at (-1.5,-3.25){};
		\node (v7) at (-0.75,-3.25){$\dots$};
		\node[mynode] (v8) at (0,-3.25){};
		
		\draw[myarc] (v1) to (r);
		\draw[myarc] (r) to (v2);
		\draw[myarc] (v1) to (v3);
		\draw[myarc] (v3) to (v4);
		\draw[myarc] (v5) to (v3);
		\draw[myarc] (v6) to (v3);
		\draw[myarc] (v8) to (v3);
		
		\coordinate (v2T) at ($(v2)+(0,-0.1)$);   
		\coordinate (v2L) at ($(v2)+(-0.8,-2.5)$);   
		\coordinate (v2R) at ($(v2)+( 1,-2.5)$);   
		
		\draw[thick, fill=gray!20]       
		(v2T)  -- (v2L) -- (v2R) -- cycle;
		
		\coordinate (subtree1) at ($(v2)+(-0.35,-2.25)$); 
		\coordinate (subtree2) at ($(v2)+(0.1,-2.15)$);
		\coordinate (subtree3) at ($(v2)+(0.55,-2.25)$);
		
		\draw[mylink] (v2) to [bend left = 20] node [midway, above]{$1$} (v1);
		\draw[mylink] (v3) to [bend left = 20] node [midway, below]{$\frac{1}{2}$}(w3);
		\draw[mylink] (v4) to [bend left = 20] node [midway, left=5pt]{$\frac{1}{2}$} (v1);
		\draw[mylink] (v4) to [bend right = 30] node [midway, below=6pt]{$\frac{1}{2p}$} (v5);
		\draw[mylink] (v4) to [bend right = 40] (v6);
		\draw[mylink] (v4) to [bend right = 50] (v8);
		\draw[mylink,orange!80!white] (subtree3) to [bend right = -50] node [midway, below]{$1-\frac{1}{2p}$} (v5);
		\draw[mylink,orange!80!white] (subtree2) to [bend right = -50] (v6);
		\draw[mylink,orange!80!white] (subtree1) to [bend right = -50] (v8);
		
	\end{tikzpicture}
	\caption{An instance of WDTAP, with arcs indicated by solid arrows and links shown as dashed arrows. A solution to \eqref{eq:WDTAP_LP} is indicated next to the links. All links have cost $1$. The subtree hanging off the vertex $v$ contains a large number $p$ of leaves that are connected to $v$ via up-arcs. In the given LP solution, these up-arcs are (partially) covered by the orange links, each of which covers the down-arc entering $v$ in the wrong direction. Splitting all of the orange links at $v$ is too expensive.}\label{figure:no_splitting_possible}
\end{figure}
\section{Our contribution}\label{sec:ourcontribution}
The main result of this paper is a polynomial-time better-than-$2$-approximation for WDTAP with bounded cost ratio.
\mainresult*
Our approach towards \cref{theorem:main_result} is inspired by the decomposition strategy pursued in~\cite{adjiashvili2018beating,DBLP:conf/soda/Fiorini0KS18,grandoni2018improved} to obtain better-than-$2$-approximations for WTAP with bounded cost ratio. However, as we discussed in the previous section, WDTAP exhibits fundamental differences to its undirected analogue, which renders a simple adaptation of prior techniques infeasible. Instead of completely decomposing the instance, our approach relies on \emph{partially decomposing} the instance by splitting certain links in such a way that the total cost only increases by an $\mathcal{O}(\epsilon)$-fraction.
Our main technical contributions can be summarized as follows:
\begin{itemize}
    \item We introduce the concept of \emph{bounded visible width}, which allows us to characterize when an instance of WDTAP can be solved in polynomial time using a standard dynamic programming approach.
    \item We define a new class of instances called \emph{willows} and show that for these instances, the natural LP relaxation \eqref{eq:WDTAP_LP} is integral. 
    \item We discuss how to carefully split certain links to achieve strong structural properties, while only incurring an arbitrary small increase in the total solution cost.
    Then, we explain how splitting certain subsets of the links results in instances of bounded visible width and willows, respectively, both of which we can solve exactly. Trading off three different solutions results in the final approximation guarantee.
\end{itemize}

\subsection{New notions of partial decomposition: visible width and willows}
In this section, we introduce the concepts of visible width and up- and down-independence, which allow us to characterize the structural properties that we gain by splitting links, and to leverage these to solve certain types of instances exactly.
\paragraph{\textbf{Visible width}}
The notion of constant visible width characterizes WDTAP instances that can be solved exactly using a natural dynamic programming approach, similar to the one used for $k$-wide instances~\cite{grandoni2018improved}. The basic idea is to traverse the tree from the leaves to the root and to iteratively construct a cheapest solution covering the subtree $T_v$, given a fixed ``interface'' to the remaining instance. More precisely, for $v\in V$, let $L_v$ be the set of links $\ell$ such that $v$ is an inner vertex of $\overline{P}_\ell$.
We further partition $L_v$ into the set $L^{cross}_v$ of \emph{$v$-cross-links} having $v$ as their apex, the set $L^\downarrow_v\coloneqq \{\ell=(u,w)\in L_v\colon u\notin U_v \wedge w\in U_v\setminus \{v\}\}$ of links \emph{pointing into} and the set $L^\uparrow_v\coloneqq \{\ell=(u,w)\in L_v\colon u\in U_v\setminus \{v\} \wedge w\notin U_v\}$ of links \emph{pointing out of} $T_v$. Note that the links in $L^{cross}_v$ are precisely those that connect the solutions in different subtrees of $T_v$ hanging off $v$, while the links in $L^\uparrow_v\cup L^\downarrow_v$ are precisely those creating interactions between $T_v$ and $T-T_v$.
In order to obtain an optimum WDTAP solution, it suffices to, for $v\in V$ and $F\subseteq L^\uparrow_v\cup L_v^\downarrow$, compute a cheapest solution $S(v,F)\subseteq L$ for $(T_v,L,c)$ with the property that $S(v,F)\cap (L^\uparrow_v\cup L^\downarrow_v)=F$. Then $S(r,\emptyset)$ constitutes an optimum solution to $(T,L,c)$. Of course, the problem with this approach is that in general, there are exponentially many choices for the set $F$ and moreover, we need to be able to control the number of $v$-cross-links used in a solution in order to merge solutions for the children of $v$ into a solution for $v$ efficiently. Hence, our goal is to characterize instances for which we can guarantee the existence of an optimum solution $S$ such that $|S\cap L_v|$ can be bounded by a constant for every $v\in V$; such a solution can be found in polynomial time using the above-mentioned DP approach.
To this end, consider a \emph{shadow-minimal} optimum solution $S^*$, i.e., an optimum solution in which no link can be replaced by a proper shadow whilst maintaining feasibility. It is not hard to see that every $\ell\in S^*$ covers the first and the last arc of $P_\ell$ and moreover, it is the unique link in $S^*$ that does so. This implies that the lowest up-arcs in $T_v$ covered by links in $(L^\downarrow_v\cup L^{cross}_v)\cap S^*$ are pairwise distinct and form an ancestor-free set of up-arcs, i.e., for two arcs $a\ne a'$ in this set, $a'$ does not appear on the path connecting the top vertex of $a$ to the root. Similarly, the lowest down-arcs in $T_v$ covered by links $(L^\uparrow_v\cup L^{cross}_v)\cap S^*$ form an ancestor-free set of down-arcs.
Motivated by these observations, we say that a vertex $v$ can \emph{see} an up-arc (a down-arc) $a\in A_v$ if there exists a link $\ell\in L^\downarrow_v\cup L^{cross}_v$ ($\ell\in L^\uparrow_v\cup L^{cross}_v$) with $a\in \covr(\ell)$. Equivalently, $v$ can see an arc $a\in A_v$ if there is a link $\ell$ with $a\in\covr(\ell)$ (i.e., $\ell$ covers $a$) such that $v$ is an \emph{inner vertex} of $\overline{P}_\ell$. 
We define the \emph{visible up-width (visible down-width)} at $v$ to be the maximum size of an ancestor-free set of up-arcs (down-arcs) that $v$ can see. The \emph{visible width} of an instance is the maximum over the visible up- and down-widths at the vertices. The previous considerations imply that WDTAP instances with constant visible width can be solved exactly in polynomial time via dynamic programming. 

\usetikzlibrary{calc}
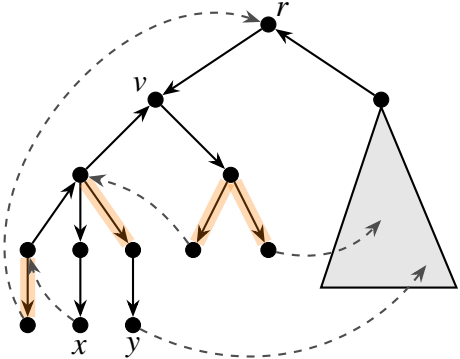
\begin{figure}[h]
\centering
\begin{tikzpicture}[scale = 1, mynode/.style={draw, fill, circle, minimum size = 2mm, inner sep = 0pt}, myarc/.style={thick, arrows = {-Stealth[scale=1]}},mylink/.style={thick, arrows = {-Stealth[scale=1]}, black!70!white, dashed}, every node/.style={font = \large},covered/.style={line width = 2mm, draw opacity = 0.3}]

\node[mynode] (r)   at ( 0   ,  0)  {};
\node           at ( 0.2 ,  0.2) {$r$};

\node[mynode] (v1)  at (-1.5 , -1)  {};
\node           at ( -1.7 ,  -.8) {$v$};

\node[mynode] (v2)  at ( 1.5 , -1)  {};

\node[mynode] (v3)  at (-2.5 , -2)  {};
\node[mynode] (v4)  at (-0.5 , -2)  {};

\node[mynode] (v6)  at (-2.5 , -3)  {};
\node[mynode] (v7)  at (-3.2 , -3)  {};
\node[mynode] (v8)  at (-1.8 , -3)  {};
\node[mynode] (v9)  at (-1   , -3)  {};
\node[mynode] (v11) at ( 0   , -3)  {};

\node[mynode] (v12) at (-3.2 , -4)  {};
\node[mynode] (v13) at (-2.5 , -4)  {};
\node[mynode] (v14) at (-1.8 , -4)  {};














\node           at ( -2.5 ,  -4.3) {$x$};

\node           at ( -1.8 ,  -4.3) {$y$};

\draw[myarc] (r) to (v1);
\draw[myarc] (v2) to (r);
\draw[myarc] (v3) to (v1);
\draw[myarc] (v3) to (v6);
\draw[myarc] (v7) to (v3);
\draw[myarc] (v1) to (v4);

\draw[myarc] (v3) to (v8);

\draw[myarc] (v4) to (v9);

\draw[myarc] (v4) to (v11);

\draw[myarc] (v7) to (v12);
\draw[myarc] (v6) to (v13);
\draw[myarc] (v8) to (v14);

\draw[covered, orange] (v12) to (v7);
\draw[covered, orange] (v4) to (v11);
\draw[covered, orange] (v3) to (v8);
\draw[covered, orange] (v4) to (v9);

\coordinate (v2T) at ($(v2)+(0,-0.1)$);   
\coordinate (v2L) at ($(v2)+(-0.8,-2.5)$);   
\coordinate (v2R) at ($(v2)+( 1,-2.5)$);   

\draw[thick, fill=gray!20]       
      (v2T)  -- (v2L) -- (v2R) -- cycle;

\coordinate (subtree1) at ($(v2)+(0.6,-2.2)$);   

\coordinate (subtree2) at ($(v2)+(-0.3,-2.2)$);   

\coordinate (subtree3) at ($(v2)+(0,-1.6)$);   

\draw[mylink] (v12) to [bend left = 70] (r);
\draw[mylink] (v14) to [bend right = 40] (subtree1);
\draw[mylink] (v11) to [bend right = 30] (subtree3);
\draw[mylink] (v9) to [bend right = 20] (v3);
\draw[mylink] (v13) to [bend left = 20] (v7);


\end{tikzpicture}
\caption{The visible width at $v$ is at least 4, as certified by the four orange-shaded arcs in its subtree. This is an ancestor-free set of down-arcs which are all visible to $v$. However, the black down-arc incident to $x$, as well as the down-arc above, are not visible to $v$ since they are not covered by a link $\ell$ for which $v$ is an inner vertex of $\overline{P}_\ell$. The down-arc incident to $y$ is visible to $v$, but does not form an ancestor-free set with the shaded down-arc above.}
\label{fig:visible-wide}
\end{figure}

Note that the visible width of an instance depends both on the rooted tree as well as the set of links that we are allowed to select. During the course of our algorithm, we will maintain a solution $x$ to \eqref{eq:WDTAP_LP} and we will always compute the visible width with respect to the \emph{support of $x$} (and its shadows). In particular, modifying $x$ by splitting links can block a vertex from seeing certain arcs in its subtree and decrease the visible width.
\paragraph{\textbf{Willows}} For WTAP, \cite{DBLP:conf/soda/Fiorini0KS18} have shown that if the instance only contains cross-links and up-links, then the constraint matrix of the natural LP relaxation is a binet matrix, which they use to argue that adding odd cut constraints suffices to guarantee integrality. For WDTAP, it is not hard to see that if the instance only contains cross-links and up- and down-links, then the constraint matrix of \eqref{eq:WDTAP_LP} is totally unimodular. We generalize this result by introducing \emph{willows}, a class of instances that may contain cross-links with respect to multiple ``local roots''.

Let $(T,L,c,r)$ be a rooted WDTAP instance. We call a vertex $v\in V(T)$ \emph{up-independent (down-independent)} if $L_v^\downarrow=\emptyset$ ($L_v^\uparrow=\emptyset$). If $v$ is up-independent (down-independent), then the problem of covering the up-arcs (down-arcs) in $T_v$ is ``independent from'' the problem of covering arcs outside $T_v$ in the sense that no link can cover both. We call $(T,L,c,r)$ a \emph{willow} if there exists a set $W\subseteq V(T)$ such that every vertex in $W$ is up- or down-independent, and every link in $L$ is either an up-link, a down-link, or a \emph{$W$-cross-link}, meaning that its apex is contained in $W$. Note that the root $r$ is always both up- and down-independent.
\begin{theorem}\label{theorem:willow_integral}
Let $(T,L,c,r)$ be a willow. Then \eqref{eq:WDTAP_LP} is integral.
\end{theorem}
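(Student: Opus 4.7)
The plan is to proceed by induction on $|W|$. For the base case $|W|=1$, i.e., $W=\{r\}$, every link is an up-link, a down-link, or a cross-link at the root, and the excerpt has already noted that in this case the constraint matrix of \eqref{eq:WDTAP_LP} is totally unimodular, so the LP is integral.

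For the inductive step, I would pick an \emph{innermost} vertex $w\in W\setminus\{r\}$, meaning no other element of $W$ lies strictly inside $T_w$. Without loss of generality, $w$ is up-independent, i.e., $L_w^\downarrow=\emptyset$; the down-independent case reduces to this by reversing every arc orientation of $T$, which swaps up with down globally (including up- and down-links, and up- and down-independence) while leaving the LP unchanged. The innermost choice together with $L_w^\downarrow=\emptyset$ yields a disjoint decomposition
\[
L \;=\; L^{\mathrm{int}} \,\cup\, L_w^\uparrow \,\cup\, L^{\mathrm{ext}},
\]
where $L^{\mathrm{int}}$ collects the links with both endpoints in $U_w$ and $L^{\mathrm{ext}}$ those with no endpoint in $U_w\setminus\{w\}$. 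Since the only $W$-vertex inside $T_w$ is $w$ itself, every internal cross-link must have apex $w$, so $(T_w,L^{\mathrm{int}},c,w)$ is a willow with witness set $\{w\}$; after contracting $T_w$ to a single vertex $w$, $L^{\mathrm{ext}}$ yields a willow with witness set $W\setminus\{w\}$ on a strictly smaller tree (one verifies that the up- or down-independence of each remaining $W$-vertex persists under the contraction using the original independence hypotheses).

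Shadow-completeness further furnishes, for each crossing link $\ell=(x,y)\in L_w^\uparrow$, shadows $\ell^{\mathrm{in}}:=(x,w)$ (an up-link with apex $w$) and $\ell^{\mathrm{out}}:=(w,y)$ in $L$ whose coverages partition $\covr(\ell)$ with $\covr(\ell^{\mathrm{in}})\subseteq A(T_w)$ and $\covr(\ell^{\mathrm{out}})\cap A(T_w)=\emptyset$. Consequently the arc-link matrix takes the block form
\[
M \;=\; \begin{pmatrix} M_{II} & M_{IU} & 0 \\ 0 & M_{OU} & M_{OE} \end{pmatrix},
\]
where the column of each crossing link in $M_{IU}$ (resp.\ $M_{OU}$) coincides with the column of $\ell^{\mathrm{in}}$ (resp.\ $\ell^{\mathrm{out}}$) in the corresponding sub-willow. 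By the base case and the inductive hypothesis, the matrices $M_{\mathrm{in}}:=[M_{II}\mid M_{IU}]$ and $M_{\mathrm{out}}:=[M_{OU}\mid M_{OE}]$ each define integral covering polyhedra.

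The main obstacle lies in the final combination step: $M$ is not a Seymour $2$-sum of $M_{\mathrm{in}}$ and $M_{\mathrm{out}}$, because the joint crossing block consisting of $M_{IU}$ and $M_{OU}$ stacked is in general not rank one. To bridge this gap, I would introduce the \emph{extended polyhedron} obtained by replacing each crossing link $\ell$ by the independent shadow pair $(\ell^{\mathrm{in}},\ell^{\mathrm{out}})$ with costs summing to $c(\ell)$; its constraint matrix is block-diagonal with blocks $M_{\mathrm{in}}$ and $M_{\mathrm{out}}$, and so this polyhedron is integral by the preceding step. The remaining technical work --- where I expect the argument to be most delicate --- is to show that every vertex of $\{x\ge 0:Mx\ge\mathbf 1\}$ lifts (via $y_{\ell^{\mathrm{in}}}=y_{\ell^{\mathrm{out}}}=x_\ell$ on crossing links) to an optimum of the extended polyhedron and can be projected back to an integral solution of the original LP, exploiting the shadow cost inequalities $c(\ell^{\mathrm{in}}),c(\ell^{\mathrm{out}})\le c(\ell)$ from shadow-completeness to absorb any rounding losses.
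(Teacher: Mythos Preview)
Your inductive decomposition is correct as far as it goes, but the proposal leaves precisely the crucial step unproven, and the sketch you give for it does not close. The paper avoids this difficulty entirely by proving total unimodularity of the arc--link matrix $M$ directly, via an explicit Ghouila--Houri signing built from starting signs $\varphi_{up},\varphi_{down}$ on the vertices of $W$ (propagated down the tree by parity of up-/down-distances). No induction on $|W|$ and no splitting of crossing links is used.

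The specific problem with your combination step is this. You correctly observe that $M$ is not a $2$-sum of $M_{\mathrm{in}}$ and $M_{\mathrm{out}}$, so integrality (or TU) of the blocks does not transfer to $M$ by any standard closure property. Your proposed workaround is a lift-and-project through the block-diagonal ``extended'' polyhedron, but neither direction of that argument can be made to work with the cost bookkeeping you describe. If you assign the extended costs so that $c'(\ell^{\mathrm{in}})+c'(\ell^{\mathrm{out}})=c(\ell)$, then the lift of a vertex $x$ has the same objective value, and the extended LP has an integral optimum $y^*$ of no larger cost; but when you project $y^*$ back you must pay the \emph{actual} shadow costs $c(\ell^{\mathrm{in}}),c(\ell^{\mathrm{out}})$ in $L$, and shadow-completeness only gives $c(\ell^{\mathrm{in}}),c(\ell^{\mathrm{out}})\le c(\ell)$, not $c(\ell^{\mathrm{in}})\le c'(\ell^{\mathrm{in}})$. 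If instead you set $c'$ to the true shadow costs, the lift of $x$ can cost strictly more than $x$ itself (since $c(\ell^{\mathrm{in}})+c(\ell^{\mathrm{out}})$ may exceed $c(\ell)$), so you lose the inequality you need in the other direction. Either way the ``absorbing rounding losses'' step does not go through. More fundamentally, integrality of the polyhedron $\{x\ge 0:Mx\ge\mathbf 1\}$ is a statement about $M$ alone, so an argument that hinges on the particular cost structure of a shadow-complete instance cannot establish it; you would at best show that for shadow-complete costs some integral optimum exists, which is weaker than what the theorem claims.
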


\begin{figure}[h]
\centering
\begin{tikzpicture}[scale = 1, mynode/.style={draw, fill, circle, minimum size = 2mm, inner sep = 0pt}, myarc/.style={thick, arrows = {-Stealth[scale=1]}},mylink/.style={thick, arrows = {-Stealth[scale=1]}, black!70!white, dashed}, every node/.style={font = \large}]

\tikzset{
  halo/.style={                        
    append after command={             
      \pgfextra{
        \draw[violet!70!black, ultra thick, opacity = 0.25]   
              (\tikzlastnode.center) circle[radius=3.5mm];
        \fill[red!20, opacity=.25]          
              (\tikzlastnode.center) circle[radius=3.5mm];
      }
    }
  }
}

\node[mynode, halo] (r)  at ( 0  ,  0) {};
\node           at ( 0.2,  0.5) {$r$};

\node[mynode] (v1p)  at (-1  , -0.5) {};
\node[mynode] (v1)  at (-2  , -1) {};
\node[mynode] (v2)  at ( 1  , -1) {};

\node[mynode, halo] (v3)  at (-2  , -2) {};
\node           at ( -2.5,  -1.9) {$u$};

\node[mynode] (v4)  at ( 0.5, -2) {};
\node[mynode, halo] (v5)  at ( 2  , -2) {};
\node           at ( 2.5,  -1.9) {$v$};

\node[mynode] (v6)  at (-1  , -3) {};
\node[mynode] (v7)  at (-3  , -3) {};
\node[mynode] (v10) at ( 2  , -3) {};
\node[mynode] (v11) at ( 3  , -3) {};

\node[mynode] (v12) at (-3  , -4) {};
\node[mynode] (v13) at (-2  , -3) {};
\node[mynode] (v14) at ( 1  , -3) {};
\node[mynode] (v15) at ( 0.5, -4) {};
\node[mynode] (v16) at ( 1.5, -4) {};
















\draw[myarc] (v1p) to (r);
\draw[myarc] (v1) to (v1p);
\draw[myarc] (r) to (v2);
\draw[myarc] (v1) to (v3);
\draw[myarc] (v4) to (v2);
\draw[myarc] (v2) to (v5);
\draw[myarc] (v6) to (v3);
\draw[myarc] (v7) to (v3);
\draw[myarc] (v5) to (v10);
\draw[myarc] (v5) to (v11);
\draw[myarc] (v12) to (v7);
\draw[myarc] (v3) to (v13);
\draw[myarc] (v14) to (v5);
\draw[myarc] (v14) to (v15);
\draw[myarc] (v14) to (v16);

\draw[mylink] (v1) to[bend right = 70] (v12);

\draw[mylink] (v11) to[bend right = 40] (r);
\draw[mylink] (v3) to [bend left= -20] (v4);
\draw[mylink] (v13) to [bend left= 20] (v7);
\draw[mylink] (v15) to [bend left= 20] (v6);
\draw[mylink] (v10) to [bend left= 20] (v14);
\draw[mylink] (v16) to [bend left= 70] (v13);
\draw[mylink] (v13) to [bend left= -20] (v6);

\end{tikzpicture}
\caption{A willow  (choosing $W = \{r, u, v\}$). Notice that $u$ is down-independent, $v$ is up-independent, and the root $r$ is both. All links are either up-links, down-links, or have their apex in $W$.}
\label{fig:willow}
\end{figure}
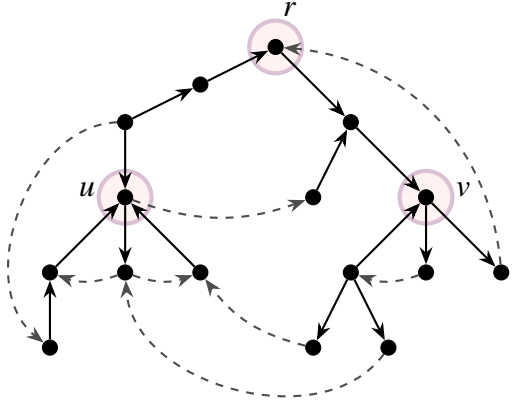

\subsection{Our approach}\label{section:algorithm}
\paragraph{Blue-sky version}
To provide some intuition how the notions of visible width and willows can be leveraged towards a better-than-$2$-approximation for WDTAP with bounded cost ratio, assume for a moment that we could prove the following ``dream theorem''.
\begin{dreamtheorem}
Let $\epsilon>0$ and $x$ a solution to \eqref{eq:WDTAP_LP}. We can, in polynomial time, compute a solution $x^*$ to \eqref{eq:WDTAP_LP} of cost $c(x^*)\le (1+\epsilon)\cdot c(x)$ that arises from $x$ by splitting links, and a set $W\subseteq V$ such that:
\begin{enumerate}[label=(\roman*)]
	\item $W$ consists of up- and down-independent vertices with respect to $\mathrm{supp}(x^*)=\{\ell\in L\colon x^*(\ell)>0\}$.
	\item Let $L'$ arise from $\mathrm{supp}(x^*)$ by splitting every $W$-cross-link at its apex. Then $(T,L')$  has visible width at most $k(\epsilon,\Delta)$ (where $k(\epsilon,\Delta)$ is some constant depending on $\epsilon$ and the cost ratio $\Delta$ of the instance).
\end{enumerate}
\end{dreamtheorem}
Using the dream theorem, we could obtain a $(1.5+\mathcal{O}(\epsilon))$-approximation for WDTAP with bounded cost ratio as follows: first apply the dream theorem to compute $x^*$ and $W$ subject to (i)-(ii). Let $L^*_{cross}$ denote the set of $W$-cross-links in $\mathrm{supp}(x^*)$. By splitting all links in $\mathrm{supp}(x^*)\setminus L^*_{cross}$ that are neither up- nor down-links at their apex, we obtain a willow. Hence, we can compute a solution of cost at most $c(x^*)+\sum_{\ell\in L^*_{cross}} c(\ell)\cdot x^*(\ell)$ by \cref{theorem:willow_integral}. By splitting all links in $L^*_{cross}$ at their apex, we obtain an instance of visible width at most $k(\epsilon,\Delta)$ by (ii), which we can solve optimally. If we could argue that this solution costs at most $c(x^*)+\sum_{\ell\in L\setminus L^*_{cross}} c(\ell)\cdot x^*(\ell)$, then taking the best of the two solutions gives a solution of cost at most $1.5\cdot c(x^*)\le 1.5\cdot (1+\epsilon)\cdot c(x)$. There is the slight issue that the optimum solution found by the DP might be more expensive than the LP ``suggests''. To remedy this, we can embed our algorithm into the partial separation framework from \cite{adjiashvili2018beating} that has also been used in \cite{grandoni2018improved} to obtain a solution with the appropriate cost relative to the LP (see \Cref{sec:partialseparation} for the details). In each step, our \emph{partial separation oracle} will either find a $(1.75+\mathcal{O}(\epsilon))$-approximate solution, or a \emph{violated visibly $k$-wide modification inequality}. Visibly $k$-wide modification inequalities are valid constraints for the integer hull of \eqref{eq:WDTAP_LP} that, loosely speaking, enforce that our LP solution is not ``too cheap'' on ``subinstances'' of visible width at most $k$. See \cref{def:visibly_k_wide_modification} for  a formal definition.

\paragraph{Coming down to earth}
The problem with the blue-sky approach is that the dream theorem is not true, essentially due to the issue with heavy coverage in the ``wrong direction'' outlined in \Cref{section:decomposition_does_not_work}. However, we can prove a \emph{weaker version} of the dream theorem that tells us that heavy coverage in the wrong direction is, in fact, \emph{the only issue} that we have to handle.

Given $\epsilon > 0$, we fix constants $\zeta_1\ll\zeta_2\ll k$ (see \Cref{sec:proof_weak_dream} for the precise values). $\zeta_1$ and $\zeta_2$ will be thresholds for considering an arc to be heavily covered in the ``right'' and ``wrong direction'', respectively. $k$ will be the bound on the visible width of instances that we aim for. 
In the following, we describe our partial separation oracle, that, given a solution $x$ to \eqref{eq:WDTAP_LP}, either finds a violated visibly $k$-wide modification inequality, or a solution of cost at most $(1+\epsilon)^2\cdot 1.75 \cdot c(x)$. As a first step, similar to the outline in \Cref{section:decomposition_WTAP}, we will contract every arc $a$ that is \emph{$\zeta_1$-covered}, i.e., satisfies $x(\{\ell\colon a\in\covr(\ell)\})\ge \zeta_1$. For $\zeta_1$ chosen large enough, these can be covered at a total cost of $\epsilon\cdot c(x)$. Hence, we may assume in the following that there are no $\zeta_1$-covered arcs.
We further call an arc $a$ \emph{$\zeta_2$-heavy} if $x(\{\ell\colon a\in \covw(\ell)\})\ge \zeta_2$. Finally, for $v\in V\setminus \{r\}$, let $a_v$ be the first arc on the $v$-$r$-path in $T$. We are now ready to state our weaker version of the dream theorem:
\begin{restatable}{theorem}{weakeneddreamtheorem}\label{theorem:weak_dream}
Let $\epsilon>0$ and $x$ a solution to \eqref{eq:WDTAP_LP}. We can, in polynomial time, compute a solution $x^*$ to \eqref{eq:WDTAP_LP} of cost $c(x^*)\le (1+\epsilon)\cdot c(x)$ that arises from $x$ by splitting links, and a set $W\subseteq V$ such that:
\begin{enumerate}[label=(\roman*)]
	\item $W$ consists of up- and down-independent vertices with respect to $\mathrm{supp}(x^*)=\{\ell\in L\colon x^*(\ell)>0\}$.
	\item Let $L'$ arise from $\mathrm{supp}(x^*)$ by splitting every $W$-cross-link at its apex.
	\begin{enumerate}
		\item The visible width at $r$, as well as at every $v\in V\setminus\{r\}$ for which $a_v$ is not $\zeta_2$-heavy, is at most $k$.
		\item For every $v\in V\setminus \{r\}$ such that $a_v$ is a $\zeta_2$-heavy up-arc (down-arc), the visible up-width (down-width) at $v$ is at most $k$.
	\end{enumerate}
\end{enumerate}	
\end{restatable}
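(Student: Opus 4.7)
The plan is to adapt the bottom-up greedy decomposition strategy of~\cite{grandoni2018improved} to the directed setting, with the splitting choices dictated by the asymmetry introduced by ``heavy in the wrong direction'' arcs. Initialize $x^* = x$ and $W = \{r\}$, and process each non-root vertex $v$ in a bottom-up order, computing $\viwu(v)$ and $\viwd(v)$ with respect to $\mathrm{supp}(x^*)$. In \emph{Case A}, if $a_v$ is not $\zeta_2$-heavy and $\max(\viwu(v), \viwd(v)) > k$, add $v$ to $W$ and split every link of $L^\downarrow_v \cup L^\uparrow_v$ at $v$ in $x^*$. In \emph{Case B}, if $a_v$ is a $\zeta_2$-heavy up-arc and $\viwu(v) > k$, add $v$ to $W$ and split every link of $L^\downarrow_v$ at $v$. \emph{Case C} is symmetric for $\zeta_2$-heavy down-arcs. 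Finally, form $L'$ by splitting every $W$-cross-link in $\mathrm{supp}(x^*)$ at its apex.

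Property (i) is immediate: splitting $L^\downarrow_v$ (resp.\ $L^\uparrow_v$) at $v$ empties it from $\mathrm{supp}(x^*)$, giving up- (resp.\ down-)independence of $v$. For (ii), the key observation is that splitting a link at a strict ancestor $u$ of $v$ replaces it by two shadows, exactly one of which still has $v$ as an interior vertex, and whose covered arcs form a subset of the original's; hence no later operation can increase the visible width at $v$. For $v \in W$ from Case A, combined with the splitting of $v$-cross-links (which are all $W$-cross-links) when forming $L'$, we have $L^\downarrow_v \cup L^\uparrow_v \cup L^{cross}_v = \emptyset$ in $L'$, so both visible widths at $v$ in $L'$ are $0$. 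For $v \in W$ from Case B (resp.\ C), the only links that could contribute to visible up-width (resp.\ down-width) at $v$ in $L'$ are in $L^\downarrow_v \cup L^{cross}_v$ (resp.\ $L^\uparrow_v \cup L^{cross}_v$), all of which are empty. For $v \notin W$, the triggering condition never fired, so the visible widths required in (ii)(a)--(ii)(b) were already at most $k$ when $v$ was processed, and remain so in $L'$.

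The main obstacle is the cost bound $c(x^*) \leq (1+\epsilon) c(x)$. Each split of a link $\ell$ at $v$ increases the LP cost by at most $c(\ell) \cdot x^*(\ell)$, so the total split cost at $v$ is bounded by $(\zeta_1 + \zeta_2) \cdot \max_\ell c(\ell)$: the right-direction contribution is below $\zeta_1$ since $a_v$ is not $\zeta_1$-covered (by the preprocessing preceding the theorem), and the wrong-direction contribution is below $\zeta_2$ since we only perform the corresponding split when $a_v$ is not $\zeta_2$-heavy. The amortization assigns each split cost at $v$ to the ancestor-free set of more than $k$ visible arcs in $T_v$ witnessing the triggering condition. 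The crucial observation enabling a disjoint charge is that after $v$ is added to $W$ and the appropriate splits are performed, any link in $\mathrm{supp}(x^*)$ covering an arc of $T_v$ with some ancestor $u$ of $v$ as an interior vertex must have had one endpoint inside $T_v$ and one outside --- hence was originally in $L^\downarrow_v \cup L^\uparrow_v$ and has already been split at $v$. Thus arcs of $T_v$ (in the relevant direction) become invisible at all ancestors after $v$'s processing, so each arc is charged at most once. Choosing $k = \Theta((\zeta_1 + \zeta_2)\Delta/\epsilon)$ large enough, the per-arc charge becomes small; the bulk of the remaining technical work lies in relating this amortized per-arc charge to an $\epsilon$-fraction of $c(x)$ via the LP feasibility constraints, and in handling the ``leakage'' of cost when shadows of previously split links are themselves split at higher ancestors, particularly in the partial-split regimes of Cases B and C.
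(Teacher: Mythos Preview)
Your overall structure is sound and close in spirit to the paper's proof, but there is a genuine gap in the cost bound, and it is not merely ``remaining technical work'': as stated, the amortization does not close.

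You charge the split cost at each $v\in W$ to an ancestor-free set of $k+1$ visible arcs in $T_v$, argue these arc-sets are disjoint across $v$'s, and conclude the per-arc charge is $O((\zeta_1+\zeta_2)\Delta/k)$. This correctly bounds the total split cost by $|A_{\text{charged}}|\cdot(\zeta_1+\zeta_2)\Delta/(k+1)$. But $|A_{\text{charged}}|$ can be as large as $|A|$, and nothing in the LP feasibility constraints bounds $|A|$ in terms of $c(x)$: a single link of $x$-value $1$ can cover arbitrarily many arcs. The step ``relating this amortized per-arc charge to an $\epsilon$-fraction of $c(x)$ via the LP feasibility constraints'' is precisely the nontrivial step, and you have not supplied it. Concretely, if you try to convert arc-credits into link-credits via $x(\{\ell:a\in\covr(\ell)\})\ge 1$, the link sets for arcs charged at different vertices $v$ need not be disjoint: a link $\ell\in L^\downarrow_{v_1}$ covering an up-arc in $S_{v_1}$ can, after being split at $v_1$, have its outer shadow cover an up-arc in $S_{v_2}$ for an ancestor $v_2$, so the same original $x$-mass is counted twice.

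The paper avoids this by a different triggering condition and a different charging target. It does not test $\viw(v)>k$; instead it tests whether $a_v$ is \emph{$\gamma$-up-light} (resp.\ $\gamma$-down-light), meaning $x(L^\downarrow_v)\le \gamma\cdot x(\{\ell:\covr(\ell)\cap A_{up}\cap A^{vis}_v\ne\emptyset\}\setminus L^\downarrow_v)$, and charges the split cost directly to the $x$-value of links covering visible up-arcs \emph{that are not in $L^\downarrow_v$}. Such links have both endpoints in $T_v$, hence cover up-arcs only in $T_v$; once $L^\downarrow_v$ is split, up-arcs in $T_v$ become invisible to all ancestors, so no ancestor ever charges the same link again for up-coverage. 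This gives disjoint link-charges summing to at most $2c(x)$, and the $\gamma$-light definition makes the ratio explicit. The conclusion $\viw\le k$ then follows \emph{a posteriori}: if $\viwu(v)>k$ after all splitting, one computes $x(L^\downarrow_v)>\zeta_2$, which together with the absence of $\zeta_1$-covered arcs forces $a_v$ to be a $\zeta_2$-heavy down-arc. Your approach can be salvaged along these lines---charge to links with both endpoints in $T_v$ rather than to arcs, and separate up- from down-charges---but that restriction is the heart of the argument and is missing from your sketch.
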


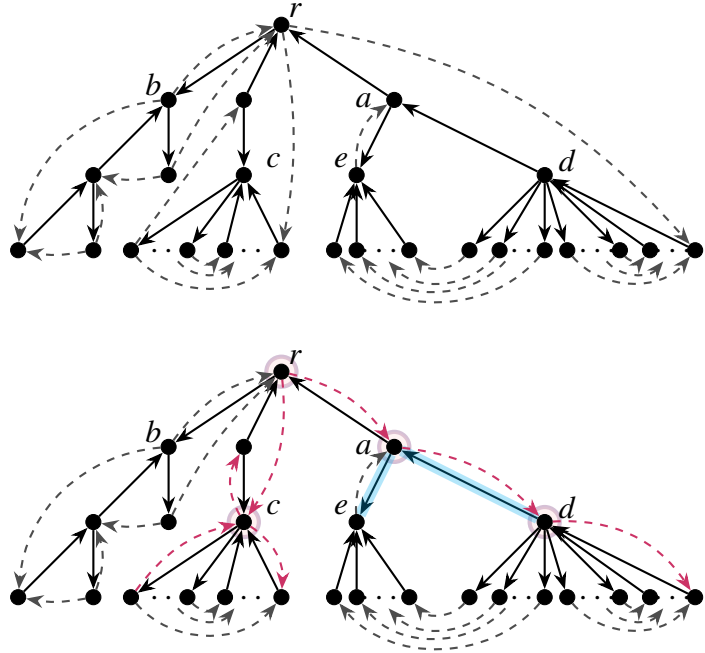
\begin{figure}[h]
\centering

\begin{tikzpicture}[scale = 1, mynode/.style={draw, fill, circle, minimum size = 2mm, inner sep = 0pt}, myarc/.style={thick, arrows = {-Stealth[scale=1]}},mylink/.style={thick, arrows = {-Stealth[scale=1]}, black!70!white, dashed}, every node/.style={font = \large},covered/.style={line width = 2mm, draw opacity = 0.3}]

\node[mynode] (r)   at (0.5, 3)  {};
\node           at (0.7, 3.2) {$r$};

\node[mynode] (b)  at (-1, 2)  {};
\node           at (-1.2, 2.2) {$b$};

\node[mynode] (v2)  at (0, 2)  {};
\node[mynode] (v3)  at (2, 2)  {};
\node           at (1.6, 2) {$a$};
\node[mynode] (v4)  at (-2 , 1)  {};
\node[mynode] (v5)  at (-1 , 1)  {};
\node[mynode] (v6)  at (-3 , 0)  {};
\node[mynode] (c)  at (0 , 1)  {};
\node           at ( 0.4 ,  1.2) {$c$};
\node[mynode] (e)  at (1.5 , 1)  {};
\node           at ( 1.3 ,  1.2) {$e$};
\node[mynode] (d)  at (4 , 1)  {};
\node           at (4.3, 1.2) {$d$};
\node[mynode] (v7) at (-2, 0) {};
\node[mynode] (v8) at (-1.5, 0) {};
\node (v9) at (-1, 0) {$\ldots$};
\node[mynode] (v10) at (-0.75, 0) {};
\node[mynode] (v11) at (-0.25, 0) {};
\node (v12) at (0, 0) {$\ldots$};
\node[mynode] (v13) at (0.5, 0) {};
\node[mynode] (a1) at (1.2, 0) {};
\node[mynode] (a2) at (1.5, 0) {};
\node (a3) at (1.8, 0) {$\ldots$};
\node[mynode] (a4) at (2.2, 0) {};
\node[mynode] (d1) at (3, 0) {};
\node[mynode] (d2) at (3.4, 0) {};
\node (d3) at (3.6, 0) {$\ldots$};
\node[mynode] (d4) at (4, 0) {};
\node[mynode] (d5) at (4.3, 0) {};
\node (d6) at (4.7, 0) {$\ldots$};
\node[mynode] (d7) at (5, 0) {};
\node[mynode] (d8) at (5.4, 0) {};
\node (d9) at (5.7, 0) {$\ldots$};
\node[mynode] (d10) at (6, 0) {};

\draw[myarc] (r) to (b);
\draw[myarc] (v2) to (r);
\draw[myarc] (v3) to (r);
\draw[myarc] (b) to (v5);
\draw[myarc] (v4) to (b);
\draw[myarc] (v4) to (v7);
\draw[myarc] (v6) to (v4);
\draw[myarc] (v2) to (c);
\draw[myarc] (c) to (v8);
\draw[myarc] (c) to (v10);
\draw[myarc] (v11) to (c);
\draw[myarc] (v13) to (c);
\draw[myarc] (v3) to (e);
\draw[myarc] (d) to (v3);
\draw[myarc] (a1) to (e);
\draw[myarc] (a2) to (e);
\draw[myarc] (a4) to (e);
\draw[myarc] (d) to (d1);
\draw[myarc] (d) to (d2);
\draw[myarc] (d) to (d4);
\draw[myarc] (d10) to (d);
\draw[myarc] (d8) to (d);
\draw[myarc] (d) to (d5);
\draw[myarc] (d) to (d7);

\draw[mylink] (b) to[bend right = 40] (v6);
\draw[mylink] (v5) to[bend left = 10] (v4);
\draw[mylink] (v7) to[bend right = 20] (v4);
\draw[mylink] (v7) to[bend left = 10] (v6);
\draw[mylink] (v5) to[bend left = 10] (r);
\draw[mylink] (b) to[bend left = 20] (r);

\draw[mylink] (v8) to[bend left = 0] (v2);
\draw[mylink] (r) to[bend left = 10] (v13);
\draw[mylink] (v10) to[bend right = 50] (v11);
\draw[mylink] (v9) to[bend right = 50] (v12);
\draw[mylink] (v8) to[bend right = 50] (v13);

\draw[mylink] (e) to[bend left = 30] (v3);

\draw[mylink] (d1) to[bend left = 50] (a4);
\draw[mylink] (d2) to[bend left = 50] (a3);
\draw[mylink] (d3) to[bend left = 50] (a2);
\draw[mylink] (d4) to[bend left = 50] (a1);

\draw[mylink] (d5) to[bend right = 50] (d10);
\draw[mylink] (d6) to[bend right = 50] (d9);
\draw[mylink] (d7) to[bend right = 50] (d8);

\draw[mylink] (r) to[bend left = 20] (d10);

\end{tikzpicture}

\vspace{0.8em}

\begin{tikzpicture}[scale = 1, mynode/.style={draw, fill, circle, minimum size = 2mm, inner sep = 0pt}, myarc/.style={thick, arrows = {-Stealth[scale=1]}},mylink/.style={thick, arrows = {-Stealth[scale=1]}, black!70!white, dashed}, every node/.style={font = \large},covered/.style={line width = 1.5mm, draw opacity = 0.3}]
\tikzset{
  halo/.style={
    append after command={
      \pgfextra{
        \draw[violet!70!black, ultra thick, opacity = 0.25]
              (\tikzlastnode.center) circle[radius=2mm];
        \fill[red!20, opacity=.25]
              (\tikzlastnode.center) circle[radius=2mm];
      }
    }
  }
}
\node[mynode, halo] (r)   at (0.5, 3)  {};
\node           at (0.7, 3.2) {$r$};

\node[mynode] (b)  at (-1, 2)  {};
\node           at (-1.2, 2.2) {$b$};

\node[mynode] (v2)  at (0, 2)  {};
\node[mynode, halo] (v3)  at (2, 2)  {};
\node           at (1.6, 2) {$a$};
\node[mynode] (v4)  at (-2 , 1)  {};
\node[mynode] (v5)  at (-1 , 1)  {};
\node[mynode] (v6)  at (-3 , 0)  {};
\node[mynode, halo] (c)  at (0 , 1)  {};
\node           at ( 0.4 ,  1.2) {$c$};
\node[mynode] (e)  at (1.5 , 1)  {};
\node           at ( 1.3 ,  1.2) {$e$};
\node[mynode, halo] (d)  at (4 , 1)  {};
\node           at (4.3, 1.2) {$d$};
\node[mynode] (v7) at (-2, 0) {};
\node[mynode] (v8) at (-1.5, 0) {};
\node (v9) at (-1, 0) {$\ldots$};
\node[mynode] (v10) at (-0.75, 0) {};
\node[mynode] (v11) at (-0.25, 0) {};
\node (v12) at (0, 0) {$\ldots$};
\node[mynode] (v13) at (0.5, 0) {};
\node[mynode] (a1) at (1.2, 0) {};
\node[mynode] (a2) at (1.5, 0) {};
\node (a3) at (1.8, 0) {$\ldots$};
\node[mynode] (a4) at (2.2, 0) {};
\node[mynode] (d1) at (3, 0) {};
\node[mynode] (d2) at (3.4, 0) {};
\node (d3) at (3.6, 0) {$\ldots$};
\node[mynode] (d4) at (4, 0) {};
\node[mynode] (d5) at (4.3, 0) {};
\node (d6) at (4.7, 0) {$\ldots$};
\node[mynode] (d7) at (5, 0) {};
\node[mynode] (d8) at (5.4, 0) {};
\node (d9) at (5.7, 0) {$\ldots$};
\node[mynode] (d10) at (6, 0) {};

\draw[myarc] (r) to (b);
\draw[myarc] (v2) to (r);
\draw[myarc] (v3) to (r);
\draw[myarc] (b) to (v5);
\draw[myarc] (v4) to (b);
\draw[myarc] (v4) to (v7);
\draw[myarc] (v6) to (v4);
\draw[myarc] (v2) to (c);
\draw[myarc] (c) to (v8);
\draw[myarc] (c) to (v10);
\draw[myarc] (v11) to (c);
\draw[myarc] (v13) to (c);
\draw[myarc] (v3) to (e);
\draw[covered, cyan] (v3) to (e);
\draw[myarc] (d) to (v3);
\draw[covered, cyan] (d) to (v3);
\draw[myarc] (a1) to (e);
\draw[myarc] (a2) to (e);
\draw[myarc] (a4) to (e);
\draw[myarc] (d) to (d1);
\draw[myarc] (d) to (d2);
\draw[myarc] (d) to (d4);
\draw[myarc] (d10) to (d);
\draw[myarc] (d8) to (d);
\draw[myarc] (d) to (d5);
\draw[myarc] (d) to (d7);

\draw[mylink] (b) to[bend right = 40] (v6);
\draw[mylink] (v5) to[bend left = 10] (v4);
\draw[mylink] (v7) to[bend right = 20] (v4);
\draw[mylink] (v7) to[bend left = 10] (v6);
\draw[mylink] (v5) to[bend left = 10] (r);
\draw[mylink] (b) to[bend left = 20] (r);

\draw[mylink, purple!80!white] (v8) to[bend left = 20] (c);
\draw[mylink, purple!80!white] (c) to[bend left = 30] (v2);
\draw[mylink, purple!80!white] (r) to[bend left = 20] (c);
\draw[mylink, purple!80!white] (c) to[bend left = 30] (v13);
\draw[mylink] (v10) to[bend right = 50] (v11);
\draw[mylink] (v9) to[bend right = 50] (v12);
\draw[mylink] (v8) to[bend right = 50] (v13);

\draw[mylink] (e) to[bend left = 30] (v3);

\draw[mylink] (d1) to[bend left = 50] (a4);
\draw[mylink] (d2) to[bend left = 50] (a3);
\draw[mylink] (d3) to[bend left = 50] (a2);
\draw[mylink] (d4) to[bend left = 50] (a1);

\draw[mylink] (d5) to[bend right = 50] (d10);
\draw[mylink] (d6) to[bend right = 50] (d9);
\draw[mylink] (d7) to[bend right = 50] (d8);

\draw[mylink, purple!80!white] (r) to[bend left = 20] (v3);
\draw[mylink, purple!80!white] (v3) to[bend left = 20] (d);

\draw[mylink, purple!80!white] (d) to[bend left = 30] (d10);

\end{tikzpicture}

\caption{Illustration of \cref{theorem:weak_dream}. (Top) A DTAP instance and a solution $x$ to \eqref{eq:WDTAP_LP} whose support is shown as dashed links. (Bottom) The resulting solution $x^*$ with $W=\{a,c,d,r\}$, where arcs $a_e$ and $a_d$ are $\zeta_2$-heavy (light blue). The links contained in the support of $x^*$, but not in the support of $x$, are shown in purple. All vertices except $a,c, d, e$ have visible width at most $k=3$. After splitting $W$-cross-links, $a$ and $c$ will have visible width $0$, $d$ will have visible up-width $0$ and $e$ will have visible down-width $0$.}
\label{fig:dtap-stacked}
\end{figure}

 In the following, we will sketch how to prove the weakened dream theorem using ideas similar to the decomposition approach used for WTAP. The heavy lifting happens in \Cref{section:phase_two_overview}, where we explain how to handle heavy arcs and leverage the weakened dream theorem to obtain the desired approximation guarantee.
\paragraph{Proving \cref{theorem:weak_dream}}
Our strategy to prove \cref{theorem:weak_dream} is similar to the decomposition approach for WTAP with bounded cost ratio described in \Cref{section:decomposition_WTAP}. However, instead of completely cutting off certain subtrees, we will only split certain links covering light arcs. This requires a more intricate scheme to bound the cost of the splitting and avoid ``overcharging''. We fix a constant $\gamma\sim\epsilon$. For $v\in V\setminus \{r\}$, we say that the arc $a_v$ is \emph{$\gamma$-up-light} if $x(L^\downarrow_v)\le \gamma\cdot x(\{\ell\in L\colon \text{ $\ell$ covers an up-arc in $T_v$ that $v$ can see}\}\setminus L^\downarrow_v)$, where visibility is defined with respect to the support of $x$. Analogously, we define the notion of \emph{$\gamma$-down-light} arcs by replacing $L^\downarrow_v$ by $L^\uparrow_v$ and ``up-arc'' by ``down-arc''. We obtain an LP solution $x^*$ and a vertex set $W$ as stated in \cref{theorem:weak_dream} as follows. We traverse $V\setminus \{r\}$ in order of decreasing distance to the root (i.e., from bottom to top). For $v\in V\setminus \{r\}$, if $a_v$ is $\gamma$-up-light, we split every link in $L^\downarrow_v$ at $v$ and add $v$ to $W$. Similarly, if $a_v$ is $\gamma$-down-light, we split every link in $L^\uparrow_v$ at $v$ and add $v$ to $W$. Finally, add $r$ to $W$.
By construction, every vertex in $W$ is up- or down-independent.
To bound the total cost increase by $\mathcal{O}(\epsilon)\cdot c(x)$, we observe that if $\ell\in L\setminus L^\downarrow_v$ covers an up-arc in $T_v$, then $\ell$ only covers arcs within $T_v$; otherwise, we had $\ell\in L^\downarrow_v$. When splitting all links in $L^\downarrow_v$ at $v$, every up-arc in $T_v$ becomes invisible to every vertex outside $T_v$. Hence, $x(\ell)$ will be ``charged against'' for splitting at a $\gamma$-up-light arc at most once, and the same reasoning applies for $\gamma$-down-light arcs. Hence, $\gamma\sim\epsilon$ yields the desired cost bound.
Finally, we sketch how to derive (ii). We will only explain how to bound the visible up-width at every vertex; the visible-down width can be handled analogously. As $r\in W$, the visible (up)-width at $r$ is $0$ after splitting all $W$-cross-links at their apex. Next, let $v\in V\setminus \{r\}$ and assume that the visible up-width at $v$ is greater than $k$. This means that there is an ancestor-free set $A'$ of at least $k+1$ up-arcs in $T_v$ that are all visible from $v$. Given that no link can cover two up-arcs from an ancestor-free set simultaneously, this implies $x(\{\ell\in L\colon \text{ $\ell$ covers an up-arc in $T_v$ that $v$ can see}\})> k$ holds at the end of the splitting procedure and by the order in which vertices are considered, it also holds when we look at $v$. If $x(L^\downarrow_v)<\zeta_2$, then by our choice of constants, $a_v$ is $\gamma$-up-light. Hence, every link in $L^\downarrow_v$ is split at $v$, $v\in W$ and after splitting all $W$-cross-links at their apex, the visible up-width at $v$ is $0<k$, a contradiction. So we must have $x(L^\downarrow_v)\ge \zeta_2>\zeta_1$. Note that if $a_v$ is an up-arc, then $a_v\in\covr(\ell)$ for every $\ell\in L^\downarrow_v$; otherwise $a_v\in \covw(\ell)$ for every $\ell\in L^\downarrow_v$. As we contracted all $\zeta_1$-covered arcs, $a_v$ must be a down-arc that is $\zeta_2$-heavy.

\subsection{Components and cores: handling coverage in the wrong direction\label{section:phase_two_overview}}
\paragraph{An instructive special case}
To explain how we handle $\zeta_2$-heavy arcs, it is helpful to first consider the slightly artificial, but instructive special case in which for every $\zeta_2$-heavy arc $a_v$, the parent arc (if exists) is oppositely oriented. More precisely, we assume that if $a_v=(v,w)$ is a $\zeta_2$-heavy up-arc and $w\ne r$, then $a_w$ is a down-arc, and if $a_v=(w,v)$ is a $\zeta_2$-heavy down-arc and $w\ne r$,then $a_w$ is an up-arc. In this situation, we can again obtain a $(1.5+\mathcal{O}(\epsilon))$-approximation using the following result:
\begin{theorem}\label{theorem:heavy_link_handling_special_case}
We can, in polynomial time, compute a solution $x^{**}$ to \eqref{eq:WDTAP_LP} of cost $c(x^{**})\le (1+\epsilon)\cdot c(x^*)$ that arises from $x^*$ by splitting links, and $X\subseteq V$ such that:
\begin{enumerate}[label=(\roman*)]
\item $X$ consists of up- and down-independent vertices with respect to $\mathrm{supp}(x^{**})=\{\ell\in L\colon x^{**}(\ell)>0\}$.
\item Let $L'$ arise from $\mathrm{supp}(x^{**})$ by splitting every $X$-cross-link at its apex. For every $v\in V\setminus \{r\}$ such that $a_v$ is a $\zeta_2$-heavy up-arc (down-arc), the visible down-width (up-width) of $v$ w.r.t.\ $L'$ is $0$.
\end{enumerate}
\end{theorem}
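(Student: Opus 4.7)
The plan is to process the $\zeta_2$-heavy arcs one at a time, traversing $V\setminus\{r\}$ in order of decreasing distance from the root. At each vertex $v$ whose parent arc $a_v$ is a $\zeta_2$-heavy up-arc, I split every link in $L_v^\uparrow$ at $v$ (inside $x^{**}$) and add $v$ to $X$ as a down-independent vertex. Symmetrically, at each $v$ with $a_v$ a $\zeta_2$-heavy down-arc, I split every link in $L_v^\downarrow$ at $v$ and add $v$ to $X$ as up-independent. Property (i) holds by construction. For property (ii), consider a heavy up-arc $a_v$: a link of $L'$ that has $v$ as an inner vertex of its generic shadow path and covers a down-arc of $T_v$ must lie in $L_v^\uparrow\cup L_v^{cross}$, because links in $L_v^\downarrow$ traverse the $v$-to-lower-endpoint portion of their path upward and therefore cover only up-arcs of $T_v$ in the forward direction. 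The first set is emptied by the split, and the second vanishes after $X$-cross-link splitting at $v\in X$; a symmetric argument handles heavy down-arcs.

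The cost analysis is where the special-case hypothesis carries the argument. Fix a $\zeta_2$-heavy up-arc $a_v=(v,w)$ with $w\neq r$; by hypothesis $a_w=(w',w)$ is a down-arc. I decompose $L_v^\uparrow=L^{\uparrow,\mathrm{far}}_v\sqcup L^{\uparrow,\mathrm{near}}_v$, where \emph{far} means $\mathrm{apex}(\ell)$ is strictly above $w$ and \emph{near} means $\mathrm{apex}(\ell)=w$. A direct walk-traversal check shows that every far link has $a_w\in\covr(\ell)$; since $a_w$ is not $\zeta_1$-covered (such arcs were contracted prior to the weakened-dream step), we get $x^*(L^{\uparrow,\mathrm{far}}_v)\le x^*(\{\ell:a_w\in\covr(\ell)\})<\zeta_1$. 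For near links, whose apex is exactly $w$, I add $w$ to $X$ as down-independent and split $L_w^\uparrow$ at $w$ as well; the same forward-coverage bound controls $x^*(L_w^\uparrow)<\zeta_1$. A residual case analysis then shows that any remaining near $L_v^\uparrow$-link is problematic (i.e.\ covers a down-arc $b$ of $T_v$) only if it is counted by the forward coverage of $b$, which is itself less than $\zeta_1$ by the contraction step.

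Putting the bounds together, the splitting cost attributable to each $\zeta_2$-heavy arc is at most $O(c_{\max}\cdot\zeta_1)$. To convert this additive bound into the desired multiplicative $(1+\varepsilon)$ factor, I charge each heavy arc against its at-least $\zeta_2$ units of wrong-direction LP coverage, which was paid for in $c(x^*)$ at a rate of at least $c_{\min}$ per unit. Choosing $\zeta_2$ sufficiently large in terms of $\zeta_1$, the cost ratio $\Delta$, and $\varepsilon$ then absorbs the total splitting cost into $\varepsilon\cdot c(x^*)$. The main obstacle will be the amortized LP-charging step: a single link can contribute wrong-direction mass to many heavy arcs along its path, so the alternating-orientation structure of the special case must be crucially exploited to prevent over-counting when extracting a lower bound on $c(x^*)$ from the wrong-direction heavy mass, ensuring that the heavy arcs really do witness a $\zeta_2$-fraction of the LP budget that can then pay for the $\zeta_1$-sized split at each of them.
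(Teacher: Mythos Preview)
Your construction and the paper's diverge at the very first step, and the divergence is fatal for the cost bound. For a $\zeta_2$-heavy up-arc $a_v=(v,w)$ you split $L_v^\uparrow$ at $v$ to make $v$ down-independent. But $L_v^\uparrow$ is precisely the set of links that traverse $a_v$ in the wrong direction, so $x^*(L_v^\uparrow)\ge x^*(\{\ell:a_v\in\covw(\ell)\})-x^*(\{\ell:a_v\in\covw(\ell),\ \mathrm{apex}(\ell)=v\})$, and the first term is at least $\zeta_2$ by heaviness. Your far/near decomposition bounds $x^*(L_v^{\uparrow,\mathrm{far}})<\zeta_1$ via forward coverage of $a_w$, but the near part $L_v^{\uparrow,\mathrm{near}}$ (links with apex $w$) carries essentially all the $\zeta_2$ mass and there is no $\zeta_1$-type upper bound on it. Your ``residual case analysis'' only says that a near link covering a \emph{fixed} down-arc $b$ of $T_v$ is counted by $b$'s forward coverage; summing over the (potentially many, ancestor-free) down-arcs $b$ in $T_v$ gives no useful bound. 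And you cannot simply skip splitting the near links at $v$: if you do, $v$ is no longer down-independent (near links remain in $L_v^\uparrow\cap\mathrm{supp}(x^{**})$), so $v\notin X$, and then $v$-cross-links survive in $L'$ and can make $v$ see down-arcs.

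The paper sidesteps this by splitting the \emph{other} direction: it splits $L_v^\downarrow$ at $v$ (making $v$ \emph{up}-independent, not down-independent) and $L_w^\uparrow$ at $w$, adding both to $X$. Both sets are cheap because every link in $L_v^\downarrow$ covers $a_v$ in the right direction and every link in $L_w^\uparrow$ covers the down-arc $a_w$ in the right direction, so each has $x^*$-mass below $\zeta_1$. The apparently wrong-way independence of $v$ still yields visible down-width $0$: with $v,w\in X$, any $\ell\in L_v^\uparrow\cap L'$ must end at $w$ (anything going further is in $L_w^\uparrow$ or is a $w$-cross-link, both split at $w$), and since $a_v$ is an up-arc not in $\covr(\ell)$, the generic shadow $\overline{P}_\ell$ stops at $v$, so $v\notin\mathrm{in}(\overline{P}_\ell)$. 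The point you are missing is that ``$v$ sees no down-arcs'' does \emph{not} require $v$ to be down-independent; it only requires that links in $L_v^\uparrow\cap L'$ have $v$ outside their generic-shadow interior, which follows from the structure at $w$.
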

Before proving \cref{theorem:heavy_link_handling_special_case}, let us first discuss how to leverage it to obtain the desired approximation guarantee. As splitting links can only reduce the visible width, \cref{theorem:weak_dream}~(ii) and \cref{theorem:heavy_link_handling_special_case}~(ii) tell us that after splitting every $(W\cup X)$-cross-link in $\mathrm{supp}(x^{**})$, we obtain an instance of visible width at most $k$. On the other hand, as splitting links cannot destroy up- or down-independence, \cref{theorem:weak_dream}~(i) and \cref{theorem:heavy_link_handling_special_case}~(i) tell us that after splitting every link in $\mathrm{supp}(x^{**})$ that is not a $(W\cup X)$-cross-link at its apex, we obtain a willow. Hence, we may proceed as in \Cref{section:algorithm} to obtain a solution of cost $(1.5+\mathcal{O}(\epsilon))\cdot c(x)$.
\paragraph{Proving \cref{theorem:heavy_link_handling_special_case}}
We obtain $x^{**}$ as follows: for $v\in V\setminus\{r\}$ such that $a_v=(v,w)$ is a $\zeta_2$-heavy up-arc, we split every link in $L^\downarrow_v$ at $v$ and every link in $L^\uparrow_w$ at $w$. We add $v$ and $w$ to $X$. Similarly, for $v\in V\setminus\{r\}$ such that $a_v=(w,v)$ is a $\zeta_2$-heavy down-arc, we split every link in $L^\uparrow_v$ at $v$ and every link in $L^\downarrow_w$ at $w$. Again, we add $v$ and $w$ to $X$. Property (i) is clear by construction. For property (ii), let w.l.o.g.\ $v\in V\setminus\{r\}$ such that $a_v=(v,w)$ is a $\zeta_2$-heavy up-arc. Then $L'$ doesn't contain any $v$-cross-link. Moreover, every link in $\ell\in L^\uparrow_v\cap L'$ has to end at $w$ because all $w$-cross-links and all links in $L^\uparrow_w$ were split at $w$. In particular, as the up-link $a_v$ is not covered by $\ell$, $v$ is not an inner vertex of $\overline{P}_\ell$. But this implies that $v$ cannot see any down-arc in $T_v$. It remains to bound the cost of the splitting. Again, let $v\in V\setminus\{r\}$ such that $a_v=(v,w)$ is a $\zeta_2$-heavy up-arc. We know that $x(L^\downarrow_v)\le x(\{\ell\colon a_v\in\covr(\ell)\}<\zeta_1$ because there are no $\zeta_1$-covered arcs. Similarly, $x(L^\uparrow_w)< \zeta_1$ because if $w=r$, then $L^\uparrow_w=\emptyset$, and otherwise, $a_w$ is a down-arc. On the other hand, $x(\{\ell\colon a_v\in \covw(\ell)\})\ge \zeta_2$, and $\{\ell\colon a_v\in \covw(\ell)\}\subseteq \{\ell\colon a_w\in\covr(\ell)\}\cup \{\ell\colon\mathrm{apex}(\ell)=w\}$, if $w\ne r$, and $\{\ell\colon a_v\in \covw(\ell)\}\subseteq \{\ell\colon\mathrm{apex}(\ell)=w\}$ otherwise. This implies that $x(\{\ell\colon\mathrm{apex}(\ell)=w\})\ge \zeta_2-\zeta_1$ because $a_w$, if exists, is not $\zeta_1$-covered. Using $\zeta_1\ll \zeta_2$, we can charge the splitting of the links in $L^\downarrow_v$ and $L^\uparrow_w$ against the total costs of the links with apex $w$. 
\paragraph{The general case} To handle the general case, we consider connected \emph{components} of the (oriented) forests $(V,A_{up})$ and $(V,A_{down})$, and define the \emph{core of a component} to consist of all of the paths connecting $\zeta_2$-heavy arcs in the component to its root. We denote the set of vertices and arcs that are contained in a core $C$ by $V_C$ and $A_C$, respectively. Note that the special case we considered corresponds to the situation in which every core has depth $1$. While the proof of \cref{theorem:heavy_link_handling_special_case} extends to the case where every core has constant depth, this approach is too costly in general. Instead, we perform a more involved trade-off between three different solutions, obtaining a solution of cost at most $(1.75+\mathcal{O}(\epsilon))\cdot c(x)$, or a violated visibly $k$-wide modification inequality. To this end, we define $L_{cross}$ to be the collection of all $W\cup V_C$-cross-links, where $W$ is the vertex set from \cref{theorem:first_phase_splitting}. We define $\overrightarrow{L}\coloneqq \{\ell\colon \covr(\ell)\cap A_C\ne \emptyset\}$ and $\overleftarrow{L}\coloneqq \{\ell\colon \covw(\ell)\cap A_C\ne \emptyset\}$ to be the sets of links covering a core arc in the right or in the wrong direction, respectively. Via carefully designed splitting operations, that only increase the total costs by an $\mathcal{O}(\epsilon)$-fraction, we can ensure useful structural properties, including $\overrightarrow{L}\cap \overleftarrow{L}=\emptyset$. Moreover, we can establish the following three statements.
\begin{enumerate}
\item Splitting all links in $\overleftarrow{L}\cup L_{cross}$ yields an instance of constant visible width.
\item Splitting all links in $\overleftarrow{L}$ and all links in $L\setminus L_{cross}$ at their apex yields a willow.
\item Splitting all links in $L\setminus\overleftarrow{L}$ at their apex and every link in $\overrightarrow{L}$ once more yields an instance corresponding to the disjoint union of willows.
\end{enumerate}
Taking an appropriate weighted average of these three solutions yields an approximation guarantee of $1.75 + \mathcal{O}(\varepsilon)$.

\section{Total unimodularity for willows}\label{sec:willows}
Let $(T = (V,A),L,c,r)$ be a rooted WDTAP instance, and recall the linear programming relaxation given in (\ref{eq:WDTAP_LP}). This is not an integral formulation in general; see \Cref{subsection:integralitygap}. In this section, we derive sufficient conditions for the incidence matrix $M$ to be totally unimodular, yielding an integral formulation and allowing us to solve the corresponding WDTAP instance in polynomial time. 

We begin by formally defining the notions of up- and down-independence and willows introduced in \Cref{sec:ourcontribution}.
 \begin{definition} \label{def:up-down-independent}
We say that $v\in V$ is \emph{up-independent} with respect to  $L'\subseteq L$ if for every $\ell\in L'$, we have $\covr(\ell)\cap A_v\cap A_{up}=\emptyset$ or $\covr(\ell)\subseteq A_v$. We say that $v$ is \emph{down-independent} with respect to $L'$ if for every $\ell\in L'$, we have $\covr(\ell)\cap A_v\cap A_{down}=\emptyset$ or $\covr(\ell)\subseteq A_v$. 
\end{definition}

We recap the following definitions from \Cref{sec:prelims} and \Cref{sec:ourcontribution}.
We call a link $\ell=(u,v)$ an \emph{up-link} if $v=\mathrm{apex}(\ell)$ and a \emph{down-link} if $u=\mathrm{apex}(\ell)$. For a set of vertices $W$, we call $\ell$ a \emph{$W$-cross-link} if $\mathrm{apex}(\ell)\in W$ and $\ell$ is neither an up- nor a down-link. 
\begin{definition}\label{def:willow}
  We call a rooted WDTAP instance $(T,L,c,r)$ a \emph{willow} if there is a vertex set $W\subseteq V(T)$ such that 
  \begin{itemize}
      \item every vertex in $W$ is up- or down-independent with respect to $L$ and
      \item every link in $L$ is an up-link, a down-link or a $W$-cross-link.
  \end{itemize}
\end{definition}
\begin{theorem}[unimodularity theorem]\label{theorem:unimodularity}
Let $(T,L,c,r)$ be a willow and let $M$ be its arc-link-coverage matrix. Then $M$ is totally unimodular. In particular, an optimum integral solution to \eqref{eq:WDTAP_LP} can be found in polynomial time.
\end{theorem}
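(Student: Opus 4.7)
My plan is to prove the theorem by induction on the number of $W$-cross-links in the willow, with base case the pure up-/down-link setting.

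For the base case, when $L$ consists only of up-links and down-links, the observation that up-links cover only down-arcs along their paths and down-links cover only up-arcs means that, after grouping rows by arc type and columns by link type, $M$ has the anti-block-diagonal form
\[
M \;=\; \begin{pmatrix} 0 & M_{ud} \\ M_{du} & 0 \end{pmatrix},
\]
where $M_{ud}$ is the incidence of up-arcs with down-links and $M_{du}$ of down-arcs with up-links. Any square submatrix of this anti-block matrix with nonzero determinant, after a column permutation, becomes block-diagonal with both diagonal blocks necessarily square submatrices of $M_{ud}$ and $M_{du}$, so TU of $M$ reduces to TU of $M_{ud}$ and $M_{du}$ separately. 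To prove $M_{du}$ is TU, I would contract every up-arc of $T$: the contracted tree $T'$ is an out-arborescence whose arcs are exactly the original down-arcs, and each up-link $\ell=(u,v)$ (apex $v$) survives as a non-tree arc $(u',v')$ where $u',v'$ are the images of $u,v$. In the network matrix of $T'$ augmented by these arcs, the fundamental cycle of $\ell$ traverses the $v'$-to-$u'$ tree path, which goes from ancestor to descendant in the out-arborescence and hence traverses every arc forward; the column of $\ell$ thus has $+1$-entries exactly on the rows of $\covr(\ell)$ and zeros elsewhere. Hence $M_{du}$ is a network matrix, and by symmetry (contracting down-arcs) so is $M_{ud}$.

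For the inductive step, given a $W$-cross-link $\ell=(u,v)$ with apex $w\in W$, I would split $\ell$ at $w$ into an up-link piece $\ell^{\uparrow}=(u,w)$ and a down-link piece $\ell^{\downarrow}=(w,v)$. Since $\covr(\ell^{\uparrow})$ consists of the down-arcs on the $u$-$w$-path and $\covr(\ell^{\downarrow})$ of the up-arcs on the $w$-$v$-path, the two supports are disjoint (different arc types, different subtrees of $w$), and $\covr(\ell)=\covr(\ell^{\uparrow})\sqcup\covr(\ell^{\downarrow})$. Replacing $\ell$ by $\{\ell^{\uparrow},\ell^{\downarrow}\}$ yields a willow with the same set $W$ and one fewer cross-link, whose coverage matrix $\tilde M$ is TU by induction. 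The original $M$ is obtained from $\tilde M$ by replacing the two columns for $\ell^{\uparrow}$ and $\ell^{\downarrow}$ with their sum, which is the column of $\ell$ in $M$.

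I expect the main obstacle to lie in this final column-merging step, since summing two disjoint-support columns of a TU matrix does not in general preserve TU, and the willow hypothesis must be exploited. WLOG $w$ is up-independent, so every link whose $\covr$ meets $A_w\cap A_{up}$ is entirely contained in $A_w$; this severely restricts which other columns of $\tilde M$ can share support with $\ell^{\downarrow}$ (whose support lies among up-arcs of $T_w$ on the $v$-side of $w$), and a symmetric restriction governs $\ell^{\uparrow}$. I would use these restrictions to rule out subdeterminants of absolute value at least $2$ in the merged matrix, either by a direct case analysis of forbidden minors, by applying Ghouila--Houri's criterion with a row partition that alternates based on position relative to $w$, or, most cleanly, by exhibiting $M$ itself as the network matrix of an explicit digraph in which each $W$-cross-link is routed through its apex in a way that uses the independence of $w$ to avoid introducing $-1$-entries on rows outside $A_w$.
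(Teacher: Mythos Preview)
Your base case is correct and cleanly argued: with only up- and down-links the matrix is block anti-diagonal and each block is a network matrix of the appropriate contracted arborescence.

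The inductive step, however, is a genuine gap, and it is exactly the place where the willow hypothesis has to do its work. You correctly note that replacing two disjoint-support columns of a TU matrix by their sum does \emph{not} preserve TU in general; for instance,
\[
\tilde M=\begin{pmatrix}1&1&0&0\\0&1&0&1\\1&0&1&0\end{pmatrix}
\]
is TU, columns $3$ and $4$ have disjoint supports, yet their sum yields a $3\times 3$ matrix of determinant~$2$. So the step ``$\tilde M$ TU $\Rightarrow$ $M$ TU'' needs a structural argument, and you do not supply one. Expanding $\det N=\det N^{\uparrow}+\det N^{\downarrow}$ only gives $\det N\in\{-2,\dots,2\}$; ruling out $\pm 2$ requires showing $\det N^{\uparrow}$ and $\det N^{\downarrow}$ cannot have the same nonzero sign, and up/down-independence of $w$ does not make this obvious. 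The three fallbacks you list (forbidden-minor case analysis, Ghouila--Houri, or a global network-matrix realisation) are each direct attacks on the full theorem and would render the induction superfluous; the induction has not reduced the difficulty.

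The paper bypasses this entirely with a direct Ghouila--Houri argument on the whole willow. It defines, for every $w\in W$, a pair of starting signs $\varphi_{up}(w),\varphi_{down}(w)\in\{\pm 1\}$ with $\varphi_{up}(w)=-\varphi_{down}(w)$, propagated top-down using whichever of up- or down-independence holds at $w$; an arc $a\in B$ then receives the sign $\varphi_{up}(\mu(a))$ or $\varphi_{down}(\mu(a))$ (where $\mu(a)$ is the nearest $W$-ancestor of $\mathrm{apex}(a)$) corrected by the parity of the $B$-distance to $\mu(a)$. Two short claims show that along $\covr(\ell)\cap B$ the signs alternate for consecutive up-arcs and for consecutive down-arcs, and that at the apex of a $W$-cross-link the nearest up-arc and down-arc receive opposite signs. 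This is precisely your second fallback idea, carried out globally rather than inside an induction; the explicit signing is where the up/down-independence is actually consumed.
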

For the proof, it is convenient to introduce the following additional notation: For an arc $a$, we call the endpoint of $a$ that is closer to the root the \emph{apex} of $a$ and denote it by $\mathrm{apex}(a)$. Given two vertices $u$ and $v$ of $T$, we write $P_{uv}$ to denote the $u$-$v$-path in $T$.
\begin{proof}[Proof of \cref{theorem:unimodularity}]
Let $T=(V,A)$ and let $W$ be as in \cref{def:willow}.
We may assume $r\in W$ because $r$ is both up- and down-independent.
To establish total unimodularity of $M$, we use the criterion by Ghouila-Houri~\cite{ghouila1962caracterisation}. It states that a matrix $A\in\{-1,0,1\}^{m\times n}$ is totally unimodular if and only if for every subset $R\subseteq \{1,\dots,m\}$ of the rows, there exists a signing $\sigma\colon R\rightarrow\{-1,+1\}$ such that for every column $j\in\{1,\dots,n\}$, $\sum_{i\in R} \sigma(i)\cdot A_{ij} \in \{-1,0,1\}$, where $A_{ij}$ denotes the entry of $A$ in row $i$ and column $j$.

Applying this to our setting where rows correspond to arcs and columns correspond to links, we need to prove that for every $B\subseteq A$, there exists a signing $\sigma\colon B\rightarrow\{-1,+1\}$ such that 
\begin{equation}
\text{for every $\ell\in L$ }\sum_{a\in\covr(\ell)\cap B} \sigma(a)\in\{-1,0,1\}.\label{eq:ghouila_houri_signing}
\end{equation}

For two vertices $v$ and $w$, we define $\mathrm{dist}_{up}(v,w)$ and $\mathrm{dist}_{down}(v,w)$ to be the number of up- and down-arcs from $B$ on the $v$-$w$-path in $T$, respectively.
To construct the signing, we define starting signs $\varphi_{up},\varphi_{down}\colon W\rightarrow\{-1,+1\}$ in order of increasing distance (in all of $T$) to the root $r$.
\begin{itemize}
    \item We set $\varphi_{up}(r)=+1$ and $\varphi_{down}(r)=-1$.
    \item Let $u\in W$ be up-independent and let $v\in W\setminus\{u\}$ be the next vertex after $u$ on the $u$-$r$-path in $T$. We define $\varphi_{down}(u)\coloneqq\varphi_{down}(v)\cdot (-1)^{\mathrm{dist}_{down}(u,v)}$ and $\varphi_{up}(u)\coloneqq -\varphi_{down}(u).$
    \item Let $u\in W$ be down-independent (but not up-independent) and let $v\in W\setminus\{u\}$ be the next vertex after $u$ on the $u$-$r$-path in $T$. We define $\varphi_{up}(u)\coloneqq\varphi_{up}(v)\cdot (-1)^{\mathrm{dist}_{up}(u,v)}$ and $\varphi_{down}(u)\coloneqq -\varphi_{up}(u).$
\end{itemize}
For an arc $a$, let $\mu(a)$ be the first vertex (i.e., then one closest to $\mathrm{apex}(a)$) from $W$ on the $\mathrm{apex}(a)$-$r$-path in $T$.
\begin{itemize}
    \item For an up-arc $a\in B$, we set $\sigma(a)=\varphi_{up}(\mu(a))\cdot (-1)^{\mathrm{dist}_{up}(\mathrm{apex}(a),\mu(a))}$.
    \item For a down-arc $a\in B$, we set $\sigma(a)=\varphi_{down}(\mu(a))\cdot (-1)^{\mathrm{dist}_{down}(\mathrm{apex}(a),\mu(a))}$.
\end{itemize}
\cref{fig:willow_signing} shows an example of this signing for $B=A$.
\begin{claim} \label{claim:sign_up_arc}
Let $a\in A_{up}\cap B$ and let $u\in W\cap V(P_{\mathrm{apex}(a)r})$. Assume that no vertex in $W\cap V(P_{\mathrm{apex}(a)u})\setminus \{u\}$ is up-independent. Then $\sigma(a)=\varphi_{up}(u)\cdot (-1)^{\mathrm{dist}_{up}(\mathrm{apex}(a),u)}.$
\end{claim}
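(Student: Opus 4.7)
The plan is to prove the claim by induction on the number $k$ of vertices in $W \cap V(P_{\mu(a)u})$, where $\mu(a)$ is the first $W$-vertex on the $\mathrm{apex}(a)$-$r$-path (as used in the definition of $\sigma$). Since $\mu(a)\in W$ and $u\in W$, we have $k\ge 1$, and the base case $k=1$ corresponds to $u=\mu(a)$ (because $\mu(a)$ is the $W$-vertex closest to $\mathrm{apex}(a)$ on the $\mathrm{apex}(a)$-$r$-path, so any other $W$-vertex on that path lies strictly beyond $\mu(a)$). In this base case, the statement is simply the definition $\sigma(a)=\varphi_{up}(\mu(a))\cdot (-1)^{\mathrm{dist}_{up}(\mathrm{apex}(a),\mu(a))}$ with $\mu(a)=u$.

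For the inductive step $k\ge 2$, let $u'$ be the unique $W$-vertex immediately preceding $u$ on the path from $\mathrm{apex}(a)$ to $r$; so $u'\neq u$, $u'\in V(P_{\mathrm{apex}(a)u})$, and $W\cap V(P_{\mathrm{apex}(a)u'})$ has exactly $k-1$ elements. The hypothesis of the claim gives that no vertex of $W\cap V(P_{\mathrm{apex}(a)u})\setminus\{u\}$ is up-independent; in particular $u'$ is not up-independent, hence (being in $W$) it must be down-independent. The recursive definition of $\varphi_{up}$ on down-independent vertices therefore yields
\[
\varphi_{up}(u')=\varphi_{up}(u)\cdot (-1)^{\mathrm{dist}_{up}(u',u)}.
\]
Moreover, any vertex in $W\cap V(P_{\mathrm{apex}(a)u'})\setminus\{u'\}$ lies in $V(P_{\mathrm{apex}(a)u})$ and is distinct from $u$ (since it lies on the strictly shorter subpath), so it is not up-independent; the inductive hypothesis applied to $u'$ thus gives
\[
\sigma(a)=\varphi_{up}(u')\cdot (-1)^{\mathrm{dist}_{up}(\mathrm{apex}(a),u')}.
\]
Substituting and using the additivity $\mathrm{dist}_{up}(\mathrm{apex}(a),u)=\mathrm{dist}_{up}(\mathrm{apex}(a),u')+\mathrm{dist}_{up}(u',u)$, which holds because $P_{\mathrm{apex}(a)u}$ is the concatenation of $P_{\mathrm{apex}(a)u'}$ and $P_{u'u}$, completes the induction.

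The statement is essentially bookkeeping: its only real content is that the recurrence for $\varphi_{up}$ propagates correctly through the $W$-vertices between $\mathrm{apex}(a)$ and $u$. The one place that needs attention is verifying that the down-independence branch of the $\varphi_{up}$-recurrence is the applicable one at each intermediate $W$-vertex $u'$; this is exactly what the hypothesis ``no vertex in $W\cap V(P_{\mathrm{apex}(a)u})\setminus\{u\}$ is up-independent'' buys. Once this is noted, the calculation telescopes cleanly.
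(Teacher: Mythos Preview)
Your proof is correct and takes essentially the same approach as the paper: the paper lists the $W$-vertices $\mu(a)=u_s,\dots,u_0=u$ along $P_{\mathrm{apex}(a)u}$ and telescopes the recurrence $\varphi_{up}(u_i)=\varphi_{up}(u_{i-1})\cdot(-1)^{\mathrm{dist}_{up}(u_i,u_{i-1})}$ directly, whereas you unwind the same telescoping via induction. You are in fact slightly more explicit than the paper about why the down-independence branch of the recurrence applies at each intermediate $u'$.
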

\begin{proof}[Proof of claim]
Let $W\cap V(P_{\mathrm{apex}(a)u})=(\mu(a)=u_s,\dots,u_0=u)$ with $u_s,\dots,u_0$ appearing in this order when traversing $P_{\mathrm{apex}(a)u}$ from $\mathrm{apex}(a)$ to $u$. Using $\varphi_{up}(u_i)=\varphi_{up}(u_{i-1})\cdot (-1)^{\mathrm{dist}_{up}(u_i,u_{i-1})}$ for $i=1,\dots,s$, we obtain
\begin{align*}\sigma(a)&=\varphi_{up}(u_s)\cdot (-1)^{\mathrm{dist}_{up}(\mathrm{apex}(a),u_s)}=\varphi_{up}(u)\cdot(-1)^{\mathrm{dist}_{up}(\mathrm{apex}(a),u_s)+\sum_{i=1}^s \mathrm{dist}_{up}(u_{i},u_{i-1})}\\
&=\varphi_{up}(u)\cdot (-1)^{\mathrm{dist}_{up}(\mathrm{apex}(a),u)}.\end{align*}
\end{proof}
Analogously, we obtain the following claim.
\begin{claim}\label{claim:sign_down_arc}
	Let $a\in A_{down}\cap B$ and let $u\in W\cap V(P_{\mathrm{apex}(a)r})$. Assume that no vertex in $W\cap V(P_{\mathrm{apex}(a)u})\setminus \{u\}$ is down-independent. Then $\sigma(a)=\varphi_{down}(u)\cdot (-1)^{\mathrm{dist}_{down}(\mathrm{apex}(a),u)}.$
\end{claim}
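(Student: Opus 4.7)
The plan is to mirror the proof of \cref{claim:sign_up_arc} almost verbatim, switching the roles of ``up'' and ``down''. The reason this works is that the recursive update rule in the definition of $\varphi_{down}$, namely $\varphi_{down}(u)=\varphi_{down}(v)\cdot (-1)^{\mathrm{dist}_{down}(u,v)}$ for $u$ up-independent (with $v$ the next $W$-vertex above $u$), is exactly the down-analog of the rule used to derive \cref{claim:sign_up_arc}. The hypothesis that no vertex of $W\cap V(P_{\mathrm{apex}(a)u})\setminus\{u\}$ is down-independent is precisely what guarantees that at each intermediate vertex $u_i$ on the path, the recursive step for $\varphi_{down}$ (as opposed to $\varphi_{up}$) is the one that applies: since every $W$-vertex is up-independent or down-independent, any such $u_i$ must be up-independent, placing it in the first case of the definition of $\varphi_{down}$.

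Concretely, I would first enumerate $W\cap V(P_{\mathrm{apex}(a)u})=(\mu(a)=u_s,\dots,u_0=u)$ in the order in which they appear on $P_{\mathrm{apex}(a)u}$ traversed from $\mathrm{apex}(a)$ to $u$. For each $i\in\{1,\dots,s\}$, the vertex $u_i$ lies strictly between $\mathrm{apex}(a)$ and $u$, hence by hypothesis it is not down-independent, hence it is up-independent, and $u_{i-1}$ is by construction the next vertex of $W$ on its path to $r$. The definition of $\varphi_{down}$ on up-independent vertices then gives $\varphi_{down}(u_i)=\varphi_{down}(u_{i-1})\cdot (-1)^{\mathrm{dist}_{down}(u_i,u_{i-1})}$.

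From there I would simply unroll the recursion starting from $\sigma(a)=\varphi_{down}(\mu(a))\cdot(-1)^{\mathrm{dist}_{down}(\mathrm{apex}(a),\mu(a))}=\varphi_{down}(u_s)\cdot(-1)^{\mathrm{dist}_{down}(\mathrm{apex}(a),u_s)}$, substitute $\varphi_{down}(u_s)$ in terms of $\varphi_{down}(u_{s-1})$, and so on down to $u_0=u$, collecting all the sign exponents. The additivity of the down-distance along the sub-paths gives
\[
\mathrm{dist}_{down}(\mathrm{apex}(a),u_s)+\sum_{i=1}^{s}\mathrm{dist}_{down}(u_i,u_{i-1})=\mathrm{dist}_{down}(\mathrm{apex}(a),u),
\]
which collapses the telescoped expression to $\varphi_{down}(u)\cdot(-1)^{\mathrm{dist}_{down}(\mathrm{apex}(a),u)}$, as desired.

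There is essentially no combinatorial obstacle beyond what was already overcome in \cref{claim:sign_up_arc}; the one subtlety worth checking is the case $s=0$, where $\mu(a)=u$ and both sides of the claim are definitionally equal, and the case where $u=r$, which is covered because $r$ enters the up-independent branch of the definition (so the recursive rule for $\varphi_{down}$ based on up-independence was actually used consistently from the start of the construction). Everything else is formal bookkeeping.
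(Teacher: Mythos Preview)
Your proposal is correct and takes exactly the approach the paper intends: the paper simply states ``Analogously, we obtain the following claim'' after proving \cref{claim:sign_up_arc}, and your proof is precisely that analogous argument with ``up'' and ``down'' interchanged. The one superfluous remark is your comment about the case $u=r$: the recursion terminates at $u_0=u$ and uses $\varphi_{down}(u_0)$ as a given value, so no special property of $r$ (or of how $\varphi_{down}(r)$ was defined) is needed there.
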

\begin{claim}\label{claim:up_alternate}
Let $\ell=(u,v)\in L$ and let $a$ and $a'$ be two up-arcs in $B$ that appear consecutively on $P_{\mathrm{apex}(\ell)v}$. Then $\sigma(a')=-\sigma(a)$.
\end{claim}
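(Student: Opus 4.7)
I will first dispose of a degenerate case and then reduce the claim to two applications of Claim \ref{claim:sign_up_arc} with a common reference vertex. If $\ell$ is an up-link then $v = \mathrm{apex}(\ell)$ and $P_{\mathrm{apex}(\ell)v}$ has no arcs, so the claim is vacuous. Otherwise $\ell$ is a down-link or a $W$-cross-link and $v$ is a proper descendant of $\mathrm{apex}(\ell)$. Write $a = (p_a, q_a)$ and $a' = (p_{a'}, q_{a'})$, where $q_a, q_{a'}$ are the respective parent endpoints, and assume without loss of generality that $a'$ lies above $a$ on $P_{\mathrm{apex}(\ell)v}$, so that $q_{a'}$ is a proper ancestor of $q_a$ and $P_{q_{a'}r}$ is a subpath of $P_{q_a r}$. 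Consecutiveness of $a,a'$ among up-arcs of $B$ on $P_{\mathrm{apex}(\ell)v}$ implies that $a'$ is the only up-arc of $B$ on the subpath $P_{q_a q_{a'}}$, so $\mathrm{dist}_{up}(q_a, q_{a'}) = 1$.

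\textbf{Key step.} Let $u^*$ be the first up-independent $W$-vertex encountered along $P_{q_a r}$; it exists because $r \in W$ is up-independent. The crux, and the one place where the willow hypothesis enters, is to show $u^* \in V(P_{q_{a'}r})$. If not, then $u^*$ is a strict descendant of $q_{a'}$ lying on $P_{q_a q_{a'}}$. Both $p_a$ and $q_a$ lie in $U_{u^*}$, so $a \in A_{u^*} \cap A_{up}$; combined with $a \in \covr(\ell)$, up-independence of $u^*$ forces $\covr(\ell) \subseteq A_{u^*}$. But $q_{a'}$ is a strict ancestor of $u^*$, so $q_{a'} \notin U_{u^*}$ and $a' \notin A_{u^*}$, contradicting $a' \in \covr(\ell)$. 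I expect this step to be the main obstacle, since it is the only place where one must actually exploit the willow structure rather than purely manipulate signs along the tree.

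\textbf{Finishing up.} By the choice of $u^*$, no up-independent $W$-vertex lies in $V(P_{q_a u^*}) \setminus \{u^*\}$, and the same holds for $V(P_{q_{a'}u^*}) \setminus \{u^*\}$ since it is a subpath. Applying Claim \ref{claim:sign_up_arc} to both $a$ and $a'$ with reference vertex $u^*$ yields
\[\sigma(a) = \varphi_{up}(u^*) \cdot (-1)^{\mathrm{dist}_{up}(q_a, u^*)}, \qquad \sigma(a') = \varphi_{up}(u^*) \cdot (-1)^{\mathrm{dist}_{up}(q_{a'}, u^*)}.\]
Since $P_{q_a u^*}$ factors through $q_{a'}$ as $P_{q_a q_{a'}} \cup P_{q_{a'} u^*}$, we have $\mathrm{dist}_{up}(q_a, u^*) = 1 + \mathrm{dist}_{up}(q_{a'}, u^*)$, so the two exponents differ in parity and $\sigma(a') = -\sigma(a)$, as required.
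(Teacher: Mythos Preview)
Your proof is correct and follows essentially the same approach as the paper. Both arguments hinge on the observation that no vertex strictly between the two consecutive up-arcs can be up-independent (since such a vertex would have the lower arc in its subtree but not the upper one, violating up-independence via $\ell$), and then apply Claim~\ref{claim:sign_up_arc} twice with a common reference vertex to obtain the parity difference in $\mathrm{dist}_{up}$. The only cosmetic difference is the choice of reference vertex: the paper takes $\mu(a)$, the first $W$-vertex above the upper arc's apex, whereas you take $u^*$, the first up-independent $W$-vertex above the lower arc's apex; your choice has the minor convenience that the hypothesis of Claim~\ref{claim:sign_up_arc} is automatic for the lower arc, leaving only the location of $u^*$ relative to the upper arc to verify.
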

\begin{proof}[Proof of claim]
Let $a=(x,y)$ and $a'=(x',y')$ and assume w.l.o.g.\ that $a$ appears before $a'$ on $P_{\mathrm{apex}(\ell)v}$ (traversing it from $\mathrm{apex}(\ell)$ to $v$), i.e., $a$ is above $a'$.
No vertex $v\in V(P_{xy'})$ is up-independent because $a'\in \covr(\ell)\cap A_v\cap A_{up}$ and $a\in\covr(\ell)\setminus A_v$. As $\mu(a)$ is the first vertex from $W$ on $P_{yr}$, we can apply \cref{claim:sign_up_arc} to conclude that 
\[\sigma(a')=\varphi_{up}(\mu(a))\cdot(-1)^{\mathrm{dist}_{up}(y',\mu(a))}=(-1)\cdot \varphi_{up}(\mu(a))\cdot(-1)^{\mathrm{dist}_{up}(y,\mu(a))}=-\sigma(a)\] because $a$ and $a'$ are consecutive up-arcs from $B$ on $P_{\mathrm{apex}(\ell)v}$, i.e., $\mathrm{dist}_{up}(y',\mu(a))=\mathrm{dist}_{up}(y,\mu(a))+1$.
\end{proof}
Analogously, we obtain the following claim:
\begin{claim}\label{claim:down_alternate}
	Let $\ell=(u,v)\in L$ and let $a$ and $a'$ be two down-arcs in $B$ that appear consecutively on $P_{u\mathrm{apex}(\ell)}$. Then $\sigma(a')=-\sigma(a)$.
\end{claim}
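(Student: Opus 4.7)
}
The plan is to mirror the proof of \cref{claim:up_alternate}, with up-arcs replaced by down-arcs and the reference vertex chosen from the ``upper'' of the two arcs -- which, under the convention that $a$ appears before $a'$ on $P_{u\mathrm{apex}(\ell)}$ traversed from $u$ toward $\mathrm{apex}(\ell)$, is now $a'$ rather than $a$. Writing $a=(x,y)$ and $a'=(x',y')$ as down-arcs, this ordering forces $x'$ to be a proper ancestor of $x$ with $y'$ lying on $P_{xx'}$. The first step is the down-analog of the basic geometric observation used in the up case: no vertex $w\in V(P_{xy'})$ is down-independent, because the link $\ell$ itself certifies $a\in\covr(\ell)\cap A_w\cap A_{down}$ (both endpoints of $a$ are descendants of $w$) while $a'\in\covr(\ell)\setminus A_w$ (since $x'$ is a proper ancestor of $w$).

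Next, I would apply \cref{claim:sign_down_arc} to the arc $a$ with reference vertex $u:=\mu(a')$, the first $W$-vertex on $P_{x'r}$. The containment $\mu(a')\in W\cap V(P_{xr})$ is immediate because $x'$ is an ancestor of $x$. To verify the hypothesis that no $W$-vertex of $V(P_{x\mu(a')})\setminus\{\mu(a')\}$ is down-independent, I decompose this set into three parts: the segment $V(P_{xy'})$, the singleton $\{x'\}$ (relevant only when $x'\neq\mu(a')$), and the vertices strictly between $x'$ and $\mu(a')$ on $P_{x'r}$. The first part contains no down-independent vertex by the previous step; the vertex $x'$, when distinct from $\mu(a')$, is not in $W$ at all by the definition of $\mu(a')$; and the last part contains no $W$-vertex for the same reason. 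A corresponding application of \cref{claim:sign_down_arc} to $a'$ with $u=\mu(a')$ is trivial, since the intermediate set of vertices is then empty.

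Together these give
\[
\sigma(a) = \varphi_{down}(\mu(a'))\cdot(-1)^{\mathrm{dist}_{down}(x,\mu(a'))}
\quad\text{and}\quad
\sigma(a') = \varphi_{down}(\mu(a'))\cdot(-1)^{\mathrm{dist}_{down}(x',\mu(a'))}.
\]
The exponent difference $\mathrm{dist}_{down}(x,\mu(a'))-\mathrm{dist}_{down}(x',\mu(a'))$ equals the number of down-arcs of $B$ on $P_{xx'}=P_{xy'}\cup\{a'\}$, which is exactly $1$: the subpath $P_{xy'}$ carries no down-arc of $B$ because $a$ and $a'$ are consecutive down-arcs of $B$ on $P_{u\mathrm{apex}(\ell)}$, and $a'$ itself contributes one. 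Hence $\sigma(a')=-\sigma(a)$. The one place that requires extra care -- and the main obstacle to a purely mechanical port of the up-case argument -- is the choice of reference vertex: using $\mu(a)$ in the same role can fail because $\mu(a)$ may coincide with an up-independent $W$-vertex strictly inside $V(P_{xy'})$, and such a vertex does not lie on $P_{x'r}$, so the direct analog of the up argument would not be justified. Taking $\mu(a')$, which necessarily sits above the whole problematic segment, sidesteps the issue cleanly.
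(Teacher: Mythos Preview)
Your proof is correct and follows precisely the analog of the paper's argument for \cref{claim:up_alternate}: the paper simply states that \cref{claim:down_alternate} is obtained ``analogously,'' and your choice of $\mu(a')$ as the reference vertex is exactly the natural counterpart of the paper's use of $\mu(a)$ in the up case (since there $a$ was the upper arc, whereas here $a'$ is). Your closing remark about the need for ``extra care'' is really just the observation that one must track which arc is upper after swapping orientations, rather than a genuine deviation from the paper's method.
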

Now, we are ready to show that our signing satisfies \eqref{eq:ghouila_houri_signing}. Let $\ell\in L$. If $\covr(\ell)\cap B$ consists of only up- or only down-arcs, this follows from \cref{claim:up_alternate} or \cref{claim:down_alternate}, respectively.
Finally, assume that $\ell$ is a $W$-cross-link such that $\covr(\ell)\cap B$ contains at least one up- and one down-arc. Let $\mathrm{apex}(\ell)=u$ and let $a=(x,y)$ and $a'=(x',y')$ be the up- and the down-arc in $\covr(\ell)\cap B$ closest to $u$. No vertex $v\in V(P_{yu})\setminus\{u\}$ is up-independent because $a\in\covr(\ell)\cap A_{up}\cap A_v$ and $a'\in\covr(\ell)\setminus A_v$. No vertex $v\in V(P_{x'u})\setminus\{u\}$ is down-independent because $a'\in\covr(\ell)\cap A_{down}\cap A_v$ and $a\in\covr(\ell)\setminus A_v$. By \cref{claim:sign_up_arc} and \cref{claim:sign_down_arc}, using that $a$ and $a'$ are the up-/down-arc in $\covr(\ell)\cap B$ closest to $u$, we get $\sigma(a)=\varphi_{up}(u)=-\varphi_{down}(u)=-\sigma(a')$. \cref{claim:up_alternate} and \cref{claim:down_alternate} allow us to conclude that the signs of the arcs in $\covr(\ell)\cap B$ alternate along $P_\ell$, implying $\sum_{a\in\covr(\ell)\cap B} \sigma(a)\in\{-1,0,1\}$ as desired.

The fact that $M$ is totally unimodular implies that all vertex solutions to the linear program \eqref{eq:WDTAP_LP} are integral. We can find an optimum vertex solution in polynomial time, giving an optimum solution to $(T,L,c,r)$.
\end{proof}

\begin{figure}[h]
\centering
\begin{tikzpicture}[scale = 1, mynode/.style={draw, fill, circle, minimum size = 2mm, inner sep = 0pt}, myarc/.style={thick, arrows = {-Stealth[scale=1]}},mylink/.style={thick, arrows = {-Stealth[scale=1]}, black!70!white, dashed}, every node/.style={font = \large}]

\tikzset{
  halo/.style={                        
    append after command={             
      \pgfextra{
        \draw[violet!70!black, ultra thick, opacity = 0.25]   
              (\tikzlastnode.center) circle[radius=3.5mm];
        \fill[red!20, opacity=.25]          
              (\tikzlastnode.center) circle[radius=3.5mm];
      }
    }
  }
}

\node[mynode, halo] (r)  at ( 0  ,  0) {};
\node           at ( 0.2,  0.5) {$r$};

\node[mynode] (v1p)  at (-1  , -0.5) {};
\node[mynode] (v1)  at (-2  , -1) {};
\node[mynode] (v2)  at ( 1  , -1) {};

\node[mynode, halo] (v3)  at (-2  , -2) {};
\node           at ( -2.5,  -1.9) {$u$};

\node[mynode] (v4)  at ( 0.5, -2) {};
\node[mynode, halo] (v5)  at ( 2  , -2) {};
\node           at ( 2.5,  -1.9) {$v$};

\node[mynode] (v6)  at (-1  , -3) {};
\node[mynode] (v7)  at (-3  , -3) {};
\node[mynode] (v10) at ( 2  , -3) {};
\node[mynode] (v11) at ( 3  , -3) {};

\node[mynode] (v12) at (-3  , -4) {};
\node[mynode] (v13) at (-2  , -3) {};
\node[mynode] (v14) at ( 1  , -3) {};
\node[mynode] (v15) at ( 0.5, -4) {};
\node[mynode] (v16) at ( 1.5, -4) {};
















\draw[myarc] (v1p) to node[midway, above] {$\color{purple} +$} (r);
\draw[myarc] (v1) to node[midway, above] {$\color{blue!70!black} -$} (v1p);
\draw[myarc] (r) to node[midway, right] {$\color{blue!70!black} -$} (v2);
\draw[myarc] (v1) to node[midway, right] {$\color{blue!70!black} -$} (v3);
\draw[myarc] (v4) to  node[midway, left] {$\color{purple} +$}(v2);
\draw[myarc] (v2) to node[midway, right] {$\color{purple} +$} (v5);
\draw[myarc] (v6) to node[midway, right] {$\color{purple} +$} (v3);
\draw[myarc] (v7) to node[midway, left] {$\color{purple} +$} (v3);
\draw[myarc] (v5) to node[midway, right] {$\color{blue!70!black} -$} (v10);
\draw[myarc] (v5) to node[midway, right] {$\color{blue!70!black} -$} (v11);
\draw[myarc] (v12) to node[midway, right] {$\color{blue!70!black} -$} (v7);
\draw[myarc] (v3) to node[midway, left]{$\color{blue!70!black} -$} (v13);
\draw[myarc] (v14) to node[midway, left] {$\color{purple} +$} (v5);
\draw[myarc] (v14) to node[midway, left] {$\color{blue!70!black} -$} (v15);
\draw[myarc] (v14) to node[midway, right] {$\color{blue!70!black} -$} (v16);

\draw[mylink] (v1) to[bend right = 70] (v12);

\draw[mylink] (v11) to[bend right = 40] (r);
\draw[mylink] (v3) to [bend left= -20] (v4);
\draw[mylink] (v13) to [bend left= 20] (v7);
\draw[mylink] (v15) to [bend left= 20] (v6);
\draw[mylink] (v10) to [bend left= 20] (v14);
\draw[mylink] (v16) to [bend left= 70] (v13);
\draw[mylink] (v13) to [bend left= -20] (v6);

\end{tikzpicture}
\caption{The signing constructed in the proof of \cref{theorem:unimodularity} for the willow from \cref{fig:willow} and $B=A$ (the set of all arcs).}
\label{fig:willow_signing}
\end{figure}
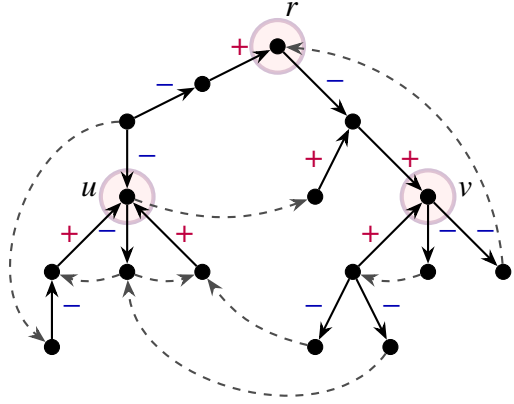
\section{Dynamic program for instances of constant visible width}\label{sec:DPvisiblewidth}
In this section, we define the notion of the \textit{visible width} of a WDTAP instance. We then show that WDTAP instances with constant visible width can be solved exactly using a dynamic program. 
Let $(T=(V,A),L,c)$ be an instance of WDTAP and let $T$ be rooted at $r\in V$. We introduce some further common terminology that we will use in the following. Given $v\in V\setminus \{r\}$, we call the endpoint of $a_v$ other than $v$ the \emph{parent} of $v$. (Recall that $a_v$ is the first arc on the $v$-$r$-path in $T$.) We call a vertex $w$ that has $v$ as its parent a \emph{child} of $v$. For $v\in V$, we say that a vertex $w$ is an \emph{ancestor} of $v$ if $w$ lies on the $v$-$r$-path in $T$, and we say that $v$ is a \emph{descendant} of $w$. If in addition, $w\ne v$, we call $w$ a \emph{strict ancestor} of $v$ and $v$ a \emph{strict descendant} of $w$. Note that for $v\in V$, the set of descendants of $v$ is $U_v$, the vertex set of $T_v$.
Finally, we call an arc $a'$ an \emph{ancestor} of another arc $a$ if $a'$ appears on the $\mathrm{apex}(a)$-$r$ path in $T$. In order to formally introduce the concept of visible width, we need the notion of an \emph{ancestor-free} arc set.
\begin{definition}
We call an arc set $F\subseteq A$ \emph{ancestor-free} if there are no arcs $a,a'\in F$ such that $a'$ appears in the $\mathrm{apex}(a)$-$r$ path in $T$.
\end{definition}

We now define the notion of which arcs in the subtree of $v$ are visible to $v$. 

\begin{definition}
We say that an arc $a\in A_v$ is \emph{visible} to a vertex $v\in V$ (with respect to a set of links $L'$) if there exists a link $\ell\in L'$ such that $a\in \covr(\ell)$ and $v\in\mathrm{in}(\overline{P}_\ell)$, where $\mathrm{in}(\overline{P}_\ell)$ denotes the set of inner vertices of $\overline{P}_\ell$.
We denote by $A^{vis}_v(L')$ the set of arcs that are visible from $v$ with respect to \ $L'$. 
\end{definition}
\begin{definition}
For a vertex $v\in V$, we define the \emph{visible up-width}, denoted by $\viwu(v)$ (\emph{visible down-width}, denoted by $\viwd(v)$) at $v$ to be the maximum size of an ancestor-free set of up-arcs (down-arcs) that are visible for $v$ (with respect to $L$). We define the \emph{visible width} at $v$ as 
\[\viw(v)\coloneqq \max\{\viwu(v),\viwd(v)\}.\]
We define the \emph{visible width} of the instance to be $\max_{v\in V} \viw(v)$.
\end{definition}
\begin{definition}
We call a link set $L'\subseteq L$ \emph{$k$-thin} if for every $v\in V$,
$|\{\ell\in L'\colon v\in\mathrm{in}(P_\ell)\}|\le k$.
\end{definition}
We remark that our definition of thinness slightly differs from the one introduced in \cite{traub2022better} (in the context of WTAP) in that we do not count links ending in a vertex $v$.
\begin{lemma}\label{lem:kthin}
Assume that $(T=(V,A),L,c)$ has visible width at most $k$. Let $F\subseteq L$ be a shadow-minimal (meaning that no link can be replaced by a strict shadow without destroying feasibility) solution to the instance. Then $F$ is $2k$-thin.
\end{lemma}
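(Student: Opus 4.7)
My plan is to show, for each $v\in V$, the bound $|F_v|\le 2k$ where $F_v:=\{\ell\in F\colon v\in\mathrm{in}(P_\ell)\}$. The strategy is to assign to every $\ell\in F_v$ a visible up-arc in $A_v$, a visible down-arc in $A_v$, or both, and then argue that the family of assigned up-arcs is ancestor-free and so is the family of assigned down-arcs, bounding each by $\viwu(v)\le k$ and $\viwd(v)\le k$ respectively.

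The first step is to extract two structural consequences of shadow-minimality that the paper foreshadows in the discussion preceding the dynamic programming section. The generic shadow satisfies $\covr(s(\ell))=\covr(\ell)$, so if $\ell\ne s(\ell)$ one could replace $\ell$ by $s(\ell)$ in $F$ without losing feasibility; shadow-minimality therefore forces $\overline{P}_\ell=P_\ell$ for every $\ell\in F$. The same proper-shadow-replacement argument at each endpoint of $P_\ell$ shows that both the first and the last arc of $P_\ell$ lie in $\covr(\ell)$ and are each uniquely covered by $\ell$ among all links in $F$.

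Now $F_v$ decomposes as the disjoint union of $L_v^\downarrow\cap F$, $L_v^\uparrow\cap F$, and $L_v^{cross}\cap F$. For each $\ell=(u,w)$ in $(L_v^\downarrow\cup L_v^{cross})\cap F$, the endpoint $w$ is a strict descendant of $v$ reached from above along $P_\ell$. A direction check identifies the last arc of $P_\ell$ as the up-arc $a^{up}(\ell):=(w,\mathrm{parent}(w))\in A_v$, which lies in $\covr(\ell)$ by shadow-minimality and is visible at $v$ because $v\in\mathrm{in}(\overline{P}_\ell)=\mathrm{in}(P_\ell)$. Symmetrically, each $\ell=(u,w)\in (L_v^\uparrow\cup L_v^{cross})\cap F$ yields a visible down-arc $a^{down}(\ell):=(\mathrm{parent}(u),u)\in A_v$ as the first arc of $P_\ell$.

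The heart of the argument is showing that $\{a^{up}(\ell)\colon\ell\in (L_v^\downarrow\cup L_v^{cross})\cap F\}$ is ancestor-free. If $a^{up}(\ell)$ lies on the $\mathrm{apex}(a^{up}(\ell'))$-to-$r$ path, then $w$ is a strict ancestor of $w'$ inside $T_v$, so the arc $(w,\mathrm{parent}(w))$ appears on the descending $v$-to-$w'$ portion of $P_{\ell'}$. Any up-arc traversed along a descending portion is traversed against its orientation and therefore lies in $\covr$, so $a^{up}(\ell)\in\covr(\ell')$, contradicting the fact that $\ell$ is the unique link in $F$ covering $a^{up}(\ell)$. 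The symmetric argument handles $\{a^{down}(\ell)\}$, and combining the two bounds gives $|F_v|\le |(L_v^\downarrow\cup L_v^{cross})\cap F|+|L_v^\uparrow\cap F|\le k+k=2k$. The main obstacle is not conceptual but bookkeeping: verifying in each subcase for where the endpoints of $\ell$ sit relative to $T_v$ that the first and last arc of $P_\ell$ really are the claimed down- and up-arc, which requires carefully tracking traversal directions on the ascending versus descending portions of $P_\ell$; once the orientations are pinned down, the ancestor-free property drops out of the unique-coverage guarantee.
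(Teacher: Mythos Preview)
Your proposal is correct and follows essentially the same approach as the paper's own proof. Both arguments use shadow-minimality to deduce $\overline{P}_\ell=P_\ell$ and that the endpoint-incident arcs of $P_\ell$ lie in $\covr(\ell)$ and are uniquely covered within $F$; both then assign to each $\ell\in F_v$ a visible up-arc or down-arc in $T_v$ via the endpoint lying in $U_v\setminus\{v\}$, and establish the ancestor-free property by the unique-coverage contradiction. The only cosmetic difference is that the paper partitions $F_v$ according to whether the (arbitrarily chosen) endpoint-arc is up or down, whereas you partition by the link classes $L_v^\downarrow,L_v^\uparrow,L_v^{cross}$ and handle cross-links in both counts before taking $|(L_v^\downarrow\cup L_v^{cross})\cap F|+|L_v^\uparrow\cap F|$; the resulting arithmetic is identical.
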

\begin{proof}
As $F$ is shadow-minimal, we have $\ell=s(\ell)$ and $P_\ell=\overline{P}_\ell$ for every $\ell\in F$.
Let $v\in V$ and let $F'\coloneqq \{\ell\in F\colon v\in \mathrm{in}(P_\ell)\}$. We need to show that $|F'|\le 2k$. For each $\ell\in F'$, let $w_\ell\in U_v\setminus\{v\}$ be an endpoint of $\ell$ (this endpoint is unique unless $\ell$ is a $v$-cross-link, in which case we may select either endpoint). Let $a_\ell\in A(P_\ell)$ be the arc incident to $w_\ell$. Note that $a_\ell\in A_v$. Let $F'_{up}\coloneqq \{\ell\in F'\colon a_\ell\in A_{up}\}$ and let $F'_{down}\coloneqq \{\ell\in F'\colon a_\ell\in A_{down}\}$. We show that $|F'_{up}|\le k$ and $|F'_{down}|\le k$, which implies the desired statement. We only show $|F'_{up}|\le k$, $|F'_{down}|\le k$ can be derived analogously.
We observe that by shadow-minimality of $F$, we must have $a_\ell\in\covr(\ell)$ for every $\ell\in F'$. In particular, $\ell$ witnesses that $a_\ell$ is visible for $v$. In fact, shadow-minimality allows us to derive an even stronger statement: we must have $a_\ell\in\covr(\ell)\setminus \bigcup_{\ell'\in F\setminus\{\ell\}} \covr(\ell')$. In particular, the arcs $(a_\ell)_{\ell\in F'_{up}}$ are pairwise distinct. We further claim that they form an ancestor-free arc set. As $\viw(v)\le k$, this implies $|F'_{up}|\le k$. Assume towards a contradiction that there were two links $\ell=(u,x),\ell'=(u',x')\in F'_{up}$ such that $a_{\ell}$ appears on the path $P_{y'r}$ in $T$ from the head $y'$ of $a_{\ell'}\eqqcolon (x',y')$ to the root $r$. As $a_{\ell}\in A_v$,  $a_\ell$ appears on the $y'$-$v$-subpath $P_{y'v}$ of $P_{y'r}$. As $y'$ is the parent of the head $x'$ of $\ell'$ and $v\in\mathrm{in}(P_{\ell'})$, $P_{y'v}$ is a subpath of $P_{\ell'}$ and as $v$ is an ancestor of $y'$, $\ell'$ covers every up-arc on that path, including $a_\ell$. But this contradicts the fact that $a_\ell\in\covr(\ell)\setminus \bigcup_{\ell''\in F\setminus\{\ell\}} \covr(\ell'')$.
\end{proof}
We remark that there always exists a shadow-minimal optimum solution because we can iteratively replace links in an optimum solution by strict shadows without increasing the cost until the solution is shadow-minimal.
\begin{lemma}\label{lem:DPkthin}
Let $N\in\mathbb{N}$ be a constant. Given a rooted WDTAP instance $(T,L,c,r)$, we can, in polynomial time, find a cheapest $N$-thin solution, or decide that the instance is infeasible.
\end{lemma}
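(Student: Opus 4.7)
\emph{Approach.} I design a standard leaves-to-root dynamic program. For each $v\in V$, set
\[
\Lambda_v\coloneqq\{\ell\in L\colon v\in\mathrm{in}(P_\ell)\},
\]
so that $S\subseteq L$ is $N$-thin iff $|S\cap\Lambda_w|\le N$ for all $w\in V$. The DP is indexed by pairs $(v,F)$ with $F\subseteq\Lambda_v$ and $|F|\le N$; I store
\[
\mathrm{DP}[v][F] \;=\; \min\Bigl\{\sum_{\ell\in S,\ \mathrm{apex}(\ell)\in U_v} c(\ell) \;:\; S\subseteq L,\; S\text{ covers } A_v,\; S\cap\Lambda_v=F,\; |S\cap\Lambda_w|\le N\text{ for all } w\in U_v\Bigr\},
\]
so that only links whose apex lies in $T_v$ are paid for in $\mathrm{DP}[v][F]$. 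This apex-based cost attribution guarantees that every link is paid exactly once across the recursion. Since $N$ is a fixed constant and $|\Lambda_v|\le|L|$, the total number of DP entries is at most $|V|\cdot(|L|+1)^N=\mathrm{poly}(|L|,|V|)$.

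\emph{Recursion and correctness.} For a leaf $v$, $\Lambda_v=A_v=\emptyset$ and $\mathrm{DP}[v][\emptyset]=0$. For an internal vertex $v$ with children $u_1,\ldots,u_k$ and a target $F\subseteq\Lambda_v$ of size at most $N$, I enumerate, for each child $u_i$, a compatible child state $F_{u_i}\subseteq\Lambda_{u_i}$ of size at most $N$ together with optional inclusions of the (at most two) single-arc up/down-links directly connecting $v$ to $u_i$. Writing $D$ for the resulting set of up/down-links at $v$, \emph{compatibility} requires that every $\ell\in F\cup D$ with $u_i\in\mathrm{in}(P_\ell)$ lies in $F_{u_i}$, every $\ell\in F_{u_i}$ with $v\in\mathrm{in}(P_\ell)$ lies in $F$, and the union $F\cup D\cup\bigcup_i F_{u_i}$ covers each connecting arc $a_{u_i}$. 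The recursion is
\[
\mathrm{DP}[v][F] \;=\; \min\Biggl( c(F\cap L_v^{cross}) + c(D) + \sum_{i=1}^k \mathrm{DP}[u_i][F_{u_i}]\Biggr),
\]
where $F\cap L_v^{cross}$ is the set of links in $F$ whose apex is $v$ (these are paid here, at their apex), and the minimum is over all compatible $(F_{u_1},\ldots,F_{u_k},D)$. Because each child's compatible state interacts with the $v$-level data only through $F$ and the local up/down-link choices, the optimization at $v$ decomposes into $k$ independent per-child minimizations, each over $(|L|+1)^N\cdot O(1)$ possibilities, giving $\mathrm{poly}(|L|)$ per-vertex work. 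A direct induction on $v$ shows that every $N$-thin $S$ with $S\cap\Lambda_v=F$ that covers $A_v$ yields a valid DP configuration via $F_{u_i}\coloneqq S\cap\Lambda_{u_i}$, and conversely every valid configuration reassembles to such an $S$.

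\emph{Finalization and main obstacle.} The optimum $N$-thin solution cost equals $\min_{F\subseteq\Lambda_r,\,|F|\le N}\mathrm{DP}[r][F]$, and a witnessing solution is recovered by the standard backtracking procedure through the DP table; infeasibility is reported if this minimum is $+\infty$. The principal subtlety is the cost-attribution convention of charging each link at its apex: one must verify that (i) every link in the witness solution is charged to a unique DP entry, (ii) the compatibility conditions between $F$ and the $F_{u_i}$'s together enforce global $N$-thinness, and (iii) the local arc-coverage constraint at $v$, combined with the children's guarantees, implies global coverage of $A$. I expect (ii)--(iii) to be the most delicate steps but essentially mechanical once the apex-attribution convention is fixed.
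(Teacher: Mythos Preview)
Your overall approach matches the paper's: a bottom-up dynamic program indexed by a vertex $v$ and a bounded-size ``interface'' set of links passing through $v$, with the computation at $v$ decomposing over its children. However, your recursion formula has a cost-accounting error---precisely the subtlety you flag as item (i) but do not actually verify. Consider a link $\ell$ with one endpoint equal to $v$ and the other endpoint a strict descendant of some child $u_i$ (i.e., an up- or down-link whose apex is $v$ and whose path has length at least two). Then $v\notin\mathrm{in}(P_\ell)$, so $\ell\notin\Lambda_v$ and hence $\ell\notin F$; and $\ell$ is not a single-arc link, so $\ell\notin D$. Consequently your formula
\[
c(F\cap L_v^{cross})+c(D)+\sum_i \mathrm{DP}[u_i][F_{u_i}]
\]
never pays for $\ell$: it is excluded from the first two terms by the above, and from the third because by your own convention $\mathrm{DP}[u_i][\cdot]$ charges only links with apex in $U_{u_i}$, whereas $\mathrm{apex}(\ell)=v\notin U_{u_i}$. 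The recursion therefore systematically under-counts cost, and the computed table does not equal the quantity you defined $\mathrm{DP}[v][F]$ to be.

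The fix is easy and keeps your approach intact: any such $\ell$ satisfies $u_i\in\mathrm{in}(P_\ell)$, so it is recorded in $F_{u_i}$; hence you can add at $v$'s level the term $\sum_i c\bigl(\{\ell\in F_{u_i}:\mathrm{apex}(\ell)=v\text{ and }v\text{ is an endpoint of }\ell\}\bigr)$, which is still polynomial since $|F_{u_i}|\le N$. The paper sidesteps this pitfall by a different accounting convention: its table $c(v,Y)$ pays for \emph{all} links in the partial solution $F\subseteq L_v$ (not just those with apex in $U_v$), and then explicitly subtracts the doubly-counted cost $c(Y'\cap(Y\cup Z))$ when merging a child's table into the parent's.
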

\begin{proof}
Let $(T,L,c,r)$ be a rooted instance of WDTAP. Recall that for $v\in V$, $T_v=(U_v,A_v)$ is the subtree rooted at $v$. We define the following three links sets for $v\in V$:
\begin{itemize}
	\item $L_v$ is the set of links that have at least one endpoint in $U_v\setminus \{v\}$. Note that any link that covers an arc $a\in A_v$ must be contained in $L_v$.
	\item $L_v^{out}$ consists of all links with one endpoint in $U_v\setminus\{v\}$ and the other endpoint in $V\setminus U_v$. Note that for every $\ell\in L_v^{out}$, $v\in\mathrm{in}(P_\ell)$.
	\item $L_v^{cross}$ consists of all links $\ell$ with both endpoints in $U_v\setminus\{v\}$ and $\mathrm{apex}(\ell)=v$. Note that for every $\ell\in L_v^{cross}$, $v\in\mathrm{in}(P_\ell)$.
\end{itemize}
We further point out that if $\ell\in L$ and $v\in\mathrm{in}(P_\ell)$, then $\ell\in L_v^{out}\cup L_v^{cross}$.

Let $v\in V$ and $Y\subseteq L_v^{out}$. We call a link set $F\subseteq L_v$ \emph{feasible} for $(v,Y)$ if $F$ is $N$-thin, $F\cap L_v^{out}=Y$ and $F$ covers every arc in $A_v$. We define $c(v,Y)$ to be the minimum cost of a feasible link set for $(v,Y)$, or $\infty$, if no such link set exists.

We will use dynamic programming to, for every $v\in V$ and $Y\subseteq L_v^{out}$ with $|Y|\le N$, compute $c(v,Y)$, as well as a feasible link set $F^*(v,Y)$ for $(v,Y)$ with $c(F^*(v,Y))=c(v,Y)$, or $F^*(v,Y)=\emptyset$, if $c(v,Y)=\infty$. 
We remark that if the instance admits a feasible solution, then by shadow-completeness, we, for every arc $a=(u,w)$, have a link $\ell=(w,u)$ just covering $a$, and we can always use them to complete $Y$ to an $N$-thin solution.
We note that $L_r=L$ (assuming that we do not have links of the form $(v,v)$ that do not cover any arc) and $L_r^{out}=\emptyset$, so $F^*(r,\emptyset)$ yields a cheapest $N$-thin solution to the instance, or $c(r,\emptyset)=\infty$ and $F^*(r,\emptyset)=\emptyset$ if no such solution exists.

As $N$ is a constant, there is only a polynomial number of pairs $(v,Y)$ that we consider. We traverse the pairs in order of non-increasing distance of $v$ to the root, which ensures that when considering a pair $(v,Y)$, all pairs $(v',Y')$ with $v'\in U_v\setminus \{v\}$ have already been processed. Hence, it remains to show how to, in polynomial time, compute $c(v,Y)$ and $F^*(v,Y)$, assuming that we have already computed $c(v',Y')$ and $F^*(v',Y')$ for all pairs $(v',Y')$ with $v'\in U_v\setminus \{v\}$.

If $v$ is a leaf of $T$, then $L_v=L_v^{out}=\emptyset$ and $A_v=\emptyset$ and we have $c(v,\emptyset)=0$ and $F^*(v,\emptyset)=\emptyset$.
Next, assume that $v$ is not a leaf of $T$. Let $Y\subseteq L_v^{out}$ such that $|Y|\le N$. For $Z\subseteq L_v^{cross}$ with $|Y|+|Z|\le N$, we say that a link set $F\subseteq L_v$ is \emph{feasible} for $(v,Y,Z)$ if it is feasible for $(v,Y)$ and $F\cap L_v^{cross}=Z$. We denote the minimum cost of a link set that is feasible for $(v,Y,Z)$ by $c(v,Y,Z)$ and let $c(v,Y,Z)=\infty$ if no such link set exists. In addition to the values $c(v,Y,Z)$, we will compute link sets $F^*(v,Y,Z)$ such that $F^*(v,Y,Z)$ is a feasible link set for $(v,Y,Z)$ with $c(F^*(v,Y,Z))=c(v,Y,Z)$, if exists, and $F^*(v,Y,Z)=\emptyset$ if $c(v,Y,Z)=\infty$.
We have \[c(v,Y)=\min\{c(v,Y,Z)\colon Z\subseteq L_v^{cross},|Y|+|Z|\le N\}\] because if $F\subseteq L_v$ is feasible for $(v,Y,Z)$, then it is also feasible for $(v,Y)$, and conversely, if $F$ is feasible for $(v,Y)$, then $F$ is $N$-thin, so $N\ge |F\cap L_v^{out}|+|F\cap L_v^{cross}|= |Y|+|F\cap L_v^{cross}|$, and hence, $F$ is feasible for $(v,Y,F\cap L_v^{cross})$. Moreover, if $Z$ attains the above minimum, then we can set $F^*(v,Y)=F^*(v,Y,Z)$. As there is only a polynomial number of sets $Z\subseteq L_v^{cross}$ with $|Y|+|Z|\le N$, it suffices to show how to, for a fixed choice of $Z$, compute $c(v,Y,Z)$ and $F^*(v,Y,Z)$.

Let $v_1,\dots,v_k$ be the children of $v$ in $T$ (recall that $v$ is not a leaf). For $i\in\{0,\dots,k\}$, let $L_v^i$ be the set of links with at least one endpoint in $U_i\coloneqq \bigcup_{j=1}^i U_{v_i}$, i.e., $L_v^0=\emptyset$ and $L_v^k=L_v$. We call a link set $F\subseteq L_v^i$ \emph{feasible} for $(v,Y,Z,i)$ if $F$ is $N$-thin, $F\cap L_v^{out}=Y\cap L_v^i$, $F\cap L_v^{cross}=Z\cap L_v^i$, and $F$ covers every arc in $A_i\coloneqq \bigcup_{j=1}^i A_{v_j}\cup \{a_{v_j}\}$. (Recall that $a_{v_j}$ is the arc connecting $v_j$ to its parent $v$.) We define $c(v,Y,Z,i)$ to be the minimum cost of a feasible link set for $(v,Y,Z,i)$, or $\infty$, if no such link set exists. We will compute the values $c(v,Y,Z,i)$ for $i=0,\dots,k$, and, whenever $c(v,Y,Z,i)\ne \infty$, we will compute a link set $F^*(v,Y,Z,i)$ attaining $c(v,Y,Z,i)$; otherwise, we will set $F^*(v,Y,Z,i)=\emptyset$. 

Note that $c(v,Y,Z)=c(v,Y,Z,k)$ and that $F^*(v,Y,Z,k)$ is a feasible choice for $F^*(v,Y,Z)$. Hence, it remains to explain how to determine the values $c(v,Y,Z,i)$ and $F^*(v,Y,Z,i)$ in polynomial time.

For $i\in \{1,\dots,k\}$, let $\ell^*_i$ be the link with endpoints $v_i$ and $v$ that covers $a_{v_i}$, i.e., $\ell^*_i=(v,v_i)$ if $a_{v_i}=(v_i,v)$ and vice versa. Note that $\ell^*_i\in L$ by shadow-completeness and because there exists a link in $L$ covering $a_{v_i}$; otherwise, the instance is infeasible and we can return this information. Moreover, let $\mathcal{Y}_i$ be the collection of all sets $Y'\subseteq L_{v_i}^{out}$ such that $|Y'|\le N$ and $Y'\cap (L_v^{out}\cup L_v^{cross})=(Y\cup Z)\cap L_{v_i}^{out}$.
\begin{claim}
We have $c(v,Y,Z,0)=0$. For $i\in\{1,\dots,k\}$,
\begin{align*}c(v,Y,Z,i)&=c(v,Y,Z,i-1)+c(Y\cap (L_v^i\setminus L_v^{i-1}))+c(Z\cap (L_v^i\setminus L_v^{i-1}))\\
	&+\min\{c(v_i,Y')-c(Y'\cap (Y\cup Z))+\chi[\text{$a_{v_i}$ not covered by $Y'\cup Y\cup Z$}]\cdot c(\ell^*_i)\colon Y'\in\mathcal{Y}_i\},\end{align*}
where $\chi[\text{$a_{v_i}$ is not covered by $Y'\cup Y\cup Z$}]$ is $1$ if $a_{v_i}$ is not covered by $Y'\cup Y\cup Z$, and $0$ otherwise.
\end{claim}
\begin{proof}[Proof of claim]
As $L_v^0=\emptyset$, $c(v,Y,Z,0)=0$. Next, let $i\in\{1,\dots,k\}$. We first prove that every set $Y'\in\mathcal{Y}_i$ for which the right hand side is finite yields a valid upper bound on $c(v,Y,Z,i)$. To this end, assume that $c(v,Y,Z,i-1)<\infty$ and let $F'\coloneqq F^*(v,Y,Z,i-1)$ attain this value. Let further $Y'\in \mathcal{Y}_i$ such that $c(v_i,Y')<\infty$ and let $F_i\coloneqq F^*(v_i,Y')$. Let $F\coloneqq F'\cup F_i\cup (Y\cup Z)\cap L_v^i$, if $a_{v_i}$ is covered by $Y'\cup Y\cup Z$, and let $F\coloneqq F'\cup F_i\cup (Y\cup Z)\cap L_v^i\cup \{\ell^*_i\}$ otherwise.
Then \begin{equation}F\cap L_v^{i-1}=F' \text{ and } F\cap L_{v_i}=F_i \text{ and } F\cap (L_v^{out}\cup L_v^{cross})=(Y\cup Z)\cap L_v^i\label{eq:intersection_with_subinstances}\end{equation} by construction and by definition of $\mathcal{Y}_i$. Then $F$ covers every arc in $A_i$ because $F'$ covers every arc in $A_{i-1}$, $F_i$ covers every arc in $A_{v_i}$, and we also made sure that $a_{v_i}$ is covered. We further have $F\subseteq L_v^i$ by construction. To see that $F$ is $N$-thin, we note that for every vertex $v'\in U_{i-1}$, there are at most $N$ links in $F$ with $v'\in \mathrm{in}(P_\ell)$ because $F'$ is $N$-thin, by \eqref{eq:intersection_with_subinstances} and because if $v'\in \mathrm{in}(P_\ell)$ for $\ell\in F$, then $\ell\in L_v^{i-1}$. Similarly, for every vertex $v'\in U_{v_i}$, there are at most $N$ links in $F$ with $v'\in\mathrm{in}(P_\ell)$ by \eqref{eq:intersection_with_subinstances} and because $F_i$ is $N$-thin. There are at most $N$ links in $F$ with $v\in \mathrm{in}(P_\ell)$ by \eqref{eq:intersection_with_subinstances} and because $|Y|+|Z|\le N$. For $w\in U_v\setminus (U_i\cup \{v\})$, there are at most $N$ links in $F$ with $w\in \mathrm{in}(P_\ell)$ because $F\subseteq L^i_v$ and $|Z|\le N$. Finally, for $w\in V\setminus U_v$, there are at most $N$ links in $F$ with $w\in\mathrm{in}(P_\ell)$ because $|Y|\le N$.

It remains to show that the expression on the right-hand side that we evaluate yields an upper bound on $c(F)$. If we include the link $\ell^*_i$, then we add its cost. The cost of every link in $L_v^{i-1}\cap F=F'$ is added (via the term $c(v,Y,Z,i-1)$). 
The cost of every link in $F_i\setminus (Y\cup Z)$ is added via the term $c(v_i,Y')-c(Y'\cap (Y\cup Z))$ because
\[F_i\cap (Y\cup Z)=(F_i\cap L_{v_i}^{out})\cap (Y\cup Z)=Y'\cap (Y\cup Z).\]
Finally, the cost of every link in $(Y\cup Z)\cap (L_v^i\setminus L_v^{i-1})$ is added.

Next, we show that if $c(v,Y,Z,i)$ is finite, there exists a set $Y'\in\mathcal{Y}_i$ for which the value of the right-hand side is at most $c(v,Y,Z,i)$. To this end, let $F\subseteq L_v^i$ be feasible for $(v,Y,Z,i)$ with $c(F)=c(v,Y,Z,i)$. Define $F'\coloneqq F\cap L_v^{i-1}$ and $F_i\coloneqq F\cap L_{v_i}$, and let $Y'\coloneqq F\cap L_{v_i}^{out}$. Then $F'$ is feasible for $(v,Y,Z,i-1)$ and $F_i$ is feasible for $(v,Y')$. Moreover, $|Y'|\le N$ (as $F$ is $N$-thin) and 
\begin{align*}Y'\cap (L_v^{out}\cup L_v^{cross})&=F\cap L_{v_i}^{out}\cap (L_v^{out}\cup L_v^{cross})=L_{v_i}^{out}\cap (F\cap (L_v^{out}\cup L_v^{cross}))\\&=L_{v_i}^{out}\cap (Y\cup Z)\cap L_v^i=L_{v_i}^{out}\cap (Y\cup Z),\end{align*}
so $Y'\in\mathcal{Y}_i$.
It remains to show that the cost term that we get on the right-hand side when choosing $Y'$ is at most $c(F)$. To this end, we have \[c(v,Y,Z,i-1)\le c(F') \text{ and }c(v_i,Y')-c(Y'\cap (Y\cup Z))\le c(F_i)-c(F_i\cap (Y\cup Z))=c(F_i\setminus (Y\cup Z))\] because $Y'\cap (Y\cup Z)=(F\cap L_{v_i}^{out})\cap (Y\cup Z)=F_i\cap (Y\cup Z)$ and because $F'$ and $F_i$ are feasible for $(v,Y,Z,i-1)$ and $(v_i,Y')$, respectively. We note that the subsets $F'$, $F_i\setminus (Y\cup Z)$, $Y\cap (L^i_v\setminus L^{i-1}_v)$ and $Z\cap (L^i_v\setminus L^{i-1}_v)$ are pairwise distinct because $F'\cap F_i\subseteq Z$. Finally, we observe that none of the previous subsets can contain the link $\ell^*_i$ and that if $a_{v_i}$ is not covered by $Y'\cup Y\cup Z$, then $a_{v_i}$ can only be covered by $\ell^*_i$, so $\ell^*_i\in F$. This is because every link covering $a_{v_i}$ must have one endpoint in $U_{v_i}$ and its other endpoint in $V\setminus U_{v_i}$ and unless the endpoints are $v$ and $v_i$ (i.e., $\ell=\ell^*_i$), we have $\ell\in Y'\cup Y\cup Z$. 
\end{proof}
Using the claim, we can compute all of the values $c(v,Y,Z,i)$ in polynomial time. In the proof of the claim, we have further seen how to compute $F^*(v,Y,Z,i)$ attaining $c(v,Y,Z,i)$ in polynomial time. This concludes the proof.
\end{proof}

Combining the results of~\Cref{lem:kthin,lem:DPkthin}, we conclude that WDTAP instances of constant visible width can be solved exactly in polynomial time.

\begin{corollary}\label{cor:solve_DTAP_bounded_viwidth}
    If $(T, L, c,r)$ is a rooted instance of WDTAP with visible width at most $k$, then we can, in polynomial time, find an optimal solution, or decide that the instance is infeasible.
\end{corollary}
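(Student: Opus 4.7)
The plan is to directly combine \cref{lem:kthin} and \cref{lem:DPkthin} by running the dynamic program of \cref{lem:DPkthin} with threshold $N = 2k$. Since $k$ is a fixed constant, so is $2k$, and therefore \cref{lem:DPkthin} will, in polynomial time, either return a cheapest $2k$-thin solution $F$ or certify that no $2k$-thin solution exists.

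For correctness, I would argue as follows. Suppose the instance is feasible and let $F^*$ be any optimum solution. Iteratively replace each link $\ell \in F^*$ by a strict shadow whenever doing so preserves feasibility; by shadow-completeness, the cost of the shadow is no greater than $c(\ell)$, so this process produces a shadow-minimal feasible solution $F^{**}$ with $c(F^{**}) \le c(F^*) = \mathrm{OPT}$. Since $(T, L, c, r)$ has visible width at most $k$, \cref{lem:kthin} applied to $F^{**}$ guarantees that $F^{**}$ is $2k$-thin. Hence the cheapest $2k$-thin solution $F$ returned by the DP satisfies $c(F) \le c(F^{**}) \le \mathrm{OPT}$, and because $F$ is itself a feasible link set, $c(F) \ge \mathrm{OPT}$. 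Consequently $F$ is an optimum solution. Conversely, if the DP reports that no $2k$-thin solution exists, the argument above shows that no feasible solution exists at all (otherwise, it would yield a shadow-minimal optimum that is $2k$-thin), so the instance is correctly declared infeasible.

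There is essentially no obstacle here, as the corollary is a direct composition of the two lemmas. The only conceptual point worth flagging is the existence of a shadow-minimal optimum solution, which follows from shadow-completeness of $L$ (already assumed in \cref{sec:prelims}) together with the finiteness of the link set, ensuring the iterative shadow-replacement procedure terminates.
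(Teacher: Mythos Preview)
Your proposal is correct and follows the same approach as the paper's proof: run the dynamic program of \cref{lem:DPkthin} with $N = 2k$, and use \cref{lem:kthin} (together with the existence of a shadow-minimal optimum) to argue that some optimum solution is $2k$-thin. Your write-up simply spells out in more detail what the paper states tersely in two sentences.
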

\begin{proof}
    For a feasible visibly $k$-wide instance, there exists an optimal solution that is at most $2k$-thin. Hence, we can run the dynamic programming algorithm above with $N = 2k$ to find the optimal solution for this instance, or decide that it is infeasible. 
\end{proof}

 \section{The partial separation framework}\label{sec:partialseparation}

This section describes the high level framework of our algorithm, which is to implement a partial separation oracle for a certain LP formulation we call the visibly $k$-wide modification LP. We then show how this partial separation oracle implies an algorithm for the WDTAP problem. 

 \subsection{Splitting links}
First, we formalize the ``link splitting" operation, which will be used throughout the paper and in particular will allow us to define the visibly $k$-wide modification LP. 

Fix a (rooted) WDTAP instance $(T,L,c,r)$.
\begin{definition}
 A \emph{splitting} of the link set $L$ is a function $\sigma\colon L\rightarrow 2^L$ mapping $\ell\in L$ to a set of shadows $\ell_1,\dots,\ell_t$ of $\ell$ such that (the arc sets of) $P_{\ell_1},\dots,P_{\ell_t}$ form a partition of (the arc set of) $P_\ell$. The \emph{support} of the splitting is $\mathrm{supp}(\sigma)\coloneqq \{\ell\in L\colon \exists \ell'\in L\colon \ell\in\sigma(\ell')\}.$   \end{definition}
Next, we define how to apply a splitting to a solution to \eqref{eq:WDTAP_LP} to generate a new feasible solution of \eqref{eq:WDTAP_LP}.
\begin{definition}
Let $x$ be a feasible solution to \eqref{eq:WDTAP_LP} and let $\sigma$ be a splitting of $L$. We let the solution $x'=\mathrm{split}(x,\sigma)$ to \eqref{eq:WDTAP_LP} that we \emph{obtain from $x$ by applying $\sigma$} be defined by $x'_{\ell'}\coloneqq \sum_{\ell\in L\colon \ell'\in\sigma(\ell)} x_{\ell}$.
\end{definition}
 The following proposition shows that splitting links can only reduce the visible up- or down-width of any vertex.
To state it, we introduce the following notation.
 \begin{definition}
 	Let $x$ be a solution to \eqref{eq:WDTAP_LP}. The \emph{support} $\mathrm{supp}(x)$ of $x$ consists of all links $\ell$ with $x_\ell > 0$.
 \end{definition}
 
\begin{proposition}\label{lemma:splitting_cannot_increase_width}
	Let $x$ be a solution to \eqref{eq:WDTAP_LP}, let $\sigma$ be a splitting of $L$ and let $x'\coloneqq \mathrm{split}(x,\sigma)$. Then for every vertex $v$, the visible up-width (visible down-width) of $v$ with respect to $\mathrm{supp}(x')$ is at most the visible up-width (visible down-width) of $v$ with respect to $\mathrm{supp}(x)$.
\end{proposition}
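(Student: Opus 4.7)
The plan is to prove the stronger monotonicity statement that for every vertex $v$, the set of arcs in $A_v$ visible to $v$ with respect to $\mathrm{supp}(x')$ is a subset of the set of arcs visible to $v$ with respect to $\mathrm{supp}(x)$. Once this inclusion is established, both the visible up-width and the visible down-width at $v$, being maximum sizes of ancestor-free families drawn from these visibility sets, can only decrease.

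To produce the inclusion, I would fix an arc $a \in A_v$ visible to $v$ with respect to $\mathrm{supp}(x')$ and exhibit a single link in $\mathrm{supp}(x)$ that witnesses visibility of $a$ to $v$. By definition of visibility, there is some $\ell' \in \mathrm{supp}(x')$ with $a \in \covr(\ell')$ and $v \in \mathrm{in}(\overline{P}_{\ell'})$. From $x'_{\ell'} = \sum_{\ell\colon \ell' \in \sigma(\ell)} x_\ell > 0$ I would immediately extract some $\ell \in \mathrm{supp}(x)$ with $\ell' \in \sigma(\ell)$; in particular $\ell'$ is a shadow of $\ell$ and $P_{\ell'}$ is a subpath of $P_\ell$ traversed in the same direction. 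It then remains to argue that $\ell$ itself certifies visibility of $a$ to $v$.

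The routine verification amounts to checking two inclusions that I expect to be essentially immediate from the definitions. First, since $P_{\ell'}$ is a subpath of $P_\ell$ traversed in the same direction, the backward arcs on $P_{\ell'}$ are exactly the backward arcs of $P_\ell$ that lie on $P_{\ell'}$, which gives $\covr(\ell') \subseteq \covr(\ell)$; hence $a \in \covr(\ell)$. Second, using the characterization of the generic shadow as the shortest subpath of $P_\ell$ containing $\covr(\ell)$ (and likewise for $\ell'$), together with $\covr(\ell') \subseteq \covr(\ell)$ and $P_{\ell'} \subseteq P_\ell$, I would conclude that $\overline{P}_{\ell'}$ is itself a subpath of $\overline{P}_\ell$. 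An interior vertex of $\overline{P}_{\ell'}$ is strictly between the endpoints of $\overline{P}_{\ell'}$, hence strictly inside $\overline{P}_\ell$ as well, giving $\mathrm{in}(\overline{P}_{\ell'}) \subseteq \mathrm{in}(\overline{P}_\ell)$ and in particular $v \in \mathrm{in}(\overline{P}_\ell)$. This completes the witnessing link.

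I do not anticipate a serious obstacle here: the statement is essentially an unwinding of the definitions of shadow, directed coverage, generic shadow, and $\mathrm{split}$. The only mildly delicate point is that visibility is defined in terms of the generic-shadow path $\overline{P}_{\ell'}$ rather than $P_{\ell'}$, so one must be careful to pass from $\overline{P}_{\ell'}$ to $\overline{P}_\ell$ rather than to $P_\ell$; the shortest-subpath characterization handles this cleanly.
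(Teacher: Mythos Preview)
Your proposal is correct and follows essentially the same approach as the paper: reduce to the inclusion of visibility sets, pick a witness $\ell'\in\mathrm{supp}(x')$, lift it to a parent link $\ell\in\mathrm{supp}(x)$ with $\ell'\in\sigma(\ell)$, and use that $\ell'$ is a shadow of $\ell$ to conclude $a\in\covr(\ell)$ and $v\in\mathrm{in}(\overline{P}_\ell)$. The paper compresses the last step into one sentence, whereas you spell out why $\overline{P}_{\ell'}$ is a subpath of $\overline{P}_\ell$ and hence interior vertices transfer; this extra care is fine and the arguments are otherwise identical.
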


\begin{proof}
	It suffices to show that every arc that is visible for a vertex $v$ with respect to $\mathrm{supp}(x')$ is also visible for that vertex with respect to $\mathrm{supp}(x)$.
	Let $a$ be an arc that is visible for $v$ with respect to $\mathrm{supp}(x')$. Then there is $\ell'\in \mathrm{supp}(x')$ such that $a\in\covr(\ell')$ and $v\in\mathrm{in}(\overline{P}_{\ell'})$. As $\ell'\in \mathrm{supp}(x')$, there is $\ell\in\mathrm{supp}(x)$ such that $\ell'\in\sigma(\ell)$. Then $\ell'$ is a shadow of $\ell$, so $a\in\covr(\ell)$ and $v\in\mathrm{in}(\overline{P}_{\ell})$. Hence, $a$ is also visible for $v$ with respect to $\mathrm{supp}(x)$. 
\end{proof}

The following proposition shows that the coverage of all tree arcs is preserved by the splitting operation.

\begin{proposition}\label{lemma:splitting_same_coverage}
Let $x$ be a feasible solution to \eqref{eq:WDTAP_LP}, let $\sigma$ be a splitting of $L$ and let $x'\coloneqq \mathrm{split}(x,\sigma)$. For $a\in A$, we have
\begin{itemize}
    \item $x'(\{\ell\in L\colon a\in\covr(\ell)\})=x(\{\ell\in L\colon a\in\covr(\ell)\})$,
    \item $x'(\{\ell\in L\colon a\in\covw(\ell)\})=x(\{\ell\in L\colon a\in\covw(\ell)\})$ and 
    \item $x'(\{\ell\in L\colon a\in\cov(\ell)\})=x(\{\ell\in L\colon a\in\cov(\ell)\})$.
\end{itemize}
\end{proposition}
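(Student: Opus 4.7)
The plan is to reduce the proposition to a purely combinatorial statement about which shadows in $\sigma(\ell)$ carry a given arc $a$ in a given direction, and then to interchange summations. Specifically, I will first establish the following pointwise partition property: for every $\ell\in L$ and every $a\in A$, the number of shadows $\ell'\in\sigma(\ell)$ with $a\in\covr(\ell')$ equals $1$ if $a\in\covr(\ell)$, and $0$ otherwise; and the analogous statement holds with $\covr$ replaced by $\covw$ or by $\cov$.

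To prove the partition property I would argue as follows. Since $\sigma(\ell)=\{\ell_1,\dots,\ell_t\}$ with the $P_{\ell_i}$ partitioning $P_\ell$, there is at most one $i$ with $a\in P_{\ell_i}$, and there is exactly one such $i$ precisely when $a\in P_\ell$. It remains to check that if $a\in P_{\ell_i}$, then $a$ lies on $P_{\ell_i}$ in the same orientation (forward vs.\ backward) as it lies on $P_\ell$. This follows directly from the definition of a shadow: writing $\ell=(u,v)$ and $\ell_i=(u_i,v_i)$ with $u_i$ preceding $v_i$ on the oriented path from $u$ to $v$, the $u_i$-$v_i$-path in $T$ is a contiguous segment of $P_\ell$ traversed in the same direction, so an arc forward on $P_{\ell_i}$ is forward on $P_\ell$ and vice versa. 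Thus $a\in\covr(\ell_i)$ iff $a\in\covr(\ell)\cap P_{\ell_i}$, and similarly for $\covw$; summing the two cases gives the statement for $\cov$.

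With the partition property in hand, each of the three equalities follows from a single Fubini-style swap. For instance, using the definition $x'_{\ell'}=\sum_{\ell\colon \ell'\in\sigma(\ell)}x_\ell$,
\begin{equation*}
x'(\{\ell'\colon a\in\covr(\ell')\})=\sum_{\ell'\colon a\in\covr(\ell')}\sum_{\ell\colon \ell'\in\sigma(\ell)}x_\ell=\sum_{\ell\in L}x_\ell\cdot|\{\ell'\in\sigma(\ell)\colon a\in\covr(\ell')\}|,
\end{equation*}
and by the partition property the inner cardinality is the indicator of $a\in\covr(\ell)$, yielding $x(\{\ell\colon a\in\covr(\ell)\})$. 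The same computation with $\covr$ replaced by $\covw$ or $\cov$ proves the other two identities.

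I do not expect any real obstacle here. The only point that requires a moment of care is the orientation-preservation statement for shadows, which is immediate from the definition but worth stating explicitly so that the reader sees why $\covr$ and $\covw$ are each preserved individually, rather than only their sum $\cov$.
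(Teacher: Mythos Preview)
Your proposal is correct and follows essentially the same approach as the paper: both hinge on the fact that for each $\ell\in L$ and $a\in A$ there is exactly one $\ell'\in\sigma(\ell)$ with $a\in\covr(\ell')$ when $a\in\covr(\ell)$ and none otherwise, and then swap the order of summation. You are slightly more explicit than the paper in spelling out why shadows preserve the forward/backward orientation of an arc on $P_\ell$, which is a nice touch.
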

\begin{proof}
Let $a\in A$. We only show the first equality, the other ones can be derived analogously. For any link $\ell \in L$ with $a\in \covr(\ell)$, there exists a unique link $\ell_a \in \sigma(\ell)$ with $a\in\covr(\ell_a)$. Conversely, if $a\in\covr(\ell')$ and $\ell'\in\sigma(\ell)$, then $a\in\covr(\ell)$ (and $\ell'=\ell_a$). This implies
\begin{align*}x'(\{\ell\in L\colon a\in\covr(\ell)\})&=\sum_{\ell'\in L\colon a\in\covr(\ell')}\sum_{\ell\in L\colon \ell'\in \sigma(\ell)} x_\ell=\sum_{\ell\in L\colon a\in\covr(\ell)}\underbrace{|\{\ell'\in\sigma(\ell)\colon a\in\covr(\ell')\}|}_{=1} \cdot x_\ell\\
&= x(\{\ell\in L\colon a\in\covr(\ell)\}).\end{align*}
\end{proof}

The following proposition simply counts the additional cost incurred by splitting.

\begin{proposition} \label{lemma:splitting_LP_solution}
Let $x$ be a feasible solution to \eqref{eq:WDTAP_LP}, let $\sigma$ be a splitting of $L$ and let $x'\coloneqq \mathrm{split}(x,\sigma)$. Then $x'$ is a feasible solution to \eqref{eq:WDTAP_LP} of cost 
\[c(x')=\sum_{\ell\in L} \left(\sum_{\ell'\in \sigma(\ell)} c(\ell')\right)\cdot x_\ell.\]
\end{proposition}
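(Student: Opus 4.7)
The plan is to establish the two claims separately: feasibility first, then the cost identity. Both should follow almost immediately from the previous propositions and from unpacking the definition of $\mathrm{split}(x,\sigma)$.

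For feasibility, I would invoke \cref{lemma:splitting_same_coverage} directly. Since $x$ is feasible for \eqref{eq:WDTAP_LP}, for every $a\in A$ we have $x(\{\ell\in L\colon a\in \covr(\ell)\})\ge 1$. By the first bullet of \cref{lemma:splitting_same_coverage}, the same total value is preserved under $\mathrm{split}(x,\sigma)$, so $x'(\{\ell\in L\colon a\in\covr(\ell)\})\ge 1$ as well. Together with nonnegativity of $x'$ (which is immediate from the definition $x'_{\ell'}=\sum_{\ell\colon \ell'\in\sigma(\ell)}x_\ell$ and nonnegativity of $x$), this gives $M\cdot x'\ge \mathbf{1}$ and $x'\ge 0$, so $x'$ is feasible.

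For the cost identity, I would just compute directly, swapping the order of summation. Starting from the definition of $c(x')$ and of $x'$:
\[
c(x')=\sum_{\ell'\in L}c(\ell')\,x'_{\ell'}=\sum_{\ell'\in L}c(\ell')\sum_{\ell\in L\colon \ell'\in\sigma(\ell)}x_\ell=\sum_{\ell\in L}x_\ell\sum_{\ell'\in\sigma(\ell)}c(\ell'),
\]
which is the desired formula. The interchange of the two sums is valid because each pair $(\ell,\ell')$ with $\ell'\in\sigma(\ell)$ is counted exactly once on each side.

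There is essentially no obstacle here, as the proposition is a bookkeeping lemma; the only subtlety is making sure the definition of $x'$ is applied correctly and that one invokes \cref{lemma:splitting_same_coverage} rather than re-deriving the coverage preservation from scratch. I would present the proof as two short paragraphs, one for each claim, each consisting of one displayed equation and a reference to the appropriate earlier statement.
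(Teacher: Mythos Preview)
Your proposal is correct and matches the paper's own proof essentially line for line: feasibility is deduced from \cref{lemma:splitting_same_coverage}, and the cost identity is obtained by expanding the definition of $x'$ and swapping the order of summation. The paper is slightly terser (it omits the explicit mention of nonnegativity), but the argument is the same.
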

\begin{proof}
Feasibility of $x'$ follows from \cref{lemma:splitting_same_coverage}. For the cost, we calculate
\[\sum_{\ell\in L} c(\ell)\cdot x'_{\ell}=\sum_{\ell\in L} \sum_{\ell'\in \sigma(\ell)} c(\ell')\cdot x_\ell.\]
\end{proof}

We will often apply splittings sequentially, which is captured by the following definition.
\begin{definition}
  Let $\sigma$ and $\sigma'$ be two splitting of $L$. Then \emph{concatenation} $\sigma'\circ\sigma$ of the two splittings is defined via $(\sigma'\circ\sigma)(\ell)=\bigcup_{\ell'\in\sigma(\ell)}\sigma'(\ell')$. 
\end{definition}
Note that the concatenation of two splittings of $L$ is again a splitting of $L$. We further observe the following.
\begin{proposition}
 Let $x$ be a feasible solution to \eqref{eq:WDTAP_LP} and let $\sigma$ and $\sigma'$ be two splitting of $L$.
 Then $\mathrm{split}(x,\sigma'\circ\sigma)=\mathrm{split}(\mathrm{split}(x,\sigma),\sigma')$.
\end{proposition}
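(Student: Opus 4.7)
The plan is to unfold both sides of the claimed equality directly from the definitions, reducing the statement to a combinatorial identity about the number of ways a shadow $\ell''$ can arise as a ``grandchild'' of a link $\ell$ under the two-step splitting. For each link $\ell'' \in L$, the definition of $\mathrm{split}$ gives
\[
\mathrm{split}(\mathrm{split}(x,\sigma),\sigma')_{\ell''} \;=\; \sum_{\ell' \in L\colon \ell'' \in \sigma'(\ell')} \ \sum_{\ell \in L\colon \ell' \in \sigma(\ell)} x_\ell,
\]
which I would swap into the form $\sum_{\ell} x_\ell \cdot |\{\ell' \in \sigma(\ell)\colon \ell'' \in \sigma'(\ell')\}|$. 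Directly from the definitions of $\sigma' \circ \sigma$ and $\mathrm{split}$,
\[
\mathrm{split}(x,\sigma' \circ \sigma)_{\ell''} \;=\; \sum_{\ell \in L\colon \ell'' \in (\sigma'\circ\sigma)(\ell)} x_\ell \;=\; \sum_{\ell \in L\colon \exists\, \ell' \in \sigma(\ell),\,\ell'' \in \sigma'(\ell')} x_\ell.
\]

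To equate the two expressions, the key combinatorial fact I would establish is that for every $\ell \in L$ and every link $\ell''$, there is \emph{at most one} $\ell' \in \sigma(\ell)$ with $\ell'' \in \sigma'(\ell')$. Once this is in hand, the indicator $[\exists\, \ell' \in \sigma(\ell),\ \ell'' \in \sigma'(\ell')]$ coincides with $|\{\ell' \in \sigma(\ell)\colon \ell'' \in \sigma'(\ell')\}|$, and the two sums agree coefficient-wise.

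The uniqueness claim is the main (and only) content of the argument, and follows from the arc-disjointness built into the definition of a splitting. By definition of $\sigma$, the paths $\{P_{\ell'}\colon \ell' \in \sigma(\ell)\}$ partition the arcs of $P_\ell$, so they are pairwise arc-disjoint. If $\ell'' \in \sigma'(\ell')$, then $\ell''$ is a shadow of $\ell'$ whose path $P_{\ell''}$ is a subpath of $P_{\ell'}$ contributing at least one arc (since the paths in $\sigma'(\ell')$ partition the arcs of $P_{\ell'}$, and only the empty set of paths could contain a $0$-arc shadow). Hence $P_{\ell''}$ shares an arc with exactly one $P_{\ell'}$ for $\ell' \in \sigma(\ell)$, giving the desired uniqueness. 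I expect no real obstacle here; the only subtlety is ensuring that $P_{\ell''}$ is nonempty, which is forced by the fact that splittings partition \emph{nonempty} arc sets into nonempty pieces.
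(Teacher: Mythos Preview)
Your proposal is correct and follows essentially the same approach as the paper's proof: both unfold the definitions of $\mathrm{split}$ and $\sigma'\circ\sigma$, swap the order of summation, and reduce the claim to the observation that for each $\ell\in L$ there is at most one $\ell'\in\sigma(\ell)$ with $\ell''\in\sigma'(\ell')$, which follows from the arc-disjointness of the paths $(P_{\ell'})_{\ell'\in\sigma(\ell)}$. Your extra care about $P_{\ell''}$ being nonempty is a small refinement but does not change the argument.
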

\begin{proof}
Let $x'\coloneqq \mathrm{split}(x,\sigma)$ and $x''\coloneqq \mathrm{split}(\mathrm{split}(x,\sigma),\sigma')$.
For $\ell''\in L$, we have
\[x''_{\ell''}=\sum_{\ell'\in L\colon \ell''\in\sigma'(\ell')} x'_{\ell'}=\sum_{\ell'\in L\colon \ell''\in\sigma'(\ell')}\sum_{\ell\in L\colon \ell'\in\sigma(\ell)} x_\ell=\sum_{\ell\in L\colon \ell''\in (\sigma'\circ\sigma)(\ell)} x_\ell.\]
For the last equality, we used that for $\ell\in L$, $\sigma(\ell)$ consists of shadows of $\ell$ with pairwise disjoint undirected coverages. In particular, we can have $\ell''\in\sigma'(\ell')$ for at most one $\ell'\in\sigma(\ell)$ because $\ell''$ has to be a shadow of $\ell'$.
\end{proof}
The following proposition helps us to bound the cost increase incurred by splittings.
\begin{proposition}\label{lemma:cost_increase_concatenate_splittings}
	Let $\Delta > 1$ and assume that $c\colon L\rightarrow [1,\Delta]$. Let $x$ be a solution to \eqref{eq:WDTAP_LP} and let $\sigma$ be a splitting of $L$. Let $x'\coloneqq \mathrm{split}(x,\sigma)$. 
	Then \[ c(x')\le c(x)+\sum_{\ell\in L} (|\sigma(\ell)|-1)\cdot c(\ell)\cdot x_\ell\le c(x)+\Delta\cdot \sum_{\ell\in L} (|\sigma(\ell)|-1)\cdot x_\ell.\]
\end{proposition}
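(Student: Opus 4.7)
The plan is to chain together two easy bounds: one from shadow-completeness and one from the cost ratio assumption. By Proposition~\ref{lemma:splitting_LP_solution}, we have the exact identity
\[ c(x') = \sum_{\ell \in L}\left(\sum_{\ell' \in \sigma(\ell)} c(\ell')\right) \cdot x_\ell, \]
so the work is entirely in bounding the inner sum $\sum_{\ell' \in \sigma(\ell)} c(\ell')$ in terms of $c(\ell)$.

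For the first inequality, I would invoke the shadow-completeness assumption stated in Section~\ref{sec:prelims}: every $\ell' \in \sigma(\ell)$ is by definition a shadow of $\ell$, and by our standing assumption we have $c(\ell') \le c(\ell)$. Summing this over the $|\sigma(\ell)|$ shadows gives $\sum_{\ell' \in \sigma(\ell)} c(\ell') \le |\sigma(\ell)| \cdot c(\ell)$. Plugging this into the identity and separating off a single copy of $c(\ell)$ per link yields
\[ c(x') \le \sum_{\ell \in L} |\sigma(\ell)| \cdot c(\ell) \cdot x_\ell = c(x) + \sum_{\ell \in L}(|\sigma(\ell)|-1)\cdot c(\ell)\cdot x_\ell, \]
which is exactly the first claimed bound.

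The second inequality then follows by using the cost ratio assumption $c(\ell) \le \Delta$ (which is the upper bound of the given range $[1,\Delta]$). Since $|\sigma(\ell)| \ge 1$ for every $\ell \in L$, the factor $(|\sigma(\ell)|-1)$ is nonnegative, so we may replace each $c(\ell)$ by $\Delta$ without decreasing the right-hand side, giving the second bound.

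I expect no real obstacle here; the statement is essentially a direct corollary of Proposition~\ref{lemma:splitting_LP_solution} together with the shadow-completeness property and the definition of the cost ratio. The only thing to double check is that the hypothesis $|\sigma(\ell)| \ge 1$ is indeed guaranteed (it is, because $\sigma(\ell)$ forms a partition of $P_\ell$, which is nonempty as soon as $\ell$ covers any tree arc, and links with $P_\ell$ empty can be assumed away or handled as the trivial singleton partition $\sigma(\ell) = \{\ell\}$).
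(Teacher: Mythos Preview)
Your proposal is correct and follows essentially the same approach as the paper: apply the cost identity from Proposition~\ref{lemma:splitting_LP_solution}, bound $\sum_{\ell'\in\sigma(\ell)} c(\ell') \le |\sigma(\ell)|\cdot c(\ell)$ via shadow-completeness, split off one copy of $c(\ell)$, and then use $c(\ell)\le\Delta$ together with $|\sigma(\ell)|\ge 1$ for the second inequality. The paper's proof is the same chain of inequalities written more tersely.
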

\begin{proof}
We have \begin{align*}c(x')&=\sum_{\ell\in L}\sum_{\ell'\in\sigma(\ell)} c(\ell')\cdot x_\ell \le \sum_{\ell\in L}|\sigma(\ell)|\cdot c(\ell)\cdot x_\ell\\
	&= c(x)+\sum_{\ell\in L} (|\sigma(\ell)|-1)\cdot c(\ell)\cdot x_\ell\le c(x)+\Delta\cdot \sum_{\ell\in L} (|\sigma(\ell)|-1)\cdot x_\ell,\end{align*}
where the first inequality follows from the fact that $c(\ell')\le c(\ell)$ whenever $\ell'$ is a shadow of $\ell$, and the second inequality follows from $|\sigma(\ell)|\ge 1$ and $c(\ell)\le \Delta$.
\end{proof}
\subsection{The visibly $k$-wide modification LP}
Using splittings, we will introduce a new type of valid inequality for the integer hull of \eqref{eq:WDTAP_LP}. To define it, we need to consider subinstances that arise by contracting certain arcs.
Given an arc set $A^*$, we denote by $T/A^*$ the tree that arises from $T$ by contracting the arcs in $A^*$. For a link set $L^*$, $L^*/A^*$ denotes the link set arising from this contraction.
 For a function $f: A \to B$, and $C \subseteq A$, we use the notation $f \upharpoonright_ C$ to denote the restriction of $f$ to the domain $C$.
\begin{lemma}\label{lemma:k_wide_modification}
 Let $\sigma$ be any splitting of the link set and let $A'\subseteq A$. Then
 \begin{equation}\sum_{\ell\in L} \left(\sum_{\ell'\in\sigma(\ell)}c(\ell')\right)\cdot  x_\ell\ge c(OPT(T/A',\mathrm{supp}(\sigma)/A', c\upharpoonright_{\mathrm{supp}(\sigma)}))\label{eq:k_wide_modification}\end{equation} is a valid constraint for the integer hull of (\ref{eq:WDTAP_LP}), where $OPT(T/A',\mathrm{supp}(\sigma)/A', c\upharpoonright_{\mathrm{supp}(\sigma)})$ denotes an optimum solution to the WDTAP instance $(T/A',\mathrm{supp}(\sigma)/A', c\upharpoonright_{\mathrm{supp}(\sigma)})$.
 \end{lemma}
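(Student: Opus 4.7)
Since \eqref{eq:k_wide_modification} is linear in $x$, it suffices to verify the inequality on every $0/1$ vector $x=\chi^F$, where $F\subseteq L$ is a feasible solution to the WDTAP instance $(T,L,c)$; the general statement for the integer hull then follows by convexity. So fix such an $F$. The plan is to construct a feasible link set for the contracted instance $(T/A',\mathrm{supp}(\sigma)/A',c\!\upharpoonright_{\mathrm{supp}(\sigma)})$ whose cost is at most the left-hand side of~\eqref{eq:k_wide_modification} evaluated at $\chi^F$.

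First I would apply the splitting $\sigma$ to $F$ to obtain the link multiset $F' \coloneqq \bigcup_{\ell\in F}\sigma(\ell)\subseteq \mathrm{supp}(\sigma)$. By \cref{lemma:splitting_same_coverage} applied to $x=\chi^F$ (or more directly, because for each $\ell\in F$ the shadows $\sigma(\ell)$ tile $P_\ell$, so any arc covered by $\ell$ in the right direction is covered by exactly one $\ell'\in\sigma(\ell)$ in the right direction), $F'$ is a feasible WDTAP solution for $(T,\mathrm{supp}(\sigma),c\!\upharpoonright_{\mathrm{supp}(\sigma)})$. Moreover, counting with multiplicity,
\[
  \sum_{\ell\in L}\Bigl(\sum_{\ell'\in\sigma(\ell)}c(\ell')\Bigr)\cdot \chi^F_\ell
  \;=\;\sum_{\ell\in F}\sum_{\ell'\in\sigma(\ell)}c(\ell')
  \;\ge\; c(F'),
\]
with the inequality allowing for the possibility that several $\ell\in F$ contribute the same shadow to $F'$.

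Next I would contract the arcs in $A'$. Let $F'/A'$ be the image of $F'$ in the contracted instance, dropping any link whose path becomes trivial. The key observation is that contraction preserves coverage directions: for an arc $a\in A\setminus A'$ and a link $\ell'\in F'$ with $a\in\covr(\ell')$, the path $\overline{P}_{\ell'}$ in $T$ maps to a path in $T/A'$ that still traverses the image of $a$ in the same direction. Every arc of $T/A'$ corresponds to some $a\in A\setminus A'$ covered in the right direction by some $\ell'\in F'$ (since $F'$ is feasible in $T$), so $F'/A'$ is a feasible solution to $(T/A',\mathrm{supp}(\sigma)/A',c\!\upharpoonright_{\mathrm{supp}(\sigma)})$. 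Since the cost of each surviving link is unchanged, $c(F'/A')\le c(F')$, hence
\[
  c\bigl(OPT(T/A',\mathrm{supp}(\sigma)/A',c\!\upharpoonright_{\mathrm{supp}(\sigma)})\bigr)
  \;\le\; c(F'/A')
  \;\le\; c(F')
  \;\le\; \sum_{\ell\in F}\sum_{\ell'\in\sigma(\ell)}c(\ell'),
\]
which is exactly \eqref{eq:k_wide_modification} for $x=\chi^F$, completing the proof.

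The whole argument is essentially bookkeeping; the only point that deserves verification is the direction-preservation step, namely that contracting tree arcs does not flip the orientation in which a surviving tree arc is traversed along a link's path. This follows immediately because contraction identifies endpoints of the contracted arcs without reordering the remaining vertices along any tree path, so I do not anticipate a genuine obstacle.
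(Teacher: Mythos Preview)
Your argument is correct and follows essentially the same route as the paper's proof: take an integral feasible solution, apply the splitting to obtain a feasible solution supported on $\mathrm{supp}(\sigma)$, contract $A'$, and observe that the resulting cost upper-bounds the optimum of the contracted instance while equaling the left-hand side of \eqref{eq:k_wide_modification}. The paper phrases this via $x'=\mathrm{split}(x,\sigma)$ and invokes \cref{lemma:splitting_LP_solution} for the cost identity, whereas you work directly with the set $F'=\bigcup_{\ell\in F}\sigma(\ell)$ and spell out the multiplicity and direction-preservation details, but there is no substantive difference.
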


 \begin{proof}
Let $x$ be an integral solution to \eqref{eq:WDTAP_LP} and let $x'\coloneqq \mathrm{split}(x,\sigma)$. Then $x'$ is an integral solution to \eqref{eq:WDTAP_LP} with $\mathrm{supp}(x')\subseteq \mathrm{supp}(\sigma)$. In particular, $\mathrm{supp}(x')/A'$ is a feasible solution to $(T/A',\mathrm{supp}(\sigma)/A',c\upharpoonright_{\mathrm{supp}(\sigma)})$ of cost at most $c(x')= \sum_{\ell\in L} \left(\sum_{\ell'\in\sigma(\ell)}c(\ell')\right)\cdot  x_\ell$ by \cref{lemma:splitting_LP_solution}.
 \end{proof}
 \begin{definition}\label{def:visibly_k_wide_modification}
  A \emph{visibly $k$-wide modification} is a pair $(\sigma,A')$, where $\sigma$ is a splitting of the link set and $A'\subseteq A$, such that $(T/A',\mathrm{supp}(\sigma)/A',r)$ has visible width at most $k$. We call the corresponding contraint \eqref{eq:k_wide_modification} a \emph{visibly $k$-wide modification inequality}.
 \end{definition}
 Our approach will be to observe certain solutions to the linear program (\ref{eq:WDTAP_LP}) and to obtain an integral solution of relatively low cost, or to find a visibly $k$-wide modification inequality violated by the current solution to add to the constraints of (\ref{eq:WDTAP_LP}).
 
 \subsection{Proof of \cref{theorem:main_result}}
 The main technical theorem of this paper guarantees the existence of a partial separation oracle for the visibly $k$-wide-modification LP. This theorem is stated below, and in this subsection we will show how to use it to prove \cref{theorem:main_result}.
 \begin{restatable}{theorem}{theoremoracle}\label{theorem:partial_separation_oracle}
     Let $\bar{\epsilon},\Delta>0$. We can compute a constant $k(\bar{\epsilon},\Delta)$ with the following property: Given a rooted instance $(\bar{T},\bar{L},\bar{c},\bar{r})$ of WDTAP with cost ratio at most $\Delta$ and a feasible solution $\bar{x}$ to \eqref{eq:WDTAP_LP}, we can, in polynomial time, either find a solution $S\subseteq \bar{L}$ with $\bar{c}(S)\le (1.75+\bar{\epsilon})\cdot \bar{c}(\bar{x})$, or find a visibly $k(\bar{\epsilon},\Delta)$-wide modification inequality that is violated by $\bar{x}$.
 \end{restatable}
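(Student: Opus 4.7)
The plan is to implement the three-solution trade-off sketched in Sections~\ref{section:algorithm} and~\ref{section:phase_two_overview}, verify each candidate solution against the LP value of $\bar{x}$, and, if none of them achieves cost at most $(1.75+\bar\epsilon)\cdot \bar c(\bar x)$, read off a violated visibly $k$-wide modification inequality from the best of the three. First, I would fix auxiliary constants $\epsilon\ll\gamma\ll\zeta_1\ll\zeta_2\ll k=k(\bar\epsilon,\Delta)$ so that all the $\mathcal{O}(\epsilon)$-losses in the construction aggregate to at most $\bar\epsilon$. As a preprocessing step, I would cover every $\zeta_1$-covered arc $a$ using the cheapest link that covers $a$; a standard LP accounting argument (identical to the one in~\cite{adjiashvili2018beating,grandoni2018improved}, using that $\bar{x}(\{\ell\colon a\in \covr(\ell)\})\ge\zeta_1$) shows this can be done at cost at most $\epsilon\cdot \bar c(\bar x)$, after which I contract the covered arcs and pass to an instance in which no arc is $\zeta_1$-covered.

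Next, I would invoke Theorem~\ref{theorem:weak_dream} on the preprocessed instance to obtain a splitting-modified solution $x^*$ with $c(x^*)\le(1+\epsilon)c(\bar x)$, an up-/down-independent vertex set $W$, and the visible-width guarantees at all vertices except those sitting above $\zeta_2$-heavy arcs. I would then define the connected components of the forests $(V,A_{up})$ and $(V,A_{down})$ and, within each component, the \emph{core} consisting of all paths connecting $\zeta_2$-heavy arcs to the component root; let $V_C,A_C$ denote the vertex and arc sets of cores. As outlined in Section~\ref{section:phase_two_overview}, I would define $L_{cross}$ to be the set of $(W\cup V_C)$-cross-links, $\overrightarrow{L}:=\{\ell\colon \covr(\ell)\cap A_C\ne\emptyset\}$, $\overleftarrow{L}:=\{\ell\colon \covw(\ell)\cap A_C\ne\emptyset\}$, and perform an additional carefully chosen splitting step at each core arc so that $\overrightarrow{L}\cap\overleftarrow{L}=\emptyset$ and so that, after splitting, (a)~splitting all of $\overleftarrow{L}\cup L_{cross}$ yields an instance of visible width $O(k)$, (b)~splitting all of $\overleftarrow{L}$ together with every non-cross-link at its apex yields a willow, and (c)~splitting all of $L\setminus\overleftarrow{L}$ at apices plus one extra split per link in $\overrightarrow{L}$ yields a disjoint union of willows. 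The cost of all these splittings is charged against links with apex in $V_C$, exploiting that $\zeta_2$-heavy coverage in the wrong direction forces a substantial LP mass of apex-$w$ links (the argument sketched right after Theorem~\ref{theorem:heavy_link_handling_special_case}), and that $\zeta_1\ll\zeta_2$ makes this charging pay only an $O(\epsilon)$ fraction.

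Having built these three splittings $\sigma_1,\sigma_2,\sigma_3$, I would solve the resulting three instances exactly: solution~(a) via the dynamic program of Corollary~\ref{cor:solve_DTAP_bounded_viwidth}, and solutions~(b) and (c) via the willow integrality Theorem~\ref{theorem:willow_integral}. Let $S_1,S_2,S_3$ be the resulting link sets, and consider the weighted combination with weights $\tfrac{1}{2},\tfrac{1}{4},\tfrac{1}{4}$. A per-link accounting will show that every link in $\mathrm{supp}(x^*)$ is charged a total weight of at most $\tfrac{7}{4}$ on the right-hand side, so
\[
\tfrac{1}{2}\bar c(S_1)+\tfrac{1}{4}\bar c(S_2)+\tfrac{1}{4}\bar c(S_3)\;\le\;\tfrac{7}{4}\bar c(x^*)\;\le\;\tfrac{7}{4}(1+\epsilon)\bar c(\bar x)+(\text{preprocessing})\;\le\;(1.75+\bar\epsilon)\bar c(\bar x).
\]
Hence the cheapest of $S_1,S_2,S_3$ achieves the desired ratio, yielding the first alternative of the theorem. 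Otherwise, all three solutions must be strictly more expensive than the right-hand side of their respective visibly $k$-wide modification inequality from Lemma~\ref{lemma:k_wide_modification} instantiated with $(\sigma_i,A')$, where $A'$ is the set of contracted $\zeta_1$-covered arcs; in that case the inequality that is most violated against $\bar x$ is a valid visibly $k$-wide modification inequality violated by $\bar x$, which we return as the second alternative.

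The main obstacle I anticipate is the bookkeeping of the third step: verifying that the weighted sum of the three solutions is bounded by $\tfrac{7}{4}\bar c(x^*)$ requires a careful per-link charging argument because up-links, down-links, $W$-cross-links, and links that cover core arcs in the right or the wrong direction contribute differently to $\sigma_1,\sigma_2,\sigma_3$. Designing the core splittings so that every link is ``over-split'' by a factor of at most $\tfrac{7}{4}$ on weighted average (while preserving the structural properties (a)--(c) needed to apply the DP and the willow LP) is the delicate part of the proof; once the splittings are set up correctly, invoking Corollary~\ref{cor:solve_DTAP_bounded_viwidth}, Theorem~\ref{theorem:willow_integral}, and Lemma~\ref{lemma:k_wide_modification} is standard.
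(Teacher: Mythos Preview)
Your overall architecture is right, but there are two genuine gaps.

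\textbf{The separation logic is miswired.} Only the DP-based solution $S_1$ can ever fail to meet its LP bound; the willow instances for $S_2$ and $S_3$ have totally unimodular constraint matrices (Theorem~\ref{theorem:willow_integral}), so $\bar c(S_2)\le c(x_2)$ and $\bar c(S_3)\le c(x_3)$ hold unconditionally. Consequently your ``otherwise'' branch cannot extract violated inequalities from $S_2$ or $S_3$: their splittings do not yield visibly $k$-wide subinstances, and the right-hand sides of the corresponding inequalities~\eqref{eq:k_wide_modification} are not computable in polynomial time for them anyway. The paper's logic is simpler: attempt the DP on the $\sigma_1$-split instance; if the optimum integral solution there exceeds $c(x_1)$, then $(\sigma_1,A')$ is already a violated visibly $k$-wide modification inequality and you stop; otherwise $c(S_1)\le c(x_1)$ holds, and the best-of-three is then \emph{guaranteed} to achieve the $1.75$ factor by the weighted-average calculation.

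\textbf{The weights $\tfrac12,\tfrac14,\tfrac14$ do not give $\tfrac74$.} With the paper's partition $\mathrm{supp}(x^{**})=L_{cross}\,\dot\cup\,\overrightarrow{L}\,\dot\cup\,\overleftarrow{L}\,\dot\cup\,L_{rest}$ and the cost bounds $c(x_1)\le c(x^{**})+C(L_{cross})+C(\overleftarrow L)+C(\overrightarrow L)$, $c(x_2)\le c(x^{**})+C(\overleftarrow L)+C(L_{rest})$, $c(x_3)\le c(x^{**})+C(L_{cross})+2C(\overrightarrow L)+C(L_{rest})$, your weights give coefficient $1$ on $C(\overrightarrow L)$, hence only a factor $2$. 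The correct weights are $\tfrac14,\tfrac12,\tfrac14$ (on $S_1,S_2,S_3$), which make every coefficient at most $\tfrac34$ and yield $\tfrac74 c(x^{**})$. This is exactly the ``delicate'' step you flagged, but the outcome is that your stated combination fails.
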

 Assuming \cref{theorem:partial_separation_oracle}, we are now ready to prove \cref{theorem:main_result}, which we restate for convenience.
 \mainresult*
 \begin{proof}
 	Let $\Delta \ge 1$ and let $\epsilon >0$. We may assume that the constants $\epsilon$ and $\Delta$ are rational numbers because we can replace them with rational constants $\epsilon'$ and $\Delta'$ with $0<\epsilon'<\epsilon$ and $\Delta <\Delta'$ otherwise.
 	 Fix a rooted WDTAP instance $(T,L,c,r)$ with cost ratio at most $\Delta$. We can check in polynomial time if $(T,L,c,r)$ is feasible by checking if each tree arc is covered by at least one link in $L$. Hence, we will assume that $(T,L,c,r)$ is feasible in the following. By re-scaling the costs, we may assume $c\colon L\rightarrow[1,\Delta]$. Then the cost of an optimal solution $OPT$ satisfies $1 \leq c(OPT) \leq \Delta |L| \leq \Delta n^2$, where $n$ is the number of vertices of $T$. Let $\bar{\epsilon}\coloneqq \min\{1,\frac{\epsilon}{10}\}$, let $k\coloneqq k(\bar{\epsilon},\Delta)$ and let $M\coloneqq \lceil \log_{1+\bar{\epsilon}} n^2\Delta\rceil$. 
  We will use binary search on the interval $[1,(1+\bar{\epsilon})^M]$. Note that the runtime of the algorithm will depend on $\epsilon$ and $\Delta$.

In the following, we will describe a subroutine that, given a rational number $c^*\in[1,(1+\bar{\epsilon})^M]$, in polynomial time (in the encoding lengths of $(T,L,c,r)$, $\Delta$, $\epsilon$ and $c^*$) either returns a solution $F$ to $(T,L,c,r)$ with $c(F)\le (1.75+\bar{\epsilon})\cdot(1+\bar{\epsilon})\cdot c^*$, or decides that $c^*<c(OPT)$. The subroutine is defined as follows.
    Given $c^*$, we apply the ellipsoid method to (try to) find a feasible point $x$ in the polyhedron $P$ given by the constraints in \eqref{eq:WDTAP_LP}, all visibly $k$-wide modification inequalities, and $c(x)\le (1+\bar{\epsilon})\cdot c^*$. Note that the encoding length of every constraint is polynomially bounded in the encoding lengths of $(T,L,c,r)$, $\epsilon$ and $c^*$. Moreover, $P\subseteq [0,(1+\bar{\epsilon})\cdot c^*]^L$ since $c(\ell)\ge 1$ for every $\ell\in L$. If $c^*\ge c(OPT)$, then we further have $OPT+[0,\frac{\bar{\epsilon}}{\Delta\cdot |L|}\cdot c^*]^L\subseteq P$, where we interpret $OPT$ as a vector in $\{0,1\}^L$. Finally, we can separate all constraints in \eqref{eq:WDTAP_LP}, as well as the constraint $c(x)\le (1+\bar{\epsilon})\cdot c^*$, in polynomial time. To separate the visibly $k$-wide modification inequalities, we will use \Cref{theorem:partial_separation_oracle}.
    
      More precisely, in each iteration of the ellipsoid method, given $y\in\mathbb{Q}^L$, we do the following: If $y$ violates any of the constraints of \eqref{eq:WDTAP_LP} or $c(y)> (1+\bar{\epsilon})\cdot c^*$, we return the corrresponding violated constraint. Otherwise, we apply \Cref{theorem:partial_separation_oracle} to either find a WDTAP solution $F$ with $c(F)\le (1.75+\bar{\epsilon})\cdot c(y)\le (1.75+\bar{\epsilon})\cdot(1+\bar{\epsilon})\cdot c^*$, or a violated visibly $k$-wide modification inequality. In the first case, we return $F$ and stop; in the second case, we continue the ellipsoid method. After a polynomial number of iterations, we have either found a WDTAP solution $F$ with $c(F)\le (1.75+\bar{\epsilon})\cdot(1+\bar{\epsilon})\cdot c^*$, or the volume of the ellipsoid is small enough, allowing us to deduce that $c^*<c(OPT)$.

    Throughout the binary search, we maintain an interval $[(1+\bar{\epsilon})^a, (1+\bar{\epsilon})^b]$ such that $c(OPT) \ge (1+\bar{\epsilon})^a$ and we have a WDTAP solution $F$ with $c(F)\le (1.75+\bar{\epsilon})\cdot (1+\bar{\epsilon})^{b+1}$. Continually checking the point $c^* = (1+\bar{\epsilon})^{\lfloor\frac{a+b}{2}\rfloor}$, we obtain an interval of the form $[(1+\bar{\epsilon})^t, (1+\bar{\epsilon})^{t+1}]$. In this case, we are guaranteed an integral solution $F$ of cost at most \begin{align*}c(F)&\leq (1.75+\bar{\epsilon})\cdot (1+\bar{\epsilon})^{t+2}\le (1.75+\bar{\epsilon})\cdot (1+\bar{\epsilon})^{2}\cdot c(OPT)\\
    	&\le (1.75+4.5\bar{\epsilon}+3.75\bar{\epsilon}^2+\bar{\epsilon}^3)\cdot c(OPT)\le (1.75+\epsilon)\cdot c(OPT).\end{align*}
\end{proof}
We remark that the partial separation framework has already been used in \cite{adjiashvili2018beating,grandoni2018improved}.
The remainder of the main part of this paper is dedicated to proving \cref{theorem:partial_separation_oracle}.
\section{Proving the weakened dream theorem}\label{sec:proof_weak_dream}
In this section, we prove the weakened dream theorem (\cref{theorem:weak_dream}). To this end, we will first introduce some notation that allows us to state \cref{theorem:first_phase_splitting}, a slightly more formal version of \cref{theorem:weak_dream}. 
\begin{definition}
Let $(T,L,c,r)$ be a rooted instance of WDTAP, let $x$ be a solution to \eqref{eq:WDTAP_LP} and let $\alpha\ge 0$. We call an arc $a$ \emph{$\alpha$-covered} if $x(\{\ell\colon a\in\covr(\ell)\})\ge \alpha$ and $\alpha$-heavy if $x(\{\ell\colon a\in\covw(\ell)\})\ge \alpha$. We call a link $\ell$ \emph{$\alpha$-heavily involved} if there is an $\alpha$-heavy arc $a$ with $a\in\covw(\ell)$.
\end{definition}

Let $\epsilon \in (0,1)$ and let $\Delta > 0$. We define the following constants:
\begin{itemize}
	\item $\gamma\coloneqq \frac{\epsilon}{2\Delta}$ is used to define whether an arc $a$ is lightly covered, allowing us to cheaply split links $\ell$ with $a\in\cov(\ell)$.
	\item $\zeta_1\coloneqq \frac{2}{\epsilon}$ is our threshold for an arc to be ``heavily covered in the right direction'', allowing us to contract it.
	\item $\zeta_2\coloneqq \frac{6\cdot \zeta_1\cdot \Delta}{\epsilon\cdot (1-\epsilon)}$ is our our threshold for an arc to be ``heavily covered in the wrong direction''.
	\item $k\coloneqq (1+\gamma^{-1})\cdot \zeta_2$ is our bound on the visible width of certain instances that we will target.
\end{itemize}
Observe that our choice of constants satisfies the following inequalities. In fact, all but \eqref{eq:constants_3} are actually equalities, but we do not need that fact.
\begin{align}
2\cdot\Delta\cdot \gamma &\leq \epsilon \label{eq:constants_1}\\
\frac{2}{\epsilon} &\leq \zeta_1 \label{eq:constants_5}\\
\zeta_1 & < \epsilon\cdot\zeta_2 \label{eq:constants_3}\\
3\cdot \zeta_1\cdot \Delta&\le \epsilon\cdot\frac{1}{2}\cdot(1-\epsilon)\cdot\zeta_2\label{eq:constants_4}\\
(1+\gamma^{-1})\cdot \zeta_2 &\le k \label{eq:constants_2}
\end{align}
\begin{lemma}\label{lemma:solution_for_zeta_1_covered_arcs}
Let $(T,L,c,r)$ be a rooted instance of WDTAP, let $x$ be a solution to \eqref{eq:WDTAP_LP} and let $F'$ be a feasible solution to the instance we obtain after contracting all $\zeta_1$-covered arcs. Then we can, in polynomial time, compute a solution $F$ of cost $c(F)\le c(F')+\epsilon\cdot c(x)$ to the original instance.
\end{lemma}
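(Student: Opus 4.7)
The plan is to extend the given $F'$ by a cheap set of links that covers precisely the $\zeta_1$-covered arcs. Let $A^*$ denote this set, partitioned into up- and down-arcs as $A^*_{up}\coloneqq A^*\cap A_{up}$ and $A^*_{down}\coloneqq A^*\cap A_{down}$. I would construct link sets $F^*_{up},F^*_{down}\subseteq L$ such that $F^*_{up}$ covers $A^*_{up}$, $F^*_{down}$ covers $A^*_{down}$, and $c(F^*_{up})+c(F^*_{down})\le \epsilon\cdot c(x)$. Returning $F\coloneqq F'\cup F^*_{up}\cup F^*_{down}$ then proves the lemma, since $F'$ already covers every non-$\zeta_1$-covered arc.

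For the construction of $F^*_{down}$, I would use the standard fact noted in the introduction: contracting all up-arcs of $T$ turns the tree into an out-arborescence whose arcs are exactly the down-arcs, and on an arborescence the natural WDTAP covering LP has a network matrix as its constraint matrix and is therefore totally unimodular. Further restricting to the rows indexed by $A^*_{down}$ (dropping non-$\zeta_1$-covered down-arcs) only removes rows and so preserves total unimodularity, so a cost-minimum integral covering $F^*_{down}\subseteq L$ of $A^*_{down}$ can be computed in polynomial time by solving the corresponding LP.

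To bound $c(F^*_{down})$, I would observe that $y\coloneqq x/\zeta_1$ is a feasible fractional solution to the restricted down-arc covering LP: for every $a\in A^*_{down}$, the definition of $\zeta_1$-coverage gives $\sum_{\ell\,:\,a\in\covr(\ell)} x_\ell\ge \zeta_1$, hence $\sum_{\ell\,:\,a\in\covr(\ell)} y_\ell\ge 1$. Therefore
\[
c(F^*_{down})\le c(y) = c(x)/\zeta_1\le (\epsilon/2)\cdot c(x),
\]
using $\zeta_1\ge 2/\epsilon$ from \eqref{eq:constants_5}. A symmetric argument, contracting down-arcs to obtain an in-arborescence, yields $F^*_{up}$ with $c(F^*_{up})\le (\epsilon/2)\cdot c(x)$. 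Summing gives $c(F)\le c(F')+\epsilon\cdot c(x)$, as claimed.

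The proof is mostly bookkeeping and does not seem to present a real obstacle: the arborescence integrality result is already invoked in the introduction to set up the trivial $2$-approximation, and fractional feasibility of $x/\zeta_1$ on the $\zeta_1$-covered arcs is immediate from the definition. The only arithmetic worth double-checking is that the factor-$2$ loss from handling up- and down-arcs separately is exactly absorbed by the threshold $\zeta_1 = 2/\epsilon$.
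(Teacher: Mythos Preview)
Your proof is correct and follows essentially the same approach as the paper: scale $x$ by $1/\zeta_1$ to obtain a fractional cover of the $\zeta_1$-covered arcs, incur a factor~$2$ by separating up- and down-arcs, and round via total unimodularity. The only cosmetic difference is that the paper contracts the non-$\zeta_1$-covered arcs first and then splits every link at its apex, invoking its willow theorem (\cref{theorem:unimodularity}) for the resulting up/down-link instance, whereas you directly appeal to the arborescence network-matrix fact from the introduction; both routes yield the same bound.
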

\begin{proof}
    We show how to, in polynomial time, compute a link set $F''$ of cost $c(F'')\le \epsilon\cdot c(x)$ that covers all $\zeta_1$-covered arcs. Let $(\bar{T}=(\bar{V},\bar{A}),\bar{L},\bar{c},\bar{r})$ arise from $(T,L,c,r)$ by contracting all arcs that are not $\zeta_1$-covered. Then $x$ corresponds to a solution $\bar{x}$ of cost $\bar{c}(\bar{x})=c(x)$ to \eqref{eq:WDTAP_LP} for $(\bar{T},\bar{L},\bar{c})$ with the property that $\bar{x}(\ell\in \bar{L}\colon a\in\covr(\ell)\ge \zeta_1$ for every $a\in \bar{A}$. In particular, $x'\coloneqq \frac{1}{\zeta_1}\cdot \bar{x}$ is a feasible solution to \eqref{eq:WDTAP_LP} for $(\bar{T},\bar{L},\bar{c})$ as well.
    Obtain $x''$ from $x'$ by splitting every link in $\mathrm{supp}(x')$ that is not an up- or down-link already at its apex. Then \[\bar{c}(x'')\le 2\cdot \bar{c}(x')=\frac{2}{\zeta_1}\cdot c(x)\le \epsilon\cdot c(x)\] by \eqref{eq:constants_5}. Note that $(\bar{T},\mathrm{supp}(x''),\bar{c},\bar{r})$ is a willow (choosing $U=\emptyset$), so we can, in polynomial time, compute an optimum solution $\bar{F}$ to $(\bar{T},\mathrm{supp}(x''),\bar{c})$ of cost at most $\bar{c}(x'')\le \epsilon\cdot c(x)$ by \cref{theorem:unimodularity}. The uncontracted links corresponding to $\bar{F}$ yield the desired link set $F''$. Setting $F=F'\cup F''$ concludes the proof.
\end{proof}

In the following, it will be convenient to make the following assumption.
\begin{equation}
\text{There are no $\zeta_1$-covered arcs.} \label{eq:no_covered_arcs}
\end{equation}
\cref{lemma:solution_for_zeta_1_covered_arcs} essentially tells us that we can assume \eqref{eq:no_covered_arcs} at the cost of a cost increase by $\epsilon\cdot c(x)$.

Before stating \cref{theorem:first_phase_splitting}, the more formal version of \cref{theorem:weak_dream}, we need to introduce the following notation:
\begin{itemize}
    \item For a vertex $v\in V\setminus\{r\}$, we let $a_v$ be the arc connecting $v$ to its parent.
    \item For a vertex $v\in V\setminus \{r\}$ such that $a_v$ is an up-arc, we define $\viwr(v)\coloneqq \viwu(v)$, $\viww(v)\coloneqq \viwd(v)$, $\overrightarrow{A}_v\coloneqq A_v\cap A_{up}$ and $\overleftarrow{A}_v\coloneqq A_v\cap A_{down}$.
    \item For a vertex $v$ such that $a_v$ is a down-arc, we let $\viwr(v)\coloneqq\viwd(v)$, $\viww(v)\coloneqq \viwu(v)$, $\overrightarrow{A}_v\coloneqq A_v\cap A_{down}$ and $\overleftarrow{A}_v\coloneqq A_v\cap A_{up}$.
\end{itemize}
\begin{theorem}\label{theorem:first_phase_splitting}
Let $(T,L,c,r)$ be an instance of WDTAP with cost ratio at most $\Delta$ and let $x$ be a solution to \eqref{eq:WDTAP_LP} satisfying \eqref{eq:no_covered_arcs}.
We can, in polynomial time, compute a splitting $\sigma^*$ of $L$ and a vertex set $W^*\subseteq V$ with the following properties:
\begin{enumerate}[label=(\roman*)]
    \item \label{item:first_phase_cost_bound} Let $x^*\coloneqq \mathrm{split}(x,\sigma^*)$. We have $c(x^*)\le (1+\epsilon)\cdot c(x)$.
    \item \label{item:first_phase_W} $W^*$ consists of up- and down-independent vertices with respect to $\mathrm{supp}(x^*)$.
    \item \label{item:first_phase_viwidth} Let $L'$ arise from $\mathrm{supp}(x^*)$ by splitting every $W^*$-cross-link at its apex. With respect to $L'$, we have the following:
    \begin{enumerate}
        \item $\viwr(v)\le k$ for every $v\in V$.
        \item $\viww(r)\le k$ and we have $\viww(v)\le k$ for every $v\in V\setminus \{r\}$ such that $a_v$ is not $\zeta_2$-heavy (with respect to $x^*$).
    \end{enumerate}
\end{enumerate}
\end{theorem}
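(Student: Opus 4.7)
I will construct $\sigma^*$ and $W^*$ iteratively via a bottom-up sweep of $V \setminus \{r\}$ in order of non-increasing distance from $r$, maintaining a running LP solution obtained from $x$ by the splittings performed so far. At each vertex $v$, I check, relative to the current state $x^{(v-)}$, whether $a_v$ is $\gamma$-up-light, i.e., whether $x^{(v-)}(L_v^\downarrow) \le \gamma \cdot x^{(v-)}(\{\ell : \ell \text{ covers an up-arc in } T_v \text{ visible to } v\} \setminus L_v^\downarrow)$; if so, I extend $\sigma^*$ by splitting every current link in $L_v^\downarrow$ at $v$ and add $v$ to $W^*$. I then perform the analogous check and splitting for the $\gamma$-down-light condition using $L_v^\uparrow$ and visible down-arcs in $T_v$. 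Finally I add $r$ to $W^*$; the root is trivially both up- and down-independent.

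For property~(ii), splitting every $\ell \in L_v^\downarrow$ at $v$ produces shadows $(t, v)$ and $(v, h)$ with $v$ as an endpoint, so $L_v^\downarrow$ becomes empty in the new support and $v$ becomes up-independent. Since subsequent splittings only replace a link by shadows whose coverages are subsets of the original's, this property is preserved until the end of the algorithm. The analogous argument handles down-light vertices.

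For property~(iii), I argue by contradiction: if $\viwr(v, L') > k$, then by \cref{lemma:splitting_cannot_increase_width} the same inequality held at $v$'s processing time, yielding an ancestor-free set $A' \subseteq \overrightarrow{A}_v$ of more than $k$ arcs visible to $v$. Since no link covers two arcs from $A'$ simultaneously, LP feasibility and \cref{lemma:splitting_same_coverage} force the $x^{(v-)}$-value on links covering these arcs to exceed $k$. If $v$ was light in the direction matching $a_v$, then after the splittings defining $L'$ (including the $W^*$-cross-link splits at the apex), none of the arcs in $A'$ would be visible to $v$, a contradiction. Otherwise, the failed light condition together with $k = (1+\gamma^{-1})\zeta_2$ yields $x^{(v-)}(L_v^\downarrow) > \zeta_2$ or $x^{(v-)}(L_v^\uparrow) > \zeta_2$. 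For (iii)(a), every such link covers $a_v$ in the forward direction, so $x(\{\ell : a_v \in \covr(\ell)\}) > \zeta_2 > \zeta_1$ by \cref{lemma:splitting_same_coverage}, contradicting~\eqref{eq:no_covered_arcs}. For (iii)(b), such links instead cover $a_v$ in the wrong direction, exhibiting $a_v$ as $\zeta_2$-heavy, which is exactly the case excluded by the hypothesis. The statement at $r$ is automatic, since $r \in W^*$ implies no link in $L'$ has $r$ as an inner vertex.

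For property~(i), \cref{lemma:cost_increase_concatenate_splittings} bounds the cost increase from $v$'s up-light splitting by $\Delta \cdot x^{(v-)}(L_v^\downarrow) \le \Delta \gamma \cdot x^{(v-)}(\{\ell : \ell \text{ covers a visible up-arc in } T_v\} \setminus L_v^\downarrow)$. The key structural claim is that any $\ell \notin L_v^\downarrow$ covering an up-arc in $T_v$ must have $\covr(\ell) \subseteq A_v$ (otherwise its endpoint structure would force $\ell \in L_v^\downarrow$); hence after splitting $L_v^\downarrow$ at $v$, no shadow whose coverage lies in $T_v$ can witness visibility of an up-arc in $T_v$ from any vertex outside $T_v$. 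Combining this dynamic invisibility with the disjointness of coverages among shadows of a common original, the total sum $\sum_v x^{(v-)}(\text{up-light charging set})$ is bounded by a constant multiple of $x(L) \le c(x)$, and similarly for the down-light pass. Multiplying by $\Delta \gamma$ and using $\gamma = \epsilon/(2\Delta)$ yields $c(x^*) - c(x) \le \epsilon \cdot c(x)$, as desired. The hard part is precisely this bookkeeping: the dynamic invisibility prevents the same shadow from being charged at two comparable vertices in the processing order, but a single original $x_\ell$ can be split into shadows charged at distinct ancestors, so some care is required to track how an original link's weight is redistributed among its descendant shadows before being charged.
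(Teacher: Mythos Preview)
Your high-level strategy (bottom-up sweep, test lightness, split $L_v^\downarrow$ or $L_v^\uparrow$ at $v$, add $v$ to $W^*$) is exactly the paper's Algorithm~1, and your arguments for properties~(ii) and~(iii) match the paper's. The one real divergence is in how you formulate the lightness test and, consequently, how you prove the cost bound~(i).

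You test $\gamma$-up-lightness using the \emph{current} split solution $x^{(v-)}$ on both sides of the inequality. The paper instead defines lightness (Definition~7.1) using the \emph{original} $x$ for the values on both sides, while only the visibility set $A_v^{vis}(\cdot)$ is taken with respect to the current support. Concretely, the paper checks
\[
x(L_v^\downarrow)\;\le\;\gamma\cdot x\bigl(\{\ell\in L:\covr(\ell)\cap A_{up}\cap A_v^{vis}(\mathrm{supp}(x^{(v-)}))\ne\emptyset\}\setminus L_v^\downarrow\bigr).
\]
This seemingly small choice is exactly what makes~(i) clean. Because the charging is done against the fixed original $x$, the paper can prove (Proposition~7.3) that every original link $\ell\in L$ lies in at most one up-charging set and at most one down-charging set, yielding immediately
\[
c(x^*)\le c(x)+2\Delta\gamma\cdot x(L)\le(1+\epsilon)\,c(x).
\]
Your version instead needs to bound $\sum_{v\in W_{up}} x^{(v-)}(\text{charging set})$, and here the shadows multiply: a single original link may be split several times before its pieces get charged, so the total $x^{(v-)}$-mass across all $v$ can exceed $x(L)$. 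You correctly flag this (``a single original $x_\ell$ can be split into shadows charged at distinct ancestors, so some care is required''), but you do not carry out the bookkeeping. With $\gamma=\epsilon/(2\Delta)$ fixed, a constant larger than $1$ in this bound breaks~(i). The fix is simply to adopt the paper's convention: evaluate both sides of the lightness test with the original $x$, using the current support only to determine visibility. Then Proposition~7.3 applies verbatim and no shadow-tracking is needed.
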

Note that when saying that $L'$ arises from $\mathrm{supp}(x^*)$ by splitting every $W^*$-cross-link at its apex, we mean the following: There exists a splitting $\sigma$ of $L$ such that $L'=\mathrm{supp}(\mathrm{split}(x^*,\sigma))$ and such that for every $W^*$-cross-link $\ell$, $\sigma(\ell)$ consists of up- and down-links only, i.e., $\ell$ has been split at its apex (and potentially at further vertices).

The rest of this section is dedicated to proving \cref{theorem:first_phase_splitting}.
Fix a rooted WDTAP instance $(T=(V,A),L,c,r)$ with cost ratio at most $\Delta$. By rescaling the costs, we may assume without loss of generality that $c\colon L\rightarrow[1,\Delta]$.
To establish \cref{theorem:first_phase_splitting}, we will traverse the tree $T$ bottom-up, starting from the leaves and working our way up towards the root. Whenever we encounter a vertex $v$ of high visible width, we will try to split links with one endpoint in $T_v$ and one endpoint outside $T_v$, rendering $v$ up- or down-independent. We introduce the following notation, which slightly differs from the one used in \Cref{sec:ourcontribution}.
\begin{definition}\label{def:L_uparrow_downarrow}
Let $v\in V\setminus \{r\}$. We say that a link $\ell=(u,w)$ \emph{points into $T_v$} if $w\in U_v\setminus \{v\}$ and $u\notin U_v$. We say that $\ell$ \emph{points out of $T_v$} if $u\in U_v\setminus\{v\}$ and $w\notin U_v$.
We denote the set of links pointing into/ out of $T_v$ by $L^\downarrow_v$ and $L^\uparrow_v$, respectively.
\end{definition} 
\begin{proposition}\label{lemma:up-down-independent}
If $L^\downarrow_v=\emptyset$, then $v$ is up-independent. If $L^\uparrow_v=\emptyset$, then $v$ is down-independent.
\end{proposition}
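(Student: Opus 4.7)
I will prove both statements by contrapositive; the two cases are symmetric, so I focus on the first and only sketch the second.

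The plan is to suppose that $v$ is not up-independent, and then produce a link $\ell \in L^{\downarrow}_v$. By \cref{def:up-down-independent}, non-up-independence gives a link $\ell = (u,w) \in L$ together with some arc $a = (x,y) \in \covr(\ell) \cap A_v \cap A_{up}$, and simultaneously some other arc $a' \in \covr(\ell) \setminus A_v$. I need to show $w \in U_v \setminus\{v\}$ and $u \notin U_v$.

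First I would pin down the location of $w$. Since $a$ is an \emph{up}-arc lying in $\covr(\ell)$, it must be on the $w$-to-$\mathrm{apex}(\ell)$ portion of $P_\ell$ (recall that up-arcs covered in the right direction are exactly those on the $w$-to-apex segment, while down-arcs in $\covr(\ell)$ sit on the $u$-to-apex segment). Walking upward from $w$, the arc $a = (x,y)$ appears with $x$ below $y$, so $w$ is a descendant of $x$; since $a \in A_v$ forces $x \in U_v$, this gives $w \in U_v$. Moreover $w \neq v$: if $w = v$, then the $w$-to-apex path would climb strictly above $v$ starting with $a_v$, never entering $T_v$, contradicting $a \in A_v$ being on that path. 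Hence $w \in U_v \setminus \{v\}$.

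Next I would rule out $u \in U_v$. If both endpoints of $\ell$ belonged to $U_v$, then $\mathrm{apex}(\ell) = \mathrm{lca}(u,w)$ would also lie in $U_v$ (the LCA of two vertices of a rooted subtree lies in that subtree), and consequently $P_\ell$ would be entirely contained in $T_v$. This would force $\covr(\ell) \subseteq A_v$, contradicting the existence of $a' \in \covr(\ell) \setminus A_v$. Therefore $u \notin U_v$, and by \cref{def:L_uparrow_downarrow} we conclude $\ell \in L^{\downarrow}_v$, contradicting $L^{\downarrow}_v = \emptyset$.

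The second statement follows by swapping the roles of up- and down-arcs: starting from a violation of down-independence, one takes a down-arc $a = (y,x) \in \covr(\ell) \cap A_v \cap A_{down}$, which must lie on the $u$-to-apex portion of $P_\ell$, yielding $u \in U_v \setminus \{v\}$; the same LCA argument then gives $w \notin U_v$, so $\ell \in L^{\uparrow}_v$. There is no real obstacle here beyond careful tracking of which segment of $P_\ell$ contains an up- vs.\ down-arc in $\covr(\ell)$; the LCA-in-subtree observation is the only nontrivial ingredient.
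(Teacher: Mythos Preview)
Your proof is correct and follows essentially the same approach as the paper's: both arguments hinge on the observation that if $\ell=(u,w)$ covers an up-arc of $A_v$, then the head $w$ lies in $U_v\setminus\{v\}$, and if additionally $u\in U_v$, then $\cov(\ell)\subseteq A_v$. The paper phrases this directly (assuming $L^\downarrow_v=\emptyset$ and deducing $\covr(\ell)\subseteq A_v$), while you phrase it contrapositively, but the logical content is identical.
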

\begin{proof}
We only prove the first statement, the second one can be derived analogously. Assume $L^\downarrow_v=\emptyset$ and let $\ell=(y,z)\in L$. Assume $\covr(\ell)\cap A_v\cap A_{up}\ne \emptyset$. Then $z\in U_v\setminus \{v\}$. As $\ell\notin L^\uparrow_v=\emptyset$, $z\in U_v$. Hence, $\covr(\ell)\subseteq \cov(\ell)\subseteq A_v$.
\end{proof}
We are now ready to define when an arc $a_v$ is ``light'' with respect to a solution to \eqref{eq:WDTAP_LP}, allowing us to split all links covering it in the right or in the wrong direction, respectively, without increasing the cost of the LP solution by too much. As outlined in \Cref{sec:ourcontribution}, we will charge the cost of the splitting to the coverage of visible arcs in the subtree hanging off $v$. In doing so, it will be convenient to measure the coverage of these arcs with respect to the \emph{original LP solution $x$}, whilst defining visibility with respect to the support $L'$ of the \emph{split LP solution $x'$}.
\begin{definition}\label{def:up_down_light}
Let $\gamma\in (0,1)$, let $x$ be a solution to \eqref{eq:WDTAP_LP} and let $L'\subseteq L$. Let $v\in V\setminus \{r\}$. We say that $a_v$ is \emph{$\gamma$-up-light} with respect to $x$ and $L'$ if 
\[x(L^\downarrow_v)\le \gamma\cdot x(\{\ell\in L\colon \covr(\ell)\cap A_{up}\cap A^{vis}_v(L')\ne \emptyset\}\setminus L^\downarrow_v).\]
We say that $a_v$ is \emph{$\gamma$-down-light} with respect to $x$ and $L'$ if 
\[x(L^\uparrow_v)\le \gamma\cdot x(\{\ell\in L\colon \covr(\ell)\cap A_{down}\cap A^{vis}_v(L')\ne \emptyset\}\setminus L^\uparrow_v).\]
\end{definition}
Note that $L'$ is only used to specify visibility, however, we evaluate $x$ on all of $L$.

Next, we define the type of splitting operation that we will perform when encountering a light arc.

\begin{definition}
Let $v\in V$ and let $L'\subseteq L$. The splitting $\sigma_{v,L'}$ is defined as follows. For $\ell=(u,w)\in L'$ with $v\in \mathrm{in}(P_\ell)$, we define $\sigma_{v,L'}(\ell)=\{(u,v),(v,w)\}$. For every other link $\ell$, we define $\sigma_{v,L'}(\ell)=\{\ell\}$.
\end{definition}
Recall that $\mathrm{in}(P_\ell)$ denotes the set of inner vertices of the path $P_\ell$. The splitting $\sigma_{v,L'}$ splits every link $\ell\in L'$ with $v\in \mathrm{in}(P_\ell)$ into two shadows; one starting and one ending in $v$.

The next proposition allows us to bound the cost increase incurred by successive splitting operations.
\begin{proposition}\label{lemma:cost_increase_splitting_new}
Let $x$ be a solution to \eqref{eq:WDTAP_LP}, let $v\in V$ and let $L'\subseteq L$. Let $x'\coloneqq \mathrm{split}(x,\sigma_{v,L'})$. Then
\[c(x')\le c(x)+\Delta\cdot x(L').\]
\end{proposition}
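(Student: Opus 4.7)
The plan is to apply \cref{lemma:cost_increase_concatenate_splittings} directly to the splitting $\sigma=\sigma_{v,L'}$. That lemma gives the bound
\[c(x')\le c(x)+\Delta\cdot \sum_{\ell\in L}(|\sigma_{v,L'}(\ell)|-1)\cdot x_\ell,\]
so the entire task reduces to controlling the sum on the right.

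By the definition of $\sigma_{v,L'}$, we have $|\sigma_{v,L'}(\ell)|=2$ when $\ell\in L'$ and $v\in\mathrm{in}(P_\ell)$, and $|\sigma_{v,L'}(\ell)|=1$ otherwise. Hence $|\sigma_{v,L'}(\ell)|-1$ equals $1$ precisely for $\ell\in L'$ with $v\in\mathrm{in}(P_\ell)$, and vanishes for every other link. The sum therefore telescopes to
\[\sum_{\ell\in L'\colon v\in\mathrm{in}(P_\ell)} x_\ell \;\le\; \sum_{\ell\in L'} x_\ell \;=\; x(L'),\]
where the inequality uses only nonnegativity of $x$. Plugging this back yields $c(x')\le c(x)+\Delta\cdot x(L')$, as desired.

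There is essentially no obstacle: the claim is a bookkeeping corollary of \cref{lemma:cost_increase_concatenate_splittings} combined with the fact that $\sigma_{v,L'}$ splits each affected link into exactly two pieces. The only thing to be careful about is to note that links outside $L'$, as well as links in $L'$ for which $v$ is not an inner vertex of $P_\ell$, contribute zero to the sum, so we can legitimately drop the condition $v\in\mathrm{in}(P_\ell)$ and upper-bound by $x(L')$.
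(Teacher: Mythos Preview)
Your proposal is correct and follows exactly the same approach as the paper: the paper's proof is a one-liner that invokes \cref{lemma:cost_increase_concatenate_splittings} and notes that $|\sigma_{v,L'}(\ell)|\le 2$ for $\ell\in L'$ and $|\sigma_{v,L'}(\ell)|=1$ for $\ell\notin L'$. Your version is slightly more explicit in tracking the condition $v\in\mathrm{in}(P_\ell)$, but this is just a minor refinement of the same argument.
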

\begin{proof}
This follows from \cref{lemma:cost_increase_concatenate_splittings} by observing that $|\sigma_{v,L'}(\ell)|\le 2$ for $\ell\in L'$ and $|\sigma_{v,L'}(\ell)|=1$ for $\ell\notin L'$.
\end{proof}
The link sets that we will choose as $L'$ will be of the form $L^\downarrow_v$ and $L^\uparrow_v$, respectively. The following lemma tells us that splitting cannot increase the total $x$-value on these subsets. (It can, however, decrease it to zero if splits are performed at $v$.)
\begin{proposition}\label{lemma:splitting_decreases_L_up_down}
Let $x$ be a solution to \eqref{eq:WDTAP_LP}, let $\sigma$ be a splitting of $x$ and let $x'\coloneqq \mathrm{split}(x,\sigma)$. Let $v\in V\setminus\{r\}$.
\begin{itemize}
	\item  We have $x'(L^\downarrow_v)\le x(L^\downarrow_v)$ and $x'(L^\uparrow_v)\le x(L^\uparrow_v)$.
	\item If $\sigma=\sigma_{w,L'}$ and $w\ne v$ or $L'\cap L^\downarrow_v=\emptyset$, then $x'(L^\downarrow_v)= x(L^\downarrow_v)$.
	\item If $\sigma=\sigma_{w,L'}$ and $w\ne v$ or $L'\cap L^\uparrow_v=\emptyset$, then $x'(L^\uparrow_v)= x(L^\uparrow_v)$.
\end{itemize}

\end{proposition}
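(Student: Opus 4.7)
The plan is to derive all three statements from a single combinatorial claim about which shadows of a link $\ell$ can land in $L^\downarrow_v$ (the argument for $L^\uparrow_v$ is symmetric).

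First, I would prove the key structural fact: for any link $\ell\in L$ and any splitting $\sigma$, the set $\sigma(\ell)\cap L^\downarrow_v$ has at most one element, and is empty whenever $\ell\notin L^\downarrow_v$. The at-most-one bound uses that every $\ell'\in L^\downarrow_v$ has $v$ as an \emph{internal} vertex of $P_{\ell'}$ (the two endpoints of $\ell'$ lie on opposite sides of the $v$-cut since one is in $U_v\setminus\{v\}$ and the other outside $U_v$); since the paths $P_{\ell'_1},\dots,P_{\ell'_t}$ arising from $\sigma(\ell)$ form an arc-partition of $P_\ell$, two of them can share at most the single split vertex, so $v$ can be an internal vertex of at most one of these paths. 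For the second part, if some shadow $\ell'=(u',w')$ of $\ell=(u,w)$ lies in $L^\downarrow_v$, then $v$ is strictly between $u'$ and $w'$ on $P_\ell$; tracing $P_\ell$ from $u$ to $w$ one crosses $v$ exactly once (since $P_\ell$ is simple), so the portion after $v$ stays in $U_v$ and the portion before $v$ stays outside $U_v$, which forces $u\notin U_v$ and $w\in U_v\setminus\{v\}$, i.e.\ $\ell\in L^\downarrow_v$.

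From this claim the first bullet is immediate by swapping sums:
\[x'(L^\downarrow_v)=\sum_{\ell\in L}x(\ell)\cdot|\sigma(\ell)\cap L^\downarrow_v|\le\sum_{\ell\in L^\downarrow_v}x(\ell)=x(L^\downarrow_v),\]
and analogously for $L^\uparrow_v$.

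For the equality bullets with $\sigma=\sigma_{w,L'}$, it suffices to show that every $\ell\in L^\downarrow_v$ has \emph{exactly} one shadow in $L^\downarrow_v$ under the stated hypotheses. If $w=v$ and $L'\cap L^\downarrow_v=\emptyset$, no link in $L^\downarrow_v$ is touched by $\sigma$, so $\sigma(\ell)=\{\ell\}\subseteq L^\downarrow_v$ and we are done. If $w\ne v$ and $\ell=(u,z)\in L^\downarrow_v$ is actually split (so $w\in\mathrm{in}(P_\ell)$), then $w$ lies either on the $u$-$v$-subpath of $P_\ell$ or on the $v$-$z$-subpath; a short two-case check shows that in the first case the shadow $(w,z)$ lies in $L^\downarrow_v$ while $(u,w)$ does not, and in the second case the shadow $(u,w)$ lies in $L^\downarrow_v$ while $(w,z)$ does not (using crucially that $w\ne v$, so the endpoint equal to $w$ is either in $U_v\setminus\{v\}$ or outside $U_v$, never equal to $v$). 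Combined with the zero-contribution from links outside $L^\downarrow_v$ established above, this yields $x'(L^\downarrow_v)=x(L^\downarrow_v)$. The argument for $L^\uparrow_v$ is obtained by reversing the roles of the two endpoints. The only delicate step is handling the case $w=v$: one must see why excluding $L'\cap L^\downarrow_v$ (rather than, say, all of $L'$) is exactly the right hypothesis, which is clear once one notes that only links in $L^\downarrow_v$ with $v$ as an inner vertex could have a shadow leave $L^\downarrow_v$ after splitting at $v$.
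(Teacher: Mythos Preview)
Your proof is correct and follows essentially the same approach as the paper's. The only minor difference is in how you justify the ``at most one'' bound: you argue that $v$ can be an internal vertex of at most one path in an arc-partition, whereas the paper observes more directly that every $\ell'\in L^\downarrow_v$ satisfies $a_v\in\cov(\ell')$ and the sets $(\cov(\ell'))_{\ell'\in\sigma(\ell)}$ are pairwise disjoint. Your equality argument for $\sigma_{w,L'}$ is spelled out in more detail than the paper's, which simply asserts that exactly one shadow remains in $L^\downarrow_v$ under the stated hypotheses.
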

\begin{proof}
We only prove the statements for $L^\downarrow_v$, the proof for $L^\uparrow_v$ is analogous. Let $\ell\in L$. We make the following two observations:
\begin{itemize}
	\item If there is $\ell'\in\sigma(\ell)\cap L^\downarrow_v$, then $\ell\in L^\downarrow_v$ because $\ell'$ is a shadow of $\ell$.
	\item For $\ell\in L^\downarrow_v$, we have $|\sigma(\ell)\cap L^\downarrow_v|\le 1$ because $a_v\in\cov(\ell')$ for every $\ell'\in \sigma(\ell)\cap L^\downarrow_v$, but the sets $(\cov(\ell'))_{\ell'\in \sigma(\ell)}$ are pairwise disjoint. (Recall that $\cov(\ell')$ is the arc set of $P_{\ell'}$).
\end{itemize}
This yields
\[
	x'(L^\downarrow_v)=\sum_{\ell'\in L^\downarrow_v}\sum_{\substack{\ell\in L\colon\\ \ell'\in \sigma(\ell)}} x_\ell=\sum_{\ell'\in L^\downarrow_v}\sum_{\substack{\ell\in L^\downarrow_v\colon\\ \ell'\in \sigma(\ell)}} x_\ell=\sum_{\ell\in L^\downarrow_v} |\sigma(\ell)\cap L^\downarrow_v|\cdot x_\ell\le x(L^\downarrow_v),
\]
proving the first statement (for $L^\downarrow_v$). We note that if $\sigma=\sigma_{w,L'}$ and $w\ne v$ or $L'\cap L^\downarrow_v=\emptyset$, then for every $\ell\in L^\downarrow_v$, there is exactly one $\ell'\in\sigma(\ell)$ with $\ell'\in L^\downarrow_v$ and we get equality above.
\end{proof}
\cref{algorithm:light_link_splitting} shows the splitting procedure that we employ in order to prove \cref{theorem:first_phase_splitting}. We traverse the vertices in $V\setminus\{r\}$ from the leaves towards the root. If $a_v$ is $\gamma$-up-light, we split all links pointing into $T_v$ at $v$, if $a_v$ is $\gamma$-down-light, we split links pointing out of $T_v$. Throughout the algorithm, we keep track of the current (split) LP solution $x^*$, the splitting $\sigma^*$ with $x^*=\mathrm{split}(x,\sigma^*)$ and the sets $W_{up}$ and $W_{down}$ of vertices $v$ at which links pointing into and out of $T_v$ have been split, respectively. We write $\sigma_{\rm{id}}$ to denote the initial \emph{identity splitting} given by $\sigma_{\rm{id}}(\ell)=\{\ell\}$ for every $\ell\in L$. Note that $x=\mathrm{split}(x,\sigma_{\rm{id}})$.
\begin{algorithm}
\begin{algorithmic}[1]
	\Require{solution $x$ to \eqref{eq:WDTAP_LP}}
	\Ensure{splitting $\sigma^*$, $x^*=\mathrm{split}(x,\sigma^*)$, vertex set $W^*$}
	
	\State $\sigma^*\gets \sigma_{\rm{id}}$, $x^*\gets x$, $W_{up}\gets\emptyset$, , $W_{down}\gets\emptyset$
	\For{$v\in V\setminus \{r\}$ in order of non-increasing distance to $r$}
	\If{$a_v$ is $\gamma$-up-light (with respect to $x$ and $\mathrm{supp}(x^*)$)}
	\State $\sigma^*\gets \sigma_{v,L^\downarrow_v}\circ\sigma^*$, $x^*\gets \mathrm{split}(x^*,\sigma_{v,L^\downarrow_v})$
	\State $W_{up}\gets W_{up}\cup \{v\}$
	\EndIf
	\If{$a_v$ is $\gamma$-down-light (with respect to $x$ and $\mathrm{supp}(x^*)$)}
	\State $\sigma^*\gets \sigma_{v,L^\uparrow_v}\circ\sigma^*$, $x^*\gets \mathrm{split}(x^*,\sigma_{v,L^\uparrow_v})$
	\State $W_{down}\gets W_{down}\cup \{v\}$
	\EndIf
	\EndFor
	\State \Return{$\sigma^*$, $x^*$, $W^*\coloneqq W_{up}\cup W_{down}\cup \{r\}$}
\end{algorithmic}
\caption{Light link splitting.\label{algorithm:light_link_splitting}}	
\end{algorithm}
Note that \cref{algorithm:light_link_splitting} runs in polynomial time. We will show that the output $(\sigma^*,x^*,W^*)$ of \cref{algorithm:light_link_splitting} meets the requirements of \cref{theorem:first_phase_splitting}. Our first goal is to establish \cref{theorem:first_phase_splitting}~{\it\ref{item:first_phase_cost_bound}}.

To this end, for $v\in V\setminus \{r\}$, let $x^{*,v}$ denote value of $x^*$ at the beginning of the iteration of the for-loop where $v$ is considered. Let $W_{up}$ and $W_{down}$ denote the values of the respective sets when the algorithm terminates.
\begin{proposition}\label{lemma:cost_bound_first_phase_1} We have
\begin{align*}c(x^*) \le&\quad c(x)\\&+\Delta\cdot\sum_{v\in W_{up}}	\gamma\cdot x(\{\ell\in L\colon \covr(\ell)\cap A_{up}\cap A^{vis}_v(\mathrm{supp}(x^{*,v}))\ne \emptyset\}\setminus L^\downarrow_v)\\&+\Delta\cdot \sum_{v\in W_{down}}	\gamma\cdot x(\{\ell\in L\colon \covr(\ell)\cap A_{down}\cap A^{vis}_v(\mathrm{supp}(x^{*,v}))\ne \emptyset\}\setminus L^\uparrow_v).\end{align*}
\end{proposition}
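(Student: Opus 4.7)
The plan is to telescope the cost $c(x^*)-c(x)$ across the iterations of \cref{algorithm:light_link_splitting}, bounding the cost increase incurred at each vertex $v$ that gets added to $W_{up}$ or $W_{down}$. Fix $v\in V\setminus\{r\}$ and let $x^{*,v}$ be the value of $x^*$ at the start of $v$'s iteration. If $v$ gets added to $W_{up}$, the algorithm replaces $x^*$ by $\mathrm{split}(x^{*,v},\sigma_{v,L^\downarrow_v})$; by \cref{lemma:cost_increase_splitting_new}, this replacement raises the cost by at most $\Delta\cdot x^{*,v}(L^\downarrow_v)$. The analogous statement for a down-split yields a cost increase of at most $\Delta\cdot x^{*,v}(L^\uparrow_v)$, or, if both splits happen at $v$, at most $\Delta\cdot x^{*,v,\mathrm{mid}}(L^\uparrow_v)$ where $x^{*,v,\mathrm{mid}}$ is the intermediate solution after the up-split.

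Next I would pass from these ``current-solution'' quantities to quantities in terms of the original $x$. The first bullet of \cref{lemma:splitting_decreases_L_up_down} immediately gives $x^{*,v}(L^\downarrow_v)\le x(L^\downarrow_v)$ and $x^{*,v}(L^\uparrow_v)\le x(L^\uparrow_v)$, since $x^{*,v}$ is obtained from $x$ by a concatenation of splittings. For the subtle case where the up-split at $v$ precedes the down-split at the same $v$, I would invoke the equality case of the same lemma: the splitting $\sigma_{v,L^\downarrow_v}$ takes $w=v$ but $L'=L^\downarrow_v$, and since $L^\downarrow_v\cap L^\uparrow_v=\emptyset$ (no link can simultaneously point into and out of $T_v$), we get $x^{*,v,\mathrm{mid}}(L^\uparrow_v)=x^{*,v}(L^\uparrow_v)\le x(L^\uparrow_v)$.

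Finally I would apply the lightness conditions. The test ``$a_v$ is $\gamma$-up-light'' in the algorithm is evaluated with respect to $x$ and $\mathrm{supp}(x^{*,v})$, so \cref{def:up_down_light} directly gives
\[x(L^\downarrow_v)\le \gamma\cdot x(\{\ell\in L\colon \covr(\ell)\cap A_{up}\cap A^{vis}_v(\mathrm{supp}(x^{*,v}))\ne\emptyset\}\setminus L^\downarrow_v),\]
and similarly for the $\gamma$-down-light case. Summing the per-iteration bounds $\Delta\cdot x(L^\downarrow_v)\le \Delta\cdot\gamma\cdot(\dots)$ over $v\in W_{up}$, and $\Delta\cdot x(L^\uparrow_v)\le \Delta\cdot\gamma\cdot(\dots)$ over $v\in W_{down}$, yields the stated inequality.

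The only potentially delicate step is the handling of vertices that appear in both $W_{up}$ and $W_{down}$, where the cost increase from the second splitting must be bounded in terms of $x$ rather than the already-split intermediate solution; but this is immediate from the equality case of \cref{lemma:splitting_decreases_L_up_down} together with the disjointness $L^\downarrow_v\cap L^\uparrow_v=\emptyset$. Everything else is a straightforward invocation of earlier lemmas and a telescoping sum.
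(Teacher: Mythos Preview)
Your proof is correct and follows essentially the same route as the paper: telescope the cost across iterations via \cref{lemma:cost_increase_splitting_new}, pass from $x^{*,v}$ back to $x$ via \cref{lemma:splitting_decreases_L_up_down}, and then invoke the lightness definition. One minor imprecision: when $v\in W_{up}\cap W_{down}$, the $\gamma$-down-light test in the algorithm is actually evaluated against $\mathrm{supp}(x^{*,v,\mathrm{mid}})$, not $\mathrm{supp}(x^{*,v})$; but since splitting can only shrink the visibility set, the bound with $A^{vis}_v(\mathrm{supp}(x^{*,v}))$ on the right-hand side is only weaker, so the stated inequality still holds.
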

\begin{proof}
By \cref{lemma:cost_increase_splitting_new,lemma:splitting_decreases_L_up_down}, we have
\[c(x^*) \le c(x)+\Delta\cdot \left(\sum_{v\in W_{up}} x^{*,v}(L^\downarrow_v)+\sum_{v\in W_{down}} x^{*,v}(L^\uparrow_v)\right).\]
By \cref{lemma:splitting_decreases_L_up_down}, we know that $x^{*,v}(L^\downarrow_v)\le x(L^\downarrow_v)$ and $x^{*,v}(L^\uparrow_v)\le x(L^\uparrow_v)$ for all $v\in V\setminus\{r\}$. The desired statement, hence, follows from \cref{def:up_down_light}.
\end{proof}
To derive a good bound on the cost increase from \cref{lemma:cost_bound_first_phase_1}, we need to make sure that a link $\ell$ does not appear in too many of the sets $\{\ell\in L\colon \covr(\ell)\cap A_{up}\cap A^{vis}_v(\mathrm{supp}(x^{*,v}))\ne \emptyset\}\setminus L^\downarrow_v$ and $\{\ell\in L\colon \covr(\ell)\cap A_{down}\cap A^{vis}_v(\mathrm{supp}(x^{*,v}))\ne \emptyset\}\setminus L^\uparrow_v$, respectively. The next lemma takes care of this.
\begin{proposition}\label{lemma:splitting_up_charge}
Let $\ell\in L$. 
\begin{itemize}
	\item There is at most one vertex $v\in W_{up}$ such that $\covr(\ell)\cap A_{up}\cap A^{vis}_v(\mathrm{supp}(x^{*,v}))\ne \emptyset$ and $\ell\notin L^\downarrow_v$.
	\item There is at most one vertex $v\in W_{down}$ such that $\covr(\ell)\cap A_{down}\cap A^{vis}_v(\mathrm{supp}(x^{*,v}))\ne \emptyset$ and $\ell\notin L^\uparrow_v$.
\end{itemize}
\end{proposition}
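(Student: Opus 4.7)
I will prove the first bullet; the second follows by a symmetric argument. Fix $\ell\in L$ and suppose for contradiction that two distinct vertices $v_1,v_2\in W_{up}$ both satisfy the stated condition. Because $\ell\notin L^\downarrow_{v_i}$ and $\covr(\ell)\cap A_{up}\cap A_{v_i}\ne\emptyset$ for $i=1,2$, I can repeat the endpoint analysis used in the proof of \cref{lemma:up-down-independent} to conclude that $\covr(\ell)\subseteq A_{v_i}$. Therefore $\covr(\ell)\subseteq A_{v_1}\cap A_{v_2}$, and since this intersection is nonempty the subtrees $T_{v_1}$ and $T_{v_2}$ must be nested. Without loss of generality $v_2$ is a strict descendant of $v_1$, so $v_2$ is considered strictly before $v_1$ in the for-loop of \cref{algorithm:light_link_splitting}.

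Next, I pick an up-arc $a\in\covr(\ell)\cap A_{v_2}\cap A^{vis}_{v_1}(\mathrm{supp}(x^{*,v_1}))$ and a witness link $\ell^*\in\mathrm{supp}(x^{*,v_1})$ with $a\in\covr(\ell^*)$ and $v_1\in\mathrm{in}(\overline{P}_{\ell^*})$. Writing $\ell^*=(u^*,w^*)$ and $s(\ell^*)=(u^{**},w^{**})$, the coverage of $a$ by $s(\ell^*)$ forces $w^{**}\in U_{v_2}\setminus\{v_2\}$; combined with $v_1\in\mathrm{in}(P_{s(\ell^*)})$ and $v_1\notin U_{v_2}$, this forces $P_{s(\ell^*)}$ to leave $U_{v_2}$ at $v_2$, so $u^{**}\notin U_{v_2}$ and $v_2\in\mathrm{in}(P_{s(\ell^*)})$. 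Because the shadow endpoints appear in the order $u^{**},w^{**}$ along $P_{\ell^*}$, the same topological conclusions extend to the original endpoints: $u^*\notin U_{v_2}$ and $w^*\in U_{v_2}\setminus\{v_2\}$. Hence $\ell^*\in L^\downarrow_{v_2}$ and $v_2\in\mathrm{in}(P_{\ell^*})$.

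The heart of the proof is the following \emph{persistence claim}, which I will prove by induction over the splitting steps performed after $v_2$'s up-light step: for every subsequent time $t$ and every $\ell^\star\in\mathrm{supp}(x^{*,t})$, if $v_2\in\mathrm{in}(P_{\ell^\star})$ then $\ell^\star\notin L^\downarrow_{v_2}$. The base case is a direct computation using the definition of $\sigma_{v_2,L^\downarrow_{v_2}}$: every link in $L^\downarrow_{v_2}$ with $v_2$ as an inner vertex gets split into two shadows having $v_2$ as an endpoint, and no other link in the splitting image coincides with such a violating link, so its mass drops to zero. In the inductive step I examine a later splitting $\sigma_{w,L''}$ at some vertex $w\ne v_2$. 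If a potential violating link was already present in the previous support and was not split, the inductive hypothesis applies. Otherwise it arises as a shadow of some parent link $\ell^\circ=(u^\circ,v^\circ)$ split at $w$: either the shadow is $(u^\circ,w)$, in which case $\ell^\star\in L^\downarrow_{v_2}$ forces $w\in U_{v_2}\setminus\{v_2\}$, making $w$ a strict descendant of $v_2$ and contradicting the bottom-up processing order; or the shadow is $(w,v^\circ)$, in which case tracing the $u^\circ$-to-$v^\circ$ path through $w$ and $v_2$ shows that $\ell^\circ$ itself lies in $L^\downarrow_{v_2}$ with $v_2$ as an inner vertex, contradicting the inductive hypothesis applied to $\ell^\circ$ at the previous time.

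Applying the persistence claim at time $v_1$ to the link $\ell^*$ contradicts the conclusion $\ell^*\in L^\downarrow_{v_2}$ with $v_2\in\mathrm{in}(P_{\ell^*})$ obtained earlier, completing the proof of the first bullet. The second bullet follows by the same argument, interchanging up-arcs with down-arcs, $L^\downarrow$ with $L^\uparrow$, and the up-light with the down-light routine. I expect the main obstacle to be the inductive step of the persistence claim, where the case analysis must carefully combine the shadow-endpoint structure produced by $\sigma_{w,L''}$ with the fact that the algorithm processes vertices in order of non-increasing distance to the root.
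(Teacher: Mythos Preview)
Your argument is correct and follows essentially the same route as the paper: show that $\covr(\ell)$ is trapped inside the subtree of the earliest qualifying vertex, then argue that after that vertex's up-light split no link in $L^\downarrow$ of that subtree can survive in the support, so no later vertex can witness visibility of an up-arc in $\covr(\ell)$.

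The only substantive difference is packaging. Your ``persistence claim'' is exactly the statement $\mathrm{supp}(x^{*,t})\cap L^\downarrow_{v_2}=\emptyset$ for all times $t$ after $v_2$'s up-light step (note that every $\ell\in L^\downarrow_{v_2}$ automatically has $v_2\in\mathrm{in}(P_\ell)$, so your inner-vertex qualifier is redundant). The paper obtains this in one line by invoking \cref{lemma:splitting_decreases_L_up_down}: once $x^{*}(L^\downarrow_{v_2})=0$ it stays zero under any further splitting. Your inductive case analysis over $\sigma_{w,L''}$ reproves that proposition from scratch; this is fine, just longer. Similarly, your detour through the generic-shadow endpoints $u^{**},w^{**}$ is unnecessary: since $a\in\covr(\ell^*)$ is an up-arc, the head $w^*$ of $\ell^*$ is already forced into $U_{v_2}\setminus\{v_2\}$, and $v_1\in\mathrm{in}(\overline{P}_{\ell^*})$ with $v_1\notin U_{v_2}$ forces $u^*\notin U_{v_2}$ directly.

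One small gap: you restrict the inductive step to $w\ne v_2$, but the very next step after $v_2$'s up-light split may be $v_2$'s down-light split $\sigma_{v_2,L^\uparrow_{v_2}}$. This case is harmless---any new shadow produced there has $v_2$ as an endpoint, so $v_2\notin\mathrm{in}(P_{\ell^\star})$---but you should say so explicitly.
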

\begin{proof}
We only prove the statement for $v \in W_{up}$, as the other one can be derived analogously.
If there is no such vertex $v\in W_{up}$, there is nothing to show. Next, assume that there is at least one such vertex and let $v_0$ be the first one considered by the algorithm.
Let $\ell=(y,z)$. As $\ell$ covers an arc in $A_{up}\cap A_v$, we have $z\in U_v\setminus\{v\}$. As $\ell\notin L^\downarrow_v$, $y\in U_v$. Hence, $\covr(\ell)\subseteq \cov(\ell)\subseteq A_v$. 
As $v_0\in W_{up}$, we know that every link in $L^\downarrow_v$ is split at $v_0$ and after this, we have $x^*(L^\downarrow_{v_0})=0$. By \cref{lemma:splitting_decreases_L_up_down}, for every vertex $v_1$ considered after $v_0$, we also have $x^{*,v_1}(L^\downarrow)=0$, i.e., $\mathrm{supp}(x^{*,v_1})\cap L^\downarrow_v=\emptyset$. But this tells us that no vertex $v_1$ considered after $v_0$ can see any arc in $A_{up}\cap A_v$. Indeed, if $v_1$ is considered after $v_0$, then $v_1\notin U_{v_0}$ because every vertex in $U_{v_0}\setminus \{v_0\}$ has a larger distance to $r$ than $v_0$. If there were an arc $a\in A_{up}\cap A_v$ visible to $v_1$ (w.r.t. $\mathrm{supp}(x^{*,v_1})$), then there were a link $\ell'=(y',z')\in\mathrm{supp}(x^{*,v_1})$ such that $a\in\covr(\ell')$ and $v_1\in\mathrm{in}(\overline{P}_{\ell'})$. In particular, $z'\in U_{v_0}\setminus \{v_0\}$ and $y'\notin U_{v_0}$ (as $\mathrm{in}(\overline{P}_{\ell'})\subseteq U_{v_0}$ otherwise). So $\ell'\in  \mathrm{supp}(x^{*,v_1})\cap L^\downarrow_v$, a contradiction. Hence, $A_v\cap A_{up}\cap A^{vis}_{v_1}(\mathrm{supp}(x^{*,v_1}))=\emptyset$ for every $v_1$ that is considered after $v$.
As $\covr(\ell)\cap A_{up}\subseteq A_{up}\cap A_v$, this concludes the proof.
\end{proof}
We are now ready to prove \cref{theorem:first_phase_splitting}~{\it\ref{item:first_phase_cost_bound}}.
\begin{lemma}
$c(x^*)\le (1+\epsilon)\cdot c(x)$.
\end{lemma}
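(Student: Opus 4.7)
The plan is to combine the two preceding propositions: \cref{lemma:cost_bound_first_phase_1} gives an upper bound on $c(x^*)$ as $c(x)$ plus two double sums over $W_{up}$ and $W_{down}$, each of the form $\Delta \cdot \gamma \cdot \sum_{v} x(S_v)$ where $S_v$ is the relevant set of links whose covered arcs in the right direction are visible from $v$ in $T_v$ but which do not point into (resp.\ out of) $T_v$. \cref{lemma:splitting_up_charge} says precisely that each link $\ell \in L$ belongs to at most one such set $S_v$ indexed by $v \in W_{up}$, and likewise to at most one set indexed by $v \in W_{down}$.

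I would swap the order of summation to bundle the contribution of each link. Concretely, for the $W_{up}$ sum,
\[
\sum_{v\in W_{up}} x\bigl(\{\ell\in L\colon \covr(\ell)\cap A_{up}\cap A^{vis}_v(\mathrm{supp}(x^{*,v}))\ne \emptyset\}\setminus L^\downarrow_v\bigr) \;=\; \sum_{\ell\in L} x_\ell \cdot N_{up}(\ell),
\]
where $N_{up}(\ell)$ counts the vertices $v\in W_{up}$ for which $\ell$ lies in the indicated set. By \cref{lemma:splitting_up_charge}, $N_{up}(\ell)\le 1$, so the whole sum is at most $x(L)=\sum_{\ell\in L}x_\ell$. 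The same argument bounds the $W_{down}$ sum by $x(L)$.

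Since we normalized costs so that $c\colon L\to [1,\Delta]$, we have $x(L)=\sum_{\ell\in L} x_\ell \le \sum_{\ell\in L} c(\ell)\cdot x_\ell = c(x)$. Combining everything,
\[
c(x^*)\;\le\; c(x) + \Delta\cdot \gamma\cdot x(L) + \Delta\cdot \gamma\cdot x(L)\;\le\; c(x) + 2\Delta\gamma\cdot c(x),
\]
and inequality \eqref{eq:constants_1}, $2\Delta\gamma\le \epsilon$, finishes the job: $c(x^*)\le (1+\epsilon)\cdot c(x)$.

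There is no real obstacle here; the two earlier propositions were set up exactly to produce this bound. The only thing worth being careful about is the direction of the inequality $x(L)\le c(x)$, which relies on the cost-rescaling convention $c(\ell)\ge 1$ made at the start of the proof of \cref{theorem:first_phase_splitting}; without this, one would need to replace $c(x)$ by $\Delta\cdot x(L)$ and absorb an extra factor of $\Delta$ into the constant $\gamma$, but the chosen value $\gamma = \epsilon/(2\Delta)$ already accounts for this.
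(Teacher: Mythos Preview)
Your proof is correct and follows essentially the same approach as the paper: both combine \cref{lemma:cost_bound_first_phase_1} with \cref{lemma:splitting_up_charge} to bound each of the two sums by $x(L)$, then use $c(\ell)\ge 1$ to get $x(L)\le c(x)$ and finish with \eqref{eq:constants_1}. Your presentation via the counting function $N_{up}(\ell)$ is a slightly more explicit way of saying what the paper phrases as pairwise disjointness of the sets, but the underlying argument is identical.
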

\begin{proof}
	We use \cref{lemma:cost_bound_first_phase_1} and that the sets $\{\ell\in L\colon \covr(\ell)\cap A_{up}\cap A^{vis}_v(\mathrm{supp}(x^{*,v}))\ne \emptyset\}\setminus L^\downarrow_v$ for $v\in W_{up}$ and $\{\ell\in L\colon \covr(\ell)\cap A_{down}\cap A^{vis}_v(\mathrm{supp}(x^{*,v}))\ne \emptyset\}\setminus L^\uparrow_v$ for $v\in W_{down}$ are pairwise disjoint by \cref{lemma:splitting_up_charge}. Hence, \cref{lemma:cost_bound_first_phase_1} yields
	\[\sum_{\ell\in L} c(x^*)\le c(x)+2\cdot\Delta\cdot\gamma\cdot x(L)\le c(x)+2\cdot \Delta\cdot \gamma\cdot c(x)\le (1+\epsilon)\cdot c(x),\]
	where we used $c(\ell)\ge 1$ for all $\ell\in L$ for the second and \eqref{eq:constants_1} for the third inequality.
\end{proof}
Next, we establish \cref{theorem:first_phase_splitting}~{\it\ref{item:first_phase_W}}.
\begin{lemma}\label{lemma:up_down_independent_first_phase}
For $v\in W_{up}$, we have $x^*(L^\downarrow_v)=0$ and for $v\in W_{down}$, we have $x^*(L^\uparrow_v)=0$. In particular, every vertex in $W^*$ is up- or down-independent with respect to $\mathrm{supp}(x^*)$.
\end{lemma}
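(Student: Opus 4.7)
The plan is to show that the moment a vertex $v$ is added to $W_{up}$ (resp.\ $W_{down}$) during Algorithm~\ref{algorithm:light_link_splitting}, the current split LP solution has $x^*$-value zero on $L^\downarrow_v$ (resp.\ $L^\uparrow_v$), and that no subsequent splitting can resurrect any mass on these sets. The ``in particular'' clause then follows from a straightforward adaptation of Proposition~\ref{lemma:up-down-independent} to the support of $x^*$.

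For the first part, fix $v \in W_{up}$ and consider the iteration in which $v$ is processed; at this point the algorithm applies the splitting $\sigma_{v, L^\downarrow_v}$. First I would observe that every $\ell=(u,w)\in L^\downarrow_v$ automatically has $v \in \mathrm{in}(P_\ell)$, because $u\notin U_v$ and $w\in U_v\setminus\{v\}$ force the $u$--$w$ path in $T$ to cross $v$ strictly. Hence $\sigma_{v, L^\downarrow_v}$ maps each such $\ell$ to the two shadows $(u,v)$ and $(v,w)$. The key point is then that neither of these shadows lies in $L^\downarrow_v$: membership in $L^\downarrow_v$ requires one endpoint strictly inside $U_v\setminus\{v\}$ and the other strictly outside $U_v$, whereas both shadows have $v$ itself as an endpoint, and $v\notin U_v\setminus\{v\}$ and $v\notin V\setminus U_v$. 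Plugging this into the definition $(x^*_{\text{new}})_{\ell'} = \sum_{\ell:\ell' \in \sigma_{v, L^\downarrow_v}(\ell)} (x^*_{\text{old}})_\ell$ for an arbitrary $\ell'\in L^\downarrow_v$ shows the sum is empty (links outside $L^\downarrow_v$ are fixed by the splitting, and links inside $L^\downarrow_v$ produce only shadows incident to $v$, which are not in $L^\downarrow_v$), so the new value is $0$. Hence $x^*(L^\downarrow_v)=0$ right after $v$ is processed.

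Next, I would apply Proposition~\ref{lemma:splitting_decreases_L_up_down} inductively to every splitting carried out in later iterations of the for-loop: each such splitting is of the form $\sigma_{v', L'}$ for some $v'$ and $L'$, and the proposition guarantees that $x^*(L^\downarrow_v)$ is monotonically non-increasing under such operations. Since it is already zero, it remains zero until termination. The argument for $v\in W_{down}$ is entirely symmetric, yielding $x^*(L^\uparrow_v)=0$ at the end.

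Finally, to conclude up/down-independence, I would note that the proof of Proposition~\ref{lemma:up-down-independent} only uses the link set in question as a source of potential ``problematic'' links: exactly the same argument, applied to $\mathrm{supp}(x^*)$ in place of $L$, shows that $\mathrm{supp}(x^*)\cap L^\downarrow_v = \emptyset$ implies $v$ is up-independent with respect to $\mathrm{supp}(x^*)$, and analogously for the down case. For $v\in W_{up}$ we have $\mathrm{supp}(x^*)\cap L^\downarrow_v = \emptyset$ by the preceding paragraphs; for $v\in W_{down}$, the same logic yields down-independence. For the root $r$, we have $U_r = V$, so trivially $L^\downarrow_r = L^\uparrow_r = \emptyset$, and $r$ is both up- and down-independent. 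The only (minor) subtlety is ensuring correctness of the direct computation showing the splitting zeros out $x^*(L^\downarrow_v)$; this is the one step that is slightly more than bookkeeping, since one must verify that the two newly created shadows are genuinely outside $L^\downarrow_v$.
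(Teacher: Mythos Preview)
Your proposal is correct and follows essentially the same approach as the paper's proof: show the mass on $L^\downarrow_v$ drops to zero immediately after the split at $v$, invoke Proposition~\ref{lemma:splitting_decreases_L_up_down} to preserve this through later iterations, and then apply Proposition~\ref{lemma:up-down-independent} to $\mathrm{supp}(x^*)$. You simply unpack the first step (why the two shadows fall outside $L^\downarrow_v$) in more detail than the paper, which just asserts it.
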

\begin{proof}
$r$ is both up- and down-independent since $A=A_r$. For $v\in W_{up}$, we have $x^*(L^\downarrow_v)=0$ immediately after splitting all links in $L^\downarrow_v$ at $v$. By \cref{lemma:splitting_decreases_L_up_down}, this property is preserved until the end, so $\mathrm{supp}(x^*)\cap L^\downarrow_v=\emptyset$. By \cref{lemma:up-down-independent}, $v$ is up-independent. Analogously, we can establish that every vertex in $W_{down}$ is down-independent.
\end{proof}
We are left with proving \cref{theorem:first_phase_splitting}~{\it\ref{item:first_phase_viwidth}}. Let $L'=\mathrm{supp}(\mathrm{split}(x^*,\sigma))$ be obtained from $\mathrm{supp}(x^*)$ by splitting all $W^*$-cross-links at their apex.
\begin{lemma}\label{lemma:viwup_W}
With respect to the link set $L'$, we have $\viw(r)=0$, $\viwu(v)=0$  for $v\in W_{up}$, and $\viwd(v)=0$  for $v\in W_{down}$.
\end{lemma}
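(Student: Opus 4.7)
The plan is to handle all three claims with a common strategy: any link $\ell'\in L'$ certifying one of the forbidden visibilities is forced to be a cross-link at a vertex in $W^*$, contradicting the fact that $L'$ is obtained from $\mathrm{supp}(x^*)$ by splitting every $W^*$-cross-link at its apex, which produces only up- and down-link shadows.

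For $r$, the observation is that $r\in\mathrm{in}(\overline{P}_{\ell'})=\mathrm{in}(P_{s(\ell')})$ forces $r$ to be the apex of $s(\ell')$ (since $r$ is the highest vertex in $T$) while not being an endpoint of $s(\ell')$, so $s(\ell')$ is a cross-link at $r$; since the endpoints of $s(\ell')$ lie on $P_{\ell'}$ on opposite sides of $r$, $\ell'$ itself is an $r$-cross-link. Any $r$-cross-link in $\mathrm{supp}(x^*)$ is a $W^*$-cross-link (as $r\in W^*$) and gets split at $r$, producing shadows with $r$ as an endpoint; shadows arising from splitting at any other apex $u\neq r$ live entirely inside $T_u$ and cannot contain $r$. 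Thus $L'$ has no $r$-cross-link, giving $\viw(r)=0$.

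For $v\in W_{up}$, the technical core is to establish $L'\cap L^\downarrow_v=\emptyset$. By \cref{lemma:up_down_independent_first_phase}, $\mathrm{supp}(x^*)\cap L^\downarrow_v=\emptyset$, so I verify that no shadow of a link $\ell=(x,y)\in\mathrm{supp}(x^*)\setminus L^\downarrow_v$ can land in $L^\downarrow_v$: a candidate shadow $\ell'=(x',y')$ with $x'\notin U_v$ and $y'\in U_v\setminus\{v\}$ would force the $x'$-to-$y'$ portion of $P_\ell$ to traverse the boundary arc $a_v$, and since a tree path can cross $a_v$ at most once, this pins down $x\notin U_v$ and $y\in U_v$; a direct check excludes $y=v$, so $\ell\in L^\downarrow_v$, a contradiction. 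Now suppose $v$ sees an up-arc $a=(y,z)\in A_v\cap A_{up}$ through some $\ell'=(x',y')\in L'$. Then covering $a$ gives $y'\in U_y\subseteq U_v\setminus\{v\}$; combined with $L'\cap L^\downarrow_v=\emptyset$, this forces $x'\in U_v$, so $\ell'$ is internal to $T_v$. A short case analysis on the type of $\ell'$ closes the argument: an up-link has $\covr(\ell')\subseteq A_{down}$, contradicting $a\in A_{up}$; a down-link has $P_{s(\ell')}$ a descending path inside $T_v$, which cannot contain the root $v$ of $T_v$ as an inner vertex; a cross-link at $v''\in U_v\setminus\{v\}$ has $P_{s(\ell')}\subseteq U_{v''}$, which does not contain $v$; and a cross-link at $v$ itself is excluded from $L'$ by the same $W^*$-splitting argument used for $r$. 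The claim $\viwd(v)=0$ for $v\in W_{down}$ follows by a symmetric argument, swapping up- and down-arcs and the roles of $L^\downarrow_v$ and $L^\uparrow_v$; the chief technical difficulty will be the case analysis establishing $L'\cap L^\downarrow_v=\emptyset$ (and its symmetric counterpart).
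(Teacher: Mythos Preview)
Your proof is correct and follows essentially the same approach as the paper: establish that $L'$ contains no $W^*$-cross-links (hence no $r$- or $v$-cross-links for $v\in W_{up}\cup W_{down}$), establish $L'\cap L^\downarrow_v=\emptyset$ for $v\in W_{up}$, and conclude that no link in $L'$ can make $v$ an inner vertex of $\overline{P}_\ell$ while covering an up-arc in $A_v$. The paper does this more tersely by citing \cref{lemma:splitting_decreases_L_up_down} for $L'\cap L^\downarrow_v=\emptyset$ and then noting directly that $\mathrm{apex}(\ell)\in U_v$ together with the absence of $v$-cross-links forces $v\notin\mathrm{in}(\overline{P}_\ell)$, whereas you re-prove the former from scratch and replace the latter by an explicit case split on the type of $\ell'$; both routes are sound.
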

\begin{proof}
Every link $\ell$ with $r\in\mathrm{in}(\overline{P}_\ell)$  is an $r$-cross-link. As $r\in W^*$, there is no link in $\ell\in L'$ with $r\in\mathrm{in}(\overline{P}_\ell)$. Hence, no arc is visible from $r$ and $\viw(r)=0$.

Next, $v\in W_{up}$ and let $a\in A_v\cap A_{up}$. We need to show that $a$ is not visible for $v$. Let $\ell=(y,z)\in L'$ be a link with $a\in\covr(\ell)$. Then $z\in U_v\setminus \{v\}$. As we have observed in the proof of \cref{lemma:up_down_independent_first_phase}, $\mathrm{supp}(x^*)\cap L^\downarrow_v=\emptyset$. As $L'=\mathrm{supp}(\mathrm{split}(x^*,\sigma))$, we also have $L'\cap L^\downarrow_v=\emptyset$ by \cref{lemma:splitting_decreases_L_up_down}. Hence, $y\in U_v$, so $\mathrm{apex}(\ell)\in U_v$. As $L'$ contains no $v$-cross-links, $v\notin \mathrm{in}(\overline{P}_\ell)$.
The statement for $v\in W_{down}$ can be derived analogously.
\end{proof}
The following lemma concludes the proof of \cref{theorem:first_phase_splitting}~{\it\ref{item:first_phase_viwidth}}.
\begin{lemma}
Let $v\in V\setminus \{r\}$.
\begin{itemize}
	\item If $\viwu(v)>k$ with respect to $L'$, then $x^*(L^\downarrow_v)> \zeta_2$, and $a_v$ is a $\zeta_2$-heavy down-arc.
	\item If $\viwd(v)>k$ with respect to $L'$, then $x^*(L^\uparrow_v)> \zeta_2$, and $a_v$ is a $\zeta_2$-heavy up-arc.
\end{itemize}
\end{lemma}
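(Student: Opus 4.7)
The two bullets are symmetric, so my plan is to prove the first; the second will follow by interchanging the roles of $L^\downarrow_v$, $W_{up}$, $A_{up}$ with $L^\uparrow_v$, $W_{down}$, $A_{down}$. Assume $\viwu(v) > k$ with respect to $L'$, and fix an ancestor-free set $A' \subseteq A_v \cap A_{up}$ of size exceeding $k$, each of whose members is visible to $v$ in $L'$; for each $a \in A'$, pick a witness $\ell_a = (u_a, w_a) \in L'$ with $a \in \covr(\ell_a)$ and $v \in \mathrm{in}(\overline{P}_{\ell_a})$.

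The first step will be to show $v \notin W_{up}$. A short case analysis based on the positions of $u_a, w_a$ gives that $w_a$ is a strict descendant of $v$ (it lies below the tail of $a$), and $\ell_a$ is either in $L^\downarrow_v$ (if $u_a \notin U_v$) or a $v$-cross-link (if $u_a \in U_v \setminus \{v\}$, in which case $v \in \mathrm{in}(P_{\ell_a})$ forces $\mathrm{apex}(\ell_a) = v$). If $v$ were in $W_{up}$, then \cref{lemma:up_down_independent_first_phase} combined with \cref{lemma:splitting_decreases_L_up_down} would force $L' \cap L^\downarrow_v = \emptyset$, while splitting every $W^*$-cross-link at its apex (the operation that produces $L'$ from $\mathrm{supp}(x^*)$) would remove $v$ as an inner vertex from every $v$-cross-link in $L'$. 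Both options being blocked yields the contradiction.

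Knowing $v \notin W_{up}$, I will appeal to the fact that $a_v$ was not $\gamma$-up-light with respect to $x$ and $\mathrm{supp}(x^{*,v})$ when $v$ was processed. By \cref{lemma:splitting_cannot_increase_width}, visibility is monotone under splittings, so $A' \subseteq A^{vis}_v(L') \subseteq A^{vis}_v(\mathrm{supp}(x^{*,v}))$. Since no link covers two arcs of any ancestor-free set (the directed coverage of a link consists of up-arcs along one root-path and down-arcs along another), LP feasibility $x(\{\ell\colon a\in\covr(\ell)\}) \geq 1$ summed over $a \in A'$ yields $x(\{\ell\colon \covr(\ell)\cap A'\ne\emptyset\}) \geq |A'| > k$. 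Splitting this total between $L^\downarrow_v$ and its complement and plugging in the failed $\gamma$-up-light inequality will give $(1 + \gamma^{-1})\, x(L^\downarrow_v) > k$, so \eqref{eq:constants_2} yields $x(L^\downarrow_v) > \zeta_2$. Because $v \notin W_{up}$ and the only splittings at $v$ in the algorithm operate on $L^\uparrow_v$ (which is disjoint from $L^\downarrow_v$), \cref{lemma:splitting_decreases_L_up_down} upgrades this to $x^*(L^\downarrow_v) = x(L^\downarrow_v) > \zeta_2$.

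It remains to pin down the orientation of $a_v$. Every $\ell \in L^\downarrow_v$ enters $U_v$ through $v$, so $a_v \in \cov(\ell)$, and the coverage lies in $\covr(\ell)$ iff $a_v$ is an up-arc, in $\covw(\ell)$ iff it is a down-arc. Were $a_v$ an up-arc, then $x^*(\{\ell\colon a_v\in\covr(\ell)\}) \geq x^*(L^\downarrow_v) > \zeta_2 > \zeta_1$ by \eqref{eq:constants_3}, contradicting \eqref{eq:no_covered_arcs} (transferred to $x^*$ via \cref{lemma:splitting_same_coverage}). Hence $a_v$ must be a down-arc, and $x^*(\{\ell\colon a_v\in\covw(\ell)\}) \geq x^*(L^\downarrow_v) > \zeta_2$ certifies that it is $\zeta_2$-heavy. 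I expect the main obstacle to be the careful bookkeeping relating the distinct splitting snapshots $\mathrm{supp}(x)$, $\mathrm{supp}(x^{*,v})$, $\mathrm{supp}(x^*)$, and $L'$ — in particular, propagating the visibility hypothesis on $L'$ back to $\mathrm{supp}(x^{*,v})$ in order to apply the non-$\gamma$-up-light inequality, and ensuring that the value $x(L^\downarrow_v)$ is preserved all the way to $x^*$ whenever $v \notin W_{up}$.
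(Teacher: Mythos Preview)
Your proof is correct and follows essentially the same approach as the paper: derive $v\notin W_{up}$, use the failed $\gamma$-up-light inequality together with the ancestor-free set of visible up-arcs to force $x(L^\downarrow_v)>\zeta_2$, transfer this to $x^*$ via \cref{lemma:splitting_decreases_L_up_down}, and finish by observing that $a_v$ an up-arc would contradict \eqref{eq:no_covered_arcs}. The only cosmetic differences are that you re-derive $v\notin W_{up}$ by an explicit case analysis on the witness link $\ell_a$ (the paper simply cites \cref{lemma:viwup_W}), and you obtain $x(L^\downarrow_v)>\zeta_2$ directly rather than by the paper's short contradiction---both yield the same chain $k<(1+\gamma^{-1})\,x(L^\downarrow_v)$ combined with \eqref{eq:constants_2}.
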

\begin{proof} We only prove the first statement, the second one can be derived analogously. Let $v\in V\setminus \{r\}$ with $\viwu(v)>k$. By \cref{lemma:viwup_W}, $v\notin W_{up}$. We begin by showing the following claim.
	\begin{claim}
		$x(L^\downarrow_v)>\zeta_2$.
	\end{claim}
	\begin{proof}[Proof of claim]
	
	Assume towards a contradiction that $x(L^\downarrow_v)\le \zeta_2$. 

As $v\notin W_{up}$, we know that
\[x(L^\downarrow_v)> \gamma\cdot x(\{\ell\in L\colon \covr(\ell)\cap A_{up}\cap A^{vis}_v(\mathrm{supp}(x^{*,v}))\ne \emptyset\}\setminus L^\downarrow_v).\]
As $\viwu(v)>k$, let $a_1,\dots,a_{k+1}\in A_v$ be ancestor-free up-arcs that are visible for $v$ with respect to $L'$. As $L'$ arises from $\mathrm{supp}(x^{*,v})$ by splitting links, $a_1,\dots, a_{k+1}$ are also visible for $v$ with respect to $\mathrm{supp}(x^{*,v})$. This implies
\[x(L^\downarrow_v)> \gamma\cdot x(\{\ell\in L\colon \covr(\ell)\cap \{a_1,\dots,a_{k+1}\}\ne \emptyset\}\setminus L^\downarrow_v),\]
which yields
\begin{align*}x(\{\ell\in L\colon \covr(\ell)\cap \{a_1,\dots,a_{k+1}\}\ne \emptyset\})&\le x(\{\ell\in L\colon \covr(\ell)\cap \{a_1,\dots,a_{k+1}\}\ne \emptyset\}\setminus L^\downarrow_v)+x(L^\downarrow_v)\\
	&< (1+\gamma^{-1})\cdot x(L^\downarrow_v)\le (1+\gamma^{-1})\cdot\zeta_2.\end{align*}
On the other hand, as the arcs $a_1,\dots,a_{k+1}$ are ancestor-free, there is no link $\ell$ covering two of them. Using that $x$ is a solution to \eqref{eq:WDTAP_LP}, this implies
\[k+1\le \sum_{i=1}^{k+1} x(\{\ell\colon a_i\in \covr(\ell)\})=x(\{\ell\in L\colon \covr(\ell)\cap \{a_1,\dots,a_{k+1}\}\ne \emptyset\})\le (1+\gamma^{-1})\cdot\zeta_2\stackrel{\eqref{eq:constants_2}}{<} k+1,\] a contradiction.
\end{proof}
As $v\notin W_{up}$, the only splitting at $v$ that we might perform in the course of \cref{algorithm:light_link_splitting} is $\sigma_{v,L^\uparrow_v}$ (in case $v\in W_{down}$). Using that $L^\downarrow_v\cap L^\uparrow_v=\emptyset$, by \cref{lemma:splitting_decreases_L_up_down}, we can infer that $x^*(L^\downarrow_v)=x(L^\downarrow_v)>\zeta_2$.
Finally, we observe that $a_v$ must be a down-arc. Indeed, if $a_v$ were an up-arc, then every link in $L^\downarrow_v$ would cover $a_v$, implying that $a_v$ would be $\zeta_2$-covered for $x$ (and $x^*$). However, this contradicts \eqref{eq:no_covered_arcs} and \eqref{eq:constants_3}.  Hence, $a_v$ is a $\zeta_2$-heavy down-arc. 
\end{proof}
This concludes the proof of \cref{theorem:first_phase_splitting}.

\section{Components and cores: handling heavy coverage in the wrong direction\label{sec:components_and_cores}}
For this section, we again fix a rooted WDTAP instance $(T,L,c,r)$ with cost ratio at most $\Delta$ and a solution $x$ to \eqref{eq:WDTAP_LP} satisfying \eqref{eq:no_covered_arcs}. Moreover, let $\sigma^*$, $x^*$ and $W^*$ be as given by \cref{theorem:first_phase_splitting}. The goal of this section is to prove \cref{theorem:second_phase_splitting}, which allows us to establish strong structural properties with respect to the $\zeta_2$-heavy arcs (for $x^*$). To state \cref{theorem:second_phase_splitting}, we require the following definitions.
\begin{definition}
An \emph{up-component} (\emph{down-component}) is a (weakly) connected component of the digraph $(V,A_{up})$ ($(V,A_{down})$). We denote the collection of up- and down-components by $\mathcal{C}_{up}$ and $\mathcal{C}_{down}$, respectively, and we let $\mathcal{C}\coloneqq \mathcal{C}_{up}\cup\mathcal{C}_{down}$. We say that $C$ is a \emph{component} if $C$ is an up- or a down-component, i.e., $C\in\mathcal{C}$.
For a component $C$, we let the \emph{root $r_C$ of $C$} be the vertex of $C$ closest to the root $r$ of $T$. 
\end{definition}
\begin{definition}
Let $C=(V',A')$ be a component. We call an arc $a'\in A'$ a \emph{base arc} if $a'$ is $\zeta_2$-heavy (w.r.t. $x^*$) and moreover, no arc of $C$ below $a'$ has this property. We denote the set of base arcs of $C$ by $B_C$.
\end{definition}
\begin{proposition}
	For every component $C$, $B_C$ is ancestor-free.\hfill\textsquare
\end{proposition}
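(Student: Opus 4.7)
The plan is a short proof by contradiction that merely unfolds the definitions. Suppose for contradiction that $B_C$ is not ancestor-free, i.e., there exist $a, a' \in B_C$ with $a'$ lying on the $\mathrm{apex}(a)$-$r$ path in $T$. Then $a \ne a'$ and $a'$ is a strict ancestor of $a$ in $T$; equivalently, $a$ lies strictly below $a'$ in the tree $T$. By the definition of $B_C$, the arcs $a$ and $a'$ both belong to $C$ and are both $\zeta_2$-heavy with respect to $x^*$. In particular, $a$ is an arc of $C$ lying below $a'$ that is itself $\zeta_2$-heavy, which directly contradicts the defining property of $a'$ as a base arc, namely that no arc of $C$ below $a'$ is $\zeta_2$-heavy. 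Hence no such pair $(a,a')$ exists, and $B_C$ is ancestor-free.

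The only potentially subtle point is the meaning of ``below'' in the base-arc definition: it must refer to the descendant order of $T$ restricted to $A(C)$, not to some purely internal notion within $C$. This is consistent with the use of ``ancestor-free'' (which is defined in terms of paths in $T$), and indeed for a component $C \in \mathcal{C}_{up}\cup\mathcal{C}_{down}$ the ancestor/descendant relation between arcs of $C$ is inherited from that of $T$, since weak connectivity of $C$ forces the $T$-path between any two arcs of $C$ to consist entirely of arcs of $C$ of the same orientation. Once this is understood, the argument requires no further work; the statement is a one-line consequence of the definition of base arcs, and there is no serious obstacle.
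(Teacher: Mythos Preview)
Your proof is correct and is exactly the unfolding-of-definitions argument the paper has in mind; the paper in fact omits the proof entirely (the proposition is stated with a closing box and no argument), treating it as immediate from the definition of base arcs.
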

\begin{corollary}\label{cor:base_arcs_disjoint_coverages}
Let $C\in\mathcal{C}$ and let $b,b'\in B_C$ with $b\ne b'$. Then $\{\ell\in L\colon b\in\covw(\ell)\}\cap\{\ell\in L\colon b'\in\covw(\ell)\}=\emptyset$.
\end{corollary}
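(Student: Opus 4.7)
The plan is to argue by contradiction, leveraging the ancestor-freeness of $B_C$ (the preceding proposition) together with a structural description of the wrong-direction coverage of a single link. Suppose $\ell = (u,v) \in L$ satisfies $b, b' \in \covw(\ell)$. Since $b, b' \in B_C$ lie in the same component, they are either both up-arcs (if $C \in \mathcal{C}_{up}$) or both down-arcs (if $C \in \mathcal{C}_{down}$). I will treat the two cases symmetrically.

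Recall from the structural decomposition of $\covr$ and $\covw$ given in the Preliminaries: $\ell = (u,v)$ covers up-arcs along the $v$-$\mathrm{apex}(\ell)$ subpath in the right direction and up-arcs along the $u$-$\mathrm{apex}(\ell)$ subpath in the wrong direction, while down-arcs behave symmetrically. So if $C \in \mathcal{C}_{up}$, both up-arcs $b, b' \in \covw(\ell)$ must lie on the $u$-$\mathrm{apex}(\ell)$ subpath, which is a single root-ward path. In particular, $b$ and $b'$ are comparable on this path; WLOG $b$ lies strictly above $b'$ on it. Then both endpoints of $b$ are (strict) ancestors of $\mathrm{apex}(b')$, so $b$ appears on the $\mathrm{apex}(b')$-to-$r$ path in $T$, contradicting the ancestor-freeness of $B_C$. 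The case $C \in \mathcal{C}_{down}$ is handled identically: both $b, b'$ lie on the $\mathrm{apex}(\ell)$-$v$ subpath (now a leaf-ward path), one strictly above the other, and the same ancestor-free contradiction is obtained.

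I do not expect a real obstacle here; the only care needed is to correctly identify which portion of $P_\ell$ carries the wrong-direction coverage for each orientation type, and to verify that two arcs on a common root-ward (respectively leaf-ward) path in $T$ are always in the ancestor relation captured by the definition of ancestor-freeness. Since the proposition stating that $B_C$ is ancestor-free is available without proof and follows directly from the minimality in the definition of $B_C$ combined with the fact that $C$ is a subtree of $T$ whose arcs are all oriented the same way, the entire argument reduces to these two observations.
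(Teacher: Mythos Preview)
Your proof is correct and follows essentially the same approach as the paper: both argue that the up-arcs (respectively down-arcs) in $\covw(\ell)$ all lie on a single root-ward path in $T$ and hence are pairwise in ancestral relation, which contradicts the ancestor-freeness of $B_C$. You spell out explicitly which subpath of $P_\ell$ carries the wrong-direction coverage, whereas the paper states the ancestral relationship directly, but the argument is the same.
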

\begin{proof}
Let $\ell\in L$. All arcs in $\covw(\ell)\cap A_{up}$, as well as all arcs in $\covw(\ell)\cap A_{down}$, share a pairwise ancestral relationship. Then fact that $B_C$ is an ancestor-free set are up-arcs, if $C\in\mathcal{C}_{up}$, and an ancestor-free set of down-arcs, if $C\in\mathcal{C}_{down}$, concludes the proof. 
\end{proof}
\begin{definition}
Let $C$ be a component. The \emph{core $\mathring{C}$ of $C$} consists of the union of the paths connecting the (lower vertices of) the base arcs to $r_C$, if $B_C\ne\emptyset$, and is empty otherwise.  
Let $\mathring{\mathcal{C}}_{up}\coloneqq \{\mathring{C}\colon C\in\mathcal{C}_{up},B_C\ne \emptyset\}$, $\mathring{\mathcal{C}}_{down}\coloneqq \{\mathring{C}\colon C\in\mathcal{C}_{down},B_C\ne \emptyset\}$ and
$\mathring{\mathcal{C}}\coloneqq \mathring{\mathcal{C}}_{up}\cup  \mathring{\mathcal{C}}_{down}$ be the collection of all non-empty cores.

For a core $\mathring{C}\in \mathring{\mathcal{C}}$, we call $B_{\mathring{C}}\coloneqq B_C$ the \emph{set of base arcs of $\mathring{C}$}.
\end{definition}
\begin{proposition}\label{lemma:heavy_arcs_in_core}
Let $v\in V\setminus\{r\}$ such that $a_v$ is $\zeta_2$-heavy, and let $C\in\mathcal{C}$ be the component containing $a_v$. Then $a_v$ is contained in $\mathring{C}$.
\end{proposition}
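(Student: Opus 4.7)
The plan is to argue by cases based on whether $a_v$ is itself a base arc of $C$. Before splitting into cases, I would first record a simple tree-geometry observation: since the arc $a_v$ lies in $C$, both endpoints $v$ and its parent $p$ are vertices of $C$. As $p$ is strictly above $v$ in $T$ and $r_C$ is the vertex of $C$ closest to the root $r$, we must have $r_C = p$ or $r_C$ a strict ancestor of $p$. Consequently, for any $w \in U_v$ (i.e., any descendant of $v$, including $v$ itself), the $T$-path from $w$ to $r_C$ passes through the arc $a_v$.

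If $a_v$ is a base arc of $C$, then $a_v \in B_{\mathring{C}}$ by definition, and its lower endpoint (the endpoint of $a_v$ farther from $r$) is exactly $v$. The $T$-path from $v$ to $r_C$ begins with $a_v$, so $a_v \in \mathring{C}$ by the definition of the core.

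Otherwise, $a_v$ is $\zeta_2$-heavy and lies in $C$ but is not a base arc. By the definition of $B_C$, there must then exist a $\zeta_2$-heavy arc of $C$ lying strictly below $a_v$ in $T$. I would consider the set $S$ of all $\zeta_2$-heavy arcs of $C$ whose lower endpoint lies in $U_v \setminus \{v\}$; by the previous sentence $S \ne \emptyset$. Take a $T$-minimal element $b \in S$ (i.e., one whose lower endpoint is not a strict ancestor of the lower endpoint of any other arc in $S$). The minimality guarantees that no arc of $C$ lying below $b$ is $\zeta_2$-heavy, so $b$ satisfies the definition of a base arc and hence $b \in B_C = B_{\mathring{C}}$. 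Because the lower endpoint of $b$ lies in $U_v$, the geometric observation from the first paragraph shows that the $T$-path from this lower endpoint to $r_C$ passes through $a_v$. This exhibits $a_v$ as an arc on one of the paths whose union constitutes $\mathring{C}$, and therefore $a_v \in \mathring{C}$.

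There is essentially no substantial obstacle here: the statement is a direct unpacking of the definitions of base arc, core, and $r_C$. The only point requiring a moment of care is the tree-geometry observation that $r_C$ is a weak ancestor of $p$, which follows immediately from $p$ being a vertex of $C$ together with the definition of $r_C$ as the vertex of $C$ closest to $r$.
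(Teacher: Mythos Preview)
Your proposal is correct and follows the same approach as the paper's proof: case on whether $a_v$ is itself a base arc, and if not, exhibit a base arc of $C$ lying below $a_v$ so that the path from its bottom vertex to $r_C$ passes through $a_v$. The paper's proof is a two-line sketch that simply asserts the existence of such a base arc, whereas you spell out the minimality argument and the tree-geometry observation explicitly; there is no substantive difference.
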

\begin{proof}
This is clear if $a_v\in B_C$. Otherwise, there exists a base arc $b\in B_C$ such that $a_v$ lies on the path from the bottom vertex of $b$ to $r_C$. Hence, $a_v$ is contained in $\mathring{C}$.
\end{proof}
We are now ready to state the main theorem of this section.
\begin{theorem}\label{theorem:second_phase_splitting}
We can, in polynomial time, compute a splitting $\sigma^{**}$ of $L$ and $x^{**}\coloneqq \mathrm{split}(x^*,\sigma^{**})=\mathrm{split}(x,\sigma^{**}\circ \sigma^*)$ such that the following properties hold:
\begin{enumerate}[label=(\thetheorem.\arabic*)]
	\item \label{property:cost_bound} $c(x^{**}) \le (1+\epsilon)^2\cdot c(x)$
	\item\label{property:not_right_and_wrong_direction} Let $C\in\mathring{\mathcal{C}}$. There is no $\ell\in\mathrm{supp}(x^{**})$ with $\covr(\ell)\cap A(C)\ne \emptyset$ and $\covw(\ell)\cap A(C)\ne \emptyset$.
	\item \label{property:no_entering_at_rC} Let $C\in\mathring{\mathcal{C}}_{up}$. Then $\mathrm{supp}(x^{**})\cap L^\uparrow_{r_C}=\emptyset$ and for every $\ell\in \mathrm{supp}(x^{**})\cap L^\downarrow_{r_C}$, $\cov(\ell)\cap A(C)=\emptyset$. \\
	Let $C\in\mathring{\mathcal{C}}_{down}$. Then $\mathrm{supp}(x^{**})\cap L^\downarrow_{r_C}=\emptyset$ and for every $\ell\in \mathrm{supp}(x^{**})\cap L^\uparrow_{r_C}$, $\cov(\ell)\cap A(C)=\emptyset$.
	\item \label{property:no_covering_two_cores} Let $C\in\mathring{\mathcal{C}}$ and let $\ell\in\mathrm{supp}(x^{**})$. If $\covr(\ell)$ contains an arc of $C$ incident to $r_C$, then $r_C$ is an endpoint of $\ell$.
\end{enumerate}
\end{theorem}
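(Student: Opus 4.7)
The plan is to build $\sigma^{**}$ by composing two rounds of splittings per core. For an up-core $\mathring{C}$ with root $r_C$, the first round applies $\sigma_{r_C,L^{(1)}_{\mathring{C}}}$ with
\[L^{(1)}_{\mathring{C}}\coloneqq (L^\uparrow_{r_C}\cap\mathrm{supp}(x^*))\cup\{\ell\in\mathrm{supp}(x^*):r_C\in\mathrm{in}(P_\ell)\text{ and }\cov(\ell)\cap A(\mathring{C})\ne\emptyset\},\]
and the second round, for each $w\in V(\mathring{C})\setminus\{r_C\}$, applies $\sigma_{w,L^{(2)}_w}$ with
\[L^{(2)}_w\coloneqq\{\ell\in\mathrm{supp}(x^*):\mathrm{apex}(\ell)=w,\;\covr(\ell)\cap A(\mathring{C})\ne\emptyset,\;\covw(\ell)\cap A(\mathring{C})\ne\emptyset\}.\]
Down-cores are processed symmetrically, with the roles of ``up'' and ``down'' exchanged. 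Distinct up-cores are vertex-disjoint (as are distinct down-cores) since they arise from disjoint connected components, so the different batches compose cleanly.

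To verify the properties, the key structural observation is that if $\ell$ covers arcs of $\mathring{C}$ in both directions then $\mathrm{apex}(\ell)\in V(\mathring{C})$: the core arc on the $u$-apex side and the one on the apex-$v$ side both lie on chains of core up-arcs that reach $r_C$ entirely inside $\mathring{C}$, so their least common ancestor lies on $\mathring{C}$ as well. Hence every such $\ell$ is split at its apex (at $r_C$ in the first round, or at an interior vertex in the second), establishing \ref{property:not_right_and_wrong_direction}. For \ref{property:no_entering_at_rC}, every link in $L^\uparrow_{r_C}\cap\mathrm{supp}(x^*)$ is split at $r_C$ and the resulting shadows have $r_C$ as an endpoint, so they leave $L^\uparrow_{r_C}$; any surviving $\ell\in L^\downarrow_{r_C}\cap\mathrm{supp}(x^{**})$ was not split and therefore satisfies $\cov(\ell)\cap A(\mathring{C})=\emptyset$ (a case analysis of how shadows of links split in other cores can arrive in $L^\downarrow_{r_C}$ uses the vertex-disjointness of distinct up-cores to exclude the bad scenarios). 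For \ref{property:no_covering_two_cores}, a link covering an $A(\mathring{C})$-arc incident to $r_C$ in $\covr$ passes through $r_C$; if $r_C\in\mathrm{in}(P_\ell)$ then $\ell\in L^{(1)}_{\mathring{C}}$ and is split at $r_C$, so every shadow in $\mathrm{supp}(x^{**})$ has $r_C$ as an endpoint.

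The main obstacle is the cost bound \ref{property:cost_bound}. By \cref{lemma:cost_increase_concatenate_splittings} the additive cost is at most $\Delta\sum_{\mathring{C}}\bigl(x^*(L^{(1)}_{\mathring{C}})+\sum_w x^*(L^{(2)}_w)\bigr)$. I would bound the per-core contribution by $O(|B_C|\cdot\zeta_1)$. First, $x^*(L^\uparrow_{r_C})<\zeta_1$, because every such link covers the down-arc $a_{r_C}$ in the right direction, $a_{r_C}$ is not $\zeta_1$-covered under $x$ by \eqref{eq:no_covered_arcs}, and this coverage is preserved by splittings (\cref{lemma:splitting_same_coverage}). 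Second, every remaining link in $L^{(1)}_{\mathring{C}}$ covers a ``top'' arc of the core (a core arc incident to $r_C$) in either $\covr$ or $\covw$; the number of top arcs is at most $|B_C|$, so the $\covr$-mass contributes at most $|B_C|\cdot\zeta_1$, while the $\covw$-mass is charged along the chain of core arcs down to a $\zeta_2$-heavy base arc. The round-two mass $\sum_w x^*(L^{(2)}_w)$ is handled analogously: each such link has a core arc in $\covw$ on its $u$-apex side, which is traced down to a base arc of $\mathring{C}$. Using \cref{cor:base_arcs_disjoint_coverages} to rule out double-counting and $\zeta_2\sum_{\mathring{C}}|B_C|\le c(x^*)$, the total cost increase is at most $\frac{2\Delta\zeta_1}{\zeta_2}c(x^*)+O(\epsilon)\cdot c(x^*)$, which by \eqref{eq:constants_3}--\eqref{eq:constants_4} is at most $\epsilon\cdot c(x^*)$. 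Combined with \ref{item:first_phase_cost_bound} this gives $c(x^{**})\le(1+\epsilon)c(x^*)\le(1+\epsilon)^2c(x)$, as desired.
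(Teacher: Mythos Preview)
Your construction has a genuine correctness gap. By restricting the splitting sets $L^{(1)}_{\mathring{C}}$ and $L^{(2)}_w$ to $\mathrm{supp}(x^*)$, you only ever split links that were present in the original support; shadows created while processing one core are never touched when you process another core. The claim that ``different batches compose cleanly'' because distinct up-cores are vertex-disjoint is not enough: a single link in $\mathrm{supp}(x^*)$ can interact with two different up-cores. Concretely, take an up-core $C'$ shaped like a Y (root $r$, up-arcs $(w_1,r),(w_2,w_1),(w_3,w_1)$, both lower arcs $\zeta_2$-heavy), a down-arc $(w_2,r_C)$, and an up-core $C$ below $r_C$ with a $\zeta_2$-heavy arc $(v_1,r_C)$. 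The link $\ell=(w_3,v_1)\in\mathrm{supp}(x^*)$ lies in both $L^{(1)}_{\mathring{C}}$ and $L^{(2)}_{w_1}$ (for $C'$). If $C$ is processed first, $\ell$ is split at $r_C$ into $(w_3,r_C)$ and $(r_C,v_1)$; since $(w_3,r_C)\notin\mathrm{supp}(x^*)$ it is not in $L^{(2)}_{w_1}$ and is never split at $w_1$. But $(w_3,r_C)$ still has $(w_2,w_1)\in\covr$ and $(w_3,w_1)\in\covw$, so it covers $C'$ in both directions and \ref{property:not_right_and_wrong_direction} fails. Processing $C'$ first instead leaves $(w_1,v_1)\in\mathrm{supp}(x^{**})\cap L^\downarrow_{r_C}$ with $\cov\cap A(C)\ne\emptyset$, violating \ref{property:no_entering_at_rC} (and \ref{property:no_covering_two_cores}). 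The paper avoids this by defining each splitting set as $\{\ell\in L:a\in\covr(\ell)\}$ for a fixed arc $a$, so that the set automatically contains any later-created shadow with the same coverage, and then uses \cref{lemma:later_splits_dont_destroy_earlier_splits} to show the desired endpoint property persists through subsequent splits.

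Your cost argument for round two is also incomplete. Saying that the $\covw$ core arc of a link in $L^{(2)}_w$ ``traces down to a base arc'' does not bound anything: the link need not cover that base arc in $\covw$, and arbitrarily many links (with different apices $w$ along a trunk) can trace to the same base arc, so \cref{cor:base_arcs_disjoint_coverages} gives no control over $\sum_w x^*(L^{(2)}_w)$. What is actually needed is the structural fact the paper proves in \cref{lemma:cover_core_in_both_directions}: any link covering a core in both directions must have some trunk's top arc $a_i^C$ or sibling arc $s_i^C$ in $\covr$. There are only $O(t_C)$ such arcs, each with $\covr$-mass below $\zeta_1$ by \eqref{eq:no_covered_arcs_x_star}, which yields the required $O(t_C\,\zeta_1)$ bound per core. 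Without this trunk-decomposition argument (or the equivalent observation that such links must have their apex at a branching vertex of the core and cover an incident core arc in $\covr$), the cost bound does not go through.
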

Before we move on to proving this theorem, we state an application of it.
\begin{corollary}\label{cor:apex_in_core}
Let $\sigma$ be a splitting of $L$ and let $\ell\in\mathrm{supp}(\mathrm{split}(x^{**},\sigma))$. Let $C\in\mathring{\mathcal{C}}$ such that $\cov(\ell)\cap A(C)\ne \emptyset$. Then $\mathrm{apex}(\ell)\in V(C)$.
\end{corollary}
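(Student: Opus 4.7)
The plan is to unwind the splitting $\sigma$, use property~\ref{property:no_entering_at_rC} to locate the pre-split link inside $T_{r_C}$, and then exploit the fact that $V(C)$ is closed under taking parents within $U_{r_C}$.

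First I would write $\ell$ as a shadow of some $\ell'\in\mathrm{supp}(x^{**})$ with $\ell\in\sigma(\ell')$. Then $P_\ell\subseteq P_{\ell'}$ and $\cov(\ell)\subseteq\cov(\ell')$, so in particular $\cov(\ell')\cap A(C)\ne\emptyset$. Assume $C\in\mathring{\mathcal{C}}_{up}$; the down-case is symmetric. Property~\ref{property:no_entering_at_rC} gives $\mathrm{supp}(x^{**})\cap L^\uparrow_{r_C}=\emptyset$ and rules out $\ell'\in L^\downarrow_{r_C}$, since such links satisfy $\cov(\ell')\cap A(C)=\emptyset$. A short case analysis on the endpoints of $\ell'$ -- any configuration other than ``both endpoints in $U_{r_C}$'' would give $A(P_{\ell'})\cap A_{r_C}=\emptyset$, contradicting $\cov(\ell')\cap A(C)\ne\emptyset$ -- then forces both endpoints of $\ell'$ into $U_{r_C}$. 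Hence $P_{\ell'}\subseteq T_{r_C}$, and in particular $\mathrm{apex}(\ell)\in V(P_\ell)\subseteq V(P_{\ell'})\subseteq U_{r_C}$.

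Next, I would observe that $V(C)$ is closed upward inside $U_{r_C}$: every $v\in V(C)\setminus\{r_C\}$ lies on one of the defining paths of the core from a base arc up to $r_C$, and so does its parent. Iterating, any descendant $x$ of some $v\in U_{r_C}$ with $x\in V(C)$ forces the entire $x$-$r_C$-path, and in particular $v$ itself, into $V(C)$. Contrapositively, for every $v\in U_{r_C}\setminus V(C)$ we have $U_v\cap V(C)=\emptyset$.

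Applying this to $v=\mathrm{apex}(\ell)$ finishes the argument: if $\mathrm{apex}(\ell)\notin V(C)$, then $U_{\mathrm{apex}(\ell)}\cap V(C)=\emptyset$, but $V(P_\ell)\subseteq U_{\mathrm{apex}(\ell)}$, which would imply $P_\ell$ contains no arc with both endpoints in $V(C)$ and hence no arc of $A(C)$, contradicting $\cov(\ell)\cap A(C)\ne\emptyset$. The only nontrivial step is the endpoint case analysis in the first paragraph, where one has to carefully combine property~\ref{property:no_entering_at_rC} with the tree topology; the remainder is just the structural definition of the core.
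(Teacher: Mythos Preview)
Your proof is correct and follows essentially the same approach as the paper: lift $\ell$ to $\ell'\in\mathrm{supp}(x^{**})$, use property~\ref{property:no_entering_at_rC} to pin both endpoints of $\ell'$ (hence of $\ell$) inside $U_{r_C}$, and then conclude via the tree structure of the core. The only cosmetic difference is in the last step: the paper argues directly that $\mathrm{apex}(\ell)$ lies on the $\mathrm{apex}(a)$--$r_C$ path for some $a\in\cov(\ell)\cap A(C)$ and hence in $V(C)$, whereas you phrase the same fact contrapositively as ``$V(C)$ is closed upward inside $U_{r_C}$''.
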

\begin{proof}
We only consider the case where $C\in\mathring{\mathcal{C}}_{up}$, the case $C\in\mathring{\mathcal{C}}_{down}$ can be handled analogously. By \ref{property:no_entering_at_rC} and \cref{lemma:splitting_decreases_L_up_down}, we know that $\mathrm{supp}(\mathrm{split}(x^{**},\sigma))\cap L^\uparrow_{r_C}=\emptyset$, so $\ell\notin L^\uparrow_{r_C}$. We further show that $\ell\notin L^\downarrow_{r_C}$. As $\ell\in \mathrm{supp}(\mathrm{split}(x^{**},\sigma))$, there is $\ell'\in \mathrm{supp}(x^{**})$ with $\ell\in\sigma(\ell')$. Then $\ell$ is a shadow of $\ell'$, implying that also $\cov(\ell')\cap A(C)\ne \emptyset$. By \ref{property:no_entering_at_rC}, $\ell'$ is neither contained in $L^\uparrow_{r_C}$ nor in $L^\downarrow_{r_C}$, hence, both endpoints of $\ell'$ must be contained in $U_{r_C}$ (because also $\cov(\ell')\cap A_{r_C}\supseteq \cov(\ell')\cap A(C)\ne \emptyset$). As $\ell$ is a shadow of $\ell'$, both endpoints of $\ell$ are contained in $U_{r_C}$ as well. As there is $a\in A(C)\cap \cov(\ell)$, $\mathrm{apex}(\ell)$ must be an ancestor of $\mathrm{apex}(a)$, but a descendant of $r_C$. Hence, $\mathrm{apex}(\ell)\in V(C)$.
\end{proof}
The rest of this section is dedicated to the proof of \cref{theorem:second_phase_splitting}.
We first conduct a structural analysis of links in $\mathrm{supp}(x^*)$ violating properties \ref{property:not_right_and_wrong_direction}, \ref{property:no_entering_at_rC} and \ref{property:no_covering_two_cores}.
Then, we describe a link splitting procedure (\cref{alg:core_splitting}) designed to make sure that the support of the resulting solution $x^{**}$ does not contain any of the ``problematic links''.
Finally, we explain how to charge the cost of the splitting against the total costs of the links ``heavily covering base arcs in the wrong direction''.
\subsection{Structural analysis of problematic links}
For every $C\in \mathring{\mathcal{C}}$, fix an ordering $B_C=\{b^C_1,\dots,b^C_{t_C}\}$ of its base arcs. 
We introduce a decomposition of the cores into pairwise arc-disjoint paths, that will allow us to characterize the structure of the ``problematic links'' and guide our charging procedure when splitting them.
\begin{definition}[trunk decomposition]
	Let $C\in \mathring{\mathcal{C}}$. The \emph{trunk decomposition of $C$} is the decomposition of $C$ into arc-disjoint paths $P^C_1,\dots,P^C_{t_C}$ defined as follows:
	\begin{itemize}
		\item $P^C_1$ is the path connecting the bottom vertex of $b^C_1$ to $r_C$.
		\item For $i=2,\dots,t_C$, let $P'_i$ be the path connecting the bottom vertex of $b^C_i$ to $r_C$ and let $P^C_i$ be the prefix of $P'_i$ ending at the first vertex in $V(P'_i)\cap \bigcup_{j=1}^{i-1} V(P^C_j)$.
	\end{itemize}
	The paths $(P^C_i)_{i=1}^{t_C}$ are called the \emph{trunks} of the trunk decomposition.
\end{definition}
Note that we can compute $\mathring{\mathcal{C}}_{up}$, $\mathring{\mathcal{C}}_{down}$ and a trunk decomposition of every core, in polynomial time: following the definition, each step can be done in linear time and the total number of base arcs is bounded by the total number of arcs.

\begin{figure}[h]
\centering
\begin{minipage}[t]{0.4\linewidth}
    \centering
\begin{tikzpicture}[mynode/.style={draw, fill, circle, minimum size = 2mm, inner sep = 0pt}, myarc/.style={thick, arrows = {-Stealth[scale=1.5]}},mylink/.style={thick, arrows = {-Stealth[scale=1.5]}, black!70!white, dashed}, covered/.style={line width = 2mm, draw opacity = 0.4}, myboldarc/.style={line width = 1.2mm, draw opacity = 0.75, arrows = {-Stealth[scale=0.5]}}]

\node           at ( 0.5, 4) {$r_C$};
\node[mynode] (v1) at (0,4){};
\node[mynode] (v2) at (0,3){};
\node[mynode] (v3) at (1,3){};
\node[mynode] (v4) at (-1,2){};
\node[mynode] (v5) at (0,2){};
\node[mynode] (v6) at (1,2){};
\node[mynode] (v7) at (2,2){};
\node[mynode] (v8) at (-1,1){};
\node[mynode] (v9) at (1,1){};
\node[mynode] (v10) at (-2,0){};
\node[mynode] (v11) at (2,0){};

\draw[myboldarc] (v2) to (v1); 
\draw[myarc] (v3) to (v1); 
\draw[myarc] (v4) to (v2);
\draw[myarc] (v5) to (v2);
\draw[myarc] (v6) to (v3);
\draw[myboldarc] (v7) to node[midway, right=5pt] {$\color{red!70!black} b^C_3$} (v3); 
\draw[myboldarc] (v8) to node[midway, left=5pt] {$\color{blue} b^C_1$} (v5); 
\draw[myarc] (v9) to (v5);
\draw[myarc] (v10) to (v8);
\draw[myboldarc] (v11) to node[midway, right=5pt] {$\color{green!70!black} b^C_2$} (v9); 
\end{tikzpicture}
\label{fig:trunks}

\end{minipage}
\begin{minipage}[t]{0.4\textwidth}
    
\centering
\begin{tikzpicture}[mynode/.style={draw, fill, circle, minimum size = 2mm, inner sep = 0pt}, myarc/.style={thick, arrows = {-Stealth[scale=1.5]}},mylink/.style={thick, arrows = {-Stealth[scale=1.5]}, black!70!white, dashed}, covered/.style={line width = 2mm, draw opacity = 0.25}]

\node[anchor=west]           at ( 0.3, 4) {$r_C=v_1^C=v_3^C$};
\node[mynode] (v1) at (0,4){};
\node[mynode] (v2) at (0,3){};
\node[mynode] (v3) at (1,3){};
\node[mynode] (v5) at (0,2){};
\node at (0.6,2){$v_2^C$};
\node[mynode] (v7) at (2,2){};
\node[mynode] (v8) at (-1,1){};
\node[mynode] (v9) at (1,1){};
\node[mynode] (v11) at (2,0){};

\draw[myarc] (v2) to (v1); 
\draw[myarc] (v3) to (v1); 
\draw[myarc] (v5) to (v2);
\draw[myarc] (v7) to node[midway, right=5pt] {$\color{red!70!black} b^C_3$} (v3); 
\draw[covered,red] (v7) to (v3);
\draw[covered,red] (v3) to (v1);
\draw[myarc] (v8) to node[midway, left=5pt] {$\color{blue} b^C_1$} (v5); 
\draw[covered,blue] (v8) to (v5);
\draw[covered,blue] (v5) to (v2);
\draw[covered,blue] (v2) to (v1);
\draw[myarc] (v9) to (v5);
\draw[myarc] (v11) to node[midway, right=5pt] {$\color{green!70!black} b^C_2$} (v9); 
\draw[covered,green!70!black] (v11) to (v9);
\draw[covered,green!70!black] (v9) to (v5);

\end{tikzpicture}
\label{fig:trunks}
\end{minipage}

\caption{An up-component $C$ with root $r_C$ is shown on the left. The $\zeta_2$-heavy arcs are drawn in bold with $b_1^C$, $b_2^C$ and $b_3^C$ being the base arcs. The core of $C$ is shown on the right, together with its trunk decomposition, indicated by colors. Here, $P_1^C$ (blue) is the parent trunk of $P_2^C$ (green). The sibling arc of $P_2^C$ is $b_1^C$.}

\end{figure}
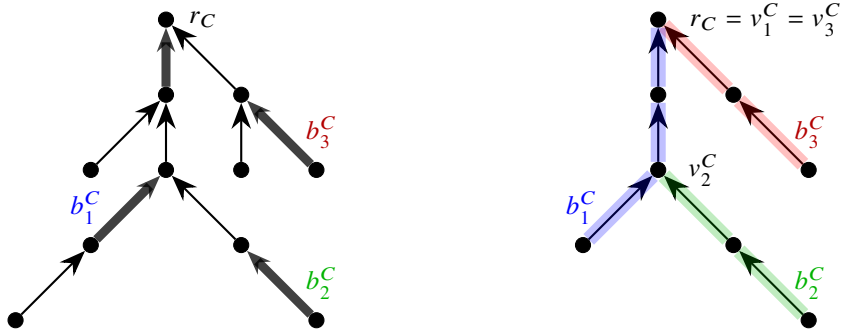

\begin{definition}
For a trunk $P_i^C$, we denote its top endpoint by $v_i^C$ and its top arc by $a_i^C$.
\end{definition}
\begin{definition}
Let $C\in \mathring{\mathcal{C}}$ and let $(P^C_i)_{i=1}^{t_C}$ be the trunks of the trunk decomposition of $C$. For $i\in\{1,\dots,t_C\}$ with $v^C_i\neq r_C$, we define the \emph{parent trunk of $P^C_i$} to be the trunk $P^C_j$ containing $a_{v^C_i}$. Note that as $B_C$ is ancestor-free, $v^C_i$ cannot be the bottom endpoint of $P^C_j$. We further call the arc of $P^C_j$ connecting to $v^C_i$ from below the \emph{sibling arc of $P^C_i$} and denote it by $s^C_i$.
\end{definition}
We are now ready to analyze the structure of links that violate the condition in \ref{property:not_right_and_wrong_direction}.
\begin{proposition}\label{lemma:cover_core_in_both_directions}
Let $C\in\mathring{\mathcal{C}}$ and let $\ell\in L$ with $\covr(\ell)\cap A(C)\ne\emptyset$ and $\covw(\ell)\cap A(C)\ne\emptyset$. Then:
\begin{enumerate}[label=(\thetheorem.\arabic*)]
	\item \label{property:neither_up_nor_down} $\ell$ is neither an up- nor a down-link, i.e., $\mathrm{apex}(\ell)\in\mathrm{in}(P_\ell)$.
	\item \label{property:cover_top_arc_or_sibling_arc} There is a trunk $P_i^C$ such that $\mathrm{apex}(\ell)=v_i^C$ and either $a_i^C\in\covr(\ell)$, or $v_i^C\ne r_C$ and $s_i^C\in\covr(\ell)$.
\end{enumerate}
\end{proposition}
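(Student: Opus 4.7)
The plan is to handle the two cases $C \in \mathring{\mathcal{C}}_{up}$ and $C \in \mathring{\mathcal{C}}_{down}$; by symmetry I focus on the first, writing $C_0$ for the underlying up-component so that $C = \mathring{C_0}$ and $A(C) \subseteq A_{up}$. For part~(i), I would invoke the explicit description of the directed coverage: for an up-link $\ell$ we have $\covr(\ell) \subseteq A_{down}$ (the up-arcs on $P_\ell$ all lie in $\covw(\ell)$), and symmetrically for a down-link $\covr(\ell) \subseteq A_{up}$ while $\covw(\ell) \subseteq A_{down}$. In either case one of $\covr(\ell) \cap A(C)$, $\covw(\ell) \cap A(C)$ sits inside $A_{up} \cap A_{down} = \emptyset$, contradicting a hypothesis. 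Hence $\ell$ is neither an up- nor a down-link, so $\mathrm{apex}(\ell) \in \mathrm{in}(P_\ell)$.

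For part~(ii), write $\ell = (u,v)$ and $w = \mathrm{apex}(\ell)$, and split $P_\ell$ into the $u$-to-$w$ and $v$-to-$w$ segments, which lie in disjoint subtrees of $T$ hanging off distinct children of $w$. Since the up-arcs on the $v$-side comprise $\covr(\ell) \cap A_{up}$ and the up-arcs on the $u$-side comprise $\covw(\ell) \cap A_{up}$, the hypotheses produce an up-arc $a \in A(C) \cap \covr(\ell)$ on the $v$-side and an up-arc $a' \in A(C) \cap \covw(\ell)$ on the $u$-side. The first main step of the proof would be to show $w \in V(\mathring{C_0})$ together with the fact that the first arcs $f_v, f_u$ encountered when descending from $w$ along the $v$- and $u$-segments lie in $A(\mathring{C_0})$. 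Since the heads of $a, a'$ are in the same up-component $C_0$, the unique tree path between them—which visits $w$—must consist of up-arcs (otherwise its endpoints would lie in different up-components), placing $w \in V(C_0)$ and forcing $f_v, f_u$ to be up-arcs of $C_0$. The tree paths from $a, a'$ upward to $r_{C_0}$ are contained in $\mathring{C_0}$ by construction of the core, and $f_v, f_u$ sit on those paths, yielding $f_v, f_u \in A(\mathring{C_0})$ and $w \in V(\mathring{C_0})$.

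The second main step identifies the trunk. Let $P_{t_v}^C, P_{t_u}^C$ be the trunks containing $f_v, f_u$; these are distinct because trunks partition $A(\mathring{C_0})$ and $f_v, f_u$ lie on different children of $w$. The key observation I would make precise is that among the trunks through $w$, at most one continues upward through $w$—the unique one containing the parent arc $a_w$, which is present exactly when $w \ne r_{C_0}$—and every other trunk through $w$ has $w$ as its top. If $P_{t_v}^C$ has top $w$, then $a_{t_v}^C = f_v \in \covr(\ell)$ and $i := t_v$ realizes the first alternative. Otherwise $P_{t_v}^C$ is the continuing trunk (so necessarily $w \ne r_{C_0}$ and $a_w \in P_{t_v}^C$), which forces $P_{t_u}^C$ to have top $w$; then the parent trunk of $P_{t_u}^C$ is $P_{t_v}^C$, and its arc incident to $w$ from below is $f_v$, so $s_{t_u}^C = f_v \in \covr(\ell)$ and $i := t_u$ realizes the second alternative.

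The main obstacle is the ``trunk topology'' at $w$: in general $w$ may be the top of several trunks simultaneously (especially when $w = r_{C_0}$ or when $w$ is a high-degree branching vertex of $\mathring{C_0}$), and one must argue cleanly that among all trunks through $w$ at most one extends upward. Once this observation is in hand, the two alternatives in the conclusion fall out naturally; the down-component case then follows by the analogous argument with the roles of the $u$- and $v$-segments (and hence of $f_v, f_u$ and of $\covr, \covw$) interchanged.
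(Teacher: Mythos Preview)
Your proposal is correct and follows essentially the same route as the paper: identify the two arcs $f_v,f_u$ incident to $w=\mathrm{apex}(\ell)$ on the $v$- and $u$-sides, observe they lie in $A(C)$, take the trunk through $f_v$, and split into the two cases depending on whether that trunk terminates at $w$ or continues upward through $a_w$. The ``main obstacle'' you flag is not really one: since the trunks partition $A(C)$, any trunk through $w$ that does not have $w$ as its top must contain the parent arc $a_w$, and at most one trunk contains $a_w$; this is exactly what the paper uses without further comment.
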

\begin{proof}
The fact that $A(C)$ either only contains up-arcs or only down-arcs implies that for an up- or down-link $\ell$, $\covr(\ell)\cap A(C)=\emptyset$ or $\covw(\ell)\cap A(C)=\emptyset$. This establishes \ref{property:neither_up_nor_down}.
To simplify notation, we prove \ref{property:cover_top_arc_or_sibling_arc} only for the case where $C\in\mathring{\mathcal{C}}_{up}$; the case $C\in\mathring{\mathcal{C}}_{down}$ can be handled analogously. 
Let $\ell=(u,v)$, let $a\in \covr(\ell)\cap A(C)$ and let $a'\in \covw(\ell)\cap A(C)$. Then $a$ lies on the $\mathrm{apex}(\ell)$-$v$-path in $T$ and $a'$ lies on the $u$-$\mathrm{apex}(\ell)$-path in $T$. In particular, $\mathrm{apex}(\ell)$ is the lowest common ancestor of (the top vertices of) $a$ and $a'$, and, as $C$ is connected, $\mathrm{apex}(\ell)\in V(C)$. Let $a_0$ and $a'_0$ be the top arcs of the $v$-$\mathrm{apex}(\ell)$-path and the $u$-$\mathrm{apex}(\ell)$-path, respectively. Then $a_0\in \covr(\ell)$ and $a'_0\in\covw(\ell)$. Let $P_j^C$ be the trunk containing $a_0$. If $v_j^C=\mathrm{apex}(\ell)$, we let $P_i^C\coloneqq P^C_j$. Otherwise, $\mathrm{apex}(\ell)\ne r_C$ and $a_{\mathrm{apex}(\ell)}\in A(P_j^C)$. In this case, we let $P_i^C$ be the trunk containing $a'_0$. Then $v_i^C=\mathrm{apex}(\ell)$ and $a_0=s^C_i$ is the sibling arc of $P_i^C$.
\end{proof}
The next two propositions help us to understand the structure of links violating the conditions in \ref{property:no_entering_at_rC}.
\begin{proposition}\label{lemma:no_entering_rC_1}\
	\begin{itemize}
		\item Let $C\in\mathring{\mathcal{C}}_{up}$ and let $\ell\in L_{r_C}^\uparrow$. Then $r_C\ne r$, $r_C\in\mathrm{in}(P_\ell)$ and $a_{r_C}\in\covr(\ell)$.
		\item Let $C\in\mathring{\mathcal{C}}_{down}$ and let $\ell\in L_{r_C}^\downarrow$. Then $r_C\ne r$, $r_C\in\mathrm{in}(P_\ell)$ and $a_{r_C}\in\covr(\ell)$.
	\end{itemize}
\end{proposition}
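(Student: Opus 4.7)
I will prove both bullets in parallel since they are symmetric: the up-case reduces to the fact that the top arc of an up-component root must be a down-arc (leading to the boundary being oriented the right way for $\covr$), and vice versa for the down-case. I will focus on the case $C \in \mathring{\mathcal{C}}_{up}$, $\ell = (u,w) \in L_{r_C}^\uparrow$.

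First, I would observe that if $r_C = r$ then $U_{r_C} = V$, so by \Cref{def:L_uparrow_downarrow} there can be no $\ell=(u,w) \in L_{r_C}^\uparrow$ (the condition $w \notin U_{r_C}$ would be unsatisfiable). Hence the mere existence of $\ell$ forces $r_C \neq r$.

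Next, I would argue that $a_{r_C}$ is a down-arc. Since $C$ is an up-component and $r_C$ is its root (the vertex of $C$ closest to $r$), its parent $p$ is not in $C$. If $a_{r_C}$ were an up-arc $(r_C, p)$, then $a_{r_C} \in A_{up}$ would connect $r_C$ to $p$ within the graph $(V, A_{up})$, so $p$ would lie in the same weakly connected component as $r_C$, namely $C$, contradicting that $r_C$ is $C$'s root. Hence $a_{r_C}$ is a down-arc.

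Now for $\ell = (u,w) \in L_{r_C}^\uparrow$, $u \in U_{r_C}\setminus\{r_C\}$ and $w \notin U_{r_C}$, so the unique $u$-$w$-path $P_\ell$ in $T$ must leave the subtree $T_{r_C}$; the only arc via which this can happen is $a_{r_C}$. Thus $a_{r_C} \in A(P_\ell)$ and $r_C \in V(P_\ell)$. Since $u\neq r_C$ and $w \neq r_C$, we get $r_C \in \mathrm{in}(P_\ell)$. Moreover, $\mathrm{apex}(\ell) = \mathrm{lca}(u,w)$ is a strict ancestor of $r_C$ (otherwise $w\in U_{r_C}$), so $a_{r_C}$ lies on the $u$-$\mathrm{apex}(\ell)$-subpath of $P_\ell$. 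Recalling from \Cref{sec:prelims} that $\covr(\ell)$ consists of the up-arcs on the $w$-$\mathrm{apex}(\ell)$-subpath and the down-arcs on the $u$-$\mathrm{apex}(\ell)$-subpath, and that $a_{r_C}$ is a down-arc on the latter, we conclude $a_{r_C} \in \covr(\ell)$, completing the up-case.

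The down-case is entirely symmetric: $r_C$ is the root of a down-component, so $a_{r_C}$ must be an up-arc (otherwise the parent would be in the same down-component); a link $\ell = (u,w) \in L_{r_C}^\downarrow$ has $u \notin U_{r_C}$, $w \in U_{r_C}\setminus\{r_C\}$, forcing $r_C \in \mathrm{in}(P_\ell)$ and placing $a_{r_C}$ as an up-arc on the $w$-$\mathrm{apex}(\ell)$-subpath, hence in $\covr(\ell)$. I do not anticipate a real obstacle: the proof is a direct unwinding of the definitions of an up/down-component's root and of $\covr$, combined with the defining conditions of $L_{r_C}^{\uparrow/\downarrow}$.
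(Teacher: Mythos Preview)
Your proof is correct and follows essentially the same approach as the paper: unwind the definition of $L_{r_C}^\uparrow$ to place $r_C$ as an inner vertex of $P_\ell$ with $\mathrm{apex}(\ell)$ a strict ancestor, and then use that $a_{r_C}$ is a down-arc (since $C$ is an up-component) to conclude $a_{r_C}\in\covr(\ell)$. The only cosmetic difference is that you derive $r_C\neq r$ directly from $L_{r_C}^\uparrow\neq\emptyset$, whereas the paper infers it from $\mathrm{apex}(\ell)$ being a strict ancestor of $r_C$; both are fine.
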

\begin{proof}
We only prove the first statement; the second one follows analogously. Let $C\in\mathring{\mathcal{C}}_{up}$ and let $\ell=(u,v)\in L_{r_C}^\uparrow$. Then $u\in U_{r_C}\setminus \{r_C\}$ and $v\in V\setminus U_{r_C}$, so $r_C\in\mathrm{in}(P_\ell)$. Moreover, $\mathrm{apex}(\ell)$ is a strict ancestor of $r_C$ (implying $r_C\ne r$) and $a_{r_C}$ appears on the $u$-$\mathrm{apex}(\ell)$-path in $T$. As $C\in\mathring{\mathcal{C}}_{up}$, $a_{r_C}$ is a down-arc, so $a_{r_C}\in\covr(\ell)$.
\end{proof}
\begin{proposition}\label{lemma:no_entering_rC_2}
\
\begin{itemize}
	\item Let $C\in\mathring{\mathcal{C}}_{up}$ and let $\ell\in L_{r_C}^\downarrow$ with $\cov(\ell)\cap A(C)\ne \emptyset$. Then $r_C\in\mathrm{in}(P_\ell)$ and there is a trunk $P^C_i$ with $v^C_i=r_C$ and $a^C_i\in\covr(\ell)$.
	\item Let $C\in\mathring{\mathcal{C}}_{down}$ and let $\ell\in L_{r_C}^\uparrow$ with $\cov(\ell)\cap A(C)\ne \emptyset$. Then $r_C\in\mathrm{in}(P_\ell)$ and there is a trunk $P^C_i$ with $v^C_i=r_C$ and $a^C_i\in\covr(\ell)$.
\end{itemize}
\end{proposition}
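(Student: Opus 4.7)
My plan is to prove only the first statement; the second follows by an entirely symmetric argument that swaps the roles of up- and down-arcs (and of $L^\downarrow$ with $L^\uparrow$). So fix $C\in\mathring{\mathcal{C}}_{up}$ and $\ell=(u,v)\in L^\downarrow_{r_C}$ with $\cov(\ell)\cap A(C)\neq\emptyset$. The first conclusion $r_C\in\mathrm{in}(P_\ell)$ will be immediate from the definition of $L^\downarrow_{r_C}$: we have $v\in U_{r_C}\setminus\{r_C\}$ and $u\notin U_{r_C}$, so $r_C$ lies strictly between $u$ and $v$ on the tree-path $P_\ell$.

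The structural fact I would next invoke is that the whole up-component containing $C$ — and hence $C$ itself — sits inside the subtree $T_{r_C}$. Indeed, $r_C$ is the vertex of $C$ nearest the global root $r$, and any other vertex of $C$ is connected to $r_C$ by a sequence of up-arcs in $T$, which can only move upward, forcing that vertex to be a strict descendant of $r_C$. Consequently $A(C)\subseteq A_{up}\cap A_{r_C}$, and since $u\notin U_{r_C}$ while $v\in U_{r_C}$, the portion of $P_\ell$ lying in $T_{r_C}$ is exactly its $r_C$-to-$v$ sub-path. Pick any $a\in\cov(\ell)\cap A(C)$; then $a$ lies on this sub-path, and the tree-path from $r_C$ down to the top endpoint of $a$ is the unique path in $T$ between two vertices of $V(C)$. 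Since $V(C)$ is connected in $(V,A_{up})$, this unique tree-path must consist entirely of up-arcs of $C$.

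Let $a^*$ denote the arc of $P_\ell$ incident to $r_C$ on the $v$-side, i.e., the arc between $r_C$ and its unique child that is an ancestor of $v$. By the previous paragraph, $a^*$ is an up-arc of $C$ whose top endpoint is $r_C$, and because $P_\ell$ traverses $a^*$ downward from $r_C$, we have $a^*\in\covr(\ell)$. To place $a^*$ inside a trunk, I use that the trunk decomposition partitions $A(C)$: there is a unique trunk $P^C_i$ containing $a^*$. Since $a^*$ is incident to $r_C$ and $r_C$ is the topmost vertex of $C$ (and hence of every trunk), the top endpoint of $P^C_i$ must be $r_C$, i.e., $v^C_i=r_C$, which in turn forces $a^*=a^C_i$. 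This is the required trunk.

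I do not expect a serious obstacle here: the argument is essentially a direct unpacking of the definitions of $L^\downarrow_{r_C}$, up-components, cores, trunk decompositions, and $\covr(\cdot)$. The one step worth writing out carefully is the observation that the unique tree-path between two vertices of the same up-component lies entirely inside that component, which is a standard consequence of path-uniqueness in a tree combined with the definition of weak connectivity in $(V,A_{up})$.
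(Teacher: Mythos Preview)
Your proposal is correct and follows essentially the same approach as the paper: use the definition of $L^\downarrow_{r_C}$ to get $r_C\in\mathrm{in}(P_\ell)$, use connectivity of $C$ to argue that the arc of $P_\ell$ incident to $r_C$ on the $v$-side lies in $A(C)$, observe it is an up-arc on the $\mathrm{apex}(\ell)$-to-$v$ segment and hence in $\covr(\ell)$, and then pick the trunk containing it, which must have $r_C$ as its top vertex. The paper's write-up is more compressed (it simply says ``As $C$ is connected, $a'\in A(C)$''), but the argument is the same.
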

\begin{proof}
Again, we only prove the first statement. Let $C\in\mathring{\mathcal{C}}_{up}$ and let $\ell=(u,v)\in L_{r_C}^\downarrow$ with $\cov(\ell)\cap A(C)\ne \emptyset$. Then $v\in U_{r_C}\setminus \{r_C\}$ and $u\notin U_{r_C}$. In particular, $r_C\in\mathrm{in}(P_\ell)$ and moreover, $\mathrm{apex}(\ell)$ is a strict ancestor of $r_C$. Let $a\in \cov(\ell)\cap A(C)$. Then $a$ lies on the $v$-$r_C$-path in $T$. Let $a'$ be the last arc of this path. As $C$ is connected, $a'\in A(C)$, so there is a trunk $P_i^C$ containing $a'$. As $a'$ is incident to $r_C$, $a'=a_i^C$ and $v^C_i=r_C$. Finally, as $C\in\mathring{\mathcal{C}}_{up}$, $a_i^C\in A_{up}$. As $a_i^C$ lies on the $v$-$\mathrm{apex}(\ell)$-path in $T$, $a_i^C\in\covr(\ell)$.
\end{proof}
\Cref{lemma:cover_core_in_both_directions,lemma:no_entering_rC_1,lemma:no_entering_rC_2} and \ref{property:no_covering_two_cores} motivate the splitting procedure presented in the following section.
\subsection{The splitting algorithm}
We obtain $\sigma^{**}$ and $x^{**}=\mathrm{split}(x^*,\sigma^{**})$ via \cref{alg:core_splitting}.
\begin{algorithm}
	\begin{algorithmic}[1]
		\Require{solution $x^*$ to \eqref{eq:WDTAP_LP}}
		\Ensure{splitting $\sigma^{**}$ of $L$, solution $x^{**}=\mathrm{split}(x^*,\sigma^{**})$ to \eqref{eq:WDTAP_LP}}
		
		\State $\sigma^{**}\gets \sigma_{\rm{id}}$, $x^{**}\gets x^{*}$
		\For{$C\in\mathring{\mathcal{C}}$}
		\State $L'\gets \{\ell\in L\colon a_{r_C}\in\covr(\ell)\}$
		 \State $\sigma^{**}\gets \sigma_{r_C,L'}\circ \sigma^{**}$, $x^{**}\gets \mathrm{split}(x^{**},\sigma_{r_C,L'})$ \label{line:split_coverage_arC}
		\For{$i\gets 1$ \textbf{to} $t_C$}
		\State $L'\gets \{\ell\in L\colon a_i^C\in\covr(\ell)\}$
		\State  $\sigma^{**}\gets \sigma_{v^C_i,L'}\circ \sigma^{**}$, $x^{**}\gets \mathrm{split}(x^{**},\sigma_{v^C_i,L'})$\label{line:split_top_arc}
		\If{$v^C_i\ne r_C$}
		\State $L'\gets \{\ell\in L\colon s^C_i\in\covr(\ell)\}$
		\State $\sigma^{**}\gets \sigma_{v^C_i,L'}\circ \sigma^{**}$, $x^{**}\gets \mathrm{split}(x^{**},\sigma_{v^C_i,L'})$\label{line:split_sibling_arc}
		\EndIf
		\EndFor
		\EndFor
		\State\Return {$\sigma^{**}$, $x^{**}$}
	\end{algorithmic}
	\caption{Core link splitting.}\label{alg:core_splitting}	
\end{algorithm}

Note that \cref{alg:core_splitting} runs in polynomial time.
The following technical claim is useful to further analyze \cref{alg:core_splitting}.
\begin{proposition}\label{lemma:later_splits_dont_destroy_earlier_splits}
Let $a\in A$ and let $u$ be an endpoint of $a$. Let $x$ be a solution to \eqref{eq:WDTAP_LP} such that for every $\ell\in\mathrm{supp}(x)$ with $a\in\covr(\ell)$, $u$ is an endpoint of $\ell$.

 Let $v\in V$, $L'\subseteq L$ and let $x'\coloneqq\mathrm{split}(x,\sigma_{v,L'})$. Then for every $\ell\in\mathrm{supp}(x')$ with $a\in\covr(\ell)$, $u$ is an endpoint of $\ell$.
\end{proposition}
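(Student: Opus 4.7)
The plan is to unpack the definition of $\mathrm{split}(x,\sigma_{v,L'})$ and trace every offending shadow back to some link in $\mathrm{supp}(x)$ to which the hypothesis applies. Concretely, any $\ell'\in\mathrm{supp}(x')$ must satisfy $\ell'\in\sigma_{v,L'}(\ell)$ for some $\ell\in\mathrm{supp}(x)$, and then $\ell'$ is a shadow of $\ell$. If additionally $a\in\covr(\ell')$, the arc $a$ lies on $P_{\ell'}\subseteq P_\ell$ and is traversed against the direction of $\ell'$; since the orientations of $\ell$ and $\ell'$ agree on the common sub-path, we also have $a\in\covr(\ell)$. The hypothesis then gives that $u$ is an endpoint of $\ell$.

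The next step is to split on whether $\sigma_{v,L'}$ actually modifies $\ell$. In the trivial case, $\sigma_{v,L'}(\ell)=\{\ell\}$ and $\ell'=\ell$, so we are done immediately. In the nontrivial case, write $\ell=(u',w')$ with $\ell\in L'$ and $v\in\mathrm{in}(P_\ell)$, so that $\sigma_{v,L'}(\ell)=\{\ell_1,\ell_2\}$ with $\ell_1=(u',v)$ and $\ell_2=(v,w')$. We know $u\in\{u',w'\}$ from the hypothesis applied to $\ell$, and $u$ is also an endpoint of $a$ by assumption. Since $a$ lies on the simple path $P_\ell$ and is incident to the vertex $u$, which is one of its endpoints, the arc $a$ must be either the first or the last arc of $P_\ell$.

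The final step is to identify the correct shadow. Assume without loss of generality that $u=u'$ (the case $u=w'$ is symmetric with $\ell_2$). Then $a$ is the first arc of $P_\ell$ and therefore lies on the $u'$--$v$ sub-path of $P_\ell$, which coincides with $P_{\ell_1}$. Because the walking direction from $u'$ towards $v$ in $P_{\ell_1}$ agrees with the walking direction on this sub-path inside $P_\ell$, backward arcs with respect to $\ell$ remain backward with respect to $\ell_1$, so $a\in\covr(\ell_1)$. The other shadow $\ell_2$ does not contain $a$ on its path at all, so the unique $\ell'\in\sigma_{v,L'}(\ell)$ whose right-coverage contains $a$ is $\ell_1$, and $u=u'$ is an endpoint of $\ell_1=\ell'$, as desired.

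The only subtlety to get right is the preservation of the direction of coverage when passing from a link to one of its sub-shadows; this is what forces the second step (observing that $a$ must be incident to $u$ because $u$ is an endpoint of $a$ lying on the simple path $P_\ell$) to translate automatically into $a\in\covr(\ell_i)$. Everything else is a direct unfolding of $\sigma_{v,L'}$ and the definition of shadows.
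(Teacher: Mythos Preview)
Your proof is correct and follows essentially the same approach as the paper: trace the link $\ell'\in\mathrm{supp}(x')$ back to a parent link $\ell\in\mathrm{supp}(x)$, use $a\in\covr(\ell')\subseteq\covr(\ell)$ to invoke the hypothesis, and then argue that since $u$ is an endpoint of both $\ell$ and $a$ with $a\in\cov(\ell')$, the vertex $u$ must also be an endpoint of the shadow $\ell'$. The paper compresses your explicit case analysis of $\sigma_{v,L'}$ into a single sentence using the general fact that an endpoint of a path contained in a sub-path is an endpoint of that sub-path, but the logic is the same.
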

\begin{proof}
Let $\ell\in\mathrm{supp}(x')$ with $a\in\covr(\ell)$. If $\ell\in \mathrm{supp}(x)$, the statement follows from our assumption on $x$. Otherwise, there is $\ell'\in L'\cap\mathrm{supp}(x)$ such that $\ell$ is a shadow of $\ell'$ with $a\in\covr(\ell)$. Then also $a\in\covr(\ell')$, so $u$ is an endpoint of $\ell'$. As $u$ is an endpoint of $\ell'$ and $a$, $\ell$ is a shadow of $\ell'$ and $a\in\cov(\ell)$, $u$ must be an endpoint of $\ell$ as well.
\end{proof}
\begin{proposition}\label{lemma:properties_x_double_star_1}
	The solution $x^{**}$ computed by \cref{alg:core_splitting} satisfies \ref{property:not_right_and_wrong_direction}.
\end{proposition}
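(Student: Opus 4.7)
The plan is to argue by contradiction: suppose some $\ell\in\mathrm{supp}(x^{**})$ and some $C\in\mathring{\mathcal{C}}$ witness a violation of \ref{property:not_right_and_wrong_direction}, so $\covr(\ell)\cap A(C)\neq\emptyset$ and $\covw(\ell)\cap A(C)\neq\emptyset$. The key structural input is \cref{lemma:cover_core_in_both_directions}, which says that any such link must be a cross-link with $\mathrm{apex}(\ell)=v_i^C$ for some trunk $P_i^C$ of $C$, and must cover either $a_i^C$ or (when $v_i^C\neq r_C$) $s_i^C$ in $\covr$. Both of these are precisely the arcs for which \cref{alg:core_splitting} performs a split at $v_i^C$ (Lines~\ref{line:split_top_arc} and~\ref{line:split_sibling_arc}). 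Thus the algorithm is tailor-made to destroy exactly the violators identified by \cref{lemma:cover_core_in_both_directions}.

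The first step is to verify that, immediately after the iteration in which \cref{alg:core_splitting} processes trunk $P_i^C$, every link $\ell'$ in the current support with $a_i^C\in\covr(\ell')$ has $v_i^C$ as an \emph{endpoint}, not an inner vertex of $P_{\ell'}$. This is by inspection of $\sigma_{v_i^C,L'}$ applied with $L'=\{\ell'\in L:a_i^C\in\covr(\ell')\}$: any $\ell'$ with $a_i^C\in\covr(\ell')$ lies in $L'$ and has $v_i^C\in V(P_{\ell'})$, so either $v_i^C\in\mathrm{in}(P_{\ell'})$ and $\ell'$ is split (its two shadows both have $v_i^C$ as an endpoint), or $v_i^C$ is already an endpoint of $\ell'$. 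The analogous statement holds for $s_i^C$ after Line~\ref{line:split_sibling_arc}.

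The second step is to propagate this endpoint property to the end of the algorithm. \cref{lemma:later_splits_dont_destroy_earlier_splits}, instantiated with $a=a_i^C$ and $u=v_i^C$ (respectively $a=s_i^C$ and $u=v_i^C$), states exactly that a subsequent splitting of the form $\sigma_{v,L''}$ preserves the invariant ``every support link covering $a$ in the forward direction has $u$ as an endpoint.'' Since every remaining operation of \cref{alg:core_splitting} has this form, induction on the number of remaining splittings shows the invariant still holds for $x^{**}$.

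Combining the two steps: the assumed $\ell$ is in $\mathrm{supp}(x^{**})$ with $a_i^C\in\covr(\ell)$ (or $s_i^C\in\covr(\ell)$) and $\mathrm{apex}(\ell)=v_i^C$. By the propagated invariant, $v_i^C$ is an endpoint of $\ell$. But $\mathrm{apex}(\ell)\in\mathrm{in}(P_\ell)$ by \ref{property:neither_up_nor_down}, so $v_i^C=\mathrm{apex}(\ell)$ is strictly interior to $P_\ell$, a contradiction. The main (minor) subtlety is ensuring that \cref{lemma:later_splits_dont_destroy_earlier_splits} can be chained across the many subsequent splittings performed for other cores and trunks; this is handled by straightforward induction, since each such operation has the right form and the invariant concerns only one fixed pair $(a,u)$.
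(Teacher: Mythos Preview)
Your proof is correct and follows essentially the same approach as the paper: establish that after Line~\ref{line:split_top_arc} (respectively Line~\ref{line:split_sibling_arc}) every link in the current support covering $a_i^C$ (respectively $s_i^C$) has $v_i^C$ as an endpoint, propagate this to $x^{**}$ via \cref{lemma:later_splits_dont_destroy_earlier_splits}, and then invoke \cref{lemma:cover_core_in_both_directions} to rule out any violator. The only difference is that you frame the argument as a contradiction while the paper states it directly, and you spell out the induction over the remaining splittings a bit more explicitly.
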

\begin{proof}
Let $C\in\mathring{\mathcal{C}}$ and let $P_i^C$ be a trunk of $C$. We know that for every link $\ell\in\mathrm{supp}(x^{**})$ with $a_i^C\in\covr(\ell)$, $v^C_i$ is an endpoint of $\ell$ because this property holds immediately after line~\ref{line:split_top_arc} of \cref{alg:core_splitting} is executed (for $C$ and $i$) and it is preserved by later splits by \cref{lemma:later_splits_dont_destroy_earlier_splits}.

If $r^C_i\ne r_C$, we further know that for every link $\ell\in\mathrm{supp}(x^{**})$ with $s_i^C\in\covr(\ell)$, $v^C_i$ is an endpoint of $\ell$ because this property holds immediately after line~\ref{line:split_sibling_arc} of \cref{alg:core_splitting} is executed (for $C$ and $i$) and it is preserved by later splits by \cref{lemma:later_splits_dont_destroy_earlier_splits}.

 By \cref{lemma:cover_core_in_both_directions}, \ref{property:not_right_and_wrong_direction} is satisfied.
\end{proof}
\begin{proposition}\label{lemma:properties_x_double_star_2}
	The solution $x^{**}$ computed by \cref{alg:core_splitting} satisfies \ref{property:no_entering_at_rC}.
\end{proposition}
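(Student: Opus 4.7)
The plan is to handle $C \in \mathring{\mathcal{C}}_{up}$ only; the case $C \in \mathring{\mathcal{C}}_{down}$ is symmetric (swap $L^\uparrow$ with $L^\downarrow$ and up- with down-arcs throughout). I will argue the two halves of \ref{property:no_entering_at_rC} separately and use the analogues of \cref{lemma:no_entering_rC_1,lemma:no_entering_rC_2} already supplied.

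For the claim $\mathrm{supp}(x^{**}) \cap L^\uparrow_{r_C} = \emptyset$, my plan is to combine \cref{lemma:no_entering_rC_1} with line~\ref{line:split_coverage_arC} of \cref{alg:core_splitting}. That lemma guarantees every $\ell \in L^\uparrow_{r_C}$ satisfies $a_{r_C} \in \covr(\ell)$ and has $r_C$ as an inner vertex of $P_\ell$. Hence at the moment $C$ is processed, the splitting $\sigma_{r_C,L'}$ with $L' = \{\ell : a_{r_C} \in \covr(\ell)\}$ splits every such link in the current support at $r_C$, producing shadows that each have $r_C$ as an \emph{endpoint} and therefore fall outside $L^\uparrow_{r_C}$. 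The remaining obligation is to verify that none of the later splittings — whether at trunks of $C$ or at roots and trunks of subsequently-processed cores — can reintroduce an $L^\uparrow_{r_C}$-link. For this I would prove a short auxiliary statement: if $\ell' \notin L^\uparrow_{r_C}$, then every shadow of $\ell'$ is also outside $L^\uparrow_{r_C}$. This follows from a case distinction on where the two endpoints of $\ell'$ sit relative to $U_{r_C}$ — either both in $U_{r_C}$, both outside, or $\ell' \in L^\downarrow_{r_C}$ — noting that on $P_{\ell'}$ the "enter $U_{r_C}$" direction is fixed, so a shadow cannot have its tail inside $U_{r_C}\setminus\{r_C\}$ and its head outside.

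For the second half — that every $\ell \in \mathrm{supp}(x^{**}) \cap L^\downarrow_{r_C}$ satisfies $\cov(\ell) \cap A(C) = \emptyset$ — I will argue by contradiction. Suppose such an $\ell$ exists with $\cov(\ell) \cap A(C) \ne \emptyset$. Then \cref{lemma:no_entering_rC_2} gives a trunk $P_i^C$ with $v_i^C = r_C$ and $a_i^C \in \covr(\ell)$. But line~\ref{line:split_top_arc} of \cref{alg:core_splitting}, applied to $(C,i)$, splits every link in the current support covering $a_i^C$ at $v_i^C = r_C$, so immediately after this step every such link has $r_C$ as an endpoint; by \cref{lemma:later_splits_dont_destroy_earlier_splits} this endpoint property is preserved through all later splittings. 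Hence $r_C$ must be an endpoint of $\ell$, which contradicts $\ell \in L^\downarrow_{r_C}$ (whose definition requires both endpoints to lie in $(V\setminus U_{r_C}) \cup (U_{r_C}\setminus\{r_C\})$ with the appropriate orientation, excluding $r_C$).

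The main obstacle will be the shadow-invariance bookkeeping in the first half: because $\mathring{\mathcal{C}}$ contains many cores processed in some order, one must be careful that splittings performed at $r_{C'}$ or at trunks of some later core $C'$ truly cannot re-create an $L^\uparrow_{r_C}$-link. Once the auxiliary lemma above is in hand, this is immediate, and the second half is then a direct consequence of \cref{lemma:no_entering_rC_2} together with the already-established invariance of endpoint-creating splits from \cref{lemma:later_splits_dont_destroy_earlier_splits}.
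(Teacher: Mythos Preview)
Your proposal is correct and follows the same overall strategy as the paper: handle $\mathring{\mathcal{C}}_{up}$ only, use \cref{lemma:no_entering_rC_1} for the first half and \cref{lemma:no_entering_rC_2} together with \cref{lemma:later_splits_dont_destroy_earlier_splits} for the second half.

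The one difference is in how you argue invariance for the first half. Your auxiliary statement ``if $\ell' \notin L^\uparrow_{r_C}$ then no shadow of $\ell'$ lies in $L^\uparrow_{r_C}$'' is true, but it is already available as \cref{lemma:splitting_decreases_L_up_down}; you need not re-derive it. The paper instead treats both halves uniformly via \cref{lemma:later_splits_dont_destroy_earlier_splits}: after line~\ref{line:split_coverage_arC} every link in the support covering $a_{r_C}$ has $r_C$ as an endpoint, and this property persists through later splits; since \cref{lemma:no_entering_rC_1} says every $\ell\in L^\uparrow_{r_C}$ covers $a_{r_C}$ with $r_C\in\mathrm{in}(P_\ell)$, no such link survives. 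This is slightly cleaner than your shadow-bookkeeping, but the two arguments are equivalent in substance.
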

\begin{proof}
 We only prove the statement for $C\in\mathring{\mathcal{C}}_{up}$; the case  $C\in\mathring{\mathcal{C}}_{down}$ can be handled analogously.
 
 Let $C\in\mathring{\mathcal{C}}_{up}$ with $r_C\ne r$ (otherwise, $L^\uparrow_{r_C}=\emptyset$ and $L^\downarrow_{r_C}=\emptyset$ and there is nothing to show). We know that every link $\ell\in\mathrm{supp}(x^{**})$ with $a_{r_C}\in \covr(\ell)$ has $r_C$ as an endpoint because this property holds immediately after line~\ref{line:split_coverage_arC} of \cref{alg:core_splitting} is executed for $C$ and it is preserved by later splits by \cref{lemma:later_splits_dont_destroy_earlier_splits}. By \cref{lemma:no_entering_rC_1}, $\mathrm{supp}(x^{**})\cap L^\uparrow_{r_C}=\emptyset$.

Let $P^C_i$ be  a trunk of $C$ with $v^C_i=r_C$. We know that for every $\ell\in\mathrm{supp}(x^{**})$ with $a^C_i\in\covr(\ell)$, $v^C_i=r_C$ is an endpoint of $\ell$ because this property holds immediately after line~\ref{line:split_top_arc} of \cref{alg:core_splitting} is executed (for $C$ and $i$) and it is preserved by later splits by \cref{lemma:later_splits_dont_destroy_earlier_splits}. By \cref{lemma:no_entering_rC_2}, for every $\ell\in\mathrm{supp}(x^{**})\cap L^\downarrow_{r_C}$, $\cov(\ell)\cap A(C)=\emptyset$.
\end{proof}
\begin{proposition}\label{lemma:no_covering_two_cores}
The solution $x^{**}$ computed by \cref{alg:core_splitting} satisfies \ref{property:no_covering_two_cores}.	
\end{proposition}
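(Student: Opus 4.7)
The plan is to mimic the structure of the proofs of Propositions~\ref{lemma:properties_x_double_star_1} and \ref{lemma:properties_x_double_star_2}. The key combinatorial observation to establish first is that every arc $a$ of $C$ incident to $r_C$ must coincide with the top arc $a_i^C$ of some trunk $P_i^C$ with $v_i^C=r_C$. Indeed, since the arc sets of $P_1^C,\dots,P_{t_C}^C$ partition $A(\mathring{C})$, the arc $a$ belongs to a unique trunk $P_i^C$. The trunk is an oriented path from the bottom vertex of $b_i^C$ up to $v_i^C$, and $v_i^C$ is its vertex closest to $r_C$. Because $r_C$ is the vertex of the core closest to the root $r$ and $r_C$ is an endpoint of $a\in A(P_i^C)$, the vertex $r_C$ must be the highest endpoint in $P_i^C$, so $v_i^C=r_C$ and $a=a_i^C$.

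Next, I would trace the first moment in Algorithm~\ref{alg:core_splitting} at which this property is enforced. When the outer loop handles $C$ and the inner loop reaches index $i$ with $v_i^C=r_C$, line~\ref{line:split_top_arc} applies the splitting $\sigma_{v_i^C,L'}$ with $L'=\{\ell\in L : a_i^C\in\covr(\ell)\}$. After this splitting, any link $\ell'$ in the current support that covers $a_i^C$ in the right direction must be a shadow (of some $\ell\in L'$) ending or starting at $v_i^C=r_C$, since $\sigma_{v_i^C,L'}$ splits every such $\ell$ at $v_i^C$. Hence, immediately after this step, every $\ell'$ in the current support with $a_i^C\in\covr(\ell')$ has $r_C$ as an endpoint.

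Finally, I would invoke Proposition~\ref{lemma:later_splits_dont_destroy_earlier_splits} with the arc $a=a_i^C$ and endpoint $u=r_C$ to conclude that this property persists through all subsequent splittings performed in the remaining iterations of the algorithm. By induction on the sequence of splittings, the property holds at termination for $x^{**}$. Combining this with the first paragraph's observation, any $\ell\in\mathrm{supp}(x^{**})$ whose directed coverage contains some arc of $C$ incident to $r_C$ has $r_C$ as an endpoint, which is exactly \ref{property:no_covering_two_cores}.

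I do not anticipate a serious obstacle here: the structural work has been done in Propositions~\ref{lemma:cover_core_in_both_directions}--\ref{lemma:no_entering_rC_2} and in the technical Proposition~\ref{lemma:later_splits_dont_destroy_earlier_splits}; the only genuine step is the trunk-decomposition observation that an arc of $\mathring{C}$ incident to $r_C$ is necessarily the top arc of a trunk terminating at $r_C$. Once this is in hand, the proof is a single application of line~\ref{line:split_top_arc} plus the invariance lemma, in complete analogy with Propositions~\ref{lemma:properties_x_double_star_1} and \ref{lemma:properties_x_double_star_2}.
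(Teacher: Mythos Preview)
Your proposal is correct and follows essentially the same approach as the paper's proof: identify any arc of $C$ incident to $r_C$ as the top arc $a_i^C$ of some trunk with $v_i^C=r_C$, observe that line~\ref{line:split_top_arc} enforces the endpoint property, and appeal to Proposition~\ref{lemma:later_splits_dont_destroy_earlier_splits} for persistence. Your write-up is a bit more explicit about the trunk-decomposition step, but the argument is identical.
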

\begin{proof}
Let $C\in\mathring{\mathcal{C}}$ and let $a\in A(C)$ be an arc incident to $r_C$. Then $a$ is the top arc of some trunk of $C$ ending in $r_C$, i.e., there is $i\in\{1,\dots,t_C\}$ such that $a=a_i^C$ and $v_i^C=r_C$. Now, every link $\ell\in\mathrm{supp}(x^{**})$ with $a\in \covr(\ell)$ has $r_C$ as an endpoint because this property holds immediately after line \ref{line:split_top_arc} of \cref{alg:core_splitting} is executed for $i$ and $C$, and it is preserved by later splits by \cref{lemma:later_splits_dont_destroy_earlier_splits}.
\end{proof}
To conclude the proof of \cref{theorem:second_phase_splitting}, it remains to establish \ref{property:cost_bound}, which is the goal of the following section.
\subsection{Bounding the cost of the splitting}
In order to bound the costs of the splitting operations performed in \cref{alg:core_splitting}, we first establish a lower bound on the total cost of $x^*$. \cref{lemma:heavy_x_value} allows us to relate the total $x^*$-value on links who have their apex in a core $C$ to the number $t_C$ of base arcs of $C$.
\cref{cor:lower_bound_costs_x_star} then gives a lower bound on the cost of $x^*$ in terms of the total number of base arcs in all cores.
For the proof of \cref{lemma:heavy_x_value}, we observe that since there are no $\zeta_1$-covered arcs with respect to $x$, \cref{lemma:splitting_same_coverage} implies that
\begin{equation}
\text{ there are no $\zeta_1$-covered arcs with respect to $x^*$.} \label{eq:no_covered_arcs_x_star}
\end{equation}
\begin{lemma}\label{lemma:heavy_x_value}
	Let $C\in\mathring{\mathcal{C}}$. Then $x^*(\{\ell\in L\colon\mathrm{apex}(\ell)\in V(C)\})\ge (1-\epsilon)\cdot\zeta_2\cdot t_C$.
\end{lemma}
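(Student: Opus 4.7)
The plan is to lower-bound the total $x^*$-value on links whose coverage intersects a base arc in the wrong direction, and then show that almost all such links must have their apex in $V(C)$.

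First, by the definition of a base arc, each $b \in B_C$ is $\zeta_2$-heavy with respect to $x^*$, so $x^*(\{\ell \in L : b \in \covw(\ell)\}) \ge \zeta_2$. Since $B_C$ is ancestor-free, Corollary~\ref{cor:base_arcs_disjoint_coverages} ensures that the sets $\{\ell : b \in \covw(\ell)\}$ for $b \in B_C$ are pairwise disjoint, yielding
\[
x^*\Bigl(\bigcup_{b \in B_C} \{\ell \in L : b \in \covw(\ell)\}\Bigr) \ge \zeta_2 \cdot t_C.
\]

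The main structural step is the following claim: for every $\ell \in L$ and every $b \in B_C$ with $b \in \covw(\ell)$, either $\mathrm{apex}(\ell) \in V(C)$ or $a_{r_C} \in \covr(\ell)$ (where the latter case forces $r_C \ne r$). I would prove this by case analysis on whether $C \in \mathring{\mathcal{C}}_{up}$ or $C \in \mathring{\mathcal{C}}_{down}$; the two cases are symmetric, so I sketch the up-case. Let $\ell = (u,v)$ and $b = (x,y)$ with $y$ the parent of $x$. The condition $b \in \covw(\ell)$ forces $u$ to lie in the subtree of $x$ and $v$ to lie outside it, so $\mathrm{apex}(\ell) = \lca(u,v)$ is an ancestor of $y$. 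Because $C$ is an up-component rooted at $r_C$, the ancestors of $y$ inside $V(C)$ are precisely those on the (up-)path from $y$ to $r_C$; any ancestor strictly above $r_C$ lies outside $V(C)$. In that latter case the $u$-$\mathrm{apex}(\ell)$-path must traverse the arc $a_{r_C}$ entering $r_C$ from above. Since $a_{r_C}$ cannot be an up-arc (else it would belong to $C$), it is a down-arc, and the fact that $v$ cannot be in the subtree of $r_C$ (otherwise $\mathrm{apex}(\ell)$ would not be strictly above $r_C$) gives $a_{r_C} \in \covr(\ell)$, as desired.

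With the claim in hand, the bound follows immediately: the links in the union whose apex is not in $V(C)$ all belong to $\{\ell : a_{r_C} \in \covr(\ell)\}$, which has total $x^*$-value strictly less than $\zeta_1$ by \eqref{eq:no_covered_arcs_x_star}. Therefore
\[
x^*(\{\ell \in L : \mathrm{apex}(\ell) \in V(C)\}) \ge \zeta_2 \cdot t_C - \zeta_1 > (1-\epsilon) \cdot \zeta_2 \cdot t_C,
\]
where the last inequality uses $\zeta_1 < \epsilon \cdot \zeta_2 \le \epsilon \cdot \zeta_2 \cdot t_C$ from \eqref{eq:constants_3} (note $t_C \ge 1$ since $\mathring{C}$ is non-empty). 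The only delicate point is the case analysis establishing the structural claim, and in particular verifying that ``$\mathrm{apex}(\ell)$ not in $V(C)$'' really forces $\mathrm{apex}(\ell)$ to lie \emph{strictly} above $r_C$ (and not off on a side branch), which is exactly what having $\mathrm{apex}(\ell)$ ancestral to $\mathrm{top}(b) \in V(C)$ guarantees.
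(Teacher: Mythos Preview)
Your proposal is correct and follows essentially the same approach as the paper: both lower-bound the total $x^*$-mass on links wrong-covering the base arcs via disjointness, prove the same structural claim that any such link either has its apex in $V(C)$ or covers $a_{r_C}$, and then subtract off the $\zeta_1$-bounded mass on $\{\ell: a_{r_C}\in\covr(\ell)\}$ using \eqref{eq:no_covered_arcs_x_star} and \eqref{eq:constants_3}. The paper explicitly separates the case $r_C=r$ (where no subtraction is needed), whereas you fold it into the general bound; both work.
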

\begin{proof}
	For $i\in\{1,\dots,t_C\}$, let $L_i\coloneqq \{\ell\in L\colon b^C_i\in\covw(\ell)\}$. By \cref{cor:base_arcs_disjoint_coverages}, the sets $(L_i)_{i=1}^{t_C}$ are pairwise disjoint. Moreover, as every base arc is $\zeta_2$-heavy, $x^*(L_i)\ge \zeta_2$ for every $i\in\{1,\dots,t_C\}$.
	\begin{claim}
		Let $i\in\{1,\dots,t_C\}$ and let $\ell\in L_i$. If $\mathrm{apex}(\ell)\notin V(C)$, then $r_C\ne r$ and $a_{r_C}\in\covr(\ell)$.
	\end{claim}
	\begin{proof}[Proof of claim]
		To simplify notation, we assume that $C\in\mathring{\mathcal{C}}_{up}$; the case $C\in\mathring{\mathcal{C}}_{down}$ can be handled analogously. Let $\ell=(u,v)$. As $b^C_i\in \covw(\ell)$, the bottom vertex $w$ of $b^C_i$ is an ancestor of $u$ and $\mathrm{apex}(\ell)$ is strict ancestor of $w$. If $\mathrm{apex}(\ell)$ appears on the $w$-$r_C$-path in $T$, then $\mathrm{apex}(\ell)\in V(C)$. Otherwise, $\mathrm{apex}(\ell)$ is a strict ancestor of $r_C$ and in particular, $r_C\ne r$. Moreover, $a_{r_C}$ is a down-arc that appears on the $u$-$\mathrm{apex}(\ell)$-path in $T$, implying $a_{r_C}\in\covr(\ell)$.
	\end{proof}
	By the claim, if $r=r_C$, then \[x^*(\{\ell\in L\colon\mathrm{apex}(\ell)\in V(C)\})\ge \sum_{i=1}^{t_C} x^*(L_i)\ge t_C\cdot\zeta_2.\]
	Next, assume that $r\ne r_C$. The claim yields
	\[x^*(\{\ell\in L\colon\mathrm{apex}(\ell)\in V(C)\})\ge \sum_{i=1}^{t_C} x^*(L_i)-x^*(\{\ell\in L\colon a_{r_C}\in\covr(\ell)\})\stackrel{\eqref{eq:no_covered_arcs_x_star}}{>} t_C\cdot \zeta_2-\zeta_1\stackrel{\eqref{eq:constants_3}}{>} (1-\epsilon)\cdot \zeta_2\cdot t_C.\]
\end{proof}
\begin{corollary}\label{cor:lower_bound_costs_x_star} We have
$c(x^*)\ge \frac{1}{2}\cdot(1-\epsilon)\cdot\zeta_2\cdot \sum_{C\in\mathring{\mathcal{C}}} t_C$.
\end{corollary}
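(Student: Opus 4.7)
The plan is to combine \cref{lemma:heavy_x_value} with a simple double-counting argument and the normalization $c(\ell)\ge 1$ for all $\ell\in L$.

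First I would observe that each link $\ell\in L$ has a unique apex $\mathrm{apex}(\ell)\in V$. Since the up-components of $T$ partition $V$, and similarly for the down-components, the vertex $\mathrm{apex}(\ell)$ lies in at most one up-component and at most one down-component. As every non-empty core $\mathring{C}\in\mathring{\mathcal{C}}$ is a subgraph of its associated component, and different up-components (respectively, down-components) have disjoint vertex sets, this means $\mathrm{apex}(\ell)$ lies in $V(C)$ for at most one $C\in\mathring{\mathcal{C}}_{up}$ and at most one $C\in\mathring{\mathcal{C}}_{down}$. Hence each link $\ell$ is counted at most twice in the sum $\sum_{C\in\mathring{\mathcal{C}}} x^*(\{\ell\in L\colon \mathrm{apex}(\ell)\in V(C)\})$.

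Next I would apply \cref{lemma:heavy_x_value} to each $C\in\mathring{\mathcal{C}}$ and sum to obtain
\[(1-\epsilon)\cdot\zeta_2\cdot \sum_{C\in\mathring{\mathcal{C}}} t_C \;\le\; \sum_{C\in\mathring{\mathcal{C}}} x^*(\{\ell\in L\colon \mathrm{apex}(\ell)\in V(C)\}) \;\le\; 2\cdot x^*(L).\]
Because we rescaled costs so that $c\colon L\to[1,\Delta]$, we have $x^*(L)\le c(x^*)$, and the claim follows by rearranging.

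No step here is particularly delicate; the only thing to be careful about is the factor of $2$, which arises precisely because a vertex can simultaneously sit in one up-core and one down-core — so the sets $V(C)$ for $C\in\mathring{\mathcal{C}}$ are not pairwise disjoint, but each vertex lies in at most two of them.
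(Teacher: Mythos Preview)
Your proof is correct and follows essentially the same approach as the paper: apply \cref{lemma:heavy_x_value} to each core, use that every link's apex lies in at most one up-core and at most one down-core (giving the factor of $2$), and conclude via the cost normalization $c(\ell)\ge 1$. The paper's proof is terser but makes exactly the same three moves.
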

\begin{proof}
This follows from \cref{lemma:heavy_x_value}, using that the link costs lie in $[1,\Delta]$ and that for every $\ell\in L$, there is at most one $C\in\mathring{\mathcal{C}}_{up}$ and at most one $C\in\mathring{\mathcal{C}}_{down}$ with $\mathrm{apex}(\ell)\in V(C)$ because the up-cores, as well as the down-cores, are pairwise vertex-disjoint.
\end{proof}
\begin{lemma}\label{lemma:cost_bound_x_double_star}
We have $c(x^{**}) \le (1+\epsilon)^2\cdot c(x)$, i.e., \ref{property:cost_bound} holds.
\end{lemma}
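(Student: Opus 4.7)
The plan is to bound the cost increase from $x^*$ to $x^{**}$ by $\epsilon\cdot c(x^*)$, and then combine with \cref{theorem:first_phase_splitting}~\ref{item:first_phase_cost_bound} which gives $c(x^*)\le (1+\epsilon)\cdot c(x)$. The algorithm consists only of applications of the basic splitting operator $\sigma_{v,L'}$ with link set of the form $L' = \{\ell\in L\colon a\in \covr(\ell)\}$ for some arc $a$ (namely $a=a_{r_C}$, $a=a_i^C$ or $a=s_i^C$). By \cref{lemma:splitting_same_coverage}, the quantity $x'(\{\ell\colon a\in\covr(\ell)\})$ is invariant under any splitting, so the value of this quantity at every step of \cref{alg:core_splitting} agrees with its value for $x^*$. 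By \eqref{eq:no_covered_arcs_x_star}, this value is strictly less than $\zeta_1$.

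Applying \cref{lemma:cost_increase_splitting_new} to each individual splitting operation therefore yields a cost increase of at most $\Delta\cdot\zeta_1$ per splitting. To count splittings, observe that for each core $C\in\mathring{\mathcal{C}}$, \cref{alg:core_splitting} executes one split at $r_C$, one split per trunk on its top arc, and at most one split per trunk on its sibling arc, giving at most $1+2t_C\le 3 t_C$ splittings per core (assuming $t_C\ge 1$, which holds for every $C\in\mathring{\mathcal{C}}$). Summing, the total cost increase satisfies
\[ c(x^{**})-c(x^*)\ \le\ 3\Delta\zeta_1\cdot \sum_{C\in\mathring{\mathcal{C}}} t_C. \]

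Now I would invoke \cref{cor:lower_bound_costs_x_star}, which gives $\sum_{C\in\mathring{\mathcal{C}}} t_C\le \frac{2\,c(x^*)}{(1-\epsilon)\zeta_2}$, so that
\[ c(x^{**})-c(x^*)\ \le\ \frac{6\Delta\zeta_1}{(1-\epsilon)\zeta_2}\cdot c(x^*). \]
Inequality \eqref{eq:constants_4} states exactly $3\zeta_1\Delta\le \tfrac{\epsilon}{2}(1-\epsilon)\zeta_2$, which rearranges to $\frac{6\Delta\zeta_1}{(1-\epsilon)\zeta_2}\le \epsilon$. Hence $c(x^{**})\le (1+\epsilon)\cdot c(x^*)\le (1+\epsilon)^2\cdot c(x)$, which is \ref{property:cost_bound}.

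The argument is almost entirely bookkeeping: the choice of constants $\zeta_1,\zeta_2$ in the preamble was already calibrated precisely for this bound, and all the nontrivial input is already established, namely the invariance of arc coverage under splitting (\cref{lemma:splitting_same_coverage}), the per-split cost estimate (\cref{lemma:cost_increase_splitting_new}), and the lower bound $c(x^*)\gtrsim \zeta_2\sum_C t_C$ (\cref{cor:lower_bound_costs_x_star}). The only place where care is needed is counting the number of splittings correctly — in particular noting that the splittings are applied to $L'$'s defined with respect to the original $L$ rather than the running support, so that each such $L'$ contributes the same $x$-mass $<\zeta_1$ regardless of when it is invoked.
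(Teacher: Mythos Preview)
Your proof is correct and follows essentially the same approach as the paper: both bound the per-split cost increase by $\Delta\cdot\zeta_1$ via \cref{lemma:cost_increase_splitting_new} and \cref{lemma:splitting_same_coverage}, count at most $3t_C$ splits per core, and then invoke \cref{cor:lower_bound_costs_x_star} together with \eqref{eq:constants_4} to conclude. The only cosmetic difference is that you first bound $c(x^{**})-c(x^*)\le \epsilon\cdot c(x^*)$ and then apply \cref{theorem:first_phase_splitting}~\ref{item:first_phase_cost_bound}, whereas the paper substitutes $c(x^*)\le(1+\epsilon)c(x)$ partway through the chain.
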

\begin{proof}
Using \Cref{lemma:splitting_same_coverage}, and \Cref{lemma:cost_increase_splitting_new}, we obtain

\begin{align*}
c(x^{**})&\le c(x^*)+\Delta\cdot \sum_{C\in\mathring{\mathcal{C}}}x^*(\{\ell\in L\colon a_{r_C}\in\covr(\ell)\})\\	
&\phantom{=}+\Delta\cdot\sum_{C\in\mathring{\mathcal{C}}}
\bigg[\sum_{i=1}^{t_C} x^*(\{\ell\in L\colon a_i^C\in\covr(\ell)\})+\sum_{i=1,v^C_i\ne r_C}^{t_C}x^*(\{\ell\in L\colon s_i^C\in\covr(\ell)\})\bigg]\\
&\stackrel{\eqref{eq:no_covered_arcs_x_star}}{\le} c(x^*)+\zeta_1\cdot\Delta\cdot |\mathring{\mathcal{C}}|+2\cdot \zeta_1\cdot \Delta\cdot \sum_{C\in\mathring{\mathcal{C}}} t_C\\
&\stackrel{(*)}{\le} (1+\epsilon)\cdot c(x)+3\cdot \zeta_1\cdot \Delta\cdot \sum_{C\in\mathring{\mathcal{C}}} t_C\stackrel{\eqref{eq:constants_4}}{\le}  (1+\epsilon)\cdot c(x)+\epsilon\cdot\frac{1}{2}\cdot(1-\epsilon)\cdot\zeta_2\cdot \sum_{C\in\mathring{\mathcal{C}}} t_C\\
&\le (1+\epsilon)\cdot c(x)+\epsilon\cdot c(x^*)\le (1+\epsilon)^2\cdot c(x),
\end{align*}
where the inequality marked $(*)$ follows from \cref{theorem:first_phase_splitting}~{\it\ref{item:first_phase_cost_bound}} and the fact that $t_C\ge 1$ for every core $C$, the second-to-last inequality follows from \cref{cor:lower_bound_costs_x_star}, and the last inequality follows again from \cref{theorem:first_phase_splitting}~{\it\ref{item:first_phase_cost_bound}}.
\end{proof}
Combining \Cref{lemma:properties_x_double_star_1,lemma:properties_x_double_star_2,lemma:no_covering_two_cores}, and \Cref{lemma:cost_bound_x_double_star} proves \Cref{theorem:second_phase_splitting}.
\section{Best of three solutions\label{sec:best_of_three}}
For this section, we again fix a rooted WDTAP instance $(T,L,c,r)$ with cost ratio at most $\Delta$ and a solution $x$ to \eqref{eq:WDTAP_LP} satisfying \eqref{eq:no_covered_arcs}. Moreover, let $\sigma^*$, $x^*$ and $W^*$ be as given by \cref{theorem:first_phase_splitting} and let $x^{**}$ and $\sigma^{**}$ be as given by \cref{theorem:second_phase_splitting}.
The goal of this section is to prove \cref{theorem:partial_separation_oracle} by constructing three different solutions arising from $x^{**}$ by splitting certain links. To this end, we classify and partition the links in the support of $x^{**}$.
We begin by proving the following technical claim, that will be helpful to define disjoint subsets of $\mathrm{supp}(x^{**})$.
\begin{proposition}\label{lemma:cover_core_apex}
Let $\ell\in \mathrm{supp}(x^{**})$ and let $C\in\mathring{\mathcal{C}}$ such that $\cov(\ell)\cap A(C)\ne \emptyset$. Then $\mathrm{apex}(\ell)\in V(C)$ and $\cov(\ell)\cap A(C)$ contains an arc incident to $\mathrm{apex}(\ell)$.
\end{proposition}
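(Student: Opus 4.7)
The plan is to handle the two conclusions of the proposition separately. For the first assertion, $\mathrm{apex}(\ell) \in V(C)$, I would simply invoke \cref{cor:apex_in_core} with the identity splitting $\sigma_{\rm{id}}$: since $\mathrm{split}(x^{**},\sigma_{\rm{id}}) = x^{**}$, we have $\ell \in \mathrm{supp}(\mathrm{split}(x^{**},\sigma_{\rm{id}}))$, and the corollary applies directly.

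For the second assertion, my approach is to fix any arc $a \in \cov(\ell) \cap A(C)$, let $y \coloneqq \mathrm{apex}(a)$ denote its upper endpoint in $T$, and consider the $y$-$\mathrm{apex}(\ell)$-path $Q$ in $T$. Writing $\ell = (u,v)$, $a$ lies on either the $u$-$\mathrm{apex}(\ell)$-leg or the $v$-$\mathrm{apex}(\ell)$-leg of $P_\ell$, and in both cases $\mathrm{apex}(\ell)$ is an ancestor of $y$, so $Q$ is a subpath of $P_\ell$ and every arc of $Q$ is automatically in $\cov(\ell)$. Since $a \in A(C)$, both endpoints of $a$ lie in $V(C)$, so $y \in V(C)$, and combined with $\mathrm{apex}(\ell) \in V(C)$ from the first part, the remaining goal is to deduce that every arc of $Q$ lies in $A(C)$. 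Once this is established, the arc of $Q$ incident to $\mathrm{apex}(\ell)$ (or $a$ itself, if $y = \mathrm{apex}(\ell)$) will witness the claim.

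The key step—and the main subtlety—is showing that $Q \subseteq A(C)$. Here I would rely on the observation that $C$ is a connected component of the subforest $(V, A_{up})$ (if $C \in \mathring{\mathcal{C}}_{up}$) or $(V, A_{down})$ (if $C \in \mathring{\mathcal{C}}_{down}$) of $T$. Since $y$ and $\mathrm{apex}(\ell)$ lie in the same component of this subforest, they are connected by a unique undirected path inside $C$; but this path is also a path in $T$, and by uniqueness of paths in a tree it must coincide with $Q$. Therefore every arc of $Q$ belongs to $A(C)$, completing the argument. Beyond this path-uniqueness step, everything else reduces to straightforward case analysis on which leg of $P_\ell$ contains $a$.
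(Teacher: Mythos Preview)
Your approach is correct, with one imprecision worth flagging: $C \in \mathring{\mathcal{C}}$ is a \emph{core}, not a connected component of $(V, A_{up})$ or $(V, A_{down})$; in general the core is a proper subtree of the corresponding component. The argument still goes through because the core, being a union of paths sharing the endpoint $r_C$, is a connected subtree of $T$, so the unique $T$-path between any two of its vertices lies inside it; but the step should be justified via connectedness of the core rather than by calling it a component.

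The paper's proof is close in spirit but more compact: it does not invoke \cref{cor:apex_in_core} as a separate step. Instead, taking the bottom vertex $v$ of an arc $a\in\cov(\ell)\cap A(C)$, it observes that both $r_C$ and $\mathrm{apex}(\ell)$ lie on $P_{vr}$ and appeals directly to \ref{property:no_entering_at_rC} to rule out $\mathrm{apex}(\ell)$ lying strictly above $r_C$; the last arc of the $v$-$\mathrm{apex}(\ell)$ subpath, which lies inside the core by the definition of cores, then witnesses both conclusions at once. Your modular version trades the direct use of \ref{property:no_entering_at_rC} for its packaged corollary, plus a clean subtree-connectivity argument.
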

\begin{proof}
Let $a\in \cov(\ell)\cap A(C)$ and let $v$ be the bottom vertex of $a$. Then both $\mathrm{apex}(\ell)$ and $r_C$ appear on $P_{vr}$, the $v$-$r$-path in $T$. If $\mathrm{apex}(\ell)$ lies on the $v$-$r_C$-subpath of $P_{vr}$, then $\mathrm{apex}(\ell)\in V(C)$ and moreover, the last arc of the $v$-$\mathrm{apex}(\ell)$-subpath of $P_{vr}$ is contained in $\cov(\ell)\cap A(C)$. Otherwise, $\ell\in L^\uparrow_{r_C}\cup L^\downarrow_{r_C}$ and $\cov(\ell)\cap A(C)\ne \emptyset$, contradicting \ref{property:no_entering_at_rC}. 
\end{proof}
\begin{definition}
We define $\overrightarrow{L}\coloneqq \{\ell\in L\colon \exists C\in\mathring{\mathcal{C}}\colon \covr(\ell)\cap A(C)\ne\emptyset\}$ to be the set of links that ``cover part of a core in the right direction'' and $\overleftarrow{L}\coloneqq \{\ell\in L\colon \exists C\in\mathring{\mathcal{C}}\colon \covw(\ell)\cap A(C)\ne\emptyset\}$ to be the set of links that ``cover part of a core in the wrong direction''.
\end{definition}
It turns out that the support of $x^{**}$ does not contain any link in $\overrightarrow{L}\cap \overleftarrow{L}$. In fact, $\overrightarrow{L}\cap \overleftarrow{L}$ does not even contain a shadow of a link in $\mathrm{supp}(x^{**})$.
\begin{lemma}\label{lemma:no_cover_right_and_wrong}
$\overrightarrow{L}\cap \overleftarrow{L}\cap \{\ell\colon \text{ $\ell$ is a shadow of a link in $\mathrm{supp}(x^{**})$}\}=\emptyset$.
\end{lemma}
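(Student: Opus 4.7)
My plan is to proceed by contradiction. Suppose some shadow $\ell$ of a link $\ell' \in \mathrm{supp}(x^{**})$ belongs to $\overrightarrow{L} \cap \overleftarrow{L}$. Since shadows preserve directional coverage (that is, $\covr(\ell) \subseteq \covr(\ell')$ and $\covw(\ell) \subseteq \covw(\ell')$), there exist cores $C, C' \in \mathring{\mathcal{C}}$ with $\covr(\ell') \cap A(C) \neq \emptyset$ and $\covw(\ell') \cap A(C') \neq \emptyset$. If $C = C'$, this immediately contradicts property \ref{property:not_right_and_wrong_direction}, so I focus on $C \neq C'$.

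Next, I would apply \cref{lemma:cover_core_apex} to $\ell'$ with each of $C$ and $C'$. This produces $\mathrm{apex}(\ell') \in V(C) \cap V(C')$ together with arcs $\alpha \in \cov(\ell') \cap A(C)$ and $\alpha' \in \cov(\ell') \cap A(C')$ that are incident to $\mathrm{apex}(\ell')$. Up-components partition $V$ (and so do down-components), so $C$ and $C'$ cannot be of the same type; I may assume (up to the symmetry that reverses all tree arc directions) that $C \in \mathring{\mathcal{C}}_{up}$ and $C' \in \mathring{\mathcal{C}}_{down}$, making $\alpha$ an up-arc and $\alpha'$ a down-arc. Because $\alpha, \alpha'$ lie on $P_{\ell'}$ and are incident to $\mathrm{apex}(\ell')$, they must coincide with the two top arcs of the $u'$-to-$\mathrm{apex}(\ell')$ and $\mathrm{apex}(\ell')$-to-$v'$ halves of $P_{\ell'}$, which I denote $a_{u'}$ and $a_{v'}$ respectively.

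The remaining case split concerns which half $\alpha$ belongs to. In Case A ($\alpha = a_{v'}$ and $\alpha' = a_{u'}$), both $\alpha$ and $\alpha'$ lie in $\covr(\ell')$: the top up-arc on the $v'$-to-apex path and the top down-arc on the $u'$-to-apex path are both covered in the right direction. The orientation of the tree arc immediately above $\mathrm{apex}(\ell')$ determines which of $C$ or $C'$ extends strictly above apex; a straightforward case check (including $\mathrm{apex}(\ell') = r$) shows that at least one of $\mathrm{apex}(\ell') = r_C$ or $\mathrm{apex}(\ell') = r_{C'}$ must hold. Applying property \ref{property:no_covering_two_cores} to $\ell'$ together with the relevant core forces $\mathrm{apex}(\ell')$ to be an endpoint of $\ell'$, making $\ell'$ an up- or down-link and emptying one of the two halves of $P_{\ell'}$, which contradicts the existence of $a_{u'}$ or $a_{v'}$ established above.

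In Case B ($\alpha = a_{u'}$ and $\alpha' = a_{v'}$), the top arc $a_{v'}$ is a down-arc, while the up-arc $a \in \covr(\ell') \cap A(C)$ lies strictly below $a_{v'}$ on the $v'$-to-apex portion of $P_{\ell'}$. But $C$, being a connected component of the up-digraph $(V, A_{up})$, is a sub-tree of $T$ consisting entirely of up-arcs, so the unique $T$-path connecting the edges $a$ and $\alpha = a_{u'}$ — which necessarily passes through $\mathrm{apex}(\ell')$ via $a_{v'}$ — would have to avoid every down-arc, contradicting that $a_{v'}$ is down. The main obstacle is precisely this mixed-type configuration: vertex disjointness of same-type cores is unavailable, so one has to combine \cref{lemma:cover_core_apex}, property \ref{property:no_covering_two_cores}, and the connectedness of $C$ to rule out every placement of the two arcs incident to $\mathrm{apex}(\ell')$.
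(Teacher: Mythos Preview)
Your argument is correct, but it works harder than necessary compared to the paper. Both proofs start identically: pass to $\ell' \in \mathrm{supp}(x^{**})$, use \ref{property:not_right_and_wrong} to get $C \neq C'$, and apply \cref{lemma:cover_core_apex} to obtain $\mathrm{apex}(\ell') \in V(C) \cap V(C')$ with incident arcs $\alpha \in A(C)$ and $\alpha' \in A(C')$. The paper then applies \ref{property:not_right_and_wrong} a \emph{second} time to deduce $\alpha \in \covr(\ell')$ (since $\covw(\ell') \cap A(C) = \emptyset$) and $\alpha' \in \covw(\ell')$ (since $\covr(\ell') \cap A(C') = \emptyset$). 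With the directional status of both top arcs pinned down, a one-line check shows that $\alpha$ and $\alpha'$ have the \emph{same} orientation, so $C$ and $C'$ are same-type cores sharing a vertex---an immediate contradiction. No further properties are needed.

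You instead first conclude (correctly, via vertex-disjointness of same-type components) that $C$ and $C'$ are of opposite types, and then branch on which side of $\mathrm{apex}(\ell')$ each of $\alpha, \alpha'$ sits. This forces you to invoke \ref{property:no_covering_two_cores} in Case~A and a connectedness argument in Case~B. Both cases are in fact killed instantly by the observation you skipped: in your Case~A, $\alpha' \in A(C')$ lands in $\covr(\ell')$, contradicting $\covr(\ell') \cap A(C') = \emptyset$; in your Case~B, $\alpha \in A(C)$ lands in $\covw(\ell')$, contradicting $\covw(\ell') \cap A(C) = \emptyset$. One minor slip: in Case~B you call $C$ ``a connected component of the up-digraph'', but $C$ is a \emph{core}; the argument survives since cores are connected subtrees whose arcs all share one orientation.
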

\begin{proof}
Assume towards a contradiction that $\ell'\in \overrightarrow{L}\cap \overleftarrow{L}$ is a shadow of $\ell\in\mathrm{supp}(x^{**})$. Then also $\ell\in\overrightarrow{L}\cap \overleftarrow{L}$ and there are $C\in\mathring{\mathcal{C}}$ with $\covr(\ell)\cap A(C)\ne \emptyset$ and $C'\in\mathring{\mathcal{C}}$ with $\covw(\ell)\cap A(C)\ne \emptyset$. By \ref{property:not_right_and_wrong_direction}, $\covw(\ell)\cap A(C)=\emptyset$ and $\covr(\ell)\cap A(C')=\emptyset$. Hence, $C\ne C'$ and, in particular, $A(C)\cap A(C')=\emptyset$. By \cref{lemma:cover_core_apex}, $\mathrm{apex}(\ell)\in V(C)\cap V(C')$ and one of the two incident arcs of $\mathrm{apex}(\ell)$ in $\cov(\ell)$ is contained in $A(C)$, call it $a$, and the other one is contained in $A(C')$, call it $a'$. Let $\ell=(u,v)$. Then either $a\in\covr((u,\mathrm{apex}(\ell)))$ and $a'\in\covw((\mathrm{apex}(\ell),v))$, implying that both $a$ and $a'$ are down-arcs, or $a'\in\covw((u,\mathrm{apex}(\ell)))$ and $a\in\covr((\mathrm{apex}(\ell),v))$, implying that both $a$ and $a'$ are up-arcs. In either case, $C$ and $C'$ are distinct cores for the same direction sharing a vertex (namely, $\mathrm{apex}(\ell)$), a contradiction.
\end{proof}
To construct the different solutions, we will split certain collections of links at their apex. To describe this operation formally, we introduce the following notation.
\begin{definition}
Let $L'\subseteq L$. The splitting $\sigma_{L'}$ of $L$, which splits every link in $L'$ (that is not already an up- or down-link) at its apex, is defined as follows.
If $\ell\in L\setminus L'$ or $\ell\in L'$ is an up-link or a down-link, we define $\sigma_{L'}(\ell)=\{\ell\}$.
If $\ell=(u,v)\in L'$ is neither an up- nor a down-link, we define $\sigma_{L'}(\ell)=\{(u,\mathrm{apex}(\ell)),(\mathrm{apex}(\ell),v)\}$.
\end{definition}

 Let $X\coloneqq \bigcup_{C\in\mathring{\mathcal{C}}} V(C)$ be the set of vertices of all cores. Let $L_{cross}$ consist of all $(W^*\cup X)$-cross-links that are neither contained in $\overrightarrow{L}$ nor in $\overleftarrow{L}$, i.e.,
 \[L_{cross}\coloneqq \{\ell\in L\setminus (\overrightarrow{L}\cup \overleftarrow{L})\colon \text{ $\ell$ is a $(W^*\cup X)$-cross-link}\}.\]  We further define $L_{rest}\coloneqq L\setminus (L_{cross}\cup \overrightarrow{L}\cup \overleftarrow{L})$.
 By \cref{lemma:no_cover_right_and_wrong}, we know that 
 \begin{equation}\mathrm{supp}(x^{**})=(L_{cross}\cap \mathrm{supp}(x^{**}))\dot{\cup}(\overrightarrow{L}\cap \mathrm{supp}(x^{**}))\dot{\cup}(\overleftarrow{L}\cap \mathrm{supp}(x^{**}))\dot{\cup}(L_{rest}\cap \mathrm{supp}(x^{**}))\label{eq:partition_supp}\end{equation} is a partition of $\mathrm{supp}(x^{**})$. We are now ready to construct the three solutions of interest.
 \begin{lemma}\label{lemma:x_1}
 Let $\sigma_1\coloneqq \sigma_{L\setminus L_{rest}}\circ\sigma^{**}\circ\sigma^{*}$, let $x_1\coloneqq \mathrm{split}(x^{**},\sigma_{L\setminus L_{rest}})=\mathrm{split}(x,\sigma_1)$ and 
let $L_1\coloneqq\mathrm{supp}(x_1)$. Then the visible width of $(T,L_1)$ is at most $k$. In particular, we can in polynomial time, compute a solution to $(T,L_1)$ of cost at most $c(x_1)$, or find violated visibly $k$-wide modification inequality for $(T,L,c,r)$.
 \end{lemma}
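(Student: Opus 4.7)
The plan is to first establish that the visible width of $(T,L_1)$ is at most $k$ and then invoke the dynamic program of \cref{cor:solve_DTAP_bounded_viwidth}.

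For the visible width bound, I would fix $v\in V$ and bound $\viwr(v)$ and $\viww(v)$ separately. The first tool is a shadow-preservation reduction to the link set $L'$ from \cref{theorem:first_phase_splitting}\ref{item:first_phase_viwidth}: every $\ell_1\in L_1$ is a shadow of some $\ell^*\in\mathrm{supp}(x^*)$, and I would argue that any visibility at $v$ witnessed by $\ell_1$ transfers to some link of $L'$. The key observation is that if $\ell^*$ is a $W^*$-cross-link, a non-up/down shadow $\ell_1$ spanning $\mathrm{apex}(\ell^*)$ would itself be a $W^*$-cross-link, hence in $L\setminus L_{rest}$, and would be split at its apex by $\sigma_{L\setminus L_{rest}}$---so no such spanning shadow can appear in $L_1$, and every surviving shadow lies inside one of the two pieces of $\ell^*$ in $L'$. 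Combined with \cref{theorem:first_phase_splitting}\ref{item:first_phase_viwidth}, this yields $\viwr(v)\le k$ for every $v$ and $\viww(v)\le k$ for every non-problematic $v$, i.e., those whose parent arc $a_v$ is not $\zeta_2$-heavy with respect to $x^*$.

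The substantive case is $\viww(v)$ at a problematic $v$. Assume without loss of generality that $a_v$ is a down-arc; then $a_v\in A(\mathring{C})$ for some $C\in\mathring{\mathcal{C}}_{down}$ by \cref{lemma:heavy_arcs_in_core} and $v\in V(C)\subseteq X$, so I must bound $\viwu(v)$. A witnessing link $\ell\in L_1$ for an up-arc $a\in A_v\cap A_{up}$ must have $\mathrm{apex}(\ell)$ equal to $v$ or a strict ancestor of $v$. If $\mathrm{apex}(\ell)=v$, then $\ell$ is non-up/down with apex in $X$, hence in $L\setminus L_{rest}$ and split at $v$ by $\sigma_{L\setminus L_{rest}}$ into up/down pieces with $v$ as an endpoint---so neither piece has $v$ as an inner vertex. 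If $\mathrm{apex}(\ell)$ is a strict ancestor of $v$ and $\ell$ is non-up/down, then $\ell$'s path crosses $a_v$ upward, giving $a_v\in\covw(\ell)\cap A(C)$, so $\ell\in\overleftarrow L\subseteq L\setminus L_{rest}$ and again $\sigma_{L\setminus L_{rest}}$ splits $\ell$ at its apex. The delicate remaining sub-case is $\ell$ an up-link in $\overleftarrow L$: here I would mirror the counting argument of the final lemma of \cref{sec:proof_weak_dream}, combining \cref{cor:apex_in_core} with the LP-coverage inequality applied to the putative ancestor-free witnesses to push a relevant LP quantity above the threshold encoded in $k=(1+\gamma^{-1})\zeta_2$, thereby obtaining a contradiction.

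For the algorithmic step, observe that $(T,L_1)$ is always feasible: $x_1$ is an LP-feasible solution with support $L_1$, so every tree arc receives positive $x_1$-mass from at least one link of $L_1$, and $L_1$ itself is a cover. Apply \cref{cor:solve_DTAP_bounded_viwidth} to $(T,L_1)$ to obtain an optimal integer solution $S$ in polynomial time. If $c(S)\le c(x_1)$, return $S$. Otherwise, the visibly $k$-wide modification inequality of \cref{lemma:k_wide_modification} instantiated with $\sigma=\sigma_1$ and $A'=\emptyset$ reads
\[
\sum_{\ell\in L}\Bigl(\sum_{\ell'\in\sigma_1(\ell)}c(\ell')\Bigr)\cdot x_\ell \;\ge\; c(S),
\]
and by \cref{lemma:splitting_LP_solution} its left-hand side at $x$ equals $c(x_1)<c(S)$, so the inequality is violated at $x$. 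The main obstacle is the up-link sub-case of the problematic-vertex analysis: here $\sigma_{L\setminus L_{rest}}$ alone does not suppress visibility, and one must additionally leverage $\sigma^{**}$ together with the $\gamma$-lightness machinery to obtain the required bound.
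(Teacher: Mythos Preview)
Your reduction to the link set $L'$ of \cref{theorem:first_phase_splitting} for the non-problematic vertices, as well as the algorithmic conclusion via \cref{cor:solve_DTAP_bounded_viwidth} and \cref{lemma:k_wide_modification}, are correct and match the paper.

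The gap is precisely in your ``delicate remaining sub-case''. First, a minor correction: when $a_v$ is a $\zeta_2$-heavy down-arc and you are bounding $\viwu(v)$, a surviving witness $\ell\in L_1$ with $\mathrm{apex}(\ell)$ a strict ancestor of $v$ must be a \emph{down}-link, not an up-link (an up-link with apex above $v$ covers only down-arcs along its path, so it cannot cover an up-arc $a\in A_v$). More importantly, your proposed counting argument ``mirroring the final lemma of \Cref{sec:proof_weak_dream}'' cannot close this case: that lemma derives a contradiction precisely by concluding that $a_v$ is $\zeta_2$-heavy, which is exactly the hypothesis you start from here. There is no $\gamma$-lightness structure available on the $\viww$ side at a problematic vertex to push any LP quantity past the threshold $k$.

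The paper's argument for this sub-case is purely structural and in fact shows $\viww(v)=0$. The missing ingredient is the generic shadow combined with \cref{cor:apex_in_core}. From $a_v\in\covw(\ell)\cap A(C)$ and $\ell\in L_1=\mathrm{supp}(\mathrm{split}(x^{**},\sigma_{L\setminus L_{rest}}))$, \cref{cor:apex_in_core} yields $\mathrm{apex}(\ell)\in V(C)$. Since $C$ is the core of a single up- or down-component, \emph{every} arc on the $v$--$\mathrm{apex}(\ell)$ path lies in $A(C)$ and is oriented the same way as $a_v$, hence lies in $\covw(\ell)$. Consequently $\covr(\ell)$ is contained entirely in the $u$--$v$ subpath (where $u$ is the endpoint of $\ell$ in $U_v$), so the generic shadow $s(\ell)$ has its path inside $T_v$ and $v\notin\mathrm{in}(\overline{P}_\ell)$, contradicting the assumed visibility. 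No LP counting and no $\gamma$-lightness are needed; the work is done by $\sigma^{**}$ via \cref{cor:apex_in_core}, which in turn rests on property~\ref{property:no_entering_at_rC}.
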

\begin{proof}
Recall that $L_1=\mathrm{supp}(x_1)$ and $x_1=\mathrm{split}(x^*,\sigma_{L\setminus L_{rest}}\circ \sigma^{**})$. By \cref{lemma:splitting_cannot_increase_width,theorem:first_phase_splitting}, we know that $r$, as well as every $v\in V\setminus \{r\}$ for which $a_v$ is not $\zeta_2$-heavy (with respect to $x^*$), has visible width at most $k$ with respect to $L_1$. Moreover, $\viwr(v)\le k$ for every $v\in V$. Hence, it suffices to show that for every $v\in V\setminus\{r\}$ for which $a_v$ is $\zeta_2$-heavy, $\viww(v)\le k$ (with respect to $L_1$). In fact, we will show that $\viww(v)=0$ by showing the following claim:
\begin{claim} No arc in $\overleftarrow{A}_v$ is visible for $v$ with respect to $L_1$. 
\end{claim}
\begin{proof}[Proof of claim]
Recall that $\overleftarrow{A}_v$ is the set of arcs in $A_v$ that have the opposite orientation of $a_v$, i.e., that are down-arcs, if $a_v$ is an up-arc, and up-arcs, if $a_v$ is a down-arc.
Towards a contradiction, let $a\in \overleftarrow{A}_v$ and assume that there is $\ell\in L_1$ with $a\in\covr(\ell)$ and $v\in\mathrm{in}(\overline{P}_\ell)$. By \cref{lemma:heavy_arcs_in_core}, there exists a core $C$ containing $a_v$. In particular, $v\in X$. We note that $L_1$ does not contain any $(W^*\cup X)$-cross-links because every $(W^*\cup X)$-cross-link is contained in $L\setminus L_{rest}$ and has, hence, been split at its apex when constructing $L_1$. In particular, $\ell$ cannot be a $v$-cross-link. As $v\in\mathrm{in}(\overline{P}_\ell)$, $\mathrm{apex}(\ell)$ has to be a strict ancestor of $v$ and exactly one endpoint of $\ell$, say $u$, is contained in $U_v$. Both $a$ and $a_v$ lie on the $u$-$\mathrm{apex}(\ell)$-path in $T$, however, as $a\in\overleftarrow{A}_v$, exactly one of them is an up-arc and exactly one of them is a down-arc. As $a\in\covr(\ell)$, this implies $a_v\in\covw(\ell)$. By \cref{cor:apex_in_core}, $\mathrm{apex}(\ell)\in V(C)\subseteq X$. As $L_1$ does not contain any $(W^*\cup X)$-cross-link, $\mathrm{apex}(\ell)$ is an endpoint of $\ell$.

 As all arcs on the $v$-$\mathrm{apex}(\ell)$-path are contained in $A(C)$ and oriented in the same way as $a_v$, none of them is contained in $\covr(\ell)$. Hence, $\overline{P}_\ell$ is a subpath of the $u$-$v$-subpath in $T$. Hence, $v\notin\mathrm{in}(\overline{P}_\ell)$, a contradiction.
 \end{proof}
Hence, $(T,L_1,c,r)$ has visible width at most $k$ and we can find an optimum solution in polynomial time by \cref{cor:solve_DTAP_bounded_viwidth}.  If the optimum solution has cost at most $c(x_1)=\sum_{\ell\in L}\sum_{\ell'\in\sigma_1(\ell)} c(\ell')\cdot x_\ell$ (by \cref{lemma:splitting_LP_solution}), then we have found the desired solution. Otherwise, the set of $\zeta_1$-covered arcs that we initially contracted to obtain \eqref{eq:no_covered_arcs}, together with the splitting $\sigma_1$ that we applied to get from our initial LP solution $x$ to $x_1$, yields a violated visibly $k$-wide modification inequality.
\end{proof}
\begin{lemma}\label{lemma:x_2}
Let $\sigma_2\coloneqq \sigma_{\overleftarrow{L}\cup L_{rest}}\circ \sigma^{**}\circ\sigma^{*}$, let $x_2\coloneqq \mathrm{split}(x^{**},\sigma_{\overleftarrow{L}\cup L_{rest}})=\mathrm{split}(x,\sigma_2)$ and let $L_2\coloneqq\mathrm{supp}(x_2)$.
Then $(T,L_2,c,r)$ is a willow. In particular, we can, in polynomial time, compute a solution of cost at most $c(x_2)$.
\end{lemma}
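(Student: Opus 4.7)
The plan is to verify that $(T, L_2, c, r)$ is a willow with $W \coloneqq W^* \cup X$, and then invoke \cref{theorem:unimodularity}, which produces an integer solution of cost at most $c(x_2)$ because $x_2$ is a feasible LP solution supported on $L_2$. The verification reduces to a structural check and an independence check.

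\emph{Structural check.} Using the partition~\eqref{eq:partition_supp}, any $\ell \in L_2$ is either an unchanged link of $(L_{cross} \cup \overrightarrow{L}) \cap \mathrm{supp}(x^{**})$ or a shadow produced by $\sigma_{\overleftarrow{L} \cup L_{rest}}$. Shadows produced by an at-apex splitting are up- or down-links by construction. Unchanged links in $L_{cross}$ are $(W^* \cup X)$-cross-links by definition; unchanged links $\ell \in \overrightarrow{L} \cap \mathrm{supp}(x^{**})$ satisfy $\covr(\ell) \cap A(C) \ne \emptyset$ for some core $C$, and \cref{cor:apex_in_core} forces $\mathrm{apex}(\ell) \in V(C) \subseteq X$.

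\emph{Independence check on $W^*$.} By \cref{theorem:first_phase_splitting}, every $v \in W^*$ is up- or down-independent with respect to $\mathrm{supp}(x^*)$. Since every link of $L_2$ is a shadow of some link in $\mathrm{supp}(x^*)$, and $\covr(\cdot)$ can only shrink under shadows, this property passes to $L_2$.

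\emph{Independence check on $X$.} By symmetry, I take $v \in V(C)$ with $C \in \mathring{\mathcal{C}}_{up}$ and establish down-independence. For $v = r_C$, property~\ref{property:no_entering_at_rC} yields $\mathrm{supp}(x^{**}) \cap L^\uparrow_{r_C} = \emptyset$; since the endpoints of any shadow appear in the same order on its parent's tree path, a shadow can belong to $L^\uparrow_{r_C}$ only if its parent does, so $L_2 \cap L^\uparrow_{r_C} = \emptyset$, and Proposition~\ref{lemma:up-down-independent} yields down-independence. The main obstacle is $v \in V(C) \setminus \{r_C\}$, where $a_v$ is an up-arc of $A(C)$. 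The key observation is that for \emph{any} $\ell \in L^\downarrow_v$, the arc $a_v$ lies on the downward portion of $P_\ell$ traversed in reverse, so $a_v \in \covw(\ell)$; since $a_v \in A(C)$, this forces $\ell \in \overleftarrow{L}$. Now suppose $\ell \in L_2$ satisfies $\covr(\ell) \cap A_v \cap A_{down} \ne \emptyset$ and $\covr(\ell) \not\subseteq A_v$. A short case analysis places $\ell$ in $L^\downarrow_v$ and forces $\ell$ to be a down-link or a cross-link, so $\ell \in \overleftarrow{L}$ by the key observation. If $\ell$ is a cross-link it is necessarily unchanged from $L_{cross} \cup \overrightarrow{L}$, contradicting $L_{cross} \cap \overleftarrow{L} = \emptyset$ or \cref{lemma:no_cover_right_and_wrong}. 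If $\ell$ is a down-link, then $\ell \in L_2$ together with $a_v \in \cov(\ell) \cap A(C)$ allows invoking \cref{cor:apex_in_core} (with $\sigma = \sigma_{\overleftarrow{L} \cup L_{rest}}$), yielding $\mathrm{apex}(\ell) \in V(C)$; hence the $\mathrm{apex}(\ell)$-to-$v$ subpath lies in the up-core $C$ and consists entirely of up-arcs, so all down-arcs in $\covr(\ell)$ lie on the $v$-to-head subpath, which is inside $T_v$, giving $\covr(\ell) \subseteq A_v$---a contradiction.
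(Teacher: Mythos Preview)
Your approach is the paper's approach, but the independence check on $X$ for $v\in V(C)\setminus\{r_C\}$ has the directions systematically inverted, so the argument as written is incorrect.

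Concretely, take $C\in\mathring{\mathcal{C}}_{up}$ and a link $\ell=(u,w)\in L_2$ with $\covr(\ell)\cap A_v\cap A_{down}\ne\emptyset$ and $\covr(\ell)\not\subseteq A_v$. A down-arc in $\covr(\ell)$ always lies on the $u$-to-$\mathrm{apex}(\ell)$ portion of $P_\ell$, so it is the \emph{tail} $u$ that sits in $U_v\setminus\{v\}$; together with $\covr(\ell)\not\subseteq A_v$ this forces $\mathrm{apex}(\ell)$ to be a strict ancestor of $v$, so $\ell$ is an \emph{up}-link or a cross-link---never a down-link, and never in $L^\downarrow_v$. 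Your ``key observation'' is likewise inverted: for $\ell\in L^\downarrow_v$ the up-arc $a_v$ lies on the apex-to-head portion of $P_\ell$ and is therefore in $\covr(\ell)$, not $\covw(\ell)$. The correct statement is that any such $\ell$ (tail in $U_v$, apex above $v$) has $a_v$ on its tail-to-apex portion, whence $a_v\in\covw(\ell)$ and $\ell\in\overleftarrow{L}$.

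With this swap your two cases become the paper's: the cross-link case dies because a cross-link in $L_2$ must be an unchanged link of $(L_{cross}\cup\overrightarrow{L})\cap\mathrm{supp}(x^{**})$, which is disjoint from $\overleftarrow{L}$; and in the up-link case, \cref{cor:apex_in_core} gives $\mathrm{apex}(\ell)\in V(C)$, the $v$-to-$\mathrm{apex}(\ell)$ path lies inside the up-core $C$ and consists of up-arcs, while an up-link's $\covr$ contains only down-arcs, so $\covr(\ell)\subseteq A_v$---the desired contradiction. Your written down-link case, by contrast, is vacuous (a down-link covers no down-arcs at all) and its final sentence does not actually yield $\covr(\ell)\subseteq A_v$.
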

\begin{proof}
	By \cref{cor:apex_in_core}, applied with $\sigma=\sigma_{\rm{id}}$, we know that every link in $\overrightarrow{L}\cap\mathrm{supp}(x^{**})$ is an $X$-cross-link or an up- or down-link. Hence, every link in $\mathrm{supp}(x_2)$ is an up-link, a down-link, or a $(W^*\cup X)$-cross-link.
To establish that $(T,L_2,c,r)$ is a willow, it suffices to show that every vertex in $(W^*\cup X)$ is up- or down-independent with respect to $L_2$. For vertices in $W^*$, this follows from \cref{lemma:up_down_independent_first_phase}, and \cref{lemma:splitting_decreases_L_up_down,lemma:up-down-independent}. Recall that $X=\bigcup_{C\in\mathring{\mathcal{C}}} V(C)$. We show that for $C\in\mathring{\mathcal{C}}_{up}$, all vertices in $V(C)$ are down-independent. Analogously, one can show that for $C\in\mathring{\mathcal{C}}_{down}$, all vertices in $V(C)$ are up-independent.

Let $C\in\mathring{\mathcal{C}}_{up}$. By \ref{property:no_entering_at_rC}, $\mathrm{supp}(x^{**})\cap L_{r_C}^\uparrow=\emptyset$, so also $\mathrm{supp}(x_2)\cap L_{r_C}^\uparrow=\emptyset$ by \cref{lemma:splitting_decreases_L_up_down}. By \cref{lemma:up-down-independent}, $r_C$ is down-independent. Next, let $v\in V(C)\setminus \{r_C\}$. Then $a_v\in A(C)$. Assume towards a contradiction there were $\ell=(u,w)\in L_2$ with $\covr(\ell)\cap A_v\cap A_{down}\ne\emptyset$ and $\covr(\ell)\not\subseteq A_v$. The first property implies $u\in U_v\setminus\{v\}$, the second property tells us that $\mathrm{apex}(\ell)$ is a strict ancestor of $v$. In particular, $a_v\in\cov(\ell)$. By \cref{cor:apex_in_core}, $\mathrm{apex}(\ell)\in V(C)$. As $C\in\mathring{\mathcal{C}}_{up}$, $a_v$ is an up-arc, so $a_v\in\covw(\ell)$, implying $\ell\in \overleftarrow{L}$. But this implies that $\mathrm{apex}(\ell)=w$ is an endpoint of $\ell$ because all links in $\overleftarrow{L}$ were split at their apices. Hence, $\cov(\ell)\setminus A_v$ consists of the up-arcs on the $v$-$\mathrm{apex}(\ell)$-path, implying $\cov(\ell)\setminus A_v\subseteq \covw(\ell)$ and $\covr(\ell)\subseteq A_v$, contradicting our assumptions.  
\end{proof}
Before describing the splitting leading to our third solution, we make the following observation:
\begin{lemma}\label{lemma:unique_core}
Let $\ell\in\overrightarrow{L}\cap \mathrm{supp}(x^{**})$. Then there exists a unique core $C\in\mathring{\mathcal{C}}$ such that $\covr(\ell)\cap A(C)\ne \emptyset$.
\end{lemma}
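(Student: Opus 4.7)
Existence is immediate from the definition of $\overrightarrow{L}$, so only uniqueness requires argument. My plan is to assume, for contradiction, that $\ell\in\overrightarrow{L}\cap \mathrm{supp}(x^{**})$ has $\covr(\ell)\cap A(C)\ne\emptyset$ and $\covr(\ell)\cap A(C')\ne\emptyset$ for distinct cores $C,C'\in\mathring{\mathcal{C}}$. Applying \cref{lemma:cover_core_apex} to each core, I get $\mathrm{apex}(\ell)\in V(C)\cap V(C')$. Since the up-components of $T$ are vertex-disjoint (and similarly the down-components), and each core lies inside its component, $C$ and $C'$ cannot both be up-cores or both down-cores. So, up to symmetry, I may assume $C$ is an up-core and $C'$ is a down-core.

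Next, I would extract from \cref{lemma:cover_core_apex} arcs $a'\in\cov(\ell)\cap A(C)$ (necessarily an up-arc) and $a''\in\cov(\ell)\cap A(C')$ (necessarily a down-arc), both incident to $\mathrm{apex}(\ell)$ on $P_\ell$. Since $\mathrm{apex}(\ell)$ is the LCA of the endpoints of $\ell$, the path $P_\ell$ has exactly one arc incident to $\mathrm{apex}(\ell)$ on the $u$-side and one on the $v$-side, so $\{a',a''\}$ consists of these two arcs, one on each side. Two preliminary observations drive the finish: first, $\mathrm{apex}(\ell)$ must equal $r_C$ or $r_{C'}$, since otherwise the unique tree arc joining $\mathrm{apex}(\ell)$ to its parent in $T$ would simultaneously have to be the up-arc of $C$ forced by apex being a non-root of $C$, and the down-arc of $C'$ forced by apex being a non-root of $C'$, an impossibility; second, $\ell$ is neither an up-link nor a down-link, since otherwise one side of $P_\ell$ at $\mathrm{apex}(\ell)$ would be empty, ruling out the required $\covr$-intersection with either the up- or the down-core.

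It then remains to split on which side of $\mathrm{apex}(\ell)$ the up-arc $a'$ lies. If $a'$ is on the $v$-side (and $a''$ on the $u$-side), both arcs lie in $\covr(\ell)$, and whichever of $r_C,r_{C'}$ equals $\mathrm{apex}(\ell)$ supplies a $\covr$-arc of the corresponding core incident to that root; \cref{property:no_covering_two_cores} then forces $\mathrm{apex}(\ell)$ to be an endpoint of $\ell$, contradicting the second preliminary observation. In the complementary case, $a'$ is an up-arc on the $u$-side of $\mathrm{apex}(\ell)$, hence $a'\in\covw(\ell)\cap A(C)$; combined with the original up-arc witnessing $\covr(\ell)\cap A(C)\ne\emptyset$, this violates \cref{property:not_right_and_wrong_direction}. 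The main obstacle is the side-of-apex case analysis: executing it cleanly requires pinning down the two preliminary observations tightly enough that \cref{property:no_covering_two_cores} and \cref{property:not_right_and_wrong_direction} together close out both branches.
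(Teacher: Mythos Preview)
Your argument is correct. Both proofs start by putting $\mathrm{apex}(\ell)$ in $V(C)\cap V(C')$, but you then diverge from the paper. You invoke vertex-disjointness of same-type cores immediately to force $C$ and $C'$ to be of opposite types, and then run a side-of-apex case split that closes out via \ref{property:no_covering_two_cores} in one branch and \ref{property:not_right_and_wrong_direction} in the other. The paper instead postpones the type question: it first shows $\mathrm{apex}(\ell)=r_{C_1}$ for one of the cores (same parent-arc argument as your observation~1), then uses \ref{property:no_covering_two_cores} once to conclude $\ell$ is an up- or down-link; since such a link's $\covr$ contains only arcs of one orientation, both cores must be of the \emph{same} type, contradicting vertex-disjointness. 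The paper's route is slightly leaner---a single application of \ref{property:no_covering_two_cores}, no appeal to \ref{property:not_right_and_wrong_direction}, and no case split---while yours makes more direct use of the arc-incident-to-apex conclusion of \cref{lemma:cover_core_apex} and is perhaps more explicitly geometric. Both are valid and of comparable length.
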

\begin{proof}
By definition of $\overrightarrow{L}$, there exists at least one core with this property. Assume towards a contradiction that there are two distinct cores $C_1$ and $C_2$ such that $\covr(\ell)\cap A(C_1)\ne \emptyset$ and $\covr(\ell)\cap A(C_2)\ne \emptyset$. Let $w\coloneqq \mathrm{apex}(\ell)$. By \cref{cor:apex_in_core}, we know that $w\in V(C_1)\cap V(C_2)$. If $w=r$, then $w=r_{C_1}=r_{C_2}$. Otherwise, if $w\ne r$, $a_w$ is contained in at most one of the sets $A(C_1)$ or $A(C_2)$, assume w.l.o.g.\ that $a_w\notin A(C_1)$. As $w\in V(C_1)$, but $a_w\notin A(C_1)$, we must, again, have $w=r_{C_1}$.

Let $a\in \covr(\ell)\cap A(C_1)$ and let $x$ be the bottom vertex of $a$. As $a\in\covr(\ell)$ and $\mathrm{apex}(\ell)=w=r_{C_1}$, $\ell$ covers all arcs on the $x$-$r_{C_1}$-path in $C_1$, including the arc incident to $r_{C_1}$. By \ref{property:no_covering_two_cores}, $r_{C_1}=\mathrm{apex}(\ell)$ is an endpoint of $\ell$, so $\ell$ is an up-link or a down-link. As $\covr(\ell)\cap A(C_1)\ne \emptyset$ and $\covr(\ell)\cap A(C_2)\ne \emptyset$, $C_1$ and $C_2$ must both be down-cores, or both be up-cores, respectively. However, this contradicts $w\in V(C_1)\cap V(C_2)$ because two distinct down-cores/ up-cores are vertex-disjoint.
\end{proof}
For $\ell\in \overrightarrow{L}\cap \mathrm{supp}(x^{**})$, let $C_\ell$ be the unique core with $\covr(\ell)\cap A(C)\ne \emptyset$. We define a splitting $\sigma'_3$ of $L$ as follows:
\begin{itemize}
	\item Let $\ell=(u,v)\in \overrightarrow{L}\cap \mathrm{supp}(x^{**})$ with $C_{\ell}\in \mathring{\mathcal{C}}_{up}$. Let $w$ be the lowest vertex from $V(C_\ell)$ on the $\mathrm{apex}(\ell)$-$v$-path in $T$ and define $\sigma'_3(\ell)\coloneqq \{(u,\mathrm{apex}(\ell)),(\mathrm{apex}(\ell),w),(w,v)\}$.
	\item Let $\ell=(u,v)\in \overrightarrow{L}\cap \mathrm{supp}(x^{**})$ with $C_{\ell}\in \mathring{\mathcal{C}}_{down}$. Let $w$ be the lowest vertex from $V(C_\ell)$ on the $u$-$\mathrm{apex}(\ell)$-path in $T$ and define $\sigma'_3(\ell)\coloneqq \{(u,w),(w,\mathrm{apex}(\ell)),(\mathrm{apex}(\ell),v)\}$.
	\item Let $\ell=(u,v)\in L_{cross}\cup L_{rest}$. Define $\sigma'_3(\ell)=\{(u,\mathrm{apex}(\ell)),(\mathrm{apex}(\ell),v)\}$.
	\item For every other link $\ell$, define $\sigma'_3(\ell)=\{\ell\}$.
\end{itemize}
\begin{lemma}\label{lemma:x_3}
Let $\sigma_3\coloneqq \sigma'_3\circ \sigma^{**}\circ \sigma^{*}$ and let $x_3\coloneqq \mathrm{split}(x^{**},\sigma'_3)=\mathrm{split}(x,\sigma_3)$.
Let $L_3\coloneqq\mathrm{supp}(x_3)$ and let $M\in\{0,1\}^{A\times L_3}$ denote the arc-link-coverage matrix of $(T,L_3,c,r)$.
Then $M$ is TU. In particular, we can, in polynomial time, compute a solution to $(T,L_3,c,r)$ of cost at most $c(x_3)$.
\end{lemma}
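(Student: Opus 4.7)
The plan is to show that the arc-link coverage matrix $M$ decomposes (after permuting rows and columns) into a block-diagonal form, where each block is the coverage matrix of a willow sub-instance; total unimodularity of each block via \cref{theorem:unimodularity} will then imply total unimodularity of $M$.

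First, using \cref{cor:apex_in_core}, \cref{lemma:unique_core}, \cref{lemma:no_cover_right_and_wrong}, and properties \ref{property:not_right_and_wrong_direction}--\ref{property:no_covering_two_cores}, I analyze how $\sigma'_3$ distributes the directed coverage of each link in $L_3$ relative to the cores. The key observation is that for every $\ell' \in L_3$, the directed coverage $\covr(\ell')$ either lies entirely inside $A(C)$ for some unique core $C \in \mathring{\mathcal{C}}$ or is entirely disjoint from $\bigcup_{C\in\mathring{\mathcal{C}}} A(C)$. Indeed, only the middle piece of each $\overrightarrow{L}$-split link $\ell$ lands inside the core $C_\ell$, while the top and bottom pieces of $\overrightarrow{L}$-splits, the pieces from $L_{cross}$- and $L_{rest}$-splits, and the unsplit $\overleftarrow{L}$-links all cover only non-core arcs. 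This yields a partition $L_3 = L^{out} \sqcup \bigsqcup_{C\in\mathring{\mathcal{C}}} L^C$, which combined with $A = A^{out} \sqcup \bigsqcup_{C} A(C)$ exhibits $M$ as block-diagonal.

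It then remains to verify total unimodularity of each block. For a core $C \in \mathring{\mathcal{C}}_{up}$, the columns $L^C$ consist exclusively of down-links whose endpoints both lie in $V(C)$, so the sub-instance on $(V(C), A(C))$ is trivially a willow with $W = \emptyset$; the argument for down-cores is analogous. For the outside block, I consider the tree $T^{out}$ obtained from $T$ by contracting all core arcs, so that each core vertex set $V(C)$ collapses to a single vertex $v_C$. The block then equals the arc-link coverage matrix of $(T^{out}, L^{out}, c, r^{out})$, which I claim to be a willow with $W^{out}$ comprising the surviving $W^*$-vertices, the contracted core vertices $\{v_C : C \in \mathring{\mathcal{C}}\}$, and $r^{out}$. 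The only links in $L^{out}$ that are neither up- nor down-links come from unsplit $\overleftarrow{L}$-links, and by \cref{cor:apex_in_core} their apex lies in some $V(C')$ and hence contracts to $v_{C'} \in W^{out}$; so every link in $L^{out}$ is an up-, down-, or $W^{out}$-cross-link in $T^{out}$. For the independence of $W^{out}$-vertices, $W^*$-vertices and $r^{out}$ inherit up-/down-independence from \cref{lemma:up_down_independent_first_phase} together with \cref{lemma:splitting_decreases_L_up_down}, while \ref{property:no_entering_at_rC} ensures that no link in $\mathrm{supp}(x^{**})$ (and hence no shadow of such in $L^{out}$) crosses $r_C$ in the orientation that would invalidate the required independence of $v_C$.

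Applying \cref{theorem:unimodularity} to each block yields that $M$ is totally unimodular, so an optimal vertex solution to \eqref{eq:WDTAP_LP} over $L_3$ is integral and polynomial-time computable. Since $x_3$ remains a feasible fractional solution to this LP by \cref{lemma:splitting_same_coverage}, the resulting integral solution has cost at most $c(x_3)$. The most delicate step will be the independence verification for the contracted vertices $v_C$: one must carefully trace how \ref{property:no_entering_at_rC} rules out precisely those shadow configurations in $L^{out}$ that would otherwise allow a link to cover arcs both strictly below and strictly above $v_C$ in the direction blocking the required up- or down-independence.
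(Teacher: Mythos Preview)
Your proposal is correct and follows essentially the same route as the paper: block-diagonalize $M$ into core blocks (up/down-links only, hence trivially a willow) and an outside block obtained by contracting core arcs, then argue the contracted instance is a willow whose distinguished vertices are the core super-vertices, with independence coming from \ref{property:no_entering_at_rC}. Two small remarks: the paper's distinguished set for the outside willow consists only of the contracted core vertices (your inclusion of the surviving $W^*$-vertices is harmless but unnecessary, since by \cref{prop:no_up_or_down_link} every non-up/down link in $L^{out}$ already has its apex in a core), and you should be slightly more careful that several cores can collapse into the \emph{same} super-vertex, so that the relevant $r_C$ for the independence argument is the one that is topmost in the contracted blob---exactly the point the paper singles out.
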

Before we prove \cref{lemma:x_3}, we first make the following observations:
\begin{proposition}\label{prop:no_up_or_down_link}
Let $\ell\in L_3$ such that $\ell$ is neither an up- nor a down-link. Then $\ell\in \overleftarrow{L}\cap \mathrm{supp}(x^{**})$.
\end{proposition}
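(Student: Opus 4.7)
The plan is to unpack the definition of $\sigma'_3$ and observe that in three of its four cases, every element of $\sigma'_3(\ell')$ is either an up-link or a down-link. Since $\ell \in L_3 = \mathrm{supp}(x_3)$ only if there is some $\ell' \in \mathrm{supp}(x^{**})$ with $\ell \in \sigma'_3(\ell')$, and since $\ell$ is by hypothesis neither up nor down, we will pin down exactly which case of the definition $\ell'$ must fall into, and then read off the conclusion.

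More concretely, I would verify case by case: for $\ell' = (u,v) \in \overrightarrow{L} \cap \mathrm{supp}(x^{**})$ with $C_{\ell'} \in \mathring{\mathcal{C}}_{up}$, the defining vertex $w$ lies on $P_{\mathrm{apex}(\ell') v}$ and satisfies $w \in V(C_{\ell'})$, so $w$ is a descendant of $\mathrm{apex}(\ell')$ and an ancestor of $v$. Then the three shadows $(u, \mathrm{apex}(\ell'))$, $(\mathrm{apex}(\ell'), w)$, $(w, v)$ each have one endpoint equal to their own apex (namely $\mathrm{apex}(\ell')$, $\mathrm{apex}(\ell')$ and $w$ respectively), so each is an up-link or down-link. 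The $C_{\ell'} \in \mathring{\mathcal{C}}_{down}$ case is symmetric, and for $\ell' \in L_{cross} \cup L_{rest}$ the two halves $(u, \mathrm{apex}(\ell'))$ and $(\mathrm{apex}(\ell'), v)$ are obviously an up-link and a down-link.

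Thus the only way $\sigma'_3(\ell')$ can contain a link that is neither an up-link nor a down-link is through the last, catch-all, clause of the definition, where $\sigma'_3(\ell') = \{\ell'\}$; in that case $\ell = \ell'$. Using the partition \eqref{eq:partition_supp} of $\mathrm{supp}(x^{**})$ into $L_{cross} \cap \mathrm{supp}(x^{**})$, $\overrightarrow{L} \cap \mathrm{supp}(x^{**})$, $\overleftarrow{L} \cap \mathrm{supp}(x^{**})$ and $L_{rest} \cap \mathrm{supp}(x^{**})$, and having ruled out the first, second and fourth parts (all of which are split non-trivially by $\sigma'_3$), we are forced into $\ell' \in \overleftarrow{L} \cap \mathrm{supp}(x^{**})$, hence $\ell = \ell' \in \overleftarrow{L} \cap \mathrm{supp}(x^{**})$, as claimed.

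The only mildly delicate point is bookkeeping orientations (``up-link'' vs.\ ``down-link'') for each shadow in the three-way split, but this is purely a matter of applying the definitions $v = \mathrm{apex}(\ell)$ resp.\ $u = \mathrm{apex}(\ell)$ to each of the three shadows using that $\mathrm{apex}(\ell')$ is an ancestor of $w$; no genuine obstacle arises.
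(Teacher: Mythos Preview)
Your proposal is correct and takes essentially the same approach as the paper's proof: both argue that for any $\ell'\in\mathrm{supp}(x^{**})$ lying in $\overrightarrow{L}\cup L_{cross}\cup L_{rest}$, the set $\sigma'_3(\ell')$ consists only of up- and down-links, so any non-up/non-down link in $L_3$ must arise from the catch-all clause applied to some $\ell'\in\overleftarrow{L}\cap\mathrm{supp}(x^{**})$, forcing $\ell=\ell'$. Your version spells out the case analysis in more detail, but the argument is the same.
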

\begin{proof}
As $\ell\in L_3$, there is $\ell'\in\mathrm{supp}(x^{**})$ with $\ell\in\sigma'_3(\ell')$. For every link $\ell''\in \mathrm{supp}(x^{**})\cap (\overrightarrow{L}\cup L_{cross}\cup L_{rest})$, $\sigma'_3(\ell'')$ consists of up- and down-links only. Hence, $\ell'\in \overleftarrow{L}\cap \mathrm{supp}(x^{**})$. As $\sigma'_3(\ell')=\{\ell'\}$, we have $\ell=\ell'$.
\end{proof}
\begin{proposition}\label{prop:only_cover_core}
Let $\ell\in L_3\cap \overrightarrow{L}$. Then there exists a unique core $C\in\mathring{\mathcal{C}}$ such that $\covr(\ell)\cap A(C)\ne \emptyset$. Moreover, $\covr(\ell)\subseteq A(C)$ and $\ell$ is an up- or down-link.
\end{proposition}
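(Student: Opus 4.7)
The plan is to trace $\ell$ back to the link of $\mathrm{supp}(x^{**})$ from which it arises under $\sigma'_3$, identify a unique candidate core via \cref{lemma:unique_core}, and then rule out the two ``outer'' pieces of the three-piece splitting so that $\ell$ must coincide with the ``middle'' piece, whose form matches the statement.

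Since $\ell \in L_3 = \mathrm{supp}(\mathrm{split}(x^{**},\sigma'_3))$, we can pick $\ell' \in \mathrm{supp}(x^{**})$ with $\ell \in \sigma'_3(\ell')$. Using the partition \eqref{eq:partition_supp} of $\mathrm{supp}(x^{**})$, I would first show $\ell' \in \overrightarrow{L}$. If $\ell' \in (L_{cross} \cup L_{rest}) \cap \mathrm{supp}(x^{**})$, then $\covr(\ell') \cap A(C) = \emptyset$ for every $C \in \mathring{\mathcal{C}}$; since $\ell$ is a shadow of $\ell'$, $\covr(\ell) \subseteq \covr(\ell')$, contradicting $\ell \in \overrightarrow{L}$. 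If $\ell' \in \overleftarrow{L} \cap \mathrm{supp}(x^{**})$, then $\sigma'_3(\ell') = \{\ell'\}$, so $\ell = \ell'$ is a shadow of itself in $\mathrm{supp}(x^{**})$ lying in $\overrightarrow{L} \cap \overleftarrow{L}$, contradicting \cref{lemma:no_cover_right_and_wrong}. Hence $\ell' \in \overrightarrow{L} \cap \mathrm{supp}(x^{**})$, and \cref{lemma:unique_core} provides a unique core $C_{\ell'}$ with $\covr(\ell') \cap A(C_{\ell'}) \neq \emptyset$.

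Assume $C_{\ell'} \in \mathring{\mathcal{C}}_{up}$ (the case $C_{\ell'} \in \mathring{\mathcal{C}}_{down}$ is symmetric). Writing $\ell' = (u,v)$ with $w$ as in the definition of $\sigma'_3$, the three pieces of $\sigma'_3(\ell')$ are the up-link $(u,\mathrm{apex}(\ell'))$ and the two down-links $(\mathrm{apex}(\ell'),w)$ and $(w,v)$. The up-link covers only down-arcs on the $u$-$\mathrm{apex}(\ell')$-path, whereas $A(C_{\ell'})$ consists only of up-arcs, so its right-coverage is disjoint from $A(C_{\ell'})$; combined with uniqueness in \cref{lemma:unique_core} and shadowhood of $(u,\mathrm{apex}(\ell'))$ with respect to $\ell'$, this rules it out of $\overrightarrow{L}$. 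For $(w,v)$, by the choice of $w$ as the lowest $V(C_{\ell'})$-vertex on the $\mathrm{apex}(\ell')$-$v$-path, the vertex $w'$ immediately below $w$ on this path (if any) is not in $V(C_{\ell'})$. If the tree-arc between $w$ and $w'$ were an up-arc, then $w'$ would inherit membership in the up-component $C_{\ell'}$ from $w$, a contradiction; hence this arc is a down-arc, not contributing to the right-coverage of $(w,v)$. Any up-arc strictly further below on the $v$-$w$-path cannot be in $C_{\ell'}$ either (both endpoints would sit in $V(C_{\ell'})$ below $w$, violating minimality of $w$), and by uniqueness of $C_{\ell'}$ together with shadowhood, cannot be in any other core. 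So $(w,v) \notin \overrightarrow{L}$, and $\ell$ must equal the remaining piece $(\mathrm{apex}(\ell'),w)$, which is a down-link. Since $V(C_{\ell'})$ induces a subtree of $T$ containing both endpoints, the $w$-$\mathrm{apex}(\ell')$-path lies entirely in $C_{\ell'}$, giving $\covr(\ell) \subseteq A(C_{\ell'})$.

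Uniqueness of $C \coloneqq C_{\ell'}$ with $\covr(\ell) \cap A(C) \neq \emptyset$ then follows because distinct up-components (and distinct down-components) are vertex-disjoint, hence arc-disjoint, and $A_{up} \cap A_{down} = \emptyset$. I expect the main obstacle to be the detailed ruling-out of the piece $(w,v)$: this step has to combine the extremality of $w$ in $V(C_{\ell'})$, the structural property that $C_{\ell'}$ is an up-arc component (forcing the arc just below $w$ to be a down-arc), and the uniqueness supplied by \cref{lemma:unique_core} in order to preclude any right-covered core arc appearing below $w$.
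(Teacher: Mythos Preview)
Your overall strategy matches the paper's: trace $\ell$ back to some $\ell'\in\mathrm{supp}(x^{**})$, show $\ell'\in\overrightarrow{L}$, invoke \cref{lemma:unique_core}, and identify $\ell$ as the ``middle'' piece of $\sigma'_3(\ell')$. The argument that $\covr(\ell)\subseteq A(C_{\ell'})$ because the core is a subtree containing both $\mathrm{apex}(\ell')$ and $w$ is correct, though you should cite \cref{cor:apex_in_core} for $\mathrm{apex}(\ell')\in V(C_{\ell'})$.

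There is, however, a genuine gap in your ruling-out of the piece $(w,v)$. You claim the arc between $w$ and $w'$ must be a down-arc, arguing that if it were an up-arc then ``$w'$ would inherit membership in the up-component $C_{\ell'}$ from $w$''. But $C_{\ell'}$ is a \emph{core}, not a component: it consists only of the paths from base arcs up to $r_{C_{\ell'}}$, and $w$ may well be a leaf of the core (for instance the bottom vertex of a base arc). In that case the arc $(w',w)$ can be an up-arc of the ambient up-component while $w'\notin V(C_{\ell'})$. So the conclusion ``hence this arc is a down-arc'' is false in general.

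The repair is immediate and already present in your write-up: apply to the arc between $w$ and $w'$ the very argument you give for up-arcs ``strictly further below''. If that arc were an up-arc lying in $A(C_{\ell'})$, its bottom endpoint $w'$ would be in $V(C_{\ell'})$, contradicting the minimality of $w$; and by the uniqueness of $C_{\ell'}$ together with $\covr((w,v))\subseteq\covr(\ell')$, it cannot lie in any other core either. With this fix your proof is essentially the paper's, which more tersely asserts that $(\mathrm{apex}(\ell'),w)$ is the only link in $\sigma'_3(\ell')$ covering part of $A(C)$.
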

\begin{proof}
Let $\ell\in L_3\cap \overrightarrow{L}$ and let $C$ be a core with $\covr(\ell)\cap A(C)\ne \emptyset$. Let $\ell'\in\mathrm{supp}(x^{**})$ such that $\ell\in\sigma'_3(\ell')$. Then $\ell$ is a shadow of $\ell'$, so $\covr(\ell)\subseteq \covr(\ell')$. In particular, $\covr(\ell')\cap A(C)\ne \emptyset$, $\ell'\in \overrightarrow{L}$ and by \cref{lemma:unique_core}, $C=C_{\ell'}$ is the unique core $C'$ such that $\covr(\ell')\cap A(C')\ne \emptyset$. Hence, $C$ is also the unique core $C'$ such that $\covr(\ell)\cap A(C')\ne \emptyset$. Let $w$ be as in the definition of $\sigma'_3(\ell')$. The second part of the statement follows from the facts that $\mathrm{apex}(\ell')\in V(C)$ by \cref{cor:apex_in_core} and that we must have $\ell=(\mathrm{apex}(\ell'),w)$, if $C$ is an up-core, and $\ell=(w,\mathrm{apex}(\ell'))$, if $C$ is a down-core, because this is the only link in $\sigma'_3(\ell')$ covering part of $A(C)$.
\end{proof}
We further introduce the following notation.
\begin{definition}
	We call a matrix $M\in\mathbb{R}^{I\times J}$ a \emph{block diagonal matrix with blocks $M[I_s,J_s]$, $s=1,\dots,t$} if $I=\dot{\bigcup}_{s=1}^t I_s$ is a partition of $I$, $J=\dot{\bigcup}_{s=1}^t J_s$ is a partition of $J$ and for $i\in I_{s_1}$ and $j\in J_{s_2}$, $M_{ij}\ne 0$ implies $s_1=s_2$, i.e., non-zero entries can only occur within one block.
\end{definition}
This definition may differ from notions used in the literature in that we do not require the blocks to be square or have the same size.
\begin{proof}[Proof of \cref{lemma:x_3}]
	We first establish the following claim:
	\begin{claim}
		$M$ is a block diagonal matrix with blocks $M[A(C),\{\ell\in L_3\colon\covr(\ell)\cap A(C)\ne \emptyset\}]$ for $C\in\mathring{\mathcal{C}}$ and $M[A\setminus\bigcup_{C\in\mathring{\mathcal{C}}}A(C),L_3\setminus \overrightarrow{L}]$.
	\end{claim}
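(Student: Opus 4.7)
The plan is to verify that the two partitions of arcs and of links are actually partitions, and then to check that every nonzero entry of $M$ lies inside one of the claimed blocks. The key ingredient is \cref{prop:only_cover_core}, which forces every link in $L_3\cap\overrightarrow{L}$ to have its directed coverage concentrated in a single core.

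First I will verify the \emph{row partition}. The arc sets $A(C)$ for distinct cores $C\in\mathring{\mathcal{C}}$ are pairwise disjoint: two distinct up-components are arc-disjoint and two distinct down-components are arc-disjoint, while any $A(C)$ with $C\in\mathring{\mathcal{C}}_{up}$ consists of up-arcs and any $A(C)$ with $C\in\mathring{\mathcal{C}}_{down}$ consists of down-arcs. Together with the remainder $A\setminus\bigcup_{C\in\mathring{\mathcal{C}}}A(C)$ this partitions~$A$.

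Next I will verify the \emph{column partition}. By \cref{prop:only_cover_core}, every link in $L_3\cap\overrightarrow{L}$ has a unique core $C\in\mathring{\mathcal{C}}$ whose arcs it covers in the right direction; so the sets $\{\ell\in L_3\colon\covr(\ell)\cap A(C)\ne\emptyset\}$ for varying $C\in\mathring{\mathcal{C}}$ are pairwise disjoint. By the definition of $\overrightarrow{L}$, their union is exactly $L_3\cap\overrightarrow{L}$, so together with $L_3\setminus\overrightarrow{L}$ they partition $L_3$.

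Finally I will check that every nonzero entry of $M$ sits inside one of these blocks, i.e., that whenever $a$ and $\ell$ are in different blocks we have $a\notin\covr(\ell)$. There are two cross-type cases. If $a\in A(C)$ and $\ell\in\{\ell'\in L_3\colon \covr(\ell')\cap A(C')\ne\emptyset\}$ for some $C'\ne C$, then by \cref{prop:only_cover_core} we have $\covr(\ell)\subseteq A(C')$, and $A(C)\cap A(C')=\emptyset$ gives $a\notin\covr(\ell)$. If $a\in A(C)$ and $\ell\in L_3\setminus\overrightarrow{L}$, then $a\in\covr(\ell)$ would by definition place $\ell$ in $\overrightarrow{L}$, a contradiction. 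Symmetrically, if $a\in A\setminus\bigcup_{C\in\mathring{\mathcal{C}}}A(C)$ and $\ell\in\{\ell'\in L_3\colon\covr(\ell')\cap A(C)\ne\emptyset\}$, then \cref{prop:only_cover_core} gives $\covr(\ell)\subseteq A(C)$, forcing $a\notin\covr(\ell)$. This exhausts the cross cases and proves the claim. I do not expect a real obstacle here; the entire statement is essentially a bookkeeping consequence of \cref{prop:only_cover_core}.
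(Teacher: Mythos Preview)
Your proof is correct and follows essentially the same approach as the paper: both verify the row and column partitions and then use \cref{prop:only_cover_core} to confirm the block structure. The paper is slightly more streamlined in that it argues, for each row set, which column set a link with a nonzero entry must fall into (relying on the already-established column partition to rule out the other possibilities), whereas you check each cross-block case explicitly; the content is the same.
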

\begin{proof}[Proof of claim]
By definition, the sets $A(C)_{C\in\mathring{\mathcal{C}}}$ are pairwise disjoint. This shows that the arcs sets indexing the rows of the blocks form a partition of $A$.
 
By \cref{prop:only_cover_core}, the link sets indexing the columns of the blocks form a partition of $L_3$.

Finally, we verify the block structure of $M$. First, let $C\in\mathring{\mathcal{C}}$, let $a\in A(C)$ and let $\ell'\in L_3$ such that $M_{a,\ell'}\ne 0$. Then $a\in\covr(\ell')$, so $\ell'\in\{\ell\in L_3\colon \covr(\ell)\cap A(C)\ne \emptyset\}$.

Next, let $a\in A\setminus\bigcup_{C\in\mathring{\mathcal{C}}}A(C)$. By \cref{prop:only_cover_core}, we have $a\notin\covr(\ell)$, and, hence, $M_{a,\ell}=0$ for every $\ell\in L_3\cap \overrightarrow{L}$. Thus, if $M_{a,\ell}\ne 0$ for some $\ell\in L_3$, then $\ell\in L_3\setminus \overrightarrow{L}$. \end{proof}
To show that $M$ is TU, it suffices to establish total unimodularity of each of the blocks separately. For the blocks of the form $M[A(C),\{\ell\in L_3\colon\covr(\ell)\cap A(C)\ne \emptyset\}]$ with $C\in\mathring{\mathcal{C}}$, this follows from \cref{theorem:unimodularity} and the fact that $\{\ell\in L_3\colon\covr(\ell)\cap A(C)\ne \emptyset\}$ only consists of up- or down-links by \cref{prop:only_cover_core}. 

The block $M[A\setminus\bigcup_{C\in\mathring{\mathcal{C}}}A(C),L_3\setminus \overrightarrow{L}]$ corresponds to the instance $(T',L',c',r')$ obtained from the tuple $(T,L_3\setminus \overrightarrow{L},c,r)$ by contracting all arcs in $\bigcup_{C\in\mathring{\mathcal{C}}}A(C)$. Each (super-)vertex $v\in V(T')$ corresponds to a set $Y_v\subseteq V(T)$ of original vertices. Let $R\coloneqq \{v\in V(T')\colon \exists C\in\mathring{\mathcal{C}}\colon r_C\in Y_v\}$.
\begin{claim}
Let $\ell'\in L'$ such that $\ell'$ is neither an up- nor a down-link. Then $\ell'$ is an $R$-cross-link.
\end{claim}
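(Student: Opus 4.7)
The plan is to trace $\ell'$ back through the contraction and the splitting $\sigma'_3$, identifying the class in the partition \eqref{eq:partition_supp} to which the uncontracted preimage $\ell^0 \in \mathrm{supp}(x^{**})$ belongs. Each $\ell' \in L'$ lifts to some $\ell \in L_3\setminus\overrightarrow{L}$, and each such $\ell$ is a piece of $\sigma'_3(\ell^0)$ for a unique $\ell^0$. By \eqref{eq:partition_supp}, $\ell^0$ lies in exactly one of $L_{cross}$, $L_{rest}$, $\overrightarrow{L}$, $\overleftarrow{L}$.

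Next, I would rule out three of these four classes. For $\ell^0 \in (L_{cross}\cup L_{rest})\cap \mathrm{supp}(x^{**})$, the definition of $\sigma'_3$ splits $\ell^0$ at its apex, so every resulting piece is an up- or down-link in $T$. For $\ell^0 \in \overrightarrow{L}\cap \mathrm{supp}(x^{**})$, the three pieces of $\sigma'_3(\ell^0)$ are again all up- or down-links, and the unique piece that covers arcs of $C_{\ell^0}$ in the right direction (the middle piece $(\mathrm{apex}(\ell^0),w)$ or its mirror in the down-core case) itself lies in $\overrightarrow{L}$ and is therefore excluded from $L_3\setminus \overrightarrow{L}$. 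Since contracting a set of tree arcs preserves the apex and so sends up-/down-links to up-/down-links (unless they degenerate away), the image $\ell'$ in these three cases would still be an up- or down-link in $T'$, contradicting the hypothesis. Hence $\ell^0 \in \overleftarrow{L}\cap \mathrm{supp}(x^{**})$ and $\sigma'_3$ leaves $\ell^0$ unchanged, so $\ell=\ell^0$.

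To finish, I would pick $C \in \mathring{\mathcal{C}}$ with $\covw(\ell)\cap A(C)\ne \emptyset$, note that in particular $\cov(\ell)\cap A(C)\ne \emptyset$, and apply \cref{cor:apex_in_core} with $\sigma=\sigma'_3$ to obtain $\mathrm{apex}(\ell)\in V(C)$. Since every arc of $A(C)$ is contracted, $V(C)$ collapses to a single super-vertex $v_C\in V(T')$ containing $r_C$, so $v_C\in R$; because contraction commutes with least common ancestors, $\mathrm{apex}(\ell')=v_C\in R$. Combined with the hypothesis that $\ell'$ is neither an up- nor a down-link, this is exactly the $R$-cross-link property. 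The main obstacle is the $\overrightarrow{L}$ bookkeeping above: one must verify for both the up-core and the down-core sub-cases of $\sigma'_3$ that all three pieces are up- or down-links and that the middle piece absorbs all of the right-direction core coverage of $\ell^0$, so that nothing leaks into $L_3\setminus \overrightarrow{L}$ and spoils the reduction to the $\overleftarrow{L}$ case.
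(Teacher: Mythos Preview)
Your proof is correct and follows essentially the same approach as the paper. The only difference is that you spell out the case analysis on $\ell^0\in\mathrm{supp}(x^{**})$ inline, whereas the paper packages exactly that case analysis into \cref{prop:no_up_or_down_link} and cites it; the remaining steps (apply \cref{cor:apex_in_core} to land $\mathrm{apex}(\ell)$ in some $V(C)$, then observe the contraction sends $V(C)$ to a super-vertex containing $r_C$) are identical.
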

\begin{proof}[Proof of claim]
By \cref{prop:no_up_or_down_link}, every such link $\ell'\in L'$ corresponds to a link $\ell\in \overleftarrow{L}\cap \mathrm{supp}(x^{**})$. By definition of $\overleftarrow{L}$, let $C\in\mathring{\mathcal{C}}$ such that $\covw(\ell)\cap A(C)\ne \emptyset$. By \cref{cor:apex_in_core}, $\mathrm{apex}(\ell)\in V(C)$ and all of $V(C)$, including $r_C$, is contracted into the same super-vertex, which becomes $\mathrm{apex}(\ell')$.
\end{proof} 
We show that every vertex in $R$ is up- or down-independent, establishing that $(T',L',c',r')$ is a willow and concluding the proof by \cref{theorem:unimodularity}.

 Let $v\in R$. $Y_v$ is the vertex set of a connected subgraph of $T$, consisting of a collection of cores. In particular, there is a unique vertex in $Y_v$ that is closest to the root $r$ of $T$, and it is the root $r_C$ of a core $C$. We may assume without loss of generality that $C\in\mathring{\mathcal{C}}_{up}$; the case where $C\in \mathring{\mathcal{C}}_{down}$ can be handled analogously. We show that no link in $L'$ points out of $T'_v$, establishing that $v$ is down-independent by \cref{lemma:up-down-independent}. 
Assume towards a contradiction there were $\ell'=(u',w')\in L'$ pointing out of $T'_v$, i.e., $u'\in V(T'_v)\setminus \{v\}$ and $w'\in V(T')\setminus V(T'_v)$. Let $\ell'$ correspond to the link $\ell=(u,w)\in L_3\setminus \overrightarrow{L}$ with $u\in Y_{u'}$ and $w\in Y_{w'}$. As $Y_{u'}$, $Y_v$ and $Y_{w'}$ are vertex sets of connected, vertex-disjoint subgraphs of the tree $T$ and $r_C$ is the vertex of $Y_v$ closest to the root of $T$, $Y_{u'}\subseteq V(T_{r_C})\setminus \{r_C\}$ and $Y_{w'}\subseteq V(T)\setminus V(T_{r_C})$. Hence, $\ell$ points out of $T_{r_C}$, contradicting \ref{property:no_entering_at_rC}.
\end{proof}
We are now ready to finally prove \cref{theorem:partial_separation_oracle}, which we restate for convenience.
\theoremoracle*
\begin{proof}
Let $\epsilon \coloneqq \frac{\min\{1,\bar{\epsilon}\}}{7}$. Define the constants $\gamma$, $\zeta_1$, $\zeta_2$ and $k$ as in \Cref{sec:proof_weak_dream} and let $k(\bar{\epsilon},\Delta)\coloneqq k$. Let $(T,L,c,r)$ and $x$ arise from $(\bar{T},\bar{L},\bar{c},\bar{r})$ and $\bar{x}$ by contracting the set $\bar{A}$ of $\zeta_1$-covered arcs. Note that $x$ is a solution to \eqref{eq:WDTAP_LP} of cost $c(x)\le \bar{c}(\bar{x})$ satisfying \eqref{eq:no_covered_arcs}. We apply \cref{theorem:first_phase_splitting} and \cref{theorem:second_phase_splitting} to obtain splittings $\sigma^*$ and $\sigma^{**}$ and solutions $x^*$ and $x^{**}$ to \eqref{eq:WDTAP_LP} for $(T,L,c,r)$. We define $L_{cross}$, $\overrightarrow{L}$, $\overleftarrow{L}$ and $L_{rest}$ as in the beginning of this section.
We apply \cref{lemma:x_1} to, in polynomial time, either compute a solution $S_1$ to $(T,L,c,r)$ of cost at most $c(x_1)$ or a violated visibly $k$-wide modification inequality for $(T,L,c,r)$. In the latter case, the corresponding splitting, together with $\bar{A}$, gives rise to a violated visibly $k$-wide modification inequality for $(\bar{T},\bar{L},\bar{c},\bar{r})$ and $\bar{x}$. Hence, we may assume in the following that we have 
 found a solution $S_1$ to $(T,L,c,r)$ of cost $c(S_1)\le c(x_1)$. We further apply \cref{lemma:x_2,lemma:x_3} to, in polynomial time, compute solutions $S_2$ and $S_3$ to $(T,L,c,r)$ of cost $c(S_2)\le c(x_2)$ and $c(S_3)\le c(x_3)$, respectively. 
 
 For $L'\in \{L_{cross},\overleftarrow{L},\overrightarrow{L},L_{rest}\}$, we define $C^{**}(L')\coloneqq \sum_{\ell\in L} c(\ell)\cdot x^{**}(\ell)$. By \eqref{eq:partition_supp}, we know that
 \[c(x^{**})=C^{**}(L_{cross})+C^{**}(\overleftarrow{L})+C^{**}(\overrightarrow{L})+C^{**}(L_{rest}).\]
By \cref{lemma:cost_increase_concatenate_splittings} and because $|\sigma_{L\setminus L_{rest}}(\ell)|=2$ for $\ell\in L\setminus L_{rest}$ and $|\sigma_{L\setminus L_{rest}}(\ell)|=1$ for $\ell\in L_{rest}$, we have
\[c(x_1)\le c(x^{**})+C^{**}(L_{cross})+C^{**}(\overleftarrow{L})+C^{**}(\overrightarrow{L}).\]
By \cref{lemma:cost_increase_concatenate_splittings} and because $|\sigma_{\overleftarrow{L}\cup L_{rest}}(\ell)|=2$ for $\ell\in \overleftarrow{L}\cup L_{rest}$ and $|\sigma_{\overleftarrow{L}\cup L_{rest}}(\ell)|=1$ for $\ell\notin \overleftarrow{L}\cup L_{rest}$, we have
\[c(x_2)\le c(x^{**})+C^{**}(\overleftarrow{L})+C^{**}(L_{rest}).\]
By \cref{lemma:cost_increase_concatenate_splittings} and because $|\sigma'_3(\ell)|=3$ for $\ell\in \overrightarrow{L}\cap\mathrm{supp}(x^{**})$, $|\sigma'_3(\ell)|=2$ for $\ell\in L_{cross}\cup L_{rest}$ and $|\sigma'_3(\ell)|=1$ for every other link $\ell$, we have
\[c(x_3)\le c(x^{**})+C^{**}(L_{cross})+2\cdot C^{**}(\overrightarrow{L})+C^{**}(L_{rest}).\]
Let $S^*$ be the best one among the three solutions $S_1$, $S_2$ and $S_3$.
Then
\begin{align*}
c(S^*)\le&\frac{1}{4}\cdot c(x_1)+\frac{1}{2}\cdot c(x_2)+\frac{1}{4}\cdot c(x_3)\le c(x^{**})+\frac{1}{2}\cdot C^{**}(L_{cross})+\frac{3}{4}\cdot C^{**}(\overleftarrow{L})+\frac{3}{4}\cdot C^{**}(\overrightarrow{L})+\frac{3}{4}\cdot C^{**}(L_{rest})\\
&\le \frac{7}{4}\cdot c(x^{**})\le \frac{7}{4}\cdot (1+\epsilon)^2\cdot c(x)\le \frac{7}{4}\cdot (1+\epsilon)^2\cdot \bar{c}(\bar{x}),
\end{align*}
where the second-to-last inequality follows from \ref{property:cost_bound}. By \cref{lemma:solution_for_zeta_1_covered_arcs}, we can extend $S^*$ to a solution $S$ for $(\bar{T},\bar{L},\bar{c})$ of cost at most \[\left(\frac{7}{4}\cdot (1+\epsilon)^2+\epsilon\right)\cdot \bar{c}(\bar{x})\le\frac{7}{4}\cdot (1+3\cdot\epsilon+\epsilon^2)\cdot \bar{c}(\bar{x})\le \frac{7}{4}\cdot (1+4\cdot\epsilon)\cdot \bar{c}(\bar{x})=\left(\frac{7}{4}+\bar{\epsilon}\right)\cdot \bar{c}(\bar{x}).\]
Observing that $\frac{7}{4}=1.75$ concludes the proof.
\end{proof}

\section*{Acknowledgments}
This work originated from a collaboration that included Siyue Liu and R. Ravi, whose early contributions are gratefully acknowledged. We also thank the anonymous reviewers who helped improve the presentation of our paper. This work was supported in part by EPSRC grant EP/X030989/1.

\section*{Data Availability Statement}
No data are associated with this article. Data sharing is
not applicable to this article.

\bibliographystyle{siamplain}
\bibliography{sample}

@inproceedings{DBLP:conf/esa/HathcockZ24,
  author       = {Daniel Hathcock and
                  Michael Zlatin},
  editor       = {Timothy M. Chan and
                  Johannes Fischer and
                  John Iacono and
                  Grzegorz Herman},
  title        = {Approximation Algorithms for Steiner Connectivity Augmentation},
  booktitle    = {32nd Annual European Symposium on Algorithms, {ESA} 2024, September
                  2-4, 2024, Royal Holloway, London, United Kingdom},
  series       = {LIPIcs},
  volume       = {308},
  pages        = {67:1--67:16},
  publisher    = {Schloss Dagstuhl - Leibniz-Zentrum f{\"{u}}r Informatik},
  year         = {2024},
  doi          = {10.4230/LIPICS.ESA.2024.67},
  timestamp    = {Mon, 14 Oct 2024 16:58:03 +0200},
  biburl       = {https://dblp.org/rec/conf/esa/HathcockZ24.bib},
  bibsource    = {dblp computer science bibliography, https://dblp.org}
}

@inproceedings{DBLP:conf/soda/CharikarCCDGGL98,
  author       = {Moses Charikar and
                  Chandra Chekuri and
                  To{-}Yat Cheung and
                  Zuo Dai and
                  Ashish Goel and
                  Sudipto Guha and
                  Ming Li},
  editor       = {Howard J. Karloff},
  title        = {Approximation Algorithms for Directed {S}teiner Problems},
  booktitle    = {Proceedings of the Ninth Annual {ACM-SIAM} Symposium on Discrete Algorithms,
                  25-27 January 1998, San Francisco, California, {USA}},
  pages        = {192--200},
  publisher    = {{ACM/SIAM}},
  year         = {1998},
  url          = {http://dl.acm.org/citation.cfm?id=314613.314700},
  timestamp    = {Thu, 05 Jul 2018 07:29:49 +0200},
  biburl       = {https://dblp.org/rec/conf/soda/CharikarCCDGGL98.bib},
  bibsource    = {dblp computer science bibliography, https://dblp.org}
}

@inproceedings{DBLP:conf/soda/0001ZZ23,
  author       = {R. Ravi and
                  Weizhong Zhang and
                  Michael Zlatin},
  editor       = {Nikhil Bansal and
                  Viswanath Nagarajan},
  title        = {Approximation Algorithms for Steiner Tree Augmentation Problems},
  booktitle    = {Proceedings of the 2023 {ACM-SIAM} Symposium on Discrete Algorithms,
                  {SODA} 2023, Florence, Italy, January 22-25, 2023},
  pages        = {2429--2448},
  publisher    = {{SIAM}},
  year         = {2023},
  doi          = {10.1137/1.9781611977554.CH94},
  timestamp    = {Fri, 17 Feb 2023 09:28:57 +0100},
  biburl       = {https://dblp.org/rec/conf/soda/0001ZZ23.bib},
  bibsource    = {dblp computer science bibliography, https://dblp.org}
}

@inproceedings{traub2022local,
  author       = {Vera Traub and
                  Rico Zenklusen},
  editor       = {Joseph (Seffi) Naor and
                  Niv Buchbinder},
  title        = {Local Search for Weighted Tree Augmentation and {S}teiner Tree},
  booktitle    = {Proceedings of the 2022 {ACM-SIAM} Symposium on Discrete Algorithms,
                  {SODA} 2022, Virtual Conference / Alexandria, VA, USA, January 9 -
                  12, 2022},
  pages        = {3253--3272},
  publisher    = {{SIAM}},
  year         = {2022},
  doi          = {10.1137/1.9781611977073.128},
  timestamp    = {Sun, 02 Oct 2022 16:15:51 +0200},
  biburl       = {https://dblp.org/rec/conf/soda/TraubZ22.bib},
  bibsource    = {dblp computer science bibliography, https://dblp.org}
}

@article{DBLP:journals/siamcomp/KortsarzKL04,
  author       = {Guy Kortsarz and
                  Robert Krauthgamer and
                  James R. Lee},
  title        = {Hardness of Approximation for Vertex-Connectivity Network Design Problems},
  journal      = {{SIAM} J. Comput.},
  volume       = {33},
  number       = {3},
  pages        = {704--720},
  year         = {2004},
  doi          = {10.1137/S0097539702416736},
  timestamp    = {Sat, 30 Sep 2023 10:26:21 +0200},
  biburl       = {https://dblp.org/rec/journals/siamcomp/KortsarzKL04.bib},
  bibsource    = {dblp computer science bibliography, https://dblp.org}
}

@inproceedings{DBLP:conf/soda/Fiorini0KS18,
  author       = {Samuel Fiorini and
                  Martin Gro{\ss} and
                  Jochen K{\"{o}}nemann and
                  Laura Sanit{\`{a}}},
  editor       = {Artur Czumaj},
  title        = {Approximating Weighted Tree Augmentation via {C}hv{\'{a}}tal-{G}omory
                  Cuts},
  booktitle    = {Proceedings of the Twenty-Ninth Annual {ACM-SIAM} Symposium on Discrete
                  Algorithms, {SODA} 2018, New Orleans, LA, USA, January 7-10, 2018},
  pages        = {817--831},
  publisher    = {{SIAM}},
  year         = {2018},
  doi          = {10.1137/1.9781611975031.53},
  timestamp    = {Tue, 02 Feb 2021 17:07:58 +0100},
  biburl       = {https://dblp.org/rec/conf/soda/Fiorini0KS18.bib},
  bibsource    = {dblp computer science bibliography, https://dblp.org}
}

@article{frederickson1981approximation,
  author       = {Greg N. Frederickson and
                  Joseph F. J{\'{a}}J{\'{a}}},
  title        = {Approximation Algorithms for Several Graph Augmentation Problems},
  journal      = {{SIAM} J. Comput.},
  volume       = {10},
  number       = {2},
  pages        = {270--283},
  year         = {1981},
  url          = {},
  doi          = {10.1137/0210019},
  timestamp    = {Fri, 02 Jul 2021 10:27:08 +0200},
  biburl       = {https://dblp.org/rec/journals/siamcomp/FredericksonJ81.bib},
  bibsource    = {dblp computer science bibliography, https://dblp.org}
}

@inproceedings{traub2022better,
  author       = {Vera Traub and
                  Rico Zenklusen},
  title        = {A Better-Than-2 Approximation for Weighted Tree Augmentation},
  booktitle    = {62nd {IEEE} Annual Symposium on Foundations of Computer Science, {FOCS}
                  2021, Denver, CO, USA, February 7-10, 2022},
  pages        = {1--12},
  publisher    = {{IEEE}},
  year         = {2021},
  doi          = {10.1109/FOCS52979.2021.00010},
  timestamp    = {Tue, 08 Jul 2025 16:39:17 +0200},
  biburl       = {https://dblp.org/rec/conf/focs/TraubZ21.bib},
  bibsource    = {dblp computer science bibliography, https://dblp.org}
}

@inproceedings{grandoni2018improved,
	author = {Grandoni, Fabrizio and Kalaitzis, Christos and Zenklusen, Rico},
	title = {Improved approximation for tree augmentation: saving by rewiring},
	year = {2018},
	isbn = {9781450355599},
	publisher = {Association for Computing Machinery},
	address = {New York, NY, USA},
	doi = {10.1145/3188745.3188898},
	abstract = {The Tree Augmentation Problem (TAP) is a fundamental network design problem in which we are given a tree and a set of additional edges, also called links. The task is to find a set of links, of minimum size, whose addition to the tree leads to a 2-edge-connected graph. A long line of results on TAP culminated in the previously best known approximation guarantee of 1.5 achieved by a combinatorial approach due to Kortsarz and Nutov [ACM Transactions on Algorithms 2016], and also by an SDP-based approach by Cheriyan and Gao [Algorithmica 2017]. Moreover, an elegant LP-based (1.5+є)-approximation has also been found very recently by Fiorini, Gro\ss{}, K'onemann, and Sanit\'{a} [SODA 2018]. In this paper, we show that an approximation factor below 1.5 can be achieved, by presenting a 1.458-approximation that is based on several new techniques. By extending prior results of Adjiashvili [SODA 2017], we first present a black-box reduction to a very structured type of instance, which played a crucial role in recent development on the problem, and which we call k-wide. Our main contribution is a new approximation algorithm for O(1)-wide tree instances with approximation guarantee strictly below 1.458, based on one of their fundamental properties: wide trees naturally decompose into smaller subtrees with a constant number of leaves. Previous approaches in similar settings rounded each subtree independently and simply combined the obtained solutions. We show that additionally, when starting with a well-chosen LP, the combined solution can be improved through a new “rewiring” technique, showing that one can replace some pairs of used links by a single link. We can rephrase the rewiring problem as a stochastic version of a matching problem, which may be of independent interest. By showing that large matchings can be obtained in this problem, we obtain that a significant number of rewirings are possible, thus leading to an approximation factor below 1.5.},
	booktitle = {Proceedings of the 50th Annual ACM SIGACT Symposium on Theory of Computing},
	pages = {632–645},
	numpages = {14},
	keywords = {Tree augmentation, approximation algorithms, combinatorial optimization},
	location = {Los Angeles, CA, USA},
	series = {STOC 2018}
}

@article{adjiashvili2018beating,
	author = {Adjiashvili, David},
	title = {Beating Approximation Factor Two for Weighted Tree Augmentation with Bounded Costs},
	year = {2018},
	issue_date = {April 2019},
	publisher = {Association for Computing Machinery},
	address = {New York, NY, USA},
	volume = {15},
	number = {2},
	issn = {1549-6325},
	doi = {10.1145/3182395},
	abstract = {The Weighted Tree Augmentation Problem (WTAP) is a fundamental well-studied problem in the field of network design. Given an undirected tree G=(V,E), an additional set of edges L ⊑ V\texttimes{} V disjoint from E called links and a cost vector c∈ R≥ 0L, WTAP asks to find a minimum-cost set F⊑ L with the property that (V,E∪ F) is 2-edge connected. The special case where cℓ = 1 for all ℓ ∈ L is called the Tree Augmentation Problem (TAP).For the class of bounded cost vectors, we present the first improved approximation algorithm for WTAP in more than three decades. Concretely, for any M∈ R≥ 1 and ε > 0, we present an LP based (μ +ϵ)-approximation for WTAP restricted to cost vectors c in [1,M]L for μ ≈ 1.96417. More generally, our result is a (μ +ϵ)-approximation algorithm with running time nrO(1), where r = cmax/cmin is the ratio between the largest and the smallest cost of any link. For the special case of TAP, we improve this factor to 5/3+ϵ.Our results rely on several new ideas, including a new LP relaxation of WTAP and a two-phase rounding algorithm. In the first phase, the algorithm uses the fractional LP solution to guide a simple decomposition method that breaks the tree into well-structured trees and equips each with a part of the fraction LP solution. In the second phase, the fractional solution in each part of the decomposition is rounded to an integral solution with two rounding procedures and the best outcome is included in the solution. One rounding procedure exploits the constraints in the new LP, while the other one exploits a connection to the Edge Cover Problem. We show that both procedures cannot have a bad approximation guarantee simultaneously to obtain the claimed approximation factor.},
	journal = {ACM Trans. Algorithms},
	month = dec,
	articleno = {19},
	numpages = {26},
	keywords = {Tree augmentation problem, approximation algorithm, linear programming, network design, rounding technique}
}

@inproceedings{DBLP:conf/soda/KlinkbyMS21,
  author       = {Kristine Vitting Klinkby and
                  Pranabendu Misra and
                  Saket Saurabh},
  editor       = {D{\'{a}}niel Marx},
  title        = {Strong Connectivity Augmentation is {FPT}},
  booktitle    = {Proceedings of the 2021 {ACM-SIAM} Symposium on Discrete Algorithms,
                  {SODA} 2021, Virtual Conference, January 10 - 13, 2021},
  pages        = {219--234},
  publisher    = {{SIAM}},
  year         = {2021},
  doi          = {10.1137/1.9781611976465.15},
  timestamp    = {Sun, 06 Oct 2024 21:14:55 +0200},
  biburl       = {https://dblp.org/rec/conf/soda/KlinkbyMS21.bib},
  bibsource    = {dblp computer science bibliography, https://dblp.org}
}

@incollection{EDMONDS1977185,
title = {A Min-Max Relation for Submodular Functions on Graphs},
editor = {P.L. Hammer and E.L. Johnson and B.H. Korte and G.L. Nemhauser},
series = {Annals of Discrete Mathematics},
publisher = {Elsevier},
volume = {1},
pages = {185-204},
year = {1977},
booktitle = {Studies in Integer Programming},
issn = {0167-5060},
doi = {10.1016/S0167-5060(08)70734-9},
author = {Jack Edmonds and Rick Giles},
abstract = {Publisher Summary
This chapter describes a min-max relation for submodular functions on graphs. It proves a new combinatorial min-max equality that unifies and extends results including the matroid intersection theorem and the theorem of Lucchesi and Younger on the minimum number of edges, which meet every directed cut in a graph. The method of proof used in the chapter generalizes the method used to prove the polymatroid intersection theorem and the method used to prove the Lucchesi-Younger Theorem including an idea that Lovász attributes to Neil Robertson.}
}

@article{DBLP:journals/dam/Nagamochi03,
  author       = {Hiroshi Nagamochi},
  title        = {An approximation for finding a smallest 2-edge-connected subgraph
                  containing a specified spanning tree},
  journal      = {Discret. Appl. Math.},
  volume       = {126},
  number       = {1},
  pages        = {83--113},
  year         = {2003},
  doi          = {10.1016/S0166-218X(02)00218-4},
  timestamp    = {Thu, 11 Feb 2021 23:21:24 +0100},
  biburl       = {https://dblp.org/rec/journals/dam/Nagamochi03.bib},
  bibsource    = {dblp computer science bibliography, https://dblp.org}
}

@article{DBLP:journals/algorithmica/CheriyanG18,
  author       = {Joseph Cheriyan and
                  Zhihan Gao},
  title        = {Approximating (Unweighted) Tree Augmentation via Lift-and-Project,
                  Part {I:} Stemless {TAP}},
  journal      = {Algorithmica},
  volume       = {80},
  number       = {2},
  pages        = {530--559},
  year         = {2018},
  doi          = {10.1007/S00453-016-0270-4},
  timestamp    = {Mon, 07 Nov 2022 11:42:19 +0100},
  biburl       = {https://dblp.org/rec/journals/algorithmica/CheriyanG18.bib},
  bibsource    = {dblp computer science bibliography, https://dblp.org}
}

@article{hommelsheim2025better,
  title={A Better-Than-$5/4$-Approximation for Two-Edge Connectivity},
  author={Hommelsheim, Felix and Lindermayr, Alexander and Liu, Zhenwei},
  journal={arXiv preprint arXiv:2509.19655},
  year={2025}
}

@inproceedings{bosch20255,
  author       = {Miguel Bosch{-}Calvo and
                  Mohit Garg and
                  Fabrizio Grandoni and
                  Felix Hommelsheim and
                  Afrouz Jabal Ameli and
                  Alexander Lindermayr},
  editor       = {Michal Kouck{\'{y}} and
                  Nikhil Bansal},
  title        = {A 5/4-Approximation for Two-Edge Connectivity},
  booktitle    = {Proceedings of the 57th Annual {ACM} Symposium on Theory of Computing,
                  {STOC} 2025, Prague, Czechia, June 23-27, 2025},
  pages        = {653--664},
  publisher    = {{ACM}},
  year         = {2025},
  url          = {https://doi.org/10.1145/3717823.3718275},
  doi          = {10.1145/3717823.3718275},
  timestamp    = {Sun, 06 Jul 2025 13:23:42 +0200},
  biburl       = {https://dblp.org/rec/conf/stoc/Bosch-Calvo00HA25.bib},
  bibsource    = {dblp computer science bibliography, https://dblp.org}
}

@article{DBLP:journals/algorithmica/CheriyanG18a,
  author       = {Joseph Cheriyan and
                  Zhihan Gao},
  title        = {Approximating (Unweighted) Tree Augmentation via Lift-and-Project,
                  Part {II}},
  journal      = {Algorithmica},
  volume       = {80},
  number       = {2},
  pages        = {608--651},
  year         = {2018},
  doi          = {10.1007/S00453-017-0275-7},
  timestamp    = {Mon, 07 Nov 2022 11:42:19 +0100},
  biburl       = {https://dblp.org/rec/journals/algorithmica/CheriyanG18a.bib},
  bibsource    = {dblp computer science bibliography, https://dblp.org}
}

@article{DBLP:journals/talg/KortsarzN16,
  author       = {Guy Kortsarz and
                  Zeev Nutov},
  title        = {A Simplified 1.5-Approximation Algorithm for Augmenting Edge-Connectivity
                  of a Graph from 1 to 2},
  journal      = {{ACM} Trans. Algorithms},
  volume       = {12},
  number       = {2},
  pages        = {23:1--23:20},
  year         = {2016},
  doi          = {10.1145/2786981},
  timestamp    = {Tue, 06 Nov 2018 12:51:20 +0100},
  biburl       = {https://dblp.org/rec/journals/talg/KortsarzN16.bib},
  bibsource    = {dblp computer science bibliography, https://dblp.org}
}

@misc{ahmadi2025breakinglongstandingbarrier2varepsilon,
      title={Breaking a Long-Standing Barrier: 2-$\varepsilon$ Approximation for {S}teiner Forest}, 
      author={Ali Ahmadi and Iman Gholami and MohammadTaghi Hajiaghayi and Peyman Jabbarzade and Mohammad Mahdavi},
      year={2025},
      eprint={2504.11398},
      archivePrefix={arXiv},
      primaryClass={cs.DS}
}

@incollection{schrijver1982min,
  title={Min-max relations for directed graphs},
  author={Schrijver, Alexander},
  booktitle={North-Holland Mathematics Studies},
  volume={66},
  pages={261--280},
  year={1982},
  publisher={Elsevier}
}

@inproceedings{DBLP:conf/stoc/CecchettoTZ21,
  author       = {Federica Cecchetto and
                  Vera Traub and
                  Rico Zenklusen},
  editor       = {Samir Khuller and
                  Virginia Vassilevska Williams},
  title        = {Bridging the gap between tree and connectivity augmentation: unified
                  and stronger approaches},
  booktitle    = {{STOC} '21: 53rd Annual {ACM} {SIGACT} Symposium on Theory of Computing,
                  Virtual Event, Italy, June 21-25, 2021},
  pages        = {370--383},
  publisher    = {{ACM}},
  year         = {2021},
  doi          = {10.1145/3406325.3451086},
  timestamp    = {Sun, 02 Oct 2022 16:16:11 +0200},
  biburl       = {https://dblp.org/rec/conf/stoc/CecchettoTZ21.bib},
  bibsource    = {dblp computer science bibliography, https://dblp.org}
}

@article{DBLP:journals/talg/EvenFKN09,
  author       = {Guy Even and
                  Jon Feldman and
                  Guy Kortsarz and
                  Zeev Nutov},
  title        = {A 1.8 approximation algorithm for augmenting edge-connectivity of
                  a graph from 1 to 2},
  journal      = {{ACM} Trans. Algorithms},
  volume       = {5},
  number       = {2},
  pages        = {21:1--21:17},
  year         = {2009},
  doi          = {10.1145/1497290.1497297},
  timestamp    = {Tue, 06 Nov 2018 12:51:20 +0100},
  biburl       = {https://dblp.org/rec/journals/talg/EvenFKN09.bib},
  bibsource    = {dblp computer science bibliography, https://dblp.org}
}

@inproceedings{DBLP:conf/focs/Jain98,
  author       = {Kamal Jain},
  title        = {Factor 2 Approximation Algorithm for the Generalized Steiner Network
                  Problem},
  booktitle    = {39th Annual Symposium on Foundations of Computer Science, {FOCS} '98,
                  November 8-11, 1998, Palo Alto, California, {USA}},
  pages        = {448--457},
  publisher    = {{IEEE} Computer Society},
  year         = {1998},
  doi          = {10.1109/SFCS.1998.743495},
  timestamp    = {Thu, 23 Mar 2023 23:57:53 +0100},
  biburl       = {https://dblp.org/rec/conf/focs/Jain98.bib},
  bibsource    = {dblp computer science bibliography, https://dblp.org}
}

@article{ghouila1962caracterisation,
	title={Caract{\'e}risation des matrices totalement unimodulaires},
	author={Ghouila-Houri, A},
	journal={Comptes Rendus Hebdomadaires des S{\'e}ances de l'Acad{\'e}mie des Sciences (Paris)},
	volume={254},
	pages={1192--1194},
	year={1962}
}


\appendix
\section{Appendix}\label{sec:appendix}

\subsection{Multi 2-TAP reduces to DTAP}\label{subsec:multi-2-tap}
In this section we prove that the weighted multi 2-TAP problem reduces to WDTAP. In weighted multi 2-TAP, we are given an undirected tree $T = (V,E)$, and a set of links $L \subseteq {V \choose 2}$ with positive costs. For every $e\in E$, denote by $S_e \subseteq V$ a shore of the fundamental tree cut induced by $e$. A multiset (repetition allowed) of undirected links $F \subseteq L$ is a \textit{$k$-covering} of $T$ if every fundamental cut is covered at least $k$ times by $F$, that is $|\delta_F(S_e)|\geq k$ for all $e \in E$. The cost of a multi-set of links is the sum of the costs of links in that set, weighted by multiplicity.  

\begin{problem}[Weighted Multi 2-TAP]
    Given an undirected tree $T = (V,E)$ and a collection of links $L \subseteq {V \choose 2}$ with positive costs, find the cheapest 2-covering of $T$.
\end{problem}

The reduction is easiest to explain by introducing an intermediate problem, which we call bi-directed tree cover, and which is equivalent to WDTAP.  

\begin{problem}[Bi-directed Tree Cover]
    Given an undirected tree $T = (V,E)$ and a collection of directed links $L \subseteq V \times V$ with positive costs, choose a cheapest set of links $F$ so that $|\delta_{F}^+(S_e)| \geq 1$, and $|\delta_F^-(S_e)| \geq 1$ for all tree edges $e \in E$.
\end{problem}

In other words, the cut induced by each tree edge must be crossed in both directed by the solution $F$. Bi-directed tree cover is easily seen to be equivalent to WDTAP: on the one hand, it can be reduced to WDTAP by subdividing every tree edge and orienting them in opposite directions. On the other hand, WDTAP can be reduced to bi-directed tree cover by adding a zero-cost directed link $\ell_a$ parallel to each tree arc $a \in A$ in the same direction as $a$, and making the tree undirected.

We now show that weighted multi $2$-TAP can be reduced to bi-directed tree cover. We replace every link $\ell \in L$ by two directed links $\ell^+$ and $\ell^-$ in opposite directions, each having the same cost as $\ell$. Clearly, every bi-directed tree cover solution is a feasible solution to the weighted multi $2$-TAP instance with the same cost. The following proposition shows that every weighted multi $2$-TAP solution can be oriented into a bi-directed tree cover solution of the same cost. 
\begin{proposition}
    Given a 2-covering $F \subseteq L$ of a tree $T = (V,E)$, there is an orientation $\vec{F}$ such that $|\delta^+_{\vec{F}}(S_e)|,|\delta^-_{\vec{F}}(S_e)|\geq 1$ for every fundamental cut shore $S_e$.
\end{proposition}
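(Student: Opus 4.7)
The plan is to construct the required orientation via an Eulerian tour on a carefully augmented multigraph. I will first let $V_{\mathrm{odd}} \subseteq V$ be the set of vertices whose degree in the multigraph $F$ is odd. Because $|V_{\mathrm{odd}}|$ is even, the tree $T$ has a unique $V_{\mathrm{odd}}$-join $M \subseteq E(T)$; a standard counting argument (using $|M \cap \delta_T(S_e)| \equiv \sum_{v \in S_e} d_M(v) \equiv |V_{\mathrm{odd}} \cap S_e| \pmod 2$ together with $|\delta_T(S_e)| = 1$) will yield the explicit description $e \in M$ if and only if $|V_{\mathrm{odd}} \cap S_e|$ is odd. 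Combined with the parity identity
\[
|V_{\mathrm{odd}} \cap S_e| \;\equiv\; \sum_{v \in S_e} d_F(v) \;\equiv\; |\delta_F(S_e)| \;=\; |L_e| \pmod 2,
\]
this will show that $e \in M$ precisely when $|L_e|$ is odd.

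The second step is to consider the multigraph $H := F \cup M$. By construction every vertex of $H$ has even degree ($d_F(v) + d_M(v)$ is even regardless of whether $v \in V_{\mathrm{odd}}$), so each connected component of $H$ is Eulerian; orienting each component along an Euler circuit gives an orientation in which in-degree equals out-degree at every vertex, whence $\delta^+_H(S) = \delta^-_H(S)$ for every $S \subseteq V$. Restricting to a fundamental cut $S_e$ and using that the only tree edge crossing $S_e$ is $e$ itself (so $|\delta_M(S_e)| \in \{0,1\}$), this gives
\[
\bigl|\delta^+_{\vec F}(S_e) - \delta^-_{\vec F}(S_e)\bigr| \;=\; |\delta^-_M(S_e) - \delta^+_M(S_e)| \;\leq\; |\delta_M(S_e)|,
\]
where $\vec F$ denotes the induced orientation on the links of $F$.

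A short case analysis on the parity of $|L_e|$ will then finish the argument. If $|L_e|$ is even, then $e \notin M$ and the two oriented cut counts are equal, hence each equals $|L_e|/2 \geq 1$. If $|L_e|$ is odd, then $|L_e| \geq 3$ and $e \in M$, so the two cut counts differ by at most one while summing to $|L_e|$, which forces both to be at least $(|L_e|-1)/2 \geq 1$. In either case $\delta^+_{\vec F}(S_e), \delta^-_{\vec F}(S_e) \geq 1$. The main subtlety is the characterization of the tree $V_{\mathrm{odd}}$-join, but that is a routine parity computation; everything else reduces to a straightforward Eulerian orientation on the augmented multigraph.
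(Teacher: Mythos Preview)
Your proof is correct and takes a genuinely different route from the paper's. The paper proceeds via submodular flow: it fixes an arbitrary orientation $\vec H$ of $F$, writes down the re-orientation polytope
\[
|\delta_{\vec H}^+(S)| - x(\delta_{\vec H}^+(S)) + x(\delta_{\vec H}^-(S)) \geq 1 \quad \text{for all } S \in \mathcal{S},
\]
notes that $x \equiv \tfrac12$ is feasible because $|\delta_F(S_e)| \geq 2$, and then invokes the integrality of submodular flow over the cross-free family $\mathcal{S}$ to obtain a $0$--$1$ re-orientation vector. Your argument is instead the classical Eulerian-orientation technique: augment $F$ by the unique $V_{\mathrm{odd}}$-join $M$ in the tree, orient the resulting even-degree multigraph along Euler circuits, and observe that deleting back the $M$-edges can skew each fundamental cut by at most one. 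Your approach is more elementary---it avoids the submodular-flow machinery entirely---and it also makes transparent \emph{why} the bound holds: the Euler orientation balances every cut exactly, and the parity correction $M$ touches each fundamental cut in at most one edge. The paper's approach, on the other hand, generalizes more readily (to other crossing families or other right-hand sides) since it leans on a black-box polyhedral result.
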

\begin{proof}
    Let $\mathcal{S}:=\{(S_e) \cup (V \setminus S_e) \mid e\in E\}$ be all the shores of fundamental cuts. Let $\vec{H}$ be an arbitrary orientation of the $2$-covering $F$. We seek an integral solution to the following submodular flow polyhedron:
    \[
    |\delta_{\vec{H}}^+(S)|-x(\delta_{\vec{H}}^+(S))+x(\delta_{\vec{H}}^-(S))\geq 1, \forall S\in \mathcal{S}.
    \]
    This is indeed a submodular flow because $\mathcal{S}$ is cross-free and thus trivially a crossing family. We can take $x_\ell = \frac{1}{2}$ for every $\ell \in \vec{H}$ to be a fractional feasible solution, and thus by the integrality of the submodular flow polyhedron, there is an integral feasible solution $x$. Flipping $\ell$ if and only if $x_\ell=1$, yields the desired orientation $\vec{F}$.
\end{proof}

\subsection{Hardness of DTAP}\label{subsec:hardness}
In this section, we prove the following result on the hardness of DTAP.
\begin{proposition}
    DTAP is NP-hard and APX-hard, even in the unweighted setting.
\end{proposition}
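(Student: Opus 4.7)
The plan is to adapt the existing hardness reductions for undirected TAP due to Frederickson and J\'aJ\'a~\cite{frederickson1981approximation} (for NP-hardness of the minimum strongly-connected augmentation problem) and Kortsarz, Krauthgamer, and Lee~\cite{DBLP:journals/siamcomp/KortsarzKL04} (for APX-hardness of TAP) to the directed tree setting. A natural starting point for NP-hardness is a direct reduction from unweighted undirected TAP. Given an unweighted TAP instance $(T,L)$, I orient $T$ as an arborescence by making every tree edge an up-arc toward an arbitrary root $r$, and include both directed copies $(u,v)$ and $(v,u)$ of each undirected link $\{u,v\}\in L$ in $L'$. For each tree arc $(c,p)$ (up-arc from child $c$ to parent $p$), and any undirected link $\{u,v\}$ covering $\{c,p\}$, exactly one of the two directed copies covers $(c,p)$ in the DTAP sense, namely the one whose head lies in $U_c$.

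Any TAP solution $F$ then lifts to a DTAP solution of size at most $2|F|$ (include both directions of each selected link), and any DTAP solution $F'$ projects to a TAP solution of size at most $|F'|$ (take underlying undirected pairs, which are feasible for TAP since every up-arc is covered). This yields $\mathrm{OPT}(\mathrm{TAP})\le \mathrm{OPT}(\mathrm{DTAP})\le 2\cdot\mathrm{OPT}(\mathrm{TAP})$, which is already enough to conclude NP-hardness of unweighted DTAP.

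For APX-hardness, this naive reduction is not quite gap-preserving: directed cycles can allow DTAP solutions to strictly beat $2\cdot\mathrm{OPT}(\mathrm{TAP})$. For instance, on the undirected star with three leaves and all three pairwise links, a directed $3$-cycle on the leaves covers all six DTAP arcs using only $3$ links whereas $2\cdot\mathrm{OPT}(\mathrm{TAP})=4$. To overcome this, I would instead adapt the KKL reduction from Minimum Vertex Cover on cubic graphs (which is APX-hard), producing a DTAP instance on a caterpillar-like oriented tree where the link endpoints are arranged rigidly along a single long spine. On such a tree, any directed cycle of selected links would have to traverse some spine arc in the wrong direction, so the shortcut from the star example is structurally forbidden; this forces $\mathrm{OPT}(\mathrm{DTAP})=2\cdot\mathrm{OPT}(\mathrm{TAP}_{\mathrm{KKL}})$ on instances produced by the KKL construction, which preserves the constant-factor hardness gap between yes- and no-instances of Minimum Vertex Cover.

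The main obstacle is verifying this no-shortcut property on the KKL-constructed instances: one must show that any directed cycle of selected links must traverse some spine arc in the wrong direction, violating feasibility of DTAP. This follows from a case analysis of the caterpillar gadgets of KKL once the spine is oriented consistently toward the chosen root, which rules out all multi-link coverage savings and forces each undirected KKL link to contribute both of its directed copies to any feasible DTAP solution.
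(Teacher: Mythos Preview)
Your NP-hardness reduction has a fatal structural flaw: by orienting every tree edge as an up-arc toward the root, you produce a DTAP instance on an \emph{arborescence}. But the paper already observes (in the introduction) that WDTAP on an arborescence is solvable in polynomial time---the arc-link-coverage matrix is a network matrix, hence totally unimodular, and the LP \eqref{eq:WDTAP_LP} is integral. So every instance your reduction outputs lies in a polynomial-time-solvable subclass, and cannot witness NP-hardness. Even ignoring this, the sandwich $\mathrm{OPT}(\mathrm{TAP}) \le \mathrm{OPT}(\mathrm{DTAP}) \le 2\cdot\mathrm{OPT}(\mathrm{TAP})$ is not a Karp reduction: an exact algorithm for DTAP would only yield a $2$-approximation for TAP, which is already known. (Your star example is also inconsistent with your own construction: with all arcs directed toward the center there are three DTAP arcs, not six.)

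Your APX-hardness sketch inherits the same problem. You again propose orienting the KKL caterpillar ``consistently toward the chosen root,'' which is again an arborescence. The claim that $\mathrm{OPT}(\mathrm{DTAP}) = 2\cdot\mathrm{OPT}(\mathrm{TAP}_{\mathrm{KKL}})$ on those instances is unsupported and would need to survive the fact that the resulting DTAP instances are LP-exact.

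The paper avoids this by reducing from $3$-dimensional matching directly (following Frederickson--J\'aJ\'a's reduction for strong connectivity augmentation). The constructed tree mixes up-arcs and down-arcs---in particular the arcs $(a_{ijk},w_i)$ and $(w_i,a'_{ijk})$ at each gadget go in opposite directions---so the instance is not an arborescence, and the correspondence between $3$DM solutions of size $q$ and DTAP solutions of size $p+q$ is \emph{exact} rather than up to a factor of $2$. APX-hardness is then obtained by starting from bounded-degree $3$DM as in~\cite{DBLP:journals/siamcomp/KortsarzKL04}. The moral is that any hardness construction for DTAP must genuinely use both arc orientations.
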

\begin{proof}
    We prove NP-hardness using the same reduction as the one for CSTA in \cite{frederickson1981approximation}. We reduce $3$-dimensional matching ($3$DM) to unweighted DTAP.

Let $M\subseteq W\times X\times Y$ be an instance of $3$DM with $|M|=p$, $W=\{w_i\mid i=1,...,q\}$, $X=\{x_i\mid i=1,...,q\}$, $Y=\{y_i\mid i=1,...,q\}$. We define an instance of DTAP as follows. Let $V=\{r\}\cup\{w_i,x_i,y_i\mid i=1,...,q\}\cup \{a_{ijk},a'_{ijk}\mid (w_i,x_j,y_k)\in M\}$.

$A_T=\{(r,x_i),(r,w_i),(y_i,r)\mid i=1,...,q\}\cup \{(a_{ijk},w_i),(w_i,a'_{ijk})\mid (w_i,x_j,y_k)\in M\}$.

$A_L=\{(x_j,a_{ijk}),(a'_{ijk},a_{ijk}),(a'_{ijk},y_k)\mid (w_i,x_j,y_k)\in M\}$.

We claim that there exists a $3$DM of size $q$ if and only if the minimum size of a DTAP solution is $p+q$. Indeed, notice that there are $2p+2q$ leaves in $T$, where each leaf needs at least one link to cover it. Thus, the minimum size of a DTAP solution is at least $p+q$. On the one hand, if there is a $3$DM $M'$ of size $q$, we obtain a DTAP solution $L':=\{(x_j,a_{ijk}),(a'_{ijk},y_k)\mid (w_i,x_j,y_k)\in M'\}\cup \{(a'_{ijk},a_{ijk})\mid (w_i,x_j,y_k)\in M\setminus M'\}$ whose size is $2q+(p-q)=p+q$. On the other hand, if there is a DTAP solution $L'$ of size $p+q$, by the previous argument, $L'$ forms a perfect matching on the leaves. Let $M'$ be the edges $(w_i,x_j,y_k)\in M$ such that $a_{ijk}$ is matched to $x_j$. Clearly, $a_{ijk}$ is matched to $x_j$ if and only if $a'_{ijk}$ is matched to $y_k$. Thus, $|M'|=q$ and it intersects every node in $X$ or $Y$ exactly once. Since for every $i$, $(r,w_i)$ is covered by the links from $a'_{ijk}$ to $y_k$, $M'$ intersects every node in $W$, and thus intersects every node in $W$ exactly once. Therefore, $M'$ is indeed a $3$DM of size $q$. 
\end{proof}

Following the methods in~\cite{DBLP:journals/siamcomp/KortsarzKL04}, the above proof can be extended to show APX-hardness via a reduction from bounded degree 3DM.

\subsection{Lower Bound on the Integrality Gap of DTAP}\label{subsection:integralitygap}
We show the following lower bound on the integrality gap of the natural set covering relaxation for DTAP.

\begin{proposition}
    The integrality gap of the set covering formulation for DTAP given in (\ref{eq:WDTAP_LP}) is at least $\frac{6}{5}$.
\end{proposition}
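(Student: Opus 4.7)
I plan to exhibit an explicit WDTAP instance for which the LP relaxation \eqref{eq:WDTAP_LP} has optimum $5/2$ while the integer optimum equals $3$, giving the desired ratio $3/(5/2) = 6/5$. The blueprint is the classical pentagon (odd-cycle) integrality gap of the vertex-cover LP: I aim to realize a $5$-cycle coverage pattern in which five tree arcs $a_1,\dots,a_5$ and five unit-cost directed links $\ell_1,\dots,\ell_5$ satisfy $\covr(\ell_i)=\{a_i, a_{(i\bmod 5)+1}\}$. On such an instance, the fractional assignment $x_{\ell_i} = 1/2$ is feasible (each arc is covered with total weight $1$ by the two links $\ell_{i-1},\ell_i$) and has cost $5/2$; dually, $y_{a_i} = 1/2$ is a feasible dual assignment of total value $5/2$, certifying LP-OPT $= 5/2$. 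Any integral solution corresponds to a vertex cover of the resulting $C_5$ hypergraph on $\{a_1,\dots,a_5\}$, which has minimum size $3$.

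My steps will be: (i) choose a directed tree $T$ together with carefully selected arc orientations; (ii) name five links $\ell_1,\dots,\ell_5$ and verify, using the rule that a link $\ell=(u,v)$ covers the up-arcs on the $v$-to-$\mathrm{apex}(\ell)$ path and the down-arcs on the $u$-to-$\mathrm{apex}(\ell)$ path, that each $\ell_i$ has directed coverage exactly $\{a_i, a_{i+1}\}$; (iii) declare the link set to be $L = \{\ell_1,\dots,\ell_5\}$ (the integrality gap is defined relative to a given instance, so this restriction is allowed); and (iv) read off LP-OPT $=5/2$ from the primal/dual pair above and IP-OPT $\ge 3$ from the combinatorial fact that $C_5$ has no vertex cover of size $2$.

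The principal obstacle is step (i): realizing the cyclic coverage pattern in a directed tree. For any two tree arcs $a_i,a_j$, the associated bottom subtrees $S_i,S_j$ are either disjoint or nested (the laminar property of subtrees), and whether a single link jointly covers $a_i$ and $a_j$ is determined by this relation together with the arcs' orientations: nested for two arcs of the same orientation, and disjoint for two arcs of opposite orientations. A star-shaped tree with root $r$ and five leaves is forced to produce a bipartite pair-coverability graph (up-arcs versus down-arcs), which cannot contain the odd cycle $C_5$; the same obstruction arises for any tree in which all five arcs attach directly to a common apex. A workable construction therefore needs at least two levels of branching so that nested and disjoint relations are both available among the five arcs, and the orientations must be chosen so that the five prescribed pairs admit a joint covering link while the remaining five pairs do not.

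If this direct construction proves cumbersome, a fallback is to produce an undirected small tree instance for multi $2$-TAP with integrality gap at least $6/5$ and transport the bound to WDTAP via the reduction in \Cref{subsec:multi-2-tap}, since that reduction is cost-preserving both for LP and for integer solutions. Either route reduces the problem to verifying an explicit small example, after which the pentagon arithmetic gives the promised lower bound of $6/5$.
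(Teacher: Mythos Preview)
Your proposal is correct and follows essentially the same approach as the paper: the paper also exhibits a five-arc, five-link unit-cost DTAP instance whose arc--link incidence is the $5$-cycle, and then invokes the standard $5/2$ versus $3$ gap. Your anticipation that a pure star cannot realize $C_5$ and that a two-level tree is required is exactly right; the paper's instance is an ``H''-shaped tree on six vertices (two length-two paths joined by a middle arc), with the five links chosen so that their directed coverages are the consecutive pairs $\{a_i,a_{i+1}\}$.
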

\begin{proof}

Consider the following unweighted DTAP instance whose constraint matrix corresponds to a $5$-cycle:

\begin{figure}[h]
\centering
\begin{tikzpicture}[mynode/.style={draw, fill, circle, minimum size = 2mm, inner sep = 0pt}, myarc/.style={thick, arrows = {-Stealth[scale=1.5]}},mylink/.style={thick, arrows = {-Stealth[scale=1.5]}, black!70!white, dashed}, covered/.style={line width = 2mm, draw opacity = 0.5}]

\node[mynode] (v1) at (-1,1){};
\node[mynode] (v2) at (-1,0){};
\node[mynode] (v3) at (-1,-1){};
\node[mynode] (v6) at (1,1){};
\node[mynode] (v7) at (1,0){};
\node[mynode] (v8) at (1,-1){};

\draw[myarc] (v2) to (v1);\label{7}
\draw[myarc] (v3) to (v2);\label{6}
\draw[myarc] (v2) to (v7);\label{2}
\draw[myarc] (v7) to (v6);\label{3}
\draw[myarc] (v8) to (v7);\label{4}

\draw[mylink] (v1) to [bend right = 30] (v3);
\draw[mylink] (v6) to (v2);
\draw[mylink] (v7) to (v3);
\draw[mylink] (v1) to (v8);
\draw[mylink] (v6) to [bend left = 30] (v8); 

\end{tikzpicture}
\caption{Choosing $x_\ell = \frac{1}{2}$ for all $\ell \in L$ yields a solution of cost $\frac{5}{2}$, while the smallest integral solution has cost 3.}
\label{fig:DTAPgap}
\end{figure}
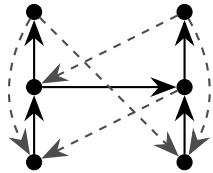

\end{proof}

\end{document}